\documentclass[11pt]{article}
\usepackage{ifthen}
\usepackage{mdwlist}
\usepackage{amsmath,amssymb,amsfonts,amsthm}
\usepackage{bm}
\usepackage{bbm}
\usepackage{mdframed}
\usepackage[colorlinks=true,linkcolor=blue,citecolor=blue,urlcolor=blue]{hyperref}
\usepackage{enumitem}
\usepackage{graphicx}
\usepackage{xspace}
\usepackage{verbatim}
\usepackage{algorithm}
\usepackage{algpseudocode}
\usepackage[margin=1in]{geometry}
\usepackage{color}
\usepackage{thm-restate}
\usepackage{latexsym}
\usepackage{epsfig}
\usepackage{booktabs}
\usepackage{cleveref}
\usepackage[colorinlistoftodos,textsize=scriptsize]{todonotes}
\def\withcolors{1}
\def\withnotes{1}
\def\eps{\ve}
\renewcommand{\epsilon}{\ve}
\def\ve{\varepsilon}

\DeclareMathOperator{\E}{\mbox{\bf E}}

\newcommand{\pr}[2][]{\mathrm{Pr}\ifthenelse{\not\equal{}{#1}}{_{#1}}{}\!\left[#2\right]}
\newcommand{\R}{\mathbb{R}}
\newcommand{\N}{\mathbb{N}}
\newcommand{\Z}{\mathbb{Z}}

\newcommand{\cC}{\mathcal{C}}
\newcommand{\cK}{\mathcal{K}}
\newcommand{\cX}{\mathcal{X}}

\DeclareMathOperator{\argmax}{argmax}
\DeclareMathOperator{\Cov}{Cov}
\DeclareMathOperator{\vol}{vol}
\DeclareMathOperator{\diam}{diam}
\providecommand{\poly}{\operatorname*{poly}}

\newcommand{\e}{\epsilon}
\newcommand{\mper}{\, .}
\newcommand{\mcom}{\, ,}

\newcommand{\proves}{\vdash}

\newtheorem{theorem}{Theorem}
\newtheorem{assumption}[theorem]{Assumption}

\newtheorem{proposition}[theorem]{Proposition}

\newtheorem{remark}[theorem]{Remark}
\newtheorem{fact}[theorem]{Fact}
\newtheorem{lemma}[theorem]{Lemma}

\newtheorem{corollary}[theorem]{Corollary}

\newtheorem{definition}[theorem]{Definition}

\newtheorem{example}[theorem]{Example}
\numberwithin{theorem}{section} 
\numberwithin{assumption}{section}
\numberwithin{nontheorem}{section} 
\numberwithin{proposition}{section} 
\numberwithin{observation}{section} 
\numberwithin{remark}{section} 
\numberwithin{fact}{section} 
\numberwithin{lemma}{section} 
\numberwithin{claim}{section} 
\numberwithin{corollary}{section} 
\numberwithin{case}{section} 
\numberwithin{dfn}{section} 
\numberwithin{definition}{section} 
\numberwithin{question}{section} 
\numberwithin{openquestion}{section} 
\numberwithin{res}{section} 
\numberwithin{example}{section}

\newcommand{\red}[1]{{\color{red} {#1}}} 

\ifnum\withcolors=1
  \newcommand{\gcolor}[1]{{\color{red}#1}} 
\else

  \newcommand{\gcolor}[1]{{#1}}
\fi

\ifnum\withnotes=1
  \newcommand{\gnote}[1]{\par\gcolor{\textbf{G: }\sf #1}} 
  \newcommand{\gfootnote}[1]{\footnote{{\bf \gcolor{Gautam}}: {#1}}}

  \newcommand{\Snote}[1]{{\color{blue}{\textbf{S: }\sf #1}}} 

\else
  \newcommand{\gnote}[1]{}
  \newcommand{\Snote}[1]{}
  \newcommand{\gfootnote}[1]{}

\fi
\newcommand{\ignore}[1]{\leavevmode\unskip} 

\newcommand{\paren}[1]{(#1)}
\newcommand{\Paren}[1]{\left(#1\right)}

\newcommand{\brac}[1]{[#1]}
\newcommand{\Brac}[1]{\left[#1\right]}

\newcommand{\floor}[1]{\lfloor#1\rfloor}

\newcommand{\ceil}[1]{\lceil#1\rceil}

\newcommand{\abs}[1]{\lvert#1\rvert}
\newcommand{\Abs}[1]{\left\lvert#1\right\rvert}


\newcommand{\set}[1]{\{#1\}}
\newcommand{\Set}[1]{\left\{#1\right\}}

\newcommand{\norm}[1]{\lVert#1\rVert}
\newcommand{\Norm}[1]{\left\lVert#1\right\rVert}

\newcommand{\normt}[1]{\norm{#1}_2}
\newcommand{\Normt}[1]{\Norm{#1}_2}

\newcommand{\snormt}[1]{\norm{#1}^2_2}

\newcommand{\snorm}[1]{\norm{#1}^2}



\newcommand{\iprod}[1]{\langle#1\rangle}
\newcommand{\Iprod}[1]{\left\langle#1\right\rangle}

\newcommand{\normsch}[1]{\vert\kern-0.25ex\vert\kern-0.25ex\vert #1 
\vert\kern-0.25ex\vert\kern-0.25ex\vert}
\newcommand{\Normsch}[1]{\left\vert\kern-0.25ex\left\vert\kern-0.25ex\left\vert #1 
\right\vert\kern-0.25ex\right\vert\kern-0.25ex\right\vert}
\DeclareMathOperator{\pE}{\tilde{\mathbf{E}}}
\newcommand{\Psymb}{\mathbb{P}}

\DeclareMathOperator*{\ProbOp}{\Psymb}
\renewcommand{\Pr}{\ProbOp}


\newcommand{\suchthat}{\;\middle\vert\;}
\newcommand{\tensor}{\otimes}














\newcommand{\from}{\colon}


\newcommand\bdot\bullet
\newcommand{\trsp}[1]{{#1}^\dagger}


\DeclareMathOperator{\Tr}{Tr}
\DeclareMathOperator{\SDP}{SDP}

\DeclareMathOperator{\OPT}{OPT}




\newcommand{\iid}{i.i.d.\xspace}

\newcommand\naively{na\"{\i}vely\xspace}


\newcommand{\cB}{\mathcal B}
\newcommand{\cD}{\mathcal D}

\newcommand{\cG}{\mathcal G}
\newcommand{\cH}{\mathcal H}

\newcommand{\cL}{\mathcal L}
\newcommand{\cM}{\mathcal M}

\newcommand{\cO}{\mathcal O}
\newcommand{\cP}{\mathcal P}

\newcommand{\cR}{\mathcal R}
\newcommand{\cS}{\mathcal S}
\newcommand{\cT}{\mathcal T}
\newcommand{\cU}{\mathcal U}

\newcommand{\cY}{\mathcal Y}

\newcommand{\bbB}{\mathbb B}

\newcommand{\bbS}{\mathbb S}

\newcommand{\wh}[1]{\ensuremath{\widehat{#1}}}

\newcommand*{\transpose}[1]{{#1}{}^{\mkern-1.5mu\mathsf{T}}}

\DeclareMathOperator*{\argmin}{arg\,min}

\makeatletter
\newcommand{\distas}[1]{\mathbin{\overset{#1}{\kern\z@\sim}}}%
\newsavebox{\mybox}\newsavebox{\mysim}
\newcommand{\distras}[1]{%
	\savebox{\mybox}{\hbox{\kern3pt$\scriptstyle#1$\kern3pt}}%
	\savebox{\mysim}{\hbox{$\sim$}}%
	\mathbin{\overset{#1}{\kern\z@\resizebox{\wd\mybox}{\ht\mysim}{$\sim$}}}%
}
\makeatother

\title{Efficient Mean Estimation with Pure Differential Privacy via a Sum-of-Squares Exponential Mechanism\thanks{Authors are in alphabetical order.}}

\author {
    Samuel B.\ Hopkins\thanks{. {\tt samhop@mit.edu}. Supported by a Miller Postdoctoral Fellowship and a Simons Postdoctoral Fellowship. Part of this work was conducted while visiting the Simons Institute for the Theory of Computing.}
\and
Gautam Kamath\thanks{Cheriton School of Computer Science, University of Waterloo. {\tt g@csail.mit.edu}. Supported by an NSERC Discovery Grant and a University of Waterloo Startup grant.}
\and
Mahbod Majid\thanks{Cheriton School of Computer Science, University of Waterloo. {\tt m2majid@uwaterloo.ca}. Supported by an NSERC Discovery Grant, a Graduate Excellence Award in Computer Science, a David R.\ Cheriton Graduate Scholarship, and a Waterloo CPI Cybersecurity and Privacy Excellence Graduate Scholarship.}
}

\begin{document}
\maketitle

\setcounter{tocdepth}{2}

\thispagestyle{empty}

\begin{abstract}
    We give the first polynomial-time algorithm to estimate the mean of a $d$-variate probability distribution with bounded covariance from $\tilde{O}(d)$ independent samples subject to \emph{pure} differential privacy.
    Prior algorithms for this problem either incur exponential running time, require $\Omega(d^{1.5})$ samples, or satisfy only the weaker \emph{concentrated} or \emph{approximate} differential privacy conditions.
    In particular, all prior polynomial-time algorithms require $d^{1+\Omega(1)}$ samples to guarantee small privacy loss with ``cryptographically'' high probability, $1-2^{-d^{\Omega(1)}}$, while our algorithm retains $\tilde{O}(d)$ sample complexity even in this stringent setting.

    Our main technique is a new approach to use the powerful \emph{Sum of Squares method (SoS)} to design differentially private algorithms.
    \emph{SoS proofs to algorithms} is a key theme in numerous recent works in high-dimensional algorithmic statistics -- estimators which apparently require exponential running time but whose analysis can be captured by \emph{low-degree Sum of Squares proofs} can be automatically turned into polynomial-time algorithms with the same provable guarantees.
    We demonstrate a similar \emph{proofs to private algorithms} phenomenon: instances of the workhorse \emph{exponential mechanism} which apparently require exponential time but which can be analyzed with low-degree SoS proofs can be automatically turned into polynomial-time differentially private algorithms.
    We prove a meta-theorem capturing this phenomenon, which we expect to be of broad use in private algorithm design.

    Our techniques also draw new connections between differentially private and robust statistics in high dimensions.
    In particular, viewed through our proofs-to-private-algorithms lens, several well-studied SoS proofs from recent works in algorithmic robust statistics directly yield key components of our differentially private mean estimation algorithm.
\end{abstract}

\newpage

{
\tableofcontents
\thispagestyle{empty}
 }

\newpage

\setcounter{page}{1}

\section{Introduction}
\emph{Mean estimation} is perhaps the most elementary statistical task: given samples from a probability distribution $D$, estimate its expected value.
In this paper, we study mean estimation in $d$ dimensions subject to \emph{differential privacy (DP)}~\cite{DworkMNS06}, a rigorous notion of data privacy.
Privacy is of natural concern in high-dimensional statistics, where data may be sensitive and standard estimators like the empirical mean may leak information about individuals in a dataset (see, e.g., privacy attacks in~\cite{DinurN03, HomersrDTMPSNC08, BunUV14, DworkSSUV15} and the survey~\cite{DworkSSU17}).
On the other hand, privacy and statistical estimation seem largely compatible: in the large-sample limit, good statistical estimators will have vanishing dependence on each individual sample anyway.
Indeed, even though the empirical mean (the natural benchmark for accuracy in mean estimation) is not differentially private, estimators are known which match its accuracy guarantees while also providing differential privacy~\cite{BunKSW19, KamathSU20}.
Namely, an accurate and private estimate of the mean can be obtained using $n = O(d)$ samples.
However, known estimators achieving this sample complexity require exponential running time.

If one instead focuses on polynomial-time algorithms, existing estimators all face significant drawbacks: either they require $n \geq d^{1+\Omega(1)}$ samples, or they may leak private information with probability $2^{-d^{c}}$ for small $c > 0$ \cite{HardtT10, KamathLSU19, KamathSU20}.
(In privacy language, they satisfy only \emph{concentrated} or \emph{approximate} differential privacy.)
Since a major goal of privacy is to make strong assurances on leakage of sensitive information, best practices disallow private data leakage even with ``cryptographically'' small probability $2^{-\poly(d)}$.
One way to satisfy this stringent requirement is to provide \emph{pure} differential privacy, which disallows substantial leakage of private information with any nonzero probability.
Thus, the main question in our paper is:
\begin{center}
    \emph{Is there a polynomial-time pure DP algorithm for mean estimation using $n = O(d)$ samples?}
\end{center}

Our main result answers this question affirmatively, up to logarithmic factors.
To state our main result, we define differential privacy:
\begin{definition}[(Pure) differential privacy]
    For $\e > 0$, a (randomized) algorithm $A$ which takes $n$ inputs $X_1,\ldots,X_n$ is $\e$-differentially private ($\e$-\emph{DP}) if for every pair of inputs $X_1,\ldots,X_n$ and $X_1',\ldots,X_n'$ such that $X_i = X_i'$ for all except a single index $i \in [n]$ and for all possible events $S \subseteq \mathrm{Range}(A)$,
    \[
        \Pr(A(X_1,\ldots,X_n) \in S) \leq e^\e \cdot \Pr(A(X_1',\ldots,X_n') \in S)\mper
    \]
\end{definition}

\begin{theorem}[Main Theorem — Combination of \Cref{thm:coarse-estimation}, \Cref{thm:high-dimensional-fine-mean-estimation}]
    \label{thm:main-intro}
    For every $n,d \in \N$ and $R, \alpha, \e, \beta > 0$ there is a polynomial-time $\e$-DP algorithm such that for every distribution $D$ on $\R^d$ such that $\| \E_{X \sim D} X \| \leq R$ and $\Cov_{X \sim D}(X) \preceq I$, given $X_1,\ldots,X_n \sim D$, with probability $1-\beta$ the algorithm outputs $\hat{\mu}$ such that $\|\hat{\mu} - \E_{X \sim D} X \| \leq \alpha$, so long as
    \[
        n \geq \tilde{O} \Paren{\frac{d + \log(1/\beta)}{\alpha^2 \e} + \frac{d\log R + \min(d,\log R) \cdot \log(1/\beta)}{\e}}\mper
    \]
    Furthermore, if an $\eta$-fraction of the samples $X_1,\ldots,X_n$ are adversarially corrupted, the algorithm maintains the same guarantee, at the cost that now $\|\hat{\mu} - \E X\| \leq \alpha + O(\sqrt \eta)$.
\end{theorem}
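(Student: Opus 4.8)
The plan is to obtain the theorem by composing two pure-DP subroutines. First a \emph{coarse} step (\Cref{thm:coarse-estimation}) localizes $\mu^\star := \E_{X \sim D} X$ to within distance $O(\sqrt d)$ of an explicit point, in the implicit normalization where $\Cov_{X\sim D}(X) \preceq I$; then a \emph{fine} step (\Cref{thm:high-dimensional-fine-mean-estimation}) takes this localization and outputs $\hat\mu$ with $\|\hat\mu - \mu^\star\| \le \alpha$. Running the two steps on disjoint halves of the sample (so a single changed input affects only one of them, preserving $\e$-DP up to constants) and adding the two sample complexities yields the stated bound: the $R$- and $\beta$-dependent terms $\tfrac{d\log R + \min(d,\log R)\log(1/\beta)}{\e}$ come from the coarse step, and $\tfrac{d + \log(1/\beta)}{\alpha^2 \e}$ (together with a $\tilde O(d/\e)$ term absorbed into the $\tilde O$) comes from the fine step. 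The robustness claim is inherited from the fine step, whose score function is defined via trimming a small fraction of points and is therefore already a robust estimator.

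For the coarse step I would reduce to one-dimensional private location: run a pure-DP histogram / exponential mechanism over a geometric net of the radius-$R$ ball at resolution $O(\sqrt d)$ (or over $\poly$-size nets along $O(\log R)$ carefully chosen directions), whose score is the number of samples near a net point --- sensitivity $1$, so $\e/2$-DP is immediate. Since a constant fraction of samples lie within $O(\sqrt d)$ of $\mu^\star$ by Markov's inequality (using $\Cov \preceq I$), the heavy net point is within $O(\sqrt d)$ of $\mu^\star$; the net has $R^{O(d)}$ points, producing the $d\log R$ dependence, and matching the failure probability $\beta$ directly (pure DP admits no ``$\delta$'' slack, which is why $\log(1/\beta)$ enters additively) accounts for the remaining term. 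This also handles the ``cryptographic'' regime $\beta = 2^{-\poly(d)}$ without degradation beyond the stated bound.

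The fine step is the technical heart and uses the paper's \emph{Sum-of-Squares exponential mechanism}. Conceptually, define a score $s_X(\mu)$ equal to $n$ minus the fewest samples one must delete so that the surviving points have empirical mean within $\alpha$ of $\mu$ and empirical covariance $\preceq O(1)\cdot I$. Changing one sample changes $s_X$ by $O(1)$, so the exponential mechanism outputting $\mu$ in the coarse ball with density $\propto \exp\big(\tfrac{\e}{2} s_X(\mu)\big)$ is $\e/2$-DP for free. For utility: once $n \gtrsim \tilde\Omega\big((d + \log(1/\beta))/(\alpha^2\e)\big)$ the $n$ \iid samples are resilient, so $\mu^\star$ attains score $n - O(\gamma n)$ for a trimming fraction $\gamma \asymp \alpha^2$, while every $\mu$ with $\|\mu - \mu^\star\| \gg \alpha$ attains score at most $n - \Omega(\gamma n)$ (trimming $\gamma n$ points can move a bounded-covariance empirical mean by only $O(\sqrt\gamma) = O(\alpha)$). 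Since $\gamma n \gg \tfrac{2}{\e}\big(d\log(\diam/\alpha) + \log(1/\beta)\big)$ and the $\alpha$-ball of good $\mu$ occupies a $(\alpha/\diam)^d$ fraction of the coarse ball, the standard exponential-mechanism volume argument gives $\|\hat\mu - \mu^\star\| \le \alpha$ with probability $1 - \beta$. With an $\eta$-fraction of corruptions one takes $\gamma \asymp \max(\alpha^2,\eta)$; the uncorrupted trimmed subset still certifies $\mu^\star$ up to the bounded-covariance robust rate $O(\sqrt\gamma)$, yielding accuracy $\alpha + O(\sqrt\eta)$.

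The obstacle --- and the reason a naive implementation fails --- is that $s_X(\mu)$ is NP-hard to evaluate. I would replace it with a degree-$O(1)$ SoS relaxation: a pseudo-expectation $\pE$ over ``weight'' variables $w \in [0,1]^n$ subject to $\sum_i w_i \ge (1-\gamma)n$ together with the SoS-certifiable constraints forcing the reweighted empirical mean close to $\mu$ and the reweighted covariance bounded, with the relaxed score the optimum of the associated SDP. Then invoke the paper's meta-theorem (``SoS proofs to private algorithms''), whose hypotheses are exactly (i) a low-degree SoS proof that $s_X$ and $s_{X'}$ differ by $O(1)$ on adjacent datasets --- so the \emph{relaxed} score is still $O(1)$-sensitive and privacy survives the relaxation --- and (ii) a low-degree SoS proof of the robust-identifiability fact used above (if $w$ keeps $\ge (1-\gamma)n$ points and the reweighted mean is $\gg\alpha$ from $\mu^\star$ then the reweighted covariance is large), which is precisely the SoS certificate from prior work on SoS robust mean estimation and which rules out spuriously high-scoring far-away $\mu$; the meta-theorem then supplies a polynomial-time sampler for the exponential mechanism with the relaxed score. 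I expect the hardest parts to be: (a) packaging the sensitivity bound as a genuine low-degree SoS proof, i.e., exhibiting the SoS identity that transforms a feasible $w$ for $X$ into a near-feasible $w'$ for $X'$ with only an $O(1)$ change in objective; and (b) the algorithmic core of the meta-theorem --- approximately sampling $\mu$ from a density proportional to $\exp$ of an SDP value, which need not be log-concave in $\mu$, so one must argue (via a grid-plus-rejection argument on the small coarse ball, or via structural monotonicity of the SDP optimum in $\mu$) that an efficiently samplable law is within total-variation distance $\poly(\beta)$ of the ideal one while still satisfying \emph{pure} DP.
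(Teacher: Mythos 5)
Your two-phase decomposition (coarse localization to $O(\sqrt d)$, then fine estimation) matches the paper's, and the coarse step is broadly in the spirit of the paper's; but your fine step takes a genuinely different route, and both steps have gaps that I do not think your sketch resolves.

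\textbf{Coarse step.} You describe an exponential mechanism over a net of the $R$-ball of size $R^{O(d)}$ and claim the resulting sample complexity $\tfrac{d\log R + \log(1/\beta)}{\e}$ ``also handles the cryptographic regime $\beta = 2^{-\poly(d)}$ without degradation.'' But a net of size $R^{O(d)}$ is not a polynomial-time object, and the paper explicitly notes that $\tfrac{d\log R + \log(1/\beta)}{\e}$ is what the \emph{inefficient} exponential mechanism gives. The paper's actual coarse algorithms pay for efficiency: the coordinate-wise algorithm gets $\tfrac{d(\log R + \log(1/\beta))}{\e}$, and the SoS-exp-mech algorithm iterates $O(\log R)$ times, halving the radius each time, at a per-iteration failure probability of $\beta/m$ and therefore gets $\tfrac{\log R\,(d + \log(1/\beta))}{\e}$; the $\min(d,\log R)\log(1/\beta)$ term in the theorem is the minimum of these two. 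Your ``$O(\log R)$ carefully chosen directions'' parenthetical gestures at the coordinate-wise algorithm, but you never connect it to the stated bound, and you do not explain how to get the $\min$ structure or how to make the net approach polynomial time.

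\textbf{Fine step --- a different route, with an unresolved core step.} The paper's fine step is an iterative gradient-descent scheme adapted from \cite{CherapanamjeriFB19}: it buckets into $k$ means $Z_i$, then repeatedly (i) privately decides whether to halt, (ii) privately estimates the distance to $\mu$ via private binary search on the SDP value, and (iii) privately samples a \emph{direction} $v_0$ from the unit ball via the SoS exponential mechanism with the degree-$2$ SDP relaxation of ``how many $Z_i$ lie far out in direction $v$'' as the score. The exponential mechanism therefore always samples in $d$ dimensions from the unit ball, and log-concavity of the sampling distribution is \emph{automatic} because the SDP's feasible set depends on the candidate $v_0$ only through the linear constraint $\pE v = v_0$ (Lemma 4.6 / Corollary 6.32 in the paper). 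You instead propose a one-shot exponential mechanism directly over candidate means $\mu$ in the coarse ball with a trimming score (an SoS relaxation of ``fewest deletions so the surviving sample has empirical mean near $\mu$ and bounded covariance''). Two points here. First, your closing worry that the resulting density ``need not be log-concave in $\mu$'' indicates a misunderstanding of the SoS exponential mechanism: by the construction in the meta-theorem, the score $s(X,\mu) = \max\{\pE p \,:\, \pE\text{ satisfies }\cP^X,\ \pE u = \mu\}$ \emph{is} concave in $\mu$, always. So the grid-plus-rejection escape hatch you mention (which would likely ruin pure DP anyway) should not be needed \emph{if} your score is actually of this form. But second --- and this is the genuine gap --- your proposed constraints do not obviously fit the meta-theorem's template. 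The weighted-covariance constraint $\sum_i w_i (X_i - \mu)(X_i - \mu)^\top \preceq O(1)\cdot(\sum_i w_i)\,I$ is a matrix-SoS constraint, not a finite list of scalar polynomial inequalities $p_j^X(x,y)\ge 0$ of the kind the meta-theorem accepts; encoding it requires extra variables or a different SDP template, and it is not clear the robust-satisfiability/Lipschitzness machinery of Lemma 4.7 then applies. You also need an SoS proof of $O(1)$-sensitivity for the relaxed score, which you correctly flag as ``the hardest part'' but do not supply; the paper side-steps this for its own, simpler score (the map $b \mapsto b$ with $b_1 := 0$ is a linear feasibility-preserving transform). So while your direct exponential mechanism over $\mu$ is an appealing, conceptually simpler alternative to the paper's gradient-descent loop, as written it has not been shown to satisfy the meta-theorem's hypotheses, and without that you have neither a polynomial-time sampler nor a pure-DP guarantee.

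\textbf{Summary.} You have the right high-level decomposition and correctly identify the SoS exponential mechanism and its sensitivity/utility-via-SoS-proofs ingredients as the engine, but: (1) your coarse step is not polynomial time as described and does not account for the $\min(d,\log R)\log(1/\beta)$ term; (2) your fine-step score function differs qualitatively from the paper's (candidate means vs.\ directions; trimming vs.\ counting outliers) and has not been shown to fit the meta-theorem's template, in particular regarding the matrix-covariance constraint, Lipschitzness, and SoS-certified sensitivity; and (3) your worry about non-log-concavity signals that you have not yet internalized the structural reason the SoS exponential mechanism yields log-concave densities by construction.
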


The sample complexity of our algorithm is nearly linear in $d$, thereby answering our main question up to logarithmic factors.
Beyond this core goal, we highlight that our algorithm is \emph{also} robust to adversarial contaminations and enjoys sub-Gaussian confidence intervals when $n \gg \frac{d \log R + \min(d,\log R) \log(1/\beta)}{\e}$, both sought-after features in recent non-private algorithms~\cite{DiakonikolasKKLMS19,LaiRV16,Hopkins20,CherapanamjeriFB19}.
The sample complexity of our algorithm is nearly optimal, with the exception of the term $(\log(1/\beta) \cdot \min(d,\log R)) / \e$ -- see Theorem~\ref{thm:lb-bdd-mom} for corresponding lower bounds.
(Put differently, information-theoretically it is possible to achieve sub-Gaussian confidence intervals under the slightly milder assumption $n \gg \frac{d \log R + \log(1/\beta)}{\e}$.)

\paragraph{Beyond clip-and-noise}
Obtaining the guarantees of our algorithm requires going beyond existing techniques for private mean estimation, which we briefly review.
A common technique in differential privacy is to ``just add noise.''
More precisely, suppose that $f(X_1,\ldots,X_n)$ is a \emph{bounded sensitivity} function of $n$ inputs, meaning that replacing a single input $X_i$ with an arbitrary $X_i'$ changes the value of $f$ by at most $\Delta$ (in an appropriate choice of norm).
Then, $f(X_1,\ldots,X_n) + Z$ will be private, where $Z$ is an appropriate random variable whose magnitude depends on $\Delta$.

Existing polynomial-time algorithms for private mean estimation all take this approach.
First, to limit the sensitivity of the empirical mean, the algorithms clip the samples $X_1,\ldots,X_n \in \R^d$ to lie in an $\ell_2$ ball.
Considering the case $\Cov(X) \approx I$, we will have $\|X - \E X\| \approx \sqrt d$, so using a ball of radius at least $\sqrt d$ is unavoidable without introducing too much error in the clipping phase.
Then, the algorithms output $\overline{\mu} + Z$, where $\overline{\mu}$ is the empirical mean of the clipped samples.

If one takes the coordinates of $Z$ to be draws from a Laplace distribution, the resulting algorithm will satisfy $\e$-DP, but the relatively heavy tails of the Laplace distribution impose a cost that $n$ must be at least $d^{1.5}/\e$ to obtain nontrivial guarantees.
On the other hand, if $Z$ is Gaussian, the resulting algorithm appears to tolerate $n \approx d/\e$ samples, but it no longer satisfies \emph{pure} DP.
Instead, to guarantee privacy loss at most $\e$ with probability at least $\delta$ over the internal randomness in the algorithm, $n \geq d \sqrt{\log(1/\delta)}/\e$ is required (i.e. the algorithm satisfies \emph{concentrated} DP), meaning that for ``cryptographically small'' $\delta$, this algorithm, too, has super-linear sample complexity.

One might naturally suspect, therefore, that strong privacy guarantees like this simply require $\Omega(d^{1.5})$ samples.
Indeed, if nastier distributions than we consider here (violating bounded covariance) are allowed, these two bounds of $\Omega(d^{1.5})$ and $\Omega(d)$ are known to be tight for pure and concentrated DP, respectively~\cite{HardtT10, BunUV14, KamathLSU19}, and similar separations were previously conjectured even for bounded-covariance distributions~\cite{BunKSW19}.

However, \cite{KamathSU20} (building on related techniques of~\cite{BunKSW19}) show that in exponential time one can go beyond the clip-and-noise approach to mean estimation.
In particular, a tournament-based approach gives a pure-$DP$ algorithm using $O(d/\e)$ samples.
We obtain nearly-matching guarantees in polynomial time -- ours is the first polynomial-time private algorithm to go beyond the clip-and-noise approach.

\paragraph{Sum-of-Squares Proofs to Private Algorithms}
\emph{Proofs to Algorithms} has become a powerful algorithm-design technique in computational high-dimensional statistics.
Roughly speaking, the proofs to algorithms technique shows that statistical estimation problems which can be solved to a given accuracy by some (not necessarily polynomial-time) estimator can actually be solved in polynomial time with the same accuracy \emph{if the \emph{analysis} of that estimator can be captured by a certain restricted but powerful formal proof system, known as the SoS proof system.}
This insight has had major consequences in robust and heavy-tailed statistics, clustering, learning latent variable models, and beyond.
(For instance, \cite{HopkinsL18,kothari2017outlier,kothari2017better,potechin2017exact}, and the survey \cite{raghavendra2018high}.)

We show that this approach also applies to private algorithm design.
Our techniques give a generic method to turn (potentially) exponential-time instances of the workhorse \emph{exponential mechanism}, whose breadth of applicability is hard to overstate \cite{McSherryT07}, into polynomial time algorithms, when their analyses are captured by the same SoS proof system.
This gives us the following \emph{proofs-to-algorithms} principle for private algorithm design, which we anticipate will be widely applicable:
\begin{quote}
    \textbf{Proofs to Private Algorithms} (see Theorem~\ref{thm:metatheorem}): Instances of the exponential mechanism with low-degree SoS proofs of bounded sensitivity and utility automatically yield computationally-efficient private algorithms.
\end{quote}
(See Section~\ref{sec:techniques} for a description of the exponential mechanism and more on SoS proofs.)
We are able to capture this principle as a formal meta-theorem, Theorem~\ref{thm:metatheorem} -- since its statement requires a few more technical definitions than we are ready to state, we defer it for now.

\paragraph{Convex Programming and Private Algorithms}
Like nearly all applications of the Sum of Squares method, algorithms which result from our proofs-to-private-algorithms approach ultimately use convex programs -- semidefinite programs, in our case.
Our techniques give a substantially novel approach to convex programming in private algorithms.
Prior works largely do this in one of two ways.
Some privatize the input \emph{before} applying a convex program to it (e.g., privatizing a graph and then post-processing with convex programming \cite{BlockiBDS12,EliasKKL20}).
This can be limiting, as the convex program cannot itself be helpful in achieving privacy, only in some post-processing.
Other algorithms use a private solver for the convex program itself \cite{hsu2014privately,LiuKKO21}.
This can be quite technically challenging: few DP solvers for convex programs are known, and in particular, we are not aware of any generic DP algorithm for solving semidefinite programs.

Instead, our algorithms \emph{hide convex programs behind the exponential mechanism}.
Namely, we use convex programs as score functions.
(Again, see Section~\ref{sec:techniques} for an explanation of this terminology.)
Our key insight is: \emph{when convex programs are married to the exponential mechanism in the correct way, the resulting exponential mechanisms can be implemented in polynomial time using private log-concave sampling algorithms.}
This builds directly on \cite{BassilyST14}, who use a similar approach but for much simpler un-constrained convex optimization problems resulting from empirical risk minimization problems.
The analysis of this interplay of exponential mechanism and convex programming is completely generic -- meaning it has nothing to do with the particular setting of mean estimation -- and is captured in the proof of our (meta)-Theorem~\ref{thm:metatheorem}.

\paragraph{Robust Statistics and Privacy}
Robust statistics, the study of statistics in the presence of corrupted samples, has enjoyed a recent renaissance~\cite{DiakonikolasK19}.
Robustness and privacy are at least spiritual cousins -- both demand that a statistical estimator not change its behavior ``too much'' when one or several samples are replaced arbitrarily.
However, the formal requirements for robustness and privacy are rather different.
The output of a good robust mean estimator should not move far in Euclidean distance when $1\%$ of the samples it is given are corrupted by a malicious adversary.
Privacy, by contrast, demands that the \emph{distribution} of outputs of an algorithm not shift too much when any single sample is replaced by another.

In spite of this formal difference, \cite{DworkL09} was able to use robust estimators in one dimension to construct private estimators.
This suggests that the flurry of recent algorithms for high-dimensional robust statistics with strong provable guarantees should yield high-dimensional private estimators.

Our work makes good on this promise.
In particular, our algorithm for private mean estimation ultimately employs a well-studied Sum-of-Squares SDP from robust mean estimation, and our analysis applies several SoS proofs originally formulated to analyze that SDP in robust statistics.
Thus, our proofs-to-private-algorithms approach gives a lens through which algorithms from robust statistics can yield private algorithms.

Lastly, there is an even more direct connection between our particular result and recent developments in robust statistics.
The robustness renaissance was kicked off by \cite{DiakonikolasKKLMS16,LaiRV16}, which gave the first polynomial-time algorithms for robustly learning a Gaussian where corrupted samples can be tolerated with dimension-independent error.
Prior polynomial-time algorithms for the same problem have guarantees no better than what can be obtained by naive sample-clipping, and as a result incur error scaling as $\eta \cdot \poly(d)$ when an $\eta$-fraction of samples are corrupted.
Just as \cite{DiakonikolasKKLMS16,LaiRV16} gave the first algorithms with guarantees going beyond naive sample clipping in the robust setting, our work is the first to go beyond clip-and-noise in the private setting.

\subsection{Related Work}
\begin{table}[]
\centering
\begin{tabular}{@{}llllll@{}}
\toprule
  Algorithm & Sample Complexity & Privacy & Sub-Gaussian Rates & Robust & Poly Time\\ \midrule
  Sample Mean & $d$ & \red{None} & \red{No} & \red{No} & Yes \\
  \cite{KamathSU20}-A & $d$ & Pure & \red{-} & Yes & \red{No} \\
  \cite{KamathSU20}-B & $d$ & \red{Concentrated} & \red{No} & \red{No} & Yes \\
  \cite{KamathSU20}-C & \red{$d^{1.5}$} & Pure & \red{No} & \red{No} & Yes \\
  \cite{LiuKKO21}-A & $d$ & \red{Approximate} & \red{-} & Yes & \red{No} \\
  \cite{LiuKKO21}-B & \red{$d^{1.5}$} & \red{Concentrated} & \red{No} & Yes & Yes \\
    Theorem~\ref{thm:main-intro} & $d$ & Pure & $n \gg \tfrac{d \log R + \min(d,\log R) \log(1/\beta)}{\e}$ & Yes & Yes \\
\end{tabular}
    \caption{Comparing private mean estimation algorithms for distributions with bounded covariance. (Some papers contain several algorithms.) Sample complexity column ignores logarithmic factors in $d$. ``-'' indicates that we are not aware of any analysis in the literature.}
\label{tab:comparison}
\end{table}

A mature body of work focuses on private mean estimation.
The most relevant results restrict the underlying distribution to satisfy a moment bound (including sub-Gaussianity), examples include~\cite{BarberD14, KarwaV18, BunKSW19, BunS19, KamathLSU19, CaiWZ19, KamathSU20, WangXDX20, DuFMBG20, BiswasDKU20, AdenAliAK21, BrownGSUZ21, KamathLZ21, HuangLY21, KamathMSSU21, AshtianiL21}. 
All prior study of the multivariate case either focuses on concentrated or approximate DP, or provides computationally-inefficient algorithms for pure DP. 
We are the first to give an efficient $O(d)$-sample algorithm for multivariate mean estimation under pure DP.
This matches the sample complexity of the best known algorithms under concentrated DP, while simultaneously strengthening the privacy guarantee.

Other prior work studies private mean estimation of arbitrary (bounded) distributions~\cite{BunUV14, SteinkeU15, DworkSSUV15}.
Interestingly, in this case, the optimal sample complexity under pure and concentrated DP are separated by a factor of $\sqrt{d}$, leading to the naive Laplace mechanism being effectively optimal. 
It appears that our bounded moments assumption induces a qualitatively different structure, which eliminates the benefits of relaxing to concentrated DP.
Pinpointing the precise conditions under which a separation occurs remains an interesting direction for future work.

Other works study private statistical estimation in different and more general settings, including mixtures of Gaussians~\cite{KamathSSU19, AdenAliAL21}, graphical models~\cite{ZhangKKW20}, discrete distributions~\cite{DiakonikolasHS15}, and median estimation~\cite{AvellaMedinaB19, TzamosVZ20}.
Some recent directions involve guaranteeing user-level privacy~\cite{LiuSYKR20, LevySAKKMS21}, or a combination of local and central DP for different users~\cite{AventDK20}.
See~\cite{KamathU20} for further coverage of DP statistical estimation.

Several other works employ sampling-based methods to efficiently implement the exponential mechanism~\cite{KapralovT13, BassilyST14, AminDKMV19, LeakeMV21}.
\cite{KapralovT13,AminDKMV19, LeakeMV21} construct sophisticated sampling algorithms by hand to sample from a non-log-concave distributions; by contrast, we use convex programming to construct our score functions, ensuring that we always stay within the realm of log-concave sampling algorithms.
\cite{BassilyST14} constructs the private log-concave sampler we use in our work, but they employ it only to sample from comparatively simple log-concave distributions arising from empirical risk minimization of convex loss functions.
A recent work of~\cite{MangoubiV21} provides a faster algorithm for private log-concave sampling.

Another recent line of work studies sub-Gaussian confidence intervals for mean estimation of heavy-tailed distributions (i.e., assuming only bounded covariance).
Lugosi and Mendelson~\cite{LugosiM19a} proposed an inefficient algorithm, with an SoS-based algorithm coming in~\cite{Hopkins20}; see also the survey~\cite{LugosiM19b}.
Some works considered efficient algorithms for simultaneously robust and heavy-tailed mean estimation~\cite{DepersinL19,PrasadBR19,DiakonikolasKP20}.
A recent result shows that the core solution concepts for robust and heavy-tailed mean estimation can be considered equivalent~\cite{HopkinsLZ20}.
Our work demonstrates that the line of efficient estimators inspired by~\cite{Hopkins20} is simultaneously effective for robust, heavy-tailed, and private mean estimation.

Relatively limited work simultaneously considers privacy and the other constraints of robustness and heavy-tailed estimation.
The main result is~\cite{LiuKKO21} which considers robust and private mean estimation. 
Our result can be seen as improving on the running time or sample complexity of their two algorithms, and the privacy guarantee of both. 
While other prior works focus on heavy-tailed distributions~\cite{BarberD14, KamathSU20, WangXDX20, KamathLZ21}, these do not address the primary goal in the non-private setting, which is to achieve sub-Gaussian rates with respect to the error probability.
Instead, they generally prove guarantees with constant probability of success, which can be boosted at the cost of a multiplicative (rather than additive) logarithmic factor which is inverse in the failure probability.
Note that some of the previous (inefficient) cover- and median-based approaches to private estimation~\cite{BunKSW19, KamathSU20, LiuKKO21, BrownGSUZ21, AdenAliAK21} have varying levels of robustness and sub-Gaussian rates for heavy-tailed settings. 
However, none of their results give computationally efficient estimators for pure DP.
A simultaneous and independent work of~\cite{LiuKO21} demonstrates an interesting connection between resilience~\cite{SteinhardtCV18} and private estimation. 
They exploit this connection to design robust and private algorithms for a variety of settings, including mean estimation, covariance estimation, PCA, and more. 
However, their focus is on providing inefficient algorithms under the constraint of approximate DP, while our goal is to give a framework for efficient algorithms under pure DP.
For a summarized comparison of our work with recent private algorithms, see Table~\ref{tab:comparison}.

\section{Techniques}
\label{sec:techniques}

\subsection{The SoS Exponential Mechanism}
Before we turn to our algorithm for mean estimation, we offer a little more detail on the proofs-to-private-algorithms approach and its core component, which we call the SoS exponential mechanism.
We begin with a review of the exponential mechanism itself.

\paragraph{The Exponential Mechanism}
Consider the general problem of privately selecting one among a set of candidates $\cC$ given a dataset $X$, where the quality of a candidate $x \in \cC$ depends on the dataset $X$.
The candidates $\cC$ could represent many different things depending on the context.
In a statistical setting one may often think of $\cC$ as a class of probability distributions, $X$ as a list of samples from one of those distributions, and the goal is to select the distribution from which $X$ came (up to small error).

To apply the exponential mechanism for this problem, one first finds a \emph{score function} $s(X,x)$ which assigns a (real-valued) score to each dataset-candidate pair, ideally such that $x$'s which are ``good'' for a given $X$ receive high scores.
Given $X$ and a privacy parameter $\e > 0$, the exponential mechanism will sample a random $x$ with probabilities $\Pr(x) \propto \exp( \eps \cdot s(X,x))$.
The output is $\e$-differentially private so long as the score function satisfies a \emph{bounded sensitivity} property: for any pair of neighboring datasets $X,X'$\footnote{We say $X,X'$ are neighboring if they differ on the presence/absence of just one individual.} and for all $x \in \cC$, one has $|s(X,x) - s(X',x)| \leq 1$.

Beyond privacy, one also wants the resulting $x$ to be useful.
While this can also mean different things depending on the context, we will take ``utility'' to mean that $x$ is close to some good $x^*(X)$.
To prove that this happens for the exponential mechanism defined above, one shows:
\begin{enumerate}
    \item High-scoring $x$'s are good: if $s(X,x) \geq 0$, then $\|x - x^*\| \leq \alpha$, for a small $\alpha > 0$.
        (The choice of $0$ is without loss of generality; the mechanism is invariant under additive shifts of $s$.)
    \item Not too few high-scoring $x$'s: $\tfrac{|\{ x \in \cC \, : \, s(X,x) \geq t\}|}{|\cC|} \geq \tfrac 1 {r}$. (If $\cC$ is an infinite set one can replace $|\cdot|$ with some measure of volume.)
\end{enumerate}
Then a simple argument shows that $x$ such that $\|x - x^*\| \leq \alpha$ is selected with probability at least $1 - r \exp(-\e t)$.
In other words, the mechanism selects a good $x$ so long as $t \gg \tfrac{\log r}{\e}$.

The exponential mechanism can lead to computationally inefficient algorithms for two basic reasons.
First, in high-dimensional statistics it is often natural to use score functions which seem hard to compute -- the Tukey depth, for just one example \cite{LiuKKO21, BrownGSUZ21}.
Second, as we will face when we turn to mean estimation, even with score functions that are easy to compute, sampling $x$ with $\Pr(x) \propto \exp( \e \cdot s(X,x))$ may not be computationally tractable.

\subparagraph{Convex programs as score functions}
The SoS exponential mechanism can address both of these sources of intractability for instances of the exponential mechanism where the \emph{proofs} of bounded sensitivity are captured in a powerful formal proof system, the \emph{Sum of Squares proof system} (SoS).
To de-mystify this a little without yet delving into the details of SoS proofs, we can see a high-level picture of the algorithms which use the SoS exponential mechanism.
Ultimately, these algorithms fit into the exponential mechanism framework, but with special score functions.

To wit, suppose that we can arrange for the score function $s(X,x)$ to take the form of a linear optimization problem over a convex set $\cK(X)$ with an additional linear constraint involving $x$:
\[
    s(X,x) = \max_{y \in \cK(X)} \iprod{c,y} \text{ such that } Ay = x
\]
for some matrix $A$ and vector $c$.
As long as $\cK(X)$ admits a computationally efficient separation oracle, $s(X,x)$ will be polynomial-time computable -- this already addresses the first source of potential intractability in the exponential mechanism.
A simple argument shows even more: for each $X$, $s(X,x)$ is actually concave in $x$.
Thus, the distribution $\Pr(x) \propto \exp( \e \cdot s(X,x))$ is log-concave, so we can (usually) sample from it in polynomial time!
This sampling step itself has to be done privately: luckily for us, log-concave sampling is so well understood that private methods for polynomial-time log-concave sampling are known \cite{BassilyST14}.

The restriction that the score function take the form of an optimization problem is not a major one: many useful score functions in high-dimensional statistics, such as the Tukey depth, are naturally expressible in this way.
But how to find a score function which is simultaneously a \emph{convex} optimization problem and maintains both bounded sensitivity and utility?
The SoS method gives us a way to construct such a score function automatically, starting from any score function which is expressible as a (potentially non-convex) optimization problem, and for which the \emph{proofs} of bounded sensitivity and utility are expressible in the SoS proof system.

\subsection{Private Mean Estimation}
We proceed with a high-level description of our algorithm for private mean estimation and its analysis.
One could obtain our algorithm as an instance of proofs-to-private-algorithms, by:
\begin{itemize}
    \item Constructing a certain simple-to-analyze exponential mechanism-based mean estimator.
    \item Writing the simple analysis as a series of SoS proofs and applying (meta)-Theorem~\ref{thm:metatheorem}.
\end{itemize}

However, so that our algorithm can be understood without tackling the SoS exponential mechanism in full generality, we will now give a more concrete description of the algorithm and its analysis.
We give a high level version of this description here, and in the main body of our paper we actually provide a full analysis of the algorithm without appealing to Theorem~\ref{thm:metatheorem}.

Let us recall the setup for our problem.
There is a random variable $X$ on $\R^d$ with $\Cov(X) \preceq I$ and $\|\E X \| \leq R$.
The goal is to estimate $\E X$ from i.i.d.\ copies $X_1,\ldots,X_n$, subject to $\e$-DP.
The promise $\| \E X \| \leq R$ is information-theoretically necessary for pure differential privacy \cite{KarwaV18, BunKSW19}.

Our algorithm has strong guarantees in the presence of adversarially-corrupted samples, and obtains sub-Gaussian confidence intervals given heavy-tailed samples.
In fact, this is a side-effect of the fact that our algorithm uses well-studied SDPs from the robust statistics setting, viewing them through the SoS exponential mechanism lens to obtain privacy.
In particular, folklore adaptations of the analyses in \cite{CherapanamjeriFB19,kothari2017outlier} directly show that our algorithm is robust, so to avoid a proliferation of notation we give the proof in the non-robust setting -- see Remarks~\ref{rem:fine-robustness} and~\ref{rem:coarse-robustness}.

Like prior algorithms for private mean estimation, our algorithm has two phases.
\begin{enumerate}
    \item Reduce $R$ to $\poly(d)$, roughly by finding a large ball containing a large number of samples.
    \item Estimate $\E X$ under the assumption $\| \E X \| \leq \poly(d)$.
\end{enumerate}

We start with the second step, which captures much of our conceptual contribution -- even under the assumption $R \leq \sqrt d$, prior algorithms could not achieve pure differential privacy.
Then we discuss the first step, where prior algorithms require $\Omega(d (\log R + \log(1/\beta)))$ samples.
We give an algorithm with sample complexity $\log R(d + \log(1/\beta))$.
When $\log R \ll d$, our algorithm improves over prior work.
(Information-theoretically, $d \log R + \log(1/\beta)$ is possible.)

\subsubsection{Private mean estimation when $\| \E X \| \leq \poly(d)$}
Let us write $\mu = \E X$.
For simplicity, for now we focus on the case that our goal is to find $\hat{\mu}$ such that $\|\hat{\mu} - \mu\| \leq O(1)$ using $O(d)$ samples.
(We can reduce from the $\alpha$-error case to this one by placing the samples in buckets containing around $1/\alpha^2$ samples and taking sample means within each bucket.)
We will also think of $\e$ as a small constant.
A merit of our approach is that it allows powerful techniques from robust statistics to be used for private algorithm design: in this case, our algorithm draws heavily on a robust mean estimation algorithm due to \cite{CherapanamjeriFB19}, using SoS exponential mechanism to privatize its use of convex programming.

\paragraph{Iterative refinement/gradient descent}
Our algorithm will iteratively refine an initial estimate of the mean (without loss of generality, the origin), producing a series of estimates $\hat{\mu}_0 =0, \hat{\mu}_1,\ldots,\hat{\mu}_T$.
In each step $t$, we will privately find a unit vector $v$ such that $\iprod{v, \mu - \hat{\mu}_t} \geq \Omega(1) \cdot \|\mu - \hat{\mu}_t\|$.
Then we can replace $\hat{\mu}_t$ with $\hat{\mu}_{t+1} = \hat{\mu}_t + r \cdot v$ for some appropriate step size $r > 0$.
By standard reasoning this means that $O(\log d)$ steps suffice to obtain $\|\hat{\mu}_T - \mu\| \leq O(1)$.

This gradient-descent approach introduces only logarithmic overheads into our sample complexity and running time, so now we turn to the heart of our algorithm: privately finding $v$.

\paragraph{Finding private gradients}
Given samples $X_1,\ldots,X_n \in \R^d$ and $\hat{\mu}_t$, we would like to privately select a unit vector $v$ such that $\iprod{v, \hat{\mu}_t - \mu} \geq \Omega(1) \cdot \|\hat{\mu}_t - \hat{\mu}\|$.
We will use the exponential mechanism, for which we need to define a score function.
For this, we are inspired by recent work in robust and heavy-tailed statistics, where the following has become a standard fact \cite{LugosiM19a}:
\begin{fact}[Directions with many outliers are good, informal]
    \label{fact:outliers}
    Suppose that $\|\hat{\mu}_t - \mu\| \gg 1$.
    Then, with high probability over $X_1,\ldots,X_n$, so long as $n \gg d$, there are at least $0.9n$ samples $X_i$ such that $\iprod{X_i - \hat{\mu}_t, \tfrac{\mu - \hat{\mu}_t}{\|\mu - \hat{\mu}_t\|}} \geq \Omega(1) \cdot \|\hat{\mu}_t - \mu\|$.
    Furthermore, for every unit vector $v$ such that
    \[
        \left | \{ i \, : \, \iprod{X_i - \hat{\mu}_t, v} \geq \Omega(1) \cdot \|\hat{\mu}_t - \mu\| \} \right | \geq 0.8n\mcom
    \]
    we have $\iprod{v, \mu - \hat{\mu}_t} \geq \Omega(1) \cdot \|\mu  - \hat{\mu}_t\|$.
\end{fact}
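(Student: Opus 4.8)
The plan is to reduce the whole statement to a single uniform-over-directions concentration inequality, and then read off both parts by elementary arithmetic using the identity $\iprod{X_i - \hat{\mu}_t, u} = \iprod{X_i - \mu, u} + \iprod{\mu - \hat{\mu}_t, u}$. Throughout, write $\Delta = \|\mu - \hat{\mu}_t\|$ and $w = (\mu - \hat{\mu}_t)/\Delta$; the hypothesis ``$\gg 1$'' means $\Delta \geq C$ for a suitably large absolute constant $C$, and ``$n \gg d$'' means $n \geq C' d$ for a suitably large absolute constant $C'$.

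The key lemma I would establish first is: if $n \gg d$ then, with probability $1 - 2^{-\Omega(n)}$, for every unit vector $u \in \R^d$ one has $\card{\set{i : \iprod{X_i - \mu, u} \geq \tau_0}} \leq 0.1 n$, where $\tau_0$ is an absolute constant (say $\tau_0 = 10$). For a fixed $u$ this is immediate from Chebyshev: since $\Cov(X) \preceq I$ we have $\Var(\iprod{X - \mu, u}) = u^\top \Cov(X) u \leq 1$, so $\Pr\brac{\iprod{X_i - \mu, u} \geq \tau_0} \leq 1/\tau_0^2 \leq 0.01$, and a Chernoff bound gives $\card{\set{i : \cdots}} \leq 0.06 n$ except with probability $2^{-\Omega(n)}$. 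To make this uniform over all $u$ I would invoke VC uniform convergence for the class of affine halfspaces in $\R^d$ (VC dimension $d+1$): with $n \gg d$ samples, every halfspace's empirical mass is within $0.05$ of its true mass with probability $1 - 2^{-\Omega(n)}$, upgrading the per-direction bound $0.01$ to the uniform bound $0.06 < 0.1$. A hands-on $\e$-net argument over the sphere — discretizing the threshold slightly to cope with the fact that the count function is not Lipschitz in $u$ — works equally well and avoids citing VC theory.

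Given the key lemma, the first assertion follows by applying it with $u = -w$: once $C \geq 2\tau_0$ we have $\Delta/2 \geq \tau_0$, so at most $0.1n$ indices satisfy $\iprod{X_i - \mu, -w} \geq \Delta/2$, hence at least $0.9n$ indices satisfy $\iprod{X_i - \mu, w} > -\Delta/2$, and therefore $\iprod{X_i - \hat{\mu}_t, w} = \iprod{X_i - \mu, w} + \Delta > \Delta/2 = \tfrac12 \|\hat{\mu}_t - \mu\|$, which is the claimed $\Omega(1) \cdot \|\hat{\mu}_t - \mu\|$. For the second assertion, let $v$ be a unit vector with $\iprod{X_i - \hat{\mu}_t, v} \geq c'\Delta$ for at least $0.8n$ indices (here $c'$ is the absolute constant hidden in the hypothesis's $\Omega(1)$), and set $a = \iprod{\mu - \hat{\mu}_t, v}$; each such index has $\iprod{X_i - \mu, v} \geq c'\Delta - a$. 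If we had $a < (c'/2)\Delta$, then $c'\Delta - a > (c'/2)\Delta \geq \tau_0$ provided $C \geq 2\tau_0/c'$, so at least $0.8n$ indices would satisfy $\iprod{X_i - \mu, v} \geq \tau_0$, contradicting the key lemma's cap of $0.1n$. Hence $\iprod{v, \mu - \hat{\mu}_t} = a \geq (c'/2)\Delta = \Omega(1) \cdot \|\mu - \hat{\mu}_t\|$. Because the key lemma is uniform over all unit vectors, both assertions hold simultaneously, on the same $1 - 2^{-\Omega(n)}$ event, for every $\hat{\mu}_t$ with $\|\hat{\mu}_t - \mu\| \geq C$ — in particular for whatever $\hat{\mu}_t$ the algorithm produces at any step, even though it may depend on the samples.

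The main obstacle is the uniformity over directions in the key lemma: a single fixed direction needs only $n = \Omega(1)$ samples, and the hypothesis $n \gg d$ is exactly what pays for the union bound / covering of the sphere; one has to handle the non-Lipschitz count function carefully (net-with-threshold-slack, or a VC citation). Everything else — the Chebyshev estimate, the bookkeeping of the absolute constants $c', \tau_0, C$, and the two arithmetic manipulations — is routine. I would also remark that the key lemma is essentially the standard ``resilience''/stability property of i.i.d.\ bounded-covariance samples from robust statistics, so one could alternatively quote an off-the-shelf resilience lemma and derive both assertions from it directly.
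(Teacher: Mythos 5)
Your proof is correct, but note that the paper itself does not prove Fact~\ref{fact:outliers}: it is stated as an ``informal'' standard fact attributed to \cite{LugosiM19a}, and the rigorous analogue the paper actually relies on (Assumption~\ref{ass:when-at-mu-simple-str} / Assumption~\ref{ass:when-at-mu-sdp-str}, established via Lemma~\ref{lem:well-conditonedness-sdp} from \cite{CherapanamjeriFB19}) is phrased at the level of the $k$ bucket means $Z_i$ and of SDP-centrality, not of the raw samples. Your key lemma --- uniform over all unit directions $u$, at most a $0.1$-fraction of samples have $\iprod{X_i-\mu,u}\ge\tau_0$, proved by per-direction Chebyshev plus Chernoff and made uniform by VC/$\e$-net --- is exactly the ``centrality of $\mu$'' / resilience property that underlies the Lugosi--Mendelson framework the paper is citing, and the arithmetic you do with the decomposition $\iprod{X_i-\hat\mu_t,u}=\iprod{X_i-\mu,u}+\iprod{\mu-\hat\mu_t,u}$ is how both directions of the informal fact are always derived from it. So at the level of what is being asked, this is essentially the intended argument, and your remark that the key lemma is ``the standard resilience property'' is apt: one could equivalently quote it off the shelf. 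The one substantive thing the paper does differently, and which your raw-sample version does not capture, is the median-of-means bucketing: the paper applies the centrality property to bucket means $Z_i$ so that the threshold can be taken as $O(\sqrt{1/m}) = O(\alpha)$ rather than an absolute constant, which is what lets the iteration converge to error $\alpha$ and gives sub-Gaussian confidence intervals. That refinement is orthogonal to the logic of the fact as stated, so it is not a gap in your argument, merely a heads-up that the informal fact is only the $\alpha=\Theta(1)$ special case of what is used downstream.
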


Fact~\ref{fact:outliers} makes a good choice of score function clear: we should use 
\[
    s(X,v) = | \{ i \, : \, \iprod{X_i - \hat{\mu}_t, v} \geq \Omega(1) \cdot \|\hat{\mu}_t - \mu\| \} |\mper
\]
Utility of this score function is captured by Fact~\ref{fact:outliers}, and bounded sensitivity is clear by construction.
(In fact, it is exactly this bounded sensitivity property which has already made Fact~\ref{fact:outliers} so important in robust and heavy-tailed statistics.)
We do not necessarily know the value of $\|\hat{\mu}_t - \mu\|$, but getting a private estimate of this quantity is not too difficult -- we privatize a procedure due to \cite{CherapanamjeriFB19}.

A straightforward analysis shows that, since the volume of the $d$-dimensional unit ball is roughly $\exp(d)$, exponential mechanism with this score function will $\e$-privately select a good $v$ so long as $n \gg d/\e$, which is exactly the sample complexity we expect for estimating $\mu$ to error $O(1)$.

\paragraph{Finding private gradients in polynomial time}
Of course, the key problem is that sampling with $\Pr(v) \propto \exp(\e \cdot s(X,v))$ may not be possible in polynomial time.
Indeed, a seemingly-easier problem, finding the highest-scoring $v$, seems closely related to computing Tukey depth, which is NP-hard.

However, some hope comes again from robust/heavy-tailed statistics, where approximation algorithms (often based on semidefinite programming) for the problem of finding the highest scoring $v$ have become an invaluable tool.
Our starting point is the by-now standard construction of such a semidefinite relaxation, which we briefly review.
First, we write this optimization problem as a degree-$2$ polynomial optimization problem in variables $v_1,\ldots,v_d$ and $b_1,\ldots,b_n$, where the latter are constrained so as to be $0/1$ indicators of $\iprod{X_i - \hat{\mu}_t, v} \geq r$ for some threshold value $r$:
\[
    \max \sum_{b=1}^n b_i \text{ such that } \|v\|^2 \leq 1, b_i^2 = b_i, \text{ and } b_i \iprod{X_i - \hat{\mu}_t, v} \geq b_i \cdot r \text{ for all $i$}\mper
\]

The degree-$2$ SoS relaxation of this optimization problem optimizes over (degree 2) \emph{pseudoexpectations}, which are linear functionals $\pE \, : \, \R[b,v]_{\leq 2} \rightarrow \R$ defined on degree at most $2$ polynomials in $b,v$ which are normalized and positive: $\pE 1 = 1$ and $\pE p(b,v)^2 \geq 0$ for all linear $p$.
Concretely:
\begin{align}
    \label{eq:intro-sos}
    \max_{\pE} \pE \sum_{i \leq n} b_i \text{ s.t. } \pE \|v\|^2 \leq 1, \pE b_i^2 = \pE b_i, \text{ and } \pE b_i \iprod{X_i - \hat{\mu}_t, v} \geq r \cdot \pE b_i \text{ for all $i$.}
\end{align}
This optimization problem can be solved via semidefinite programming -- the resulting (equivalent) SDP optimizes over $(1 + n + d) \times (1 + n + d)$ block matrices
\[
    \max \Tr B \text{ such that } \left ( \begin{matrix} 1 & \tilde{b}^\top & \tilde{v}^\top \\ \tilde{b} & B & W^\top \\ \tilde{v} & W & V \end{matrix} \right ) \succeq 0, \Tr V \leq 1, B_{ii} = \tilde{b}_i,\text{and } \iprod{X_i - \hat{\mu}_t, W_i} \geq r \cdot B_{ii}
\]
Here, $V$ is a proxy for the rank-one matrix $vv^\top$ and similarly for $B$ and $bb^\top$.
This SDP is known to be a good approximation to the problem of finding the maximum-score $v$.

We prove the following fact.
(As an aside, this is where SoS proofs enter the picture: establishing facts such as the below for SoS SDP relaxations in general requires constructing SoS proofs.)

\begin{fact}[Bounded Sensitivity and Utility for \eqref{eq:intro-sos}, informal]
    \label{fact:sensitivity-utility-intro}
    \textbf{Bounded Sensitivity:} Changing a single sample in optimization problem \eqref{eq:intro-sos} can change the objective value by at most $1$.
    \textbf{Utility:} With high probability over $X_1,\ldots,X_n$, if $\|\hat{\mu}_t - \mu\| \gg 1$, then any feasible $\pE$ in \eqref{eq:intro-sos} with objective value at least $0.8n$ satisfies $\iprod{\pE v, \mu - \hat{\mu}_t} \geq \Omega(1) \cdot \|\mu - \hat{\mu}_t\|$.
\end{fact}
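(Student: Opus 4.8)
The plan is to treat the two claims separately, reasoning throughout about the block‑matrix SDP equivalent to \eqref{eq:intro-sos} (equivalently, about degree‑two pseudoexpectations $\pE$ feasible for it). Bounded sensitivity is proved by an explicit surgery on a feasible SDP solution; utility is proved by rewriting the one‑line ``population'' argument behind Fact~\ref{fact:outliers} as a short sequence of degree‑two SoS manipulations.

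\textbf{Bounded sensitivity.} Changing a single sample $X_j \to X_j'$ alters \eqref{eq:intro-sos} only in the one constraint $\pE b_j\iprod{X_j-\hat\mu_t,v} \ge r\,\pE b_j$ (equivalently, the SDP constraint $\iprod{X_j-\hat\mu_t,W_j}\ge rB_{jj}$). Given any feasible $\pE$ for the original instance I would produce a feasible $\pE'$ for the new one by ``deleting the $b_j$ coordinate'': in the SDP matrix, drop the row and column indexed by $\tilde b_j$ and re‑insert an all‑zero row and column in their place (i.e.\ zero out $\tilde b_j$, the $j$‑th row and column of $B$, and the $j$‑th row of $W$, leaving $V$ and every other entry untouched). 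Padding a PSD matrix with a zero row and column keeps it PSD; $\Tr V$, the constraints $B_{ii}=\tilde b_i$ for $i\ne j$, and the constraints $\iprod{X_i-\hat\mu_t,W_i}\ge rB_{ii}$ for $i\ne j$ are all unaffected; and for $i=j$ the new constraint reads $\iprod{X_j'-\hat\mu_t,0}=0\ge r\cdot 0$, which holds since $r\ge 0$ and $B'_{jj}=0$. So $\pE'$ is feasible, with objective $\Tr B-\tilde b_j\ge \Tr B-1$, because $B_{jj}=\tilde b_j$ together with positivity of $\pE$ forces $\tilde b_j=\pE b_j=\pE b_j^2\in[0,1]$. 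The symmetric surgery (adding a zero coordinate) covers the other direction of a neighboring change, so the optimum of \eqref{eq:intro-sos} moves by at most $1$.

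\textbf{Utility.} Translate so $\hat\mu_t=0$, write $\mu=\E X$, $\rho=\|\mu\|$, and take the threshold $r=c_1\rho$ for a constant $c_1\in(0,1)$ (in the algorithm $\rho$ is replaced by a private estimate good to a constant factor, perturbing constants only). Condition on the deterministic event $\mathcal E$ that the preprocessed samples satisfy a \emph{certifiable} bounded‑covariance‑type condition --- e.g.\ some subset $S$ of $\ge(1-o(1))n$ of them has $C_0 I-\tfrac1n\sum_{i\in S}(X_i-\mu)(X_i-\mu)^\top\succeq 0$ and the rest have $\|X_i-\mu\|\le\tilde O(\sqrt d)$ --- which holds with probability $\ge 1-\beta$ once $n\gtrsim d+\log(1/\beta)$ up to logarithmic factors (standard for heavy‑tailed distributions). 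Let $\pE$ be feasible with $\pE\sum_i b_i\ge 0.8n$; put $\beta_i:=\pE b_i\in[0,1]$ and $w_i:=\pE[b_i v]\in\R^d$, so summing the feasibility constraints gives $\sum_i\iprod{X_i,w_i}\ge r\sum_i\beta_i\ge 0.8rn$. Splitting $X_i$ around $\mu$: (a) for the cross term, pseudo‑Cauchy--Schwarz between the degree‑one forms $b_i$ and $\iprod{X_i-\mu,v}$ gives $\iprod{X_i-\mu,w_i}^2=\pE[b_i\iprod{X_i-\mu,v}]^2\le\beta_i\,\pE[\iprod{X_i-\mu,v}^2]$, and summing, applying Cauchy--Schwarz over $i$, and invoking the degree‑two SoS certificate $\pE\big[\sum_{i\in S}\iprod{X_i-\mu,v}^2\big]\le C_0 n\,\pE\|v\|^2\le C_0 n$ (this is where $\mathcal E$ enters) yields $\big|\sum_{i\in S}\iprod{X_i-\mu,w_i}\big|\le\sqrt{C_0}\,n$, notably $d$‑free, with the $i\notin S$ terms contributing a lower‑order amount bounded by $\sqrt{\beta_i}\|X_i-\mu\|$; (b) for the mean term, $\sum_i\iprod{\mu,w_i}=\pE\big[(\sum_i b_i)\iprod{\mu,v}\big]=(\sum_i\beta_i)\iprod{\pE v,\mu}+\Cov_{\pE}(\sum_i b_i,\,\iprod{\mu,v})$, and the pseudo‑covariance is at most $\sqrt{\Var_{\pE}(\sum_i b_i)}\cdot\sqrt{\Var_{\pE}\iprod{\mu,v}}$, where $\Var_{\pE}(\sum_i b_i)\le\big(\sum_i\sqrt{\beta_i(1-\beta_i)}\big)^2\le n\sum_i(1-\beta_i)\le 0.2n^2$ and $\Var_{\pE}\iprod{\mu,v}\le\pE\iprod{\mu,v}^2\le\rho^2\pE\|v\|^2\le\rho^2$, so that error is $\le\sqrt{0.2}\,n\rho$. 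Combining, $(\sum_i\beta_i)\iprod{\pE v,\mu}\ge 0.8rn-\sqrt{0.2}\,n\rho-\sqrt{C_0}\,n=n\rho\big(0.8c_1-\sqrt{0.2}-\sqrt{C_0}/\rho\big)$, which is $\ge\Omega(1)\cdot n\rho>0$ once $c_1$ is a large enough constant (and, if needed, the fractions $0.8,0.9$ in Fact~\ref{fact:outliers} are taken closer to $1$) and $\rho\gg 1$; dividing by $\sum_i\beta_i\le n$ gives $\iprod{\pE v,\mu}\ge\Omega(1)\rho=\Omega(1)\|\hat\mu_t-\mu\|$. The statement is non‑vacuous because the rank‑one pseudoexpectation given by $v=\mu/\rho$, $b_i=\Ind[\iprod{X_i,\mu/\rho}\ge r]$ is feasible with objective $\ge 0.8n$ on $\mathcal E$ when $\rho\gg 1$.

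\textbf{Main obstacle.} Bounded sensitivity is essentially mechanical. The content is the utility half, where the delicate points are (i) choosing the deterministic regularity condition so that it is simultaneously \emph{certifiable by a low‑degree (here degree‑two) SoS proof} --- which is what lets the population argument lift to pseudoexpectations and keeps the whole calculation inside the degree‑two budget of \eqref{eq:intro-sos}; it amounts to positivity of a single matrix --- and \emph{true at sample complexity $\tilde O(d+\log(1/\beta))$} for merely‑bounded‑covariance (heavy‑tailed) distributions, which is where the $o(1)$‑fraction of ``bad'' samples $i\notin S$ and the accompanying preprocessing must be handled with care; and (ii) making all the constants fit together --- the threshold constant $c_1$, the $0.8/0.9$ fractions, the slack from using a private estimate of $\|\hat\mu_t-\mu\|$ in place of the true value, and the constant $C_0$ --- which may force the fractions to be taken closer to $1$ than the informal statement suggests.
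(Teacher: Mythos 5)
Your bounded-sensitivity argument is exactly the paper's (Lemma 6.19, together with Lemma B.1 for the zero-row/zero-column PSD fact).

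For utility, the high-level decomposition — split $X_i - \hat\mu_t = (X_i-\mu) + (\mu-\hat\mu_t)$, bound the deviation term by a pseudo-Cauchy--Schwarz argument, and bound the cross-term via pseudo-covariance — matches the paper's Theorem 6.16, but the deterministic condition you condition on and the key lemma you invoke are genuinely different. The paper conditions on the SDP value at $\mu$ being small (\Cref{ass:when-at-mu-sdp-str}, established via \Cref{lem:well-conditonedness-sdp} / Lemma~7 of \cite{CherapanamjeriFB19}) and then uses \Cref{lem:RT-lemma-str} to extract, \emph{from the particular feasible $\pE$}, a large set $\cT$ of indices for which $\pE[b_i\langle Z_i-\mu,v\rangle] < 5Mr^*\,\pE[b_i^2]$; the remaining $i\notin\cT$ are handled by the trivial bound $\pE b_i^2\le 1$, which is safe because $|\cT^c|$ is small \emph{for any high-value $\pE$}. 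You instead condition on a certifiable empirical covariance bound on a data-dependent set $S$ and sum all the feasibility constraints directly. This is closer in spirit to the meta-theorem's ``SoS proof of utility,'' since the covariance bound is literally a degree-two PSD certificate, but it makes the handling of $i\notin S$ delicate: there your bound $|\langle X_i-\mu,w_i\rangle|\le\sqrt{\beta_i}\|X_i-\mu\|$ can be as large as $\tilde O(\sqrt d)$, which is \emph{not} lower-order once $\|\hat\mu_t-\mu\|$ has shrunk to $O(1)$ — the regime the fact is stated for. You should instead drop those indices entirely: restrict every sum to $i\in S$, and replace the floor $\sum_i\beta_i\ge 0.8n$ by $\sum_{i\in S}\beta_i\ge 0.8n-|S^c|$, which preserves all the constants once $|S^c|$ is a small enough constant fraction. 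With that repair your route works, and indeed it is closer to how one would instantiate \Cref{thm:metatheorem} from scratch; the paper's RT-lemma route, which it inherits from \cite{CherapanamjeriFB19}, buys a weaker deterministic requirement on the data and sidesteps the bad-index bookkeeping, at the cost of importing that lemma.
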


While this proof largely adapts similar arguments in the robust statistics literature, there is a key technical innovation.
Prior algorithms employing the SDP described above use a nontrivial rounding step to extract a good vector $v$ from the $d \times d$ PSD matrix $V$.
However, for reasons we discuss below, to use the SoS exponential mechanism, it is important that $v$ can be read directly off of the SDP (more precisely, that the rounding algorithm used is linear).
This means we need a stronger rounding procedure than that used in prior works -- we are able to show that $\tilde{v} = \pE v$, a simple linear function of $\pE$, is a good choice.

\subparagraph{Sampling with convex programs}
The ``utility'' part Fact~\ref{fact:sensitivity-utility-intro} solves the problem of finding a high-scoring $v$ in polynomial time, via semidefinite programming.
But we want to find such a high-scoring $v$ privately.

The key idea is to use the SDP to construct a score function: \emph{convexity of the set of feasible solutions and linearity of the objective function now imply log-concavity of the resulting sampling problem}!
This observation, while simple, is remarkably powerful: it allows us to employ a well-studied SDP from robust statistics nearly out-of-the-box to obtain strong privacy guarantees.

Ultimately, we employ the score function:
\[
    s(X,v_0) = \max_{\pE} \pE \sum_{i \leq n} b_i \text{ s.t. } \pE \|v\|^2 \leq 1, \pE b_i^2 = \pE b_i, \text{ and } \pE b_i \iprod{X_i - \hat{\mu}_t, v} \geq r \cdot \pE b_i \text{ for all $i$} \text{ and } \pE v = v_0\mcom
\]
together with the private sampling algorithm of \cite{BassilyST14}, to sample from $\Pr(v_0) \propto \exp(\e \cdot s(X,v_0))$.

We make a few remarks on the precise way that $s(X,v_0)$ depends on $v_0$; that is, via the constraint $\pE v = v_0$, since this choice is not accidental.
First of all, it is important that the constraints of the optimization problem defining $s(X,v_0)$ depend \emph{linearly} on $v_0$; otherwise we might not retain log-concavity of the resulting sampling problem.
We can do this only because the rounding algorithm described in Fact~\ref{fact:sensitivity-utility-intro}, which proves that $\pE v$ itself is useful, is a simple linear function of $\pE$.

We also note an important interplay between the ``lifted'' nature of the SDP and the utility analysis of the exponential mechanism.
On the one hand, the power of the SDP comes from lifting from the $d + n$ variables $v,b$ to $(d+n+1)^2$ variables, and solving a convex problem in the lifted space.
On the other hand, for the exponential mechanism to satisfy utility with just $O(d)$ samples, it is important that we use it to sample in $d$ dimensions (where, in particular, there is a $2^{O(d)}$-sized cover) rather than, say, $d^2$ dimensions.
This tension is resolved by hiding the additional variables inside the optimization problem which defines $s$, so that exponential mechanism still samples from a $d$-dimensional distribution, but we can still use the power of SoS and semidefinite programming.

\subparagraph{Lipschitzness}
The sampling algorithm of \cite{BassilyST14} (like other algorithms for sampling from log-concave probability distributions) runs in polynomial time in the ambient dimension, \emph{so long as the distribution has Lipschitz log-probabilities}.
Natural approaches to force $s(X,v_0)$ to be Lipschitz, which generally take the form of randomized smoothing, risk violating pure DP, because an algorithm computing a randomized smoothing of $s$ will have some small probability of quietly failing, at which point privacy is at risk.

To ensure that the resulting algorithm runs in polynomial time, therefore, we have to show that $s(X,v_0)$ is Lipschitz with respect to $v_0$.
To establish Lipschitzness, we make a two-step argument:
\begin{enumerate}
    \item We show that dual solutions to the SDP defining $s(X,v_0)$ have optimal dual certificates which are not too large in norm. (At most, say, $\poly(d,n)$.)
    \item We show that any such dual solution to the SDP for $s(X,v_0)$ which certifies an upper bound of $c$ can be adapted to a dual solution for $s(X,v_0 + \Delta)$ which certifies an upper bound of $c + \|\Delta\| \poly(d,n)$, for any small perturbation vector $\Delta$.
\end{enumerate}
Together, these imply that $s(X,v_0)$ is $\poly(d,n)$-Lipschitz with respect to $v_0$.
Since the Lipschitz constant only arises in our algorithm's running time, this suffices for our purposes.
This concludes our overview of the second phase of our private mean estimation algorithm.

\subsubsection{Coarse estimation: from $R$ to $\poly(d)$}

We now describe our algorithm to privately localize $\mu = \E X$ to a ball of radius $\poly(d)$, beginning only with the promise that $\| \E X \| \leq R$ for some large number $R$.
The goal is to do so using as few samples as possible.
Existing efficient algorithms \cite{KamathLSU19} can perform this task with probability $1-\beta$ using $O(d(\log R + \log(1/\beta))/\e)$ samples (we present this analysis in our paper to capture the dependence on $\log(1/\beta)$).
If we used this algorithm, we would obtain sub-Gaussian confidence intervals only when $n \gg \tfrac{d \log R + d \log(1/\beta)}{\e}$.
However, if we do not worry about running time, the same task can be accomplished using the exponential mechanism using only $O(d \log R / \e + \log(1/\beta) / \e)$ samples -- much fewer for $\beta \ll 1$, which is important for constructing confidence intervals.

While we do not quite obtain optimal complexity, we are able to improve on existing algorithms in the regime $\log R \ll d$; our algorithm requires $\tilde{O}(d \log R / \e + \log R \log(1/\beta) / \e)$ samples.

Our algorithm again follows the proofs-to-private-algorithms approach.
We start with the following basic instantiation of the exponential mechanism.
We are given samples $X_1,\ldots,X_n$.
With probability at least $1-\beta$ over the choice of $X_1,\ldots,X_n$, as $n \gg \log(1/\beta)$, any ball of radius $\poly(d)$ containing $0.9n$ of the samples will have center which has distance at most $\poly(d)$ to $\mu$.
So, we would like to use the exponential mechanism with score function $s(X,x) = |\{ i \, : \, \|X_i - x\| \leq \poly(d) \} |$ to select a point from the ball of radius $R$ centered at the origin.

As before, it is not clear how to perform the sampling task that this would require in polynomial time.
So, we replace the score function with a convex relaxation, in this case built from the ``degree-$4$'' SoS relaxation of of the following polynomial optimization problem:
\[
    \max_{b,x} \sum_{i=1}^n b_i \text{ s.t. } \|x\|^2 \leq R^2, b_i^2 = b_i, \text{ and } b_i \|X_i - x\|^2 \leq \poly(d) \cdot b_i \text{ for all $i$}\mper
\]
(The degree-$4$ SoS relaxation shows up here because the optimization problem involves polynomials of degree $3$, and SoS relaxations are defined only for even degrees.)
That is, we use the score function
\[
    s(X,x_0) = \max_{\pE} \pE \sum_{i=1}^n b_i \text{ s.t. } \pE \text{ satisfies } \|x\|^2 \leq R^2, b_i^2 = b_i, b_i \|X_i - x\|^2 \leq b_i (R/10)^2, \text{ and } \pE x = x_0
\]
(for the definition of ``satisfies'', see Section~\ref{sec:prelims}).

Once we have decided to use this particular SoS relaxation, the outlines of the algorithm and its analysis are largely similar to the second phase of the algorithm, but with different SoS proofs plugged in.
In particular, we prove an analogue of Fact~\ref{fact:sensitivity-utility-intro} for this setting, and establish Lipschitzness of the SDP, so that we can use the same strategy as in second phase.

The key step is to show that a high-scoring $x_0$ has distance at most $R/2$ to any $x'$ which has distance $\poly(d)$ to $0.8n$ samples.
This statement turns out to have a relatively simple SoS proof.
This shows that SoS exponential mechanism manages to reduce $R$ to $R/2$.
Iterating this $\log R$ times completes the algorithm.

\section{Preliminaries}
\label{sec:prelims}

\subsection{SoS Proofs and Pseudoexpectations}
We give a brief overview of SoS; for details see \cite{barak2016proofs}.
\paragraph{SoS Proofs}
We first informally review the SoS proof system.
Let $p(x),p_1(x),\ldots,p_m(x)$ be multivariate polynomials in indeterminates $x_1,\ldots,x_n$.
The SoS proof system can prove statements of the form:
\[
    \text{For all $x \in \R^n$, if $p_1(x) \geq 0,\ldots,p_m(x) \geq 0$, then $p(x) \geq 0$.}
\]
Polynomials are highly expressive, so a wide range of mathematical statements can be encoded in the above form.
An SoS proof of such a statement is a family of polynomials $\{ q_S(x) \, : \, S \subseteq [m] \}$ such that each $q_S(x) = \sum_{i=1}^r (q_S^{(i)}(x))^2$ is a sum of squares, and $p(x) = \sum_{S \subseteq [m]} q_S(x) \cdot \prod_{i \in S} p_i(x)$. The proof has degree $D \in \N$ if each term in this sum is a polynomial of degree at most $D$.
We write:
\[
    p_1 \geq 0 ,\ldots,p_m \geq 0 \proves_D p \geq 0\mper
\]
Additionally, we will need one non-standard definition: for any given $j \in [m]$, we say that an SoS proof is degree $D_j$ \emph{with respect to} $p_j$ if every term $q_S(x) \cdot \prod_{i \in S} p_i(x)$ in the proof such that $j \in S$ has degree at most $D'$.
We will sometimes write $p_1 \geq 0,\ldots p_m \geq_0 \proves_{D,\deg_{p_j} = D_j} p \geq 0$.

SoS proofs are dual solutions to semidefinite programs arising from the Sum of Squares method, where pseudoexpectations are primal solutions.
As in many applications of convex programming, to prove facts about primal solutions, the main technique is to construct duals.
In particular, to show that the SoS SDPs we use for exponential mechanism score functions satisfy bounded sensitivity and privacy, it suffices to construct SoS proofs witnessing these facts.

\paragraph{Pseudoexpectations}
For even $d \in \N$, a degree-$d$ pseudoexpectation in indeterminates $x = x_1,\ldots,x_n$ is a linear operator $\pE \, : \, \R[x_1,\ldots,x_n]_{\leq d} \rightarrow \R$ which satisfies $\pE 1 = 1$ an $\pE p^2 \geq 0$ for every degree-$d/2$ polynomial $p$. We say that $\pE$ \emph{satisfies} an inequality $p(x) \geq 0$ if for every $q$ such that $\deg(p\cdot q^2 ) \leq d$ we have $\pE p q^2 \geq 0$.

\paragraph{Archimedean systems and duality}
We say that a system of polynomial inequalities $p_1(x) \geq 0,\ldots, p_m \geq 0$ is Archimedean if for some real $M > 0$ it contains the inequality $\|x\|^2 \leq M$.
SoS proofs and pseudoexpectations satisfy a natural duality for Archimedean systems, which we use often.
Namely: for every Archimedian system $p_1,\ldots,p_m$ and every polynomial $f$ and every degree $d$, exactly one of the following holds.
\begin{enumerate}
\item For every $\e > 0$ there is an SoS proof $p_1 \geq 0 ,\ldots,p_m \geq 0 \proves_d f \geq -\e$
\item There is a degree-$d$ pseudoexpectation satisfying $p_1 \geq 0 ,\ldots, p_m \geq 0$ but $\pE f < 0$.
\end{enumerate}

\subsection{Privacy}

We have already seen the definition of (pure) differential privacy.
Our approach relies heavily upon the exponential mechanism of~\cite{McSherryT07}, we employ a volume-based version which appears in~\cite{KapralovT13}.
The proof is standard but we include it for completeness.

\begin{theorem}[volume-based exponential mechanism~\cite{McSherryT07, KapralovT13}]
	\label{thm:volume-based-exponential-mechanism}
	The exponential mechanism $M_E$ on inputs $X, \cH \subset \R^d$, $s$, selects and outputs some object $h \in \cH$, where the probability a particular $h$ is selected is proportional to $\exp\paren{\frac{\eps s\paren{X, h}}{2 \Delta}}$.
	Let $\cH^* \subseteq \cH$ be a set such that, $ \OPT\paren{X} \le \inf_{h \in \cH^*}s\paren{X, h}$ be a lower bound for the score attained by the objects in $\cH^*$ with respect to the dataset $X$. Moreover, let $\vol\paren{S}$ denote the Lebesgue measure of $S$ in $\R^d$.
	Then
	\begin{equation*}
		\Pr\Brac{s\paren{M_E\paren{X}} \le \OPT\paren{X} - \frac{2\Delta}{\epsilon} \Paren{\ln \Paren{\frac{\vol\paren{\cH}}{\vol\paren{\cH^*}} + t}}} \le \exp\paren{-t}.
	\end{equation*}
\end{theorem}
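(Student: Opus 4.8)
The plan is a direct computation from the mechanism's output density, since this is purely a utility statement (privacy is not at issue here). Set
\[
    \alpha := \frac{2\Delta}{\eps}\Paren{\ln\Paren{\frac{\vol(\cH)}{\vol(\cH^*)}} + t}\mcom \qquad B := \Set{ h \in \cH : s(X,h) \le \OPT(X) - \alpha }\mper
\]
The event in the theorem statement, $s(M_E(X)) \le \OPT(X) - \alpha$, is exactly the event $M_E(X) \in B$, so it suffices to show $\Pr[M_E(X) \in B] \le e^{-t}$. By definition of the (volume-based) exponential mechanism,
\[
    \Pr[M_E(X) \in B] = \frac{\int_B \exp\Paren{\tfrac{\eps s(X,h)}{2\Delta}}\,dh}{\int_{\cH} \exp\Paren{\tfrac{\eps s(X,h)}{2\Delta}}\,dh}\mper
\]

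First I would upper bound the numerator: every $h \in B$ satisfies $s(X,h) \le \OPT(X) - \alpha$, so the integrand is at most $\exp(\tfrac{\eps(\OPT(X)-\alpha)}{2\Delta})$ on $B$, giving $\int_B \exp(\tfrac{\eps s(X,h)}{2\Delta})\,dh \le \vol(B)\exp(\tfrac{\eps(\OPT(X)-\alpha)}{2\Delta}) \le \vol(\cH)\exp(\tfrac{\eps(\OPT(X)-\alpha)}{2\Delta})$. Then I would lower bound the denominator by restricting the integral to $\cH^*$ and invoking the hypothesis $\OPT(X) \le \inf_{h \in \cH^*} s(X,h)$: $\int_{\cH}\exp(\tfrac{\eps s(X,h)}{2\Delta})\,dh \ge \int_{\cH^*}\exp(\tfrac{\eps s(X,h)}{2\Delta})\,dh \ge \vol(\cH^*)\exp(\tfrac{\eps\OPT(X)}{2\Delta})$.

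Dividing these two bounds, the $\exp(\tfrac{\eps\OPT(X)}{2\Delta})$ factors cancel, leaving $\Pr[M_E(X) \in B] \le \tfrac{\vol(\cH)}{\vol(\cH^*)}\exp(-\tfrac{\eps\alpha}{2\Delta})$; substituting the chosen value of $\alpha$ makes the right-hand side equal to $\tfrac{\vol(\cH)}{\vol(\cH^*)} \cdot \tfrac{\vol(\cH^*)}{\vol(\cH)} e^{-t} = e^{-t}$, which is the claim. There is no real obstacle here — the content is the two-line integral estimate. The only things to be careful about are bookkeeping: $B$ is measurable (as the complement of a superlevel set of a measurable function), and one should assume $0 < \vol(\cH) < \infty$ and $\vol(\cH^*) > 0$ so that the output density and the displayed ratio are well-defined; these mild nondegeneracy conditions are implicit in the mechanism being meaningful.
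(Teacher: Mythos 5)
Your proof is correct and is essentially the paper's own argument: bound the numerator of the output ratio over the bad set by $\vol(\cH)\exp(\eps c/2\Delta)$, bound the denominator from below by restricting to $\cH^*$, divide, and substitute the threshold; the only cosmetic difference is that you plug in the explicit value of $\alpha$ from the start rather than at the end. One small note: the displayed theorem statement in the paper has a misplaced parenthesis that makes the $+t$ look like it sits inside the $\ln$; you correctly read through this typo, and the intended (and proved) bound has $t$ outside the logarithm, as in your $\alpha$.
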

\begin{proof}
	We follow the same argument as the standard exponential mechanism analysis.
	\begin{align*}
		\Pr\brac{s\paren{M_E\paren{X}} \le c} &= \frac{\int_{h: s\paren{X, h} \le c, h \in \cH} \exp\Paren{\frac{\eps s\paren{X, h}}{2 \Delta}} \, \mathrm{d} h}{\int_{h': h' \in \cH} \exp\Paren{\frac{\eps s\paren{X, h'}}{2 \Delta}} \, \mathrm{d} h'} \\
		&\le
		\frac{\vol\paren{\cH} \exp\Paren{\frac{\eps c}{2 \Delta}}}{\vol\paren{\cH^*} \exp\Paren{\frac{\epsilon \OPT\Paren{X}}{2\Delta}}} \\
		&=
		\frac{\vol\paren{\cH}}{\vol\paren{\cH^*}} \exp\Paren{\frac{\epsilon \paren{c - \OPT\paren{X}}}{2 \Delta}}.
	\end{align*}
	From this inequality, the theorem statement can be obtained by substituting in the prescribed value for $c$.
	It remains to explain the first inequality. The numerator can be upper bounded since we are taking the integral at most over $\cH$, and the value of the integral at each point is less than $\exp\paren{\epsilon c/ 2\Delta}$.
	Similarly, the denominator can be lower bounded by considering only the points in $\cH^*$, all of which have a lower bound of $\exp\paren{\epsilon \OPT\paren{X} / 2\Delta}$.
\end{proof}

We will also extensively use the following result of \cite{BassilyST14}.

\begin{theorem}[Lemma 6.5 of \cite{BassilyST14}]
\label{thm:efficient-sampling}
    For every $\e$ and $d \in \N$, there is an $\e$-DP algorithm $A$ with the following guarantees.
    Given access to an evaluation oracle for a concave, $L$-Lipschitz function $f \, : \, \R^d \rightarrow \R$ and to membership and projection oracles for a convex set $\cC \subseteq \R^d$, the algorithm produces a sample from a distribution $D$ such that for every (measurable) $S \subseteq \cC$,
    \[
        e^{-\e} \Pr(A \in S) \leq \frac{\int_S \exp(f)}{\int_{\cC} \exp(f)} \leq e^{\e} \Pr(A \in S)\mper
    \]
    The algorithm runs in time $\poly(d,L\diam(\cC),1/\e, \log \diam \paren{C})$, making at most that many queries to the oracles.
\end{theorem}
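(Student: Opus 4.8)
The plan is to realize $A$ as a random-walk-based approximate sampler for the log-concave measure $\pi_f \propto \exp(f)$ on $\cC$, run long enough that its output distribution $D$ is within a factor $e^{\pm\e}$ of $\pi_f$ \emph{in the multiplicative (relative-density) sense}, not merely in total variation; the $\e$-DP claim then follows from the privacy of the idealized exponential mechanism $\propto\exp(f)$ together with this multiplicative closeness. The natural candidate is the \emph{hit-and-run} walk on $\cC$ targeting $\pi_f$: from the current point $x$, draw a uniformly random line $\ell$ through $x$ and move to a point of $\ell\cap\cC$ drawn with density $\propto \exp(f)$ restricted to that chord. The key advantage over grid-based schemes is that the stationary distribution of this walk is \emph{exactly} $\pi_f$, so there is no discretization bias to control, and in particular no boundary cells and no volume estimates (which would be hard to control multiplicatively and would introduce failure probabilities, breaking pure DP). I would split the argument into three pieces: (i) each step is implementable in polynomial time using only the given oracles; (ii) from a suitable warm start the walk reaches relative-density error $e^{\pm\e}$ in polynomially many steps; (iii) such a warm start can itself be produced in polynomial time.

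For (i): to locate $\ell\cap\cC$ it suffices to binary-search along $\ell$ with the membership oracle, bounding the search interval via the projection oracle and $\diam(\cC)$; since $f$ is concave, $f$ restricted to the chord is concave, so $\exp(f)$ restricted to the chord is a one-dimensional log-concave density, which can be sampled to within relative error $e^{\pm\e/(2T)}$ (with $T$ the number of walk steps) using the evaluation oracle — e.g.\ by a sufficiently fine one-dimensional grid or adaptive-rejection sampling — at cost $\poly(L\diam(\cC),T/\e,\log\diam(\cC))$ per step, since this is a one-dimensional problem. Accumulating these per-step errors over $T$ steps contributes at most $e^{\pm\e/2}$ to the final bound. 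For (iii): build a warm start by the standard ``sequence of bodies'' device (after restricting to the affine hull of $\cC$ if $\cC$ is not full-dimensional) — start from the uniform distribution on a small ball $B_0\subseteq\cC$, whose center and radius $r$ are found via the oracles, on which $f$ varies by at most $2Lr$, so that $\mathrm{Unif}(B_0)$ is $e^{O(Lr)}$-warm for $\pi_f$ restricted to $B_0$; then approximately sample through a growing chain $B_0\subseteq B_1\subseteq\cdots\subseteq B_k=\cC$ with $\pi_f(B_{i+1})\le 2\pi_f(B_i)$, feeding the (merely TV-accurate) output for $\pi_f|_{B_i}$ as a warm start for $\pi_f|_{B_{i+1}}$. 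Here $k=O\big(Ld\,\diam(\cC)+d\log(\diam(\cC)/r)\big)=\poly$, and the intermediate phases only need warmth, not multiplicative accuracy.

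The crux is (ii): upgrading the classical polynomial mixing bounds for hit-and-run on log-concave measures (Lov\'asz--Vempala-style conductance arguments giving total-variation mixing in $\poly(d,\diam(\cC),L,\log(1/\eta))$ steps from a warm start) to a bound on the \emph{multiplicative} closeness of $D$ to $\pi_f$, as pure DP and the statement demand. The route I would take is: (a) run the walk until the $\chi^2$-distance (equivalently the relative-$L^2(\pi_f)$ distance) is tiny — this follows from the spectral gap implied by the conductance bound, again from the warm start; (b) apply one additional regularization/smoothing step to convert relative-$L^2$ control into relative-$L^\infty$ control — using that the one-step kernel is sufficiently spread out relative to $\pi_f$ (for hit-and-run one must be a little careful near the kernel's singularity at the current point, handling a small ball around it separately; for a ball-walk variant this step is cleaner). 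Choosing $T=\poly(d,L\diam(\cC),1/\e,\log\diam(\cC))$ drives this to $e^{\pm\e/2}$. I expect this multiplicative-mixing step — together with the requirement that the procedure terminate \emph{deterministically}, with no ``quiet failure'' events — to be the main technical obstacle, and it is precisely why we insist on a continuous walk whose stationary law is exactly $\pi_f$ rather than a grid walk.

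Combining (i)--(iii) yields $e^{-\e}\le \Pr(A\in S)/\pi_f(S)\le e^{\e}$ for every measurable $S\subseteq\cC$, which is the stated approximation guarantee, and each component's running time and number of oracle queries is $\poly(d,L\diam(\cC),1/\e,\log\diam(\cC))$ as claimed. The $\e$-DP statement is then the usual exponential-mechanism argument: when $f$ arises as a (suitably scaled) bounded-sensitivity score function, for neighboring datasets the idealized laws $\pi_{f_X}$ and $\pi_{f_{X'}}$ are already multiplicatively close, and this closeness passes to the actual output distribution $D$ up to the controlled $e^{\pm\e}$ slack, after rescaling constants.
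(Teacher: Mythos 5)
The paper itself offers no proof of this statement: it is imported as Lemma~6.5 of \cite{BassilyST14}, with the only local remark being that the running-time bound $\poly(d,L\diam(\cC),1/\e,\log\diam(\cC))$ follows from the $\poly(d,L,\diam(\cC),1/\e)$ bound stated there by rescaling $\cC$ to unit diameter. So you are reconstructing the cited proof, not anything argued in this paper. Your route also differs from the one in \cite{BassilyST14}, which follows Applegate--Kannan and runs a \emph{lattice} (grid) walk on a fine grid covering $\cC$: there the stationary law of the walk is an explicit discrete distribution whose pointwise closeness to $\pi_f$ is controlled via $L$-Lipschitzness of $f$ and the grid spacing, and multiplicative ($L^\infty$) mixing of the resulting finite chain is obtained from conductance. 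Your continuous hit-and-run plan dodges the grid discretization, which is attractive, but transfers the work to proving $L^\infty$ mixing of a continuous chain --- and that is precisely where your sketch has a genuine gap.

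The gap is step~(ii)(b), as you yourself suspected. To upgrade $\chi^2$ (relative-$L^2$) control to relative-$L^\infty$ control with one extra step, the standard argument uses reversibility to write
\begin{equation*}
\Bigl|\tfrac{(P\rho)(y)}{\pi(y)}-1\Bigr|
= \Bigl|\int K(y,x)\Bigl(\tfrac{\rho(x)}{\pi(x)}-1\Bigr)\,dx\Bigr|
\le \Bigl\|\tfrac{K(y,\cdot)}{\pi(\cdot)}\Bigr\|_{L^2(\pi)}\cdot\Bigl\|\tfrac{\rho}{\pi}-1\Bigr\|_{L^2(\pi)}\mcom
\end{equation*}
which requires $K(y,\cdot)/\pi$ to be uniformly in $L^2(\pi)$. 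The hit-and-run kernel density blows up like $\|x-y\|^{-(d-1)}$ near $x=y$, so $\int(K(y,x)/\pi(x))^2\pi(x)\,dx$ behaves near the singularity like $\int_0^\epsilon r^{1-d}\,dr$, which diverges already for $d\ge2$. Excising a small ball around $y$ does not repair this: inside the ball you have retained no uniform control, and that ball is exactly where $K(y,\cdot)$ concentrates most of its mass, so you lose the smoothing you were counting on. Your aside that a ball-walk variant is cleaner is correct --- its kernel is bounded --- but the ball walk reintroduces boundary rejections that must be controlled multiplicatively, which is essentially the difficulty the grid walk confronts directly. Two secondary issues: once the chord sampler is only approximate, the implemented chain's stationary law is no longer exactly $\pi_f$, so ``no discretization bias'' is overstated and an explicit compounding-of-multiplicative-error argument over $T$ steps is still owed; and in step~(iii) the intermediate phases return merely TV-accurate samples from $\pi_f|_{B_i}$, which does not by itself give a warm start (a density-ratio bound) for $\pi_f|_{B_{i+1}}$ without a further argument.
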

\cite{BassilyST14} actually provides the running time bound $\poly(d,L,\diam(\cC),1/\e)$, but we can deduce the more precise bound in the theorem statement by a simple scaling argument.
We note that recent work of~\cite{MangoubiV21} provides an algorithm which improves the running time's dependence on $1/\e$ from polynomial to polylogarithmic, as well as reducing the polynomial dependence on $d$ for convex sets $\cC$ that contain a ball of radius $r$, at an additional cost of $\poly \log \paren{1/r}$ (Remark 2.5 of \cite{MangoubiV21}). 
As our main focus is on providing polynomial time algorithms and not on designing the fastest algorithms, we do not further employ their improved methods.

\section{Meta-Theorem on SoS Exponential Mechanism}
\label{sec:metatheorem}
To describe the SoS exponential mechanism more completely, we prove a meta-theorem on its performance.
This meta-theorem could be extended in various ways, but the version we give here captures the mechanism as it is used in our paper.
The reader interested solely in our result on mean estimation may comfortably skip this section as we do not rely on it elsewhere, but the exposition and abstraction here may make it easier to follow.

\paragraph{Meta-theorem Setup}
Consider the following template to capture the exponential mechanism in the language of polynomials.
Let $\cX$ be a universe of possible datasets and $\cC \subseteq \R^n$ be a set of candidates.
Consider a score function of the following form.
For each dataset $X$, suppose there is a family of polynomials $p^X,p_1^X,\ldots,p_m^X$ in variables $x_1,\ldots,x_n,y_1,\ldots,y_N$, and the score function $s(X,x)$ is given by
\[
    s(X,x) = \max_{y} p^X(x,y) \text{ such that } p_1^X(x,y) \geq 0,\ldots,p_m^X(x,y) \geq 0\mper
\]

\begin{example}[Tukey depth with margin]
    \label{ex:tukey}
  To make things a bit less abstract, let us see that a score function closely related to the Tukey depth is expressible in this form.
    Suppose $X = X_1,\ldots,X_N \in \R^n$ is a dataset.
    Recall that the Tukey depth of a point $z \in \R^n$ is given by the maximum number of $X_i$'s which lie on one side of a hyperplane through $z$.

    Later on, for a fixed $z \in \R^n$ with $\|z\| = 1$, we will want to run the exponential mechanism to select a direction $v$ such that many $X_i$'s lie above a hyperplane through $z$ in direction $v$, at distance at least distance $r > 0$ from that hyperplane, so our score function for $v$ is the number of such $X_i$'s.
We can express this as follows:
    \begin{align}
        \label{eq:tukey}
        s(X,v) = \max_{\substack{b_1,\ldots,b_N}} \sum_{i=1}^N b_i \text{ s.t.\ for all $i$, } b_i^2 = b_i \text{ and } b_i \iprod{X_i - z,v} \geq b_i \cdot r \mper
    \end{align}

    Note that for fixed $z$ and $v$ it is easy to compute the number of $X_i$'s lying above the resulting hyperplane, but finding the maximizing $v$ already appears to be a hard problem.
    (Of course, we want to accomplish only a related task: \emph{sampling} from a distribution on high-scoring $v$'s.)
    Indeed, approximation algorithms for finding such $v$ play a key role in recent algorithmic advances in robust and heavy-tailed statistics statistics.
    \end{example}

\subparagraph{Utility and Bounded Sensitivity in the Language of Polynomials}
Utility in this framework takes exactly the same form as in the usual exponential mechanism -- to demonstrate utility, one would show that high-scoring $x$'s are good, and there are not too many low-scoring $x$'s.
The first of these statements is naturally expressible in the SoS proof system, and it turns out that the latter will not need to be expressed as an SoS proof.
There is just one technical subtlety: the requirement that $x \in \cC$, if it is used in the proof of utility, must be captured by the polynomials $p_1(x,y) \geq 0, \ldots, p_m(x,y) \geq 0$, as the SoS proof system has no other way of natively using the hypothesis that $x \in \cC$.
(We illustrate this in the example below.)

To formulate bounded sensitivity within SoS will take a little additional work -- SoS will require for there to be a certain kind of witness to bounded sensitivity.
This is not a major restriction, as natural proofs of bounded sensitivity for optimization-based score functions typically yield such witnesses anyway.

Making this concrete, a natural way to show that a score function like the above satisfies bounded sensitivity is to relate feasible solutions to the optimization problem for $X$ to those for a neighboring dataset $X'$, without losing too much in the objective value.
Let us suppose that for each neighboring pair $X,X'$ there is a transformation $y'(y)$ such that for all $x$, if $y$ is feasible for $X$ (i.e. $p_1^X(x,y) \geq 0,\ldots,p_m^X(x,y) \geq 0$) then $y'(y)$ is feasible for $X'$ (i.e. $p_1^{X'}(x,y'(y)) \geq 0,\ldots, p_m^{X'}(x,y'(y)) \geq 0$).
If additionally $p^X(x,y) - p^{X'}(x,y'(y)) \leq 1$, this transformation (together with the corresponding one mapping $y'$s to $y$s) witnesses bounded sensitivity for $s$.
For technical reasons, our meta-theorem imposes the restriction that $y'$ is a linear function of $y$, but this still suffices for our algorithms.

\begin{example}[Continuation of Example~\ref{ex:tukey}]
    \label{ex:tukey-2}
    First addressing utility: it turns out that the proof of utility for $v$ having high score according to \eqref{eq:tukey} will rely on $\|v\| \leq 1$.
    So we will have to strengthen our system of polynomials to include this constraint.
    As a technicality, we will also shift our objective function so that having positive score is good enough for utility.
    \begin{align}
        \label{eq:tukey-2}
         \max_{b,v} \sum_{i=1}^N b_i - 0.9N \text{ such that } b_i^2 = b_i, b_i \iprod{X_i - z,v} \geq b_i \cdot r, \|v\|^2 \leq 1\mper
    \end{align}

    Now, to establish bounded sensitivity, consider two neighboring datasets $X = X_1,\ldots,X_N$ and $X' = X_1',X_2,\ldots,X_N$, where $X$ and $X'$ differ on the first vector.
    If $(b,v)$ is a feasible solution to \eqref{eq:tukey-2} for $X$ with objective value $t$, then we can replace it with $(0,b_2,\ldots,b_N,v)$ to get a feasible to solution for $X'$ with objective value at least $t-1$.
    The meta-theorem requires there to be an SoS proof of this fact; this (easy) SoS proof was first established in \cite{Hopkins20}.
\end{example}

\subparagraph{Robustly Satisfiable Polynomials}
Before stating our meta-theorem, we need one more technical definition, capturing a certain well-conditioned-ness property of a polynomial optimization problem.
Ultimately, this condition will imply a Lipschitz property of semidefinite relaxations of that optimization problem.
This Lipschitz property will be used, in turn, to bound the running time of MCMC-based samplers for probability densities using those semidefinite relaxations as log-probabilities.

The details of the following definition (Definition~\ref{def:cond-poly}) may appear opaque, but they are not too important -- a good intuitive interpretation is that the polynomial optimization problem
\[
    \max_{x,y} p(x,y) \text{ s.t. } p_1(x,y) \geq 0, \ldots, p_m(x,y) \geq 0
\]
has a robust space of feasible solutions.
Roughly, this means that for any $x$ there is a small ball $B$ around $x$ such that for any $x' \in B$ there is a feasible solution $(x',y)$ whose objective value isn't too large.
(The actual condition we use is slightly weaker than this.)
This type of condition is common in meta-theorems involving the SoS proof system, to rule out the use of pathological $p,p_1,\ldots,p_m$, and it is typically not too difficult to establish -- see, e.g. \cite{hopkins2017power,weitz2017polynomial}.

\begin{definition}[Robustly satisfiable polynomial systems]
  \label{def:cond-poly}
    Let $\cC \subseteq \R^n$ and let $p,p_1,\ldots,p_m$ be polynomials in $x = x_1,\ldots,x_n$ and $y = y_1,\ldots,y_N$.
    Let $\eta > 0$.
    Consider a family of optimization problems, one for each $x \in \cC$, given by
    \[
        \max_{y} p(x,y) \text{ s.t. } p_1(x,y) \geq 0,\ldots,p_m(x,y) \geq 0\mper
    \]
    We say this family is $\eta$-robustly satisfiable if, for each $x \in \cC$, each $x'$ in the ball of radius $\eta$ around $x$ can satisfy the constraints.
    That is, for each $x'$ such that $\|x' - x\| \leq \eta$, there exists $y$ such that $p_1(x',y) \geq 0, \ldots, p_m(x',y) \geq 0$.
\end{definition}

Note that in our algorithms, $\eta$ will factor only into running times and not sample complexity or accuracy guarantees, so rather coarse bounds on $\eta$, perhaps loose by polynomial factors, suffice for our purposes.

\begin{example}[Continuation of Examples~\ref{ex:tukey},~\ref{ex:tukey-2}]
    Let us imagine now that $\cC$ is a ball of radius $0.9$ centered at the origin, to see a proof sketch of $\eta$-robust satisfiability for \eqref{eq:tukey-2}.
    For each $v \in \cC$ and every vector $\Delta$ with $\|\Delta\| \leq 0.1$, the constraints of \eqref{eq:tukey-2} are satisfied by $(v+\Delta,0)$.
    So, \eqref{eq:tukey-2} is $\eta$-well-conditioned for $\eta =0.1$, with respect to $\cC$.
\end{example}

With this setup in hand, we can state our meta-theorem.

\begin{theorem}[Meta-Theorem on SoS Exponential Mechanism]
    \label{thm:metatheorem}
  Let $\cC \subseteq \R^n$ be a compact, convex set and $\cX$ a universe of possible datasets, equipped with a ``neighbors'' relation.
    Suppose that for every dataset $X$ there exists an Archimedean and $\eta$-robustly satisfiable system of polynomial inequalities $\cP^X(x,y) = \{p_1^X(x,y) \geq 0,\ldots,p_N^X(x,y) \geq 0\}$ and a polynomial $p^X(x,y)$, all of degree at most $D$, in indeterminates $x_1,\ldots,x_n,y_1,\ldots,y_N$ such that for every neighboring dataset $X'$ there is a linear function $y'(y)$ such that bounded sensitivity has an SoS proof:
\[
    \forall j, \, \cP^X(x,y) \proves_{\deg(p_j^{X'})} p_j^{X'}(x,y'(y)) \ge 0 \text{ and } \cP^X(x,y) \proves_{D} p^X(x,y) - p^{X'}(x,y') \leq 1\mper
\]
Suppose also that for every $X$, there are SoS proofs $\cP^X(x,y) \proves_D p(x,y) \leq 1/\eta$ and $\cP^X(x,y) \proves_D -p(x,y) \leq 1/\eta$.
Furthermore, suppose that the polynomials $\cP^X$ and $p^X$, and the polynomials used in the above SoS proofs, all have coefficients expressible in at most $B$ bits.

Then for every $\e > 0$ and $D \in \N$ there exists an $\e$-differentially private algorithm which takes as input the polynomials $p^X,p_1^X,\ldots,p_m^X$ and $B,\eta > 0$, with the following guarantees:

    \textbf{Utility:} For every $X$, if there is an SoS proof of utility for $X$ which is degree-$\deg(p)$ with respect to $p$, i.e.,
\[
    \cP^X(x,y) \cup \{ p(x,y) \geq 0 \} \proves_{D,\deg_{p} = \deg(p)} \|x - x^*(X)\|^2 \leq \alpha^2
\]
for some vector $x^*(X) \in \R^n$ and $\alpha > 0$, where the coefficients of all polynomials involved in the proof are expressible with $B$ bits, and if
\[
  \frac{ \vol(\cC)} {\vol(\{ x \in \cC \, : \, \exists y \text{ s.t. } \cP^X(x,y) \text{ and } p^X(x,y) \geq t \} )} \leq r \mcom
\]
    then the algorithm outputs $x$ such that $\|x - x^*(X)\| \leq \alpha + 2^{-B}$ with probability at least $1 - r \exp(-\Omega(\e t))$.

    \textbf{Running time:} The algorithm runs in time
    \[
        \poly \Paren{ n^{D}, N^{D}, m^{D}, \frac 1 {\e}, \frac 1 {\eta}, \diam(\cC), B }\mcom
    \]
    making at most this many calls to membership and projection oracles for $\cC$.
\end{theorem}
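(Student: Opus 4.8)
The plan is to implement the template given in Section~\ref{sec:techniques}: take the polynomial optimization problem defining $s(X,x)$, pass to its degree-$D$ SoS relaxation, use convexity of the pseudoexpectation polytope plus linearity to obtain a log-concave score function, and then invoke the private log-concave sampler of \Cref{thm:efficient-sampling}. Concretely, for each dataset $X$ define the relaxed score
\[
    \tilde{s}(X,x) = \max_{\pE} \pE\, p^X(x,y) \text{ s.t. } \pE \text{ is a degree-}D \text{ pseudoexpectation in } y \text{ satisfying } \cP^X(x,y),
\]
where $x$ enters only through the coefficients of the constraints (which depend on $x$) — but note that in the actual instantiations $x$ is ``lifted'' alongside $y$, so more precisely one optimizes over degree-$D$ pseudoexpectations $\pE$ in $(x,y)$ subject to $\cP^X$, the defining constraints of $\cC$ (which by hypothesis are among the $p_j^X$), and the linear constraint $\pE x = x_0$; then $\tilde{s}(X,x_0)$ is this optimum. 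First I would check this is an SDP of size $\poly(n^D, N^D, m^D)$ with coefficients of bit-complexity $\poly(B,D)$, hence equipped with efficient membership and projection oracles, and that it is concave in $x_0$: this is immediate because the feasible set in the lifted space is convex, the objective $\pE p^X$ is linear in $\pE$, and $\pE x = x_0$ is a linear slice, so the value function is concave by the standard ``max of linear functionals over a convex set, parametrized by a linear constraint'' argument.

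The privacy argument has two ingredients. First, \textbf{bounded sensitivity of $\tilde{s}$}: given neighboring $X,X'$, the hypothesis supplies a linear $y'(y)$ and SoS proofs $\cP^X \proves p_j^{X'}(x,y'(y))$ for each $j$ and $\cP^X \proves p^X - p^{X'}(x,y') \le 1$. I would push these through a pseudoexpectation: if $\pE$ satisfies $\cP^X$ at degree $D$, then the pushforward $\pE'$ defined by $\pE'\,q(x,y) := \pE\, q(x, y'(y))$ is a valid degree-$D$ pseudoexpectation (linearity is preserved since $y'$ is linear, positivity since $q(x,y'(y))^2$ is a square pulled back), it satisfies each $p_j^{X'} \ge 0$ because the SoS proof certifies $p_j^{X'}(x,y'(y))$ is SoS-derivable from $\cP^X$ — here one needs the ``degree with respect to $p_j$'' bookkeeping so that the derived constraint is satisfied by $\pE$ in the technical sense — and its objective is within $1$: $\pE'\, p^{X'} = \pE\, p^{X'}(x,y'(y)) \ge \pE\,p^X - 1 = \tilde{s}(X) - 1$. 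Symmetrically in the other direction, giving $|\tilde s(X,x_0) - \tilde s(X',x_0)| \le 1$ for all $x_0$. Second, I would combine this with the boundedness SoS proofs $\cP^X \proves \pm p \le 1/\eta$, which give $|\tilde s| \le 1/\eta$ uniformly, so truncating/shifting to make the sampler's Lipschitz-and-range preconditions meet is harmless; then the sampler of \Cref{thm:efficient-sampling} run with parameter $\Theta(\e)$ on the concave $L$-Lipschitz (with $L = \poly$-bounded, established below) function $x_0 \mapsto \tilde s(X,x_0)/(2\Delta)$ over $\cC$ outputs a sample whose law is within $e^{\pm O(\e)}$ of $\Pr(x_0)\propto \exp(\e \tilde s(X,x_0)/(2\Delta))$; composing the $e^{\pm O(\e)}$ from sampling with the $e^{\pm O(\e)}$ from the exponential mechanism's sensitivity bound yields $\e$-DP after rescaling constants.

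For \textbf{utility}, I would invoke \Cref{thm:volume-based-exponential-mechanism} applied to the \emph{idealized} exponential mechanism with score $\tilde s$ over $\cC$, with $\cH^* = \{x_0 \in \cC : \exists \pE \text{ feasible for } \cP^X, \pE x = x_0, \pE p^X(x,y) \ge t\}$. By hypothesis $\vol(\cC)/\vol(\cH^*) \le r$ — but I must check the denominator: the hypothesis phrases $\cH^*$ in terms of an honest $y$, whereas the relaxation uses $\pE$; since an honest feasible $(x_0,y)$ yields the ``Dirac'' pseudoexpectation, the integral set is at least as large, so the volume bound is only helped. The volume-based exponential mechanism then guarantees the sampled $x_0$ has $\tilde s(X,x_0) \ge t - O(\Delta/\e)\cdot(\log r + \log(1/p_{\mathrm{fail}}))$, so for $t$ a constant factor above $O(\Delta(\log r)/\e)$ we get $\tilde s(X,x_0) \ge 0$ with probability $1 - r\exp(-\Omega(\e t))$. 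Finally I would convert ``$\tilde s(X,x_0) \ge 0$'' into ``$\|x_0 - x^*(X)\| \le \alpha$'' using the utility SoS proof: the hypothesis gives $\cP^X \cup \{p \ge 0\} \proves_{D,\deg_p = \deg p} \|x - x^*\|^2 \le \alpha^2$; when $\tilde s(X,x_0) \ge 0$ there is a feasible $\pE$ with $\pE x = x_0$ and $\pE p \ge 0$, so $\pE$ satisfies $p \ge 0$ (again modulo the degree-with-respect-to-$p$ subtlety, which is exactly why that hypothesis is stated), hence $\pE \|x - x^*\|^2 \le \alpha^2$, hence $\|x_0 - x^*\|^2 = \|\pE x - x^*\|^2 \le \pE\|x - x^*\|^2 \le \alpha^2$ by Cauchy–Schwarz for pseudoexpectations. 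The extra $2^{-B}$ absorbs the $\e'$-slack in the Archimedean SoS duality (each derived inequality holds only up to $\e'$) together with the finite-bit-precision errors in solving the SDP; I would choose the internal precision parameter as $\exp(-\poly(B))$ so this is $\le 2^{-B}$.

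The main obstacle I anticipate is the \textbf{Lipschitz bound on $\tilde s(X,\cdot)$}, since \Cref{thm:efficient-sampling} needs a finite Lipschitz constant and its running time depends polynomially on it. This is where the $\eta$-robust satisfiability hypothesis is used: robust satisfiability guarantees that for $x_0$ and any nearby $x_0'$ the perturbed constraint system remains feasible, which via SoS/SDP duality means the dual optimum for $\tilde s(X,x_0)$ can be perturbed into a dual feasible point for $\tilde s(X,x_0'+\Delta)$ certifying a value only $O(\|\Delta\|\cdot\poly(n,N,m,1/\eta, 2^B))$ larger — this is precisely the two-step dual-certificate argument sketched in the ``Lipschitzness'' paragraph of Section~\ref{sec:techniques}. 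Making that quantitative requires (i) an a priori polynomial bound on the norm of an optimal dual certificate, which follows from the Archimedean hypothesis (the constraint $\|x\|^2 + \|y\|^2 \le M$ bounds the trace of the SDP solution and hence, via strong duality and bit-complexity $B$ of the coefficients, the dual variables), and (ii) bounding how much the certified value moves under a norm-$\|\Delta\|$ perturbation of the linear slice $\pE x = x_0$, which is linear in $\|\Delta\|$ with the dual-norm constant. Everything else — the size of the SDP, the oracles, the $\poly$ running time — then follows by assembling the bit-complexity and dimension counts and reading off the running time of \Cref{thm:efficient-sampling} with $L, \diam(\cC) = \poly(n^D,N^D,m^D,1/\eta,2^B)$.
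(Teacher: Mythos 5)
Your proposal matches the paper's proof essentially step for step: define the score as the value of the degree-$D$ SoS SDP with the extra linear constraint $\pE x = x_0$; observe concavity in $x_0$; establish sensitivity $\le 1$ via the pushforward pseudoexpectation $\pE' q(x,y) := \pE\, q(x,y'(y))$; establish Lipschitzness from $\eta$-robust satisfiability via a dual-certificate perturbation; feed these into the \cite{BassilyST14} sampler to get both privacy and polynomial running time; and close out utility with the volume-based exponential mechanism plus the degree-controlled SoS proof of $\|x-x^*\|^2\le\alpha^2$, applied to the optimal pseudoexpectation and finished by Jensen/Cauchy--Schwarz. You also correctly anticipate the two technical wrinkles: the Dirac-pseudoexpectation observation that makes the honest-$y$ volume condition suffice for the relaxed score, and the role of the ``degree with respect to $p$'' hypothesis in letting $\pE p \geq 0$ combine with the utility proof.

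The one place your sketch diverges from what the paper actually does is step (i) of the Lipschitz argument. You propose to bound the dual certificate via ``Archimedean $\Rightarrow$ primal trace bounded $\Rightarrow$ strong duality $+$ bit-complexity $\Rightarrow$ dual bounded,'' but a primal trace bound by itself does not control the dual norm --- that is precisely a conditioning statement, which is what the robust-satisfiability hypothesis is for. The paper's Lemma~\ref{lem:meta-lipschitz} instead uses the Archimedean hypothesis only to invoke SoS duality and obtain a dual identity with a multiplier $\lambda$ on the constraint $\pE x = x_0$, then bounds $\|\lambda\|$ directly by plugging in the robustly-satisfiable point $x_0 + \eta\lambda/\|\lambda\|$ and invoking the assumed SoS proofs $\cP^X \proves \pm p^X \leq 1/\eta$ to control the resulting scalar; this yields $\|\lambda\| = O(1/\eta^2)$ without any bit-complexity arguments. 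Your step (ii) (perturbing the certified value by $\iprod{\lambda, x_0 - x_0'}$) matches the paper exactly once that norm bound is in hand, so the overall architecture is sound; you would just need to replace the hand-wavy dual bound with the robust-satisfiability plug-in.
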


\subsection{Proof of Theorem~\ref{thm:metatheorem}}

We describe the algorithm we use to prove Theorem~\ref{thm:metatheorem}; then we assemble the lemmas we need for the analysis.

\paragraph{SoSExponentialMechanism} Input: polynomials $p^X,p_1^X,\ldots,p_m^X$, $D \in \N$, $\eta > 0$, $B \in \N$.
\begin{enumerate}
    \item For $x_0 \in \cC$, let
    \[
        s(X,x_0) = \max_{\pE} \pE p^X(x,y) \text{ such that } \deg \pE = D, \pE \text{ satisfies } \cP^X, \text{ and } \pE x = x_0
    \]
where the optimization is over $\pE$ in indeterminates $x,y$.
    \item Let $B' = B + T(\diam{\cC}, 1/\eta, 1/\e, d)$, where $T$ is a sufficiently-large polynomial in the running time of the log-concave private sampler of \cite{BassilyST14}, when run with finite-precision arithmetic.
    \item Run the log-concave private sampling algorithm of \cite{BassilyST14} (Lemma 6.5) with score function $s$, Lipschitz parameter $\poly(1/\eta)$, and privacy parameter $\e/4$.
        Whenever the sampling algorithm makes a call to $s(X,x_0)$, solve the underlying SDP to $\poly(B')$ bits of precision.
\end{enumerate}

Theorem~\ref{thm:metatheorem} is immediate from the following lemmas, all of which we establish in the next section.

\begin{lemma}[High-scoring $x_0$ is found in polynomial time]
    \label{lem:meta-score}
    Given the setup of Theorem~\ref{thm:metatheorem}, for all $X$, if
    \[
      \frac{ \vol(\cC)} {\vol(\{ x \in \cC \, : \, \exists y \text{ s.t. } \cP^X(x,y) \text{ and } p^X(x,y) \geq t \} )} \leq r \mcom
    \]
    then with probability at least $1-r \exp(- \e (t/2-1) )$, the $x_0$ output in step (2) of SoSExponentialMechanism has $s(X,x_0) \geq 0$.
\end{lemma}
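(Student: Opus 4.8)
\noindent\emph{Proof plan.} The output $x_0$ is, up to implementation error, a draw from the exponential mechanism with score function $s(X,\cdot)$ over $\cC$, so the plan has three parts. First, I will show that $s(X,\cdot)$ \emph{relaxes} the ``true'' score $x \mapsto \max_y\{\,p^X(x,y) : \cP^X(x,y)\,\}$: if $\cP^X(x_0,y)$ holds and $p^X(x_0,y) \ge t$, then the degree-$D$ evaluation functional $\pE_{(x_0,y)}:q \mapsto q(x_0,y)$ is a valid pseudoexpectation ($\pE_{(x_0,y)} 1 = 1$ and $\pE_{(x_0,y)} q^2 = q(x_0,y)^2 \ge 0$) that satisfies $\cP^X$ (since $\pE_{(x_0,y)}[p_j^X q^2] = p_j^X(x_0,y)\,q(x_0,y)^2 \ge 0$), with $\pE_{(x_0,y)} x = x_0$; hence it is feasible for the program defining $s(X,x_0)$, so $s(X,x_0) \ge p^X(x_0,y) \ge t$. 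Consequently $\{ x \in \cC : \exists y,\ \cP^X(x,y),\ p^X(x,y) \ge t \} \subseteq \{ x_0 \in \cC : s(X,x_0) \ge t \}$, and the volume hypothesis of the lemma transfers to $\vol(\cC) / \vol(\{ x_0 \in \cC : s(X,x_0) \ge t\}) \le r$.

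Second, I will record that $s(X,\cdot)$ has the structure needed to invoke the private log-concave sampler of \Cref{thm:efficient-sampling}. The set of degree-$D$ pseudoexpectations satisfying $\cP^X$ is convex (cut out by $\pE 1 = 1$ and PSD constraints), the maps $\pE \mapsto \pE p^X$ and $\pE \mapsto \pE x$ are linear, and $\eta$-robust satisfiability (taking $x' = x$ in \Cref{def:cond-poly}) supplies, for every $x_0 \in \cC$, a feasible $(x_0,y)$ and hence a feasible $\pE$ with $\pE x = x_0$; so $s(X,\cdot)$ is finite and concave on $\cC$. The SoS proofs $\cP^X \proves_D p^X \le 1/\eta$ and $\cP^X \proves_D -p^X \le 1/\eta$ (which pass to every feasible $\pE$) give $s(X,x_0) \in [-1/\eta, 1/\eta]$ for all $x_0 \in \cC$. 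Together with compactness of $\cC$ and the $\poly(1/\eta)$-Lipschitzness of $s(X,\cdot)$ --- established as a separate lemma below --- \Cref{thm:efficient-sampling} applies; each oracle call to $s(X,x_0)$ solves the defining SDP, of size $\poly(n^D,N^D,m^D)$, to $\poly(B')$ bits, and the sampler makes at most $\poly(n,1/\eta,1/\e,\diam(\cC),B')$ such calls, which is the claimed running time.

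Third, for utility: were we able to sample exactly from the density on $\cC$ proportional to $\exp(\tfrac{\e}{2}\, s(X,x_0))$, then --- using the bounded-sensitivity bound $\Delta = 1$ established separately --- this is precisely the exponential mechanism of \Cref{thm:volume-based-exponential-mechanism} with $\cH = \cC$ and $\cH^* = \{ x_0 \in \cC : s(X,x_0) \ge t\}$ (so $\inf_{h\in\cH^*} s(X,h) \ge t =: \OPT(X)$). Plugging in the volume bound from the first part and choosing the free tail parameter so the high-probability threshold $\OPT(X) - \tfrac{2}{\e}(\ln\tfrac{\vol(\cC)}{\vol(\cH^*)} + (\text{free}))$ equals $0$, \Cref{thm:volume-based-exponential-mechanism} yields that the exact mechanism returns $x_0$ with $s(X,x_0) \ge 0$ except with probability at most $r\exp(-\e t/2)$. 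The actual output is produced instead by the sampler of \Cref{thm:efficient-sampling} at privacy parameter $\e/4$, whose law is within a multiplicative $e^{\e/4}$ of the target density, and the SDPs are solved only to $\poly(B')$ bits, perturbing $s$ pointwise by $2^{-\poly(B')}$ and the density by a further $1 + 2^{-\poly(B')}$ factor; absorbing both into $e^{\e}$ converts $r\exp(-\e t/2)$ into $r\exp(-\e(t/2 - 1))$, as claimed.

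\textbf{Main obstacle.} The relaxation/volume-transfer and the concavity--boundedness steps are routine once one observes that point-mass pseudoexpectations are feasible and that partial maximization of a linear objective over a convex set against a linear equality constraint is concave. The delicate point is the error accounting in the third part: one must verify that the multiplicative $e^{\e/4}$ slack of the \cite{BassilyST14} sampler together with the finite-precision SDP solves --- the latter using robust satisfiability to guarantee the SDP is conditioned well enough that $\poly(B')$ bits suffice --- degrade the ideal tail bound by no more than the $e^{\e}$ budgeted by the ``$-1$'' in $\exp(-\e(t/2-1))$, and that the Lipschitz bound needed for \Cref{thm:efficient-sampling} is indeed available from the companion lemma.
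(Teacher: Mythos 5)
Your proposal is correct and follows essentially the same route as the paper: the paper invokes the standard (volume-based) exponential-mechanism analysis against the target log-concave density and then pays a multiplicative factor for the $\epsilon/4$-approximate sampler of \cite{BassilyST14}, which is exactly your Part 3, while your Parts 1 and 2 (the point-mass-pseudoexpectation volume transfer and the concavity/boundedness/Lipschitzness for the sampler) simply make explicit what the paper packages into Corollary~\ref{cor:meta-sampling-correctness} and leaves implicit in the phrase ``standard analysis.'' No gaps; the extra detail you supply in Part 1 --- showing that every $x$ satisfying the polynomial system yields a feasible $\pE$ with $\pE x = x$ and $\pE p^X \ge t$, so the volume hypothesis transfers to the SDP score $s$ --- is a step the paper glosses over but genuinely needs.
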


\begin{lemma}[High-scoring $x_0$ is useful]
    \label{lem:meta-util}
    Under the assumptions of Theorem~\ref{thm:metatheorem}, for all $X$, if there is an SoS proof of utility for $X$ as described in Theorem~\ref{thm:metatheorem}, then for all $x_0$ such that $s(X,x_0) \geq 0$, $\|x^*(X) - x_0\| \leq \alpha + 2^{-B}$.
\end{lemma}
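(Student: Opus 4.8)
The plan is to unpack the definition of the score function $s(X,x_0)$ and then feed the hypothesized SoS proof of utility to a pseudoexpectation witnessing $s(X,x_0) \geq 0$. Recall that $s(X,x_0)$ is defined as $\max_{\pE} \pE\, p^X(x,y)$ over degree-$D$ pseudoexpectations $\pE$ in the indeterminates $x,y$ that satisfy $\cP^X$ and also satisfy the linear constraint $\pE x = x_0$. So the assumption $s(X,x_0) \geq 0$ means there is a degree-$D$ pseudoexpectation $\pE$ with $\pE x = x_0$, satisfying $\cP^X(x,y)$, and with $\pE\, p^X(x,y) \geq 0$. The first step is to note that such a $\pE$ in fact satisfies the \emph{full} constraint system $\cP^X \cup \{ p^X(x,y) \geq 0 \}$ appearing in the utility hypothesis. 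This requires a small argument: a priori $\pE\, p^X \geq 0$ only says the scalar value is nonnegative, not that $\pE$ \emph{satisfies} the inequality $p^X \geq 0$ in the pseudoexpectation sense (i.e.\ $\pE\, p^X q^2 \geq 0$ for all appropriate $q$). The resolution is that the SoS proof of utility is, by hypothesis, degree $\deg(p)$ \emph{with respect to} $p^X$ — meaning the polynomial $p^X$ is used in the proof only to degree $\deg(p^X)$, i.e.\ multiplied only by sums of squares that keep the total degree of that term within what $\pE$ can evaluate. Concretely, any term in the utility SoS proof of the form $q_S(x,y)\prod_{i \in S}p_i(x,y)$ with the index of $p^X$ in $S$ has degree at most $D$, and when we expand $q_S$ as a sum of squares, each piece is $p^X$ times a square of degree at most $D - \deg(p^X)$, which $\pE$ can legitimately apply. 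Hence the utility proof certifies $\alpha^2 - \|x - x^*(X)\|^2 \geq 0$ as a nonnegative combination of terms each of which $\pE$ maps to a nonnegative number, giving $\pE(\alpha^2 - \|x - x^*(X)\|^2) \geq 0$, i.e.\ $\pE \|x - x^*(X)\|^2 \leq \alpha^2$.

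The second step converts this pseudoexpectation bound into the claimed Euclidean bound on $x_0 = \pE x$. Expand $\pE \|x - x^*(X)\|^2 = \pE \|x\|^2 - 2\iprod{\pE x, x^*(X)} + \|x^*(X)\|^2$. Since $\pE$ is a genuine degree-$D \geq 2$ pseudoexpectation, it satisfies the operator-level Cauchy–Schwarz / positivity inequality $\pE \|x\|^2 \geq \|\pE x\|^2$ (apply $\pE p^2 \geq 0$ to the linear polynomial $p(x) = \iprod{u, x - \pE x}$ for each coordinate direction, and sum). Therefore
\[
    \|x_0 - x^*(X)\|^2 = \|\pE x\|^2 - 2\iprod{\pE x, x^*(X)} + \|x^*(X)\|^2 \leq \pE \|x - x^*(X)\|^2 \leq \alpha^2,
\]
so $\|x_0 - x^*(X)\| \leq \alpha$. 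This would be the whole lemma if the SDP were solved exactly.

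The third and final step accounts for finite-precision arithmetic: the algorithm solves the SDP only to $\poly(B')$ bits of precision, and the coefficients of $p^X$, $\cP^X$, and the polynomials in the utility proof are all $B$-bit expressible. So the "pseudoexpectation" we actually recover is an approximate one — it satisfies $\cP^X$ and $\pE\, p^X \geq -2^{-\Omega(\poly(B'))}$ up to tiny slack, with $\pE x = x_0$ up to tiny slack — and the duality between SoS proofs and pseudoexpectations for Archimedean systems (stated in the preliminaries) means the same utility SoS proof now certifies $\pE\|x - x^*(X)\|^2 \leq \alpha^2 + \mathrm{err}$, where $\mathrm{err}$ is a bounded (by the Archimedean bound $M$, the degree $D$, and the $B$-bit coefficient bound) multiple of the precision parameter. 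Choosing the precision $B'$ large enough — which the algorithm does, by setting $B' = B + T(\cdot)$ for a sufficiently large polynomial $T$ — drives $\mathrm{err}$ below $2^{-2B}$ or so, yielding $\|x_0 - x^*(X)\| \leq \sqrt{\alpha^2 + 2^{-2B}} \leq \alpha + 2^{-B}$ as claimed. I expect the main obstacle to be the bookkeeping in this last step: one must track how the finite-precision slack in the SDP solution propagates through the dual (SoS) certificate, using the Archimedean bound and the a priori degree and bit-complexity bounds to keep the error controlled, so that "$\poly(B')$ bits of precision" provably suffices. The first two steps are essentially the standard "SoS proof + pseudoexpectation $\Rightarrow$ conclusion about $\pE x$" template and should be routine; the degree-with-respect-to-$p$ technicality is the only subtlety there, and it is exactly what the hypothesis $\deg_p = \deg(p)$ is designed to handle.
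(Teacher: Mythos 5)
Your proof takes essentially the same approach as the paper's: apply the pseudoexpectation witnessing $s(X,x_0)\geq 0$ to the SoS utility proof to conclude $\pE\|x-x^*(X)\|^2 \leq \alpha^2$ (up to numerical slack), then use $\|\pE x - x^*\|^2 \leq \pE\|x-x^*\|^2$ to extract the bound on $x_0$. Your write-up is noticeably more careful than the paper's three-sentence proof on two points it silently elides — the role of the $\deg_p = \deg(p)$ hypothesis in allowing the scalar inequality $\pE p^X \geq 0$ to substitute for "$\pE$ satisfies $p^X \geq 0$," and the propagation of finite-precision error through the dual certificate — and both are correctly handled.
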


\begin{lemma}[Privacy]
    \label{lem:meta-priv}
    Under the assumptions of Theorem~\ref{thm:metatheorem}, SoSExponentialMechanism satisfies $\e$-DP.
\end{lemma}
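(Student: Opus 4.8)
The plan is to argue that the algorithm is a composition of two $\e/4$-DP (actually, we only need $\e/2$-DP overall, but let us see which pieces we use) sub-procedures: the log-concave sampling step of \cite{BassilyST14}, invoked with privacy parameter $\e/4$, and the fact that the score function $s(X,\cdot)$ itself has bounded sensitivity when $X$ is replaced by a neighbor $X'$. The key observation is that Theorem~\ref{thm:efficient-sampling} guarantees $\e/4$-closeness between the algorithm's output and the idealized log-concave distribution $\propto \exp(\tfrac{\e}{4} s(X,x_0))$ \emph{for a fixed score function}; to get end-to-end privacy we must additionally show that changing $X$ to a neighboring $X'$ changes the score function $s(X,\cdot)$ pointwise by at most a constant (say $1$), so that the two idealized log-concave distributions are themselves close in the multiplicative sense required by pure DP. Composing these bounds, and choosing the constants so they sum to $\e$, gives the claim.

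Concretely, the first and main step is to establish \textbf{bounded sensitivity of $s$}: for every neighboring pair $X, X'$ and every $x_0 \in \cC$, $|s(X,x_0) - s(X',x_0)| \leq 1$. Here $s(X,x_0) = \max_{\pE} \pE\, p^X(x,y)$ over degree-$D$ pseudoexpectations satisfying $\cP^X$ and $\pE x = x_0$. Given an optimal (or near-optimal) $\pE$ for $X$, I would use the hypothesis that there is a linear map $y'(y)$ with SoS proofs $\cP^X(x,y) \proves p_j^{X'}(x, y'(y))$ for each $j$ and $\cP^X(x,y) \proves p^X(x,y) - p^{X'}(x,y'(y)) \leq 1$. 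The point is that pseudoexpectations respect SoS proofs of bounded degree: applying $\pE$ to the SoS identity witnessing $p_j^{X'}(x,y'(y)) \geq 0$ shows that the pushforward functional $\pE'[q(x,y)] := \pE[q(x, y'(y))]$ — which is well-defined and still a valid degree-$D$ pseudoexpectation because $y'$ is linear (so composition does not raise degree) and $\pE' x = \pE x = x_0$ — satisfies $\cP^{X'}$ and the constraint $\pE' x = x_0$. Moreover $\pE'\, p^{X'}(x,y) = \pE\, p^{X'}(x,y'(y)) \geq \pE\, p^X(x,y) - 1 = s(X,x_0) - 1$ using the second SoS proof together with $\pE 1 = 1$. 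Hence $s(X',x_0) \geq s(X,x_0) - 1$, and by symmetry (the relation is symmetric, and we can run the same argument with the roles of $X,X'$ swapped, using the corresponding reverse linear map) we get $|s(X,x_0) - s(X',x_0)| \leq 1$. One subtlety to handle carefully: the SDP is solved only to finite precision, so $s$ is computed up to an additive error that we must keep below the slack built into the choice of $B'$; this only perturbs the log-density by an exponentially small amount and can be absorbed into the $\e/4$ privacy budget of the sampler, or folded into a fourth $\e/4$ term.

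The second step is to \textbf{convert bounded sensitivity of $s$ into multiplicative closeness of the idealized log-concave distributions}. If $D_X \propto \exp(\tfrac{\e'}{\cdot} s(X,x_0))$ on $\cC$ — here I will match the normalization of Theorem~\ref{thm:volume-based-exponential-mechanism}/Theorem~\ref{thm:efficient-sampling}, so the exponent is $\tfrac{\e}{4} s$ with sensitivity-$1$ score, giving the standard factor — then $|s(X,x_0)-s(X',x_0)|\le 1$ implies $e^{-\e/4} \leq \frac{dD_X}{dD_{X'}}(x_0) \cdot \frac{Z_{X}}{Z_{X'}} \cdot (\cdots)$; the usual two-sided bookkeeping on both the pointwise density ratio and the ratio of partition functions $Z_X, Z_{X'}$ yields $e^{-\e/2} D_{X'}(S) \leq D_X(S) \leq e^{\e/2} D_{X'}(S)$ for every measurable $S$. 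This is exactly the exponential-mechanism privacy argument, and since $s$ is concave in $x_0$ (convex feasible region, linear-in-$x_0$ constraint $\pE x = x_0$, linear objective) the density is genuinely log-concave, so the sampler applies. The \textbf{third step} is composition: the algorithm's actual output distribution $A(X)$ satisfies $e^{-\e/4} D_X(S) \leq \Pr(A(X)\in S) \leq e^{\e/4} D_X(S)$ by Theorem~\ref{thm:efficient-sampling}, and similarly for $X'$; chaining with the previous display gives $\Pr(A(X)\in S) \leq e^{\e/4} e^{\e/2} e^{\e/4} \Pr(A(X')\in S) = e^{\e}\Pr(A(X')\in S)$, which is $\e$-DP. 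I would remark that since the input to the sampler is purely the deterministic description of $s$ (itself a deterministic function of $X$), post-processing and the absence of any ``quiet failure'' branch — the reason the overview insists on true Lipschitzness rather than randomized smoothing — ensures there is no leakage outside this accounting.

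The \textbf{main obstacle} I anticipate is the sensitivity argument of step one, specifically verifying that the pushforward $\pE \mapsto \pE'$ is legitimate: that $y'(y)$ being \emph{linear} is exactly what keeps $\pE'$ a degree-$D$ pseudoexpectation (substituting a linear function into a degree-$D$ polynomial leaves degree $\le D$, so $\pE'$ is defined on the right space and positivity $\pE' p^2 = \pE\, p(x,y'(y))^2 \geq 0$ is inherited), and that the SoS proofs have the degree bounds needed so that $\pE'$ actually \emph{satisfies} each $p_j^{X'} \geq 0$ in the pseudoexpectation sense (this is where the ``$\deg(p_j^{X'})$'' annotation on the hypothesized SoS proof is used — one needs $\pE'[p_j^{X'} \cdot q^2] \geq 0$ for all $q$ with $\deg(p_j^{X'} q^2) \leq D$, which follows by applying $\pE$ to the SoS certificate multiplied by $q(x,y'(y))^2$, provided the certificate's degree with respect to $p_j^{X'}$ is controlled). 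Getting these degree bookkeeping details exactly right — and confirming that the finite-precision SDP solve does not break the clean sensitivity bound — is the delicate part; everything else is the standard exponential-mechanism and post-processing calculus.
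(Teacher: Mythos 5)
Your proposal takes essentially the same approach as the paper: you establish bounded sensitivity of the SDP-defined score via the pushforward pseudoexpectation $\pE' f(x,y) := \pE f(x,y'(y))$ (this is exactly the paper's Lemma~\ref{lem:meta-sensitivity}, including the observation that linearity of $y'(y)$ preserves degree and that the $\deg(p_j^{X'})$-annotated SoS certificates let you verify $\pE'$ satisfies $\cP^{X'}$), then combine the standard exponential-mechanism privacy analysis with the multiplicative closeness between the sampler's output and the idealized log-concave target from Theorem~\ref{thm:efficient-sampling} (the paper's Corollary~\ref{cor:meta-sampling-correctness}). The one cosmetic difference is that you make the composition accounting explicit, $e^{\e/4}\cdot e^{\e/2}\cdot e^{\e/4}=e^{\e}$, whereas the paper's two-line proof uses an informal ``target is $\e/2$-DP, sampler within $e^{\e/2}$'' split; otherwise the two proofs coincide step for step.
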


\subsection{Proofs of Lemmas}

We will prove Lemmas~\ref{lem:meta-score}, \ref{lem:meta-util}, and~\ref{lem:meta-priv}.

\textbf{Remark on numerical issues:} Because of the choice of $B'$, the guarantees of the log-concave sampling algorithm will apply equally well if it receives $s(X,x_0) \pm 2^{-B'}$ as if it receives $s(X,x_0)$ when making oracle calls to $s$.
(Given its running time, it cannot even read enough bits to tell the difference.)
So, we will henceforth ignore the difference and presume that the log-concave sampler observes the values $s(X,x_0)$ exactly.

The first step is to establish that the target probability distribution is actually log-concave and Lipschitz, so that we can use the guarantees of \cite{BassilyST14}.

\begin{lemma}
    \label{lem:meta-concave}
    For all $X$, the function $s(X,x_0)$ is concave in $x_0$.
\end{lemma}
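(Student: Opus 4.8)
The plan is to recognize $s(X,x_0)$ as the optimal value of a convex program whose only dependence on $x_0$ is through the right-hand side of a family of linear equality constraints --- namely the constraints $\pE x = x_0$ --- and then to invoke the standard fact that the optimal value of a linear objective over a convex feasible set is concave in a linear perturbation of the constraints.

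First I would fix notation: for $x_0 \in \cC$, let $\cF(x_0)$ be the set of degree-$D$ pseudoexpectations $\pE$ in indeterminates $x,y$ which satisfy $\cP^X$ and additionally satisfy $\pE x = x_0$. I would observe that $\cF(x_0)$ is convex: a degree-$D$ pseudoexpectation is a linear functional on $\R[x,y]_{\leq D}$ normalized by $\pE 1 = 1$ and constrained by the positivity conditions $\pE q^2 \geq 0$ for $\deg q \leq D/2$ and $\pE p_j^X q^2 \geq 0$ for $\deg(p_j^X q^2) \leq D$, each of which is linear in $\pE$; adjoining the linear constraint $\pE x = x_0$ still leaves $\cF(x_0)$ an intersection of half-spaces with an affine subspace in a finite-dimensional space, hence convex. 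Since $\cP^X$ is Archimedean, $\cF(x_0)$ is moreover bounded and closed, hence compact, so the maximum $s(X,x_0) = \max_{\pE \in \cF(x_0)} \pE\, p^X(x,y)$ is attained whenever $\cF(x_0) \neq \emptyset$. When $\cF(x_0) = \emptyset$ I would adopt the convention $s(X,x_0) = -\infty$; note that $\eta$-robust satisfiability guarantees that for every $x_0 \in \cC$ there is a feasible $(x_0,y)$, whose evaluation functional lies in $\cF(x_0)$, so this degenerate case does not occur on the domain of interest.

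The core step is then the usual averaging argument. Fix $x_0^{(1)}, x_0^{(2)} \in \cC$ and $\lambda \in [0,1]$, and set $x_0 = \lambda x_0^{(1)} + (1-\lambda) x_0^{(2)}$. Let $\pE_1 \in \cF(x_0^{(1)})$ and $\pE_2 \in \cF(x_0^{(2)})$ achieve the respective maxima and put $\pE = \lambda \pE_1 + (1-\lambda)\pE_2$. Then $\pE$ is a linear functional with $\pE 1 = 1$; for every $q$ of degree at most $D/2$ we have $\pE q^2 = \lambda \pE_1 q^2 + (1-\lambda)\pE_2 q^2 \geq 0$, and similarly $\pE p_j^X q^2 \geq 0$ for each $j$, so $\pE$ is a degree-$D$ pseudoexpectation satisfying $\cP^X$; and $\pE x = \lambda x_0^{(1)} + (1-\lambda) x_0^{(2)} = x_0$. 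Hence $\pE \in \cF(x_0)$, and since $\pE \mapsto \pE\, p^X(x,y)$ is linear,
\[
    s(X,x_0) \geq \pE\, p^X(x,y) = \lambda\, \pE_1 p^X(x,y) + (1-\lambda)\, \pE_2 p^X(x,y) = \lambda\, s(X,x_0^{(1)}) + (1-\lambda)\, s(X,x_0^{(2)}),
\]
which is exactly concavity of $s(X,\cdot)$ on $\cC$.

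I do not expect a genuine obstacle: this is the familiar principle that the value function of a linear program is concave in a linear perturbation of its constraints. The only points needing care are bookkeeping ones --- verifying that convex combinations of pseudoexpectations satisfying $\cP^X$ remain valid pseudoexpectations satisfying $\cP^X$ (handled above by linearity of the defining inequalities in $\pE$), and ruling out infeasibility of $\cF(x_0)$ on $\cC$ (handled by the Archimedean and $\eta$-robust satisfiability hypotheses) --- neither of which is substantive.
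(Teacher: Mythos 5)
Your proof is correct and is essentially the paper's own argument: both observe that a convex combination of two optimal pseudoexpectations $\pE_1, \pE_2$ is a feasible pseudoexpectation for the intermediate $x_0$, and conclude concavity from linearity of the objective. You flesh out the details (general $\lambda$ rather than only the midpoint, explicit verification that convex combinations of pseudoexpectations satisfying $\cP^X$ remain such, attainment of the supremum, and non-emptiness via robust satisfiability), but the underlying idea is the same.
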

\begin{proof}
    Consider $x_0$ and $x_0'$ and let $\pE$ be the optimal solution to the optimization problem defining $s(X,x_0)$ and similarly for $\pE'$ and $s(X,x_0')$.
    Then $\tfrac 12 \pE + \tfrac 12 \pE'$ is feasible for $s(X,\tfrac 12 x_0 + \tfrac 12 x_0')$.
    So $s(X,\tfrac 12 x_0 + \tfrac 12 x_0') \geq \tfrac 12 s(X, x_0) + \tfrac 12 s(X, x_0')$.
\end{proof}

\begin{lemma}[Robustly satisfiable systems yield Lipschitz SDPs]
    \label{lem:meta-lipschitz}
    Given the setup in Theorem~\ref{thm:metatheorem}, for all $x_0,x_0' \in \cC$ and all $X$, we have $|s(X,x_0) - s(X,x_0')| \leq \poly(1/\eta)\cdot \|x_0 - x_0'\|$.
\end{lemma}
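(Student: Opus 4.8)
The plan is to bound $|s(X,x_0) - s(X,x_0')|$ by exhibiting, for the optimal $\pE$ witnessing $s(X,x_0)$, a nearby feasible pseudoexpectation witnessing a comparable value at $x_0'$. The obstacle is that the hard constraint $\pE x = x_0$ is an equality, so we cannot simply reuse $\pE$; we must perturb it. The key idea is to exploit $\eta$-robust satisfiability together with the a priori bounds $\cP^X \proves_D \pm p^X \leq 1/\eta$. By duality for Archimedean systems, the bound $\cP^X \proves_D p^X \leq 1/\eta$ transfers to pseudoexpectations: any $\pE$ of degree $D$ satisfying $\cP^X$ has $|\pE p^X| \leq 1/\eta$. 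So the score function $s(X,\cdot)$ takes values in $[-1/\eta, 1/\eta]$ wherever it is finite, and $\eta$-robust satisfiability guarantees it is finite (the optimization is feasible) for every $x_0 \in \cC$ — indeed for every point within distance $\eta$ of $\cC$.

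First I would reduce to the case $\|x_0 - x_0'\| \leq \eta$; for larger separations we chain $O(\diam(\cC)/\eta)$ intermediate points along the segment from $x_0$ to $x_0'$ (all lying in the convex set $\cC$), and use that a Lipschitz-type bound on each short step multiplies up — but in fact, since $s$ is concave (Lemma~\ref{lem:meta-concave}) and bounded in $[-1/\eta,1/\eta]$ on $\cC$, a concave function on a convex set that is bounded by $M$ in absolute value is automatically $O(M/\eta)$-Lipschitz on the sub-body of points at distance $\geq \eta$ from the boundary; one handles the boundary layer by the same robust-satisfiability slack, which says $s$ is still defined and bounded on an $\eta$-neighborhood of $\cC$. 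Concretely: for $x_0, x_0' \in \cC$ with $\|x_0 - x_0'\| = \delta \leq \eta$, set $x_0'' = x_0' + \tfrac{\eta}{\delta}(x_0' - x_0)$, which lies within distance $\eta$ of $\cC$ and hence in the domain of $s$; then $x_0'$ is a convex combination of $x_0$ and $x_0''$, namely $x_0' = \tfrac{\eta}{\eta+\delta} x_0 + \tfrac{\delta}{\eta+\delta} x_0''$, so concavity gives $s(X,x_0') \geq \tfrac{\eta}{\eta+\delta} s(X,x_0) + \tfrac{\delta}{\eta+\delta} s(X,x_0'')$, and rearranging with $|s(X,x_0'')|, |s(X,x_0)| \leq 1/\eta$ yields $s(X,x_0') - s(X,x_0) \geq -\tfrac{\delta}{\eta}\cdot\tfrac{2}{\eta} \cdot \tfrac{\eta}{\eta+\delta} \geq -\tfrac{2\delta}{\eta^2}$. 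Swapping the roles of $x_0$ and $x_0'$ gives the matching upper bound, so $|s(X,x_0) - s(X,x_0')| \leq \tfrac{2}{\eta^2}\|x_0 - x_0'\| = \poly(1/\eta)\cdot\|x_0 - x_0'\|$.

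The two facts I need to nail down carefully are (i) that $s(X,x_0)$ is well-defined (the feasible set of pseudoexpectations with $\pE x = x_0$ is nonempty) for every $x_0$ within distance $\eta$ of $\cC$, and (ii) the uniform bound $|s(X,x_0)| \leq 1/\eta$ on that domain. For (i), $\eta$-robust satisfiability supplies, for each such $x_0$, an actual point $(x_0, y)$ satisfying $\cP^X$; its "Dirac" pseudoexpectation $\pE_{(x_0,y)}$ is feasible of every degree and has $\pE x = x_0$. For (ii), the dual bound $\cP^X(x,y) \proves_D p^X(x,y) \leq 1/\eta$ means $1/\eta - p^X = \sum_{S} q_S \prod_{i \in S} p_i^X$ with each $q_S$ a sum of squares and total degree $\leq D$; applying any degree-$D$ pseudoexpectation satisfying $\cP^X$ to both sides gives $1/\eta - \pE p^X \geq 0$, and symmetrically $\pE p^X \geq -1/\eta$ from the other proof. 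The main (very minor) obstacle is being careful that the finite-precision/approximate solution of the SDP only perturbs $s$ by $2^{-B'}$, which is already absorbed into the "Remark on numerical issues," so it does not affect the Lipschitz constant stated here. Putting these together completes the proof.
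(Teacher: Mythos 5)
Your proof is correct, but it takes a genuinely different route from the paper's. The paper argues on the dual side: by pseudoexpectation/SoS duality it extracts, for every $\e>0$, a certificate $s + \e - p^X = \sum_{S} q_S \prod_{i\in S} p_i^X + \iprod{\lambda, x-x_0}$, then uses $\eta$-robust satisfiability to bound the linear multiplier $\|\lambda\| \le O(1/\eta^2)$ (by evaluating the identity at the nearby feasible point $x_0 + \eta\lambda/\|\lambda\|$), and finally shifts the same certificate to $x_0'$, paying only $\|\lambda\|\cdot\|x_0-x_0'\|$. You stay entirely on the primal side: you combine concavity of $s$ (Lemma~\ref{lem:meta-concave}), finiteness of $s$ on an $\eta$-enlargement of $\cC$ (from robust satisfiability via Dirac pseudoexpectations), and the uniform bound $|s|\le 1/\eta$ (from $\cP^X\proves_D \pm p^X \le 1/\eta$ applied to any feasible $\pE$), and invoke the classical fact that a concave function bounded on an $\eta$-collar of a convex body is Lipschitz with constant $O(\text{bound}/\eta)$ on the body itself. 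Both proofs hinge on robust satisfiability, but use it for different purposes: the paper to control a dual variable, you to extend the primal domain. Your route is more elementary (no dual certificates needed) and arguably cleaner since concavity and boundedness are established anyway for the sampler; the paper's is more structural about the SDP. The constants agree up to absolute factors ($2/\eta^2$ vs.\ $3/\eta^2$). Two small remarks: your initial reduction to $\delta\le\eta$ and chaining is unnecessary overhead, since your concrete extrapolation argument works for any $\delta$ (the extrapolated point $x_0''$ is always at distance exactly $\eta$ from $x_0'\in\cC$); and both proofs implicitly use that the Archimedean constraint makes the feasible set of degree-$D$ pseudoexpectations compact, so the supremum in $s(X,\cdot)$ is attained.
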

\begin{proof}
    As shorthand, let us write $s = s(X,x_0)$.
    Since $\cP^X$ is Archimedean, we can apply standard pseudoexpectation/SoS proof duality to conclude that for every $\e > 0$ there is a polynomial identity in variables $x,y$:
    \[
        s + \e - p^X(x,y) = \sum_{S \subseteq [m]} q_S(x,y) \prod_{i \in S} p_i^X(x,y) + \iprod{\lambda, x - x_0}\mcom
    \]
    where $q_S$ are SoS polynomials, all the terms above have degree at most $D$, and $\lambda \in \R^n$ is a vector.
    
    Our first goal is to bound $\|\lambda\|$.
    By robust satisfiability, if we let $x' = x_0 + \eta \cdot \tfrac{\lambda}{\|\lambda\|}$, there exists $y'$ such that $p_i^X(x',y') \geq 0$ for all $i$.
    Hence,
    \[
        s + \e - p^X(x',y') \geq \eta \|\lambda\|\mper
    \]
    Since there are SoS proofs that $p^X(x,y) \leq 1/\eta$ and $-p^X(x,y) \leq 1/\eta$, and we can take $|\e| \leq 1/\eta$, the left-hand side is $O(1/\eta)$, so we find $\|\lambda\| \leq O(1/\eta^2)$.

    We claim that $s(X,x_0') \leq s + O(1/\eta^2) \cdot \|x_0 - x_0'\|$.
    To see this, note that for each $\e > 0$, we can write
    \[
        s + \e + \iprod{\lambda, x_0 - x_0'} - p^X(x,y) = \sum_{S \subseteq [m]} q_S(x,y) \prod_{i \in S} p_i^X(x,y) + \iprod{\lambda, x - x_0'}\mcom
    \]
    which, after Cauchy-Schwarz, certifies the upper bound $s + \e + \|\lambda\| \|x_0 - x_0\|$ on $s(X,x_0')$.
    Since this works for all $\e > 0$, we find $s(X,x_0') \leq s + O(1/\eta^2) \|x_0 - x_0'\|$.
\end{proof}

As a corollary of Lemmas~\ref{lem:meta-concave} and~\ref{lem:meta-lipschitz}, combined with Lemma 6.5 of \cite{BassilyST14}, we obtain:

\begin{corollary}
    \label{cor:meta-sampling-correctness}
    Given the setup of Theorem~\ref{thm:metatheorem}, for every $X$, the output of step (3) of SoSExponentialMechanism is a sample from a distribution $D_X$ supported on $\cC$ such that for every event $A$
    \[
        e^{-\e/2} \Pr_{D_X^{\text{target}}}(A) \leq \Pr_{D_X}(A) \leq e^{\e/2} \Pr_{D_X^{\text{target}}}(A) 
    \]
    where $D_X^{\text{target}}$ is the distribution with density proportional to $\exp((\e/2) s(X,x_0))$
    Furthermore, the sampler runs in time at most $\poly(d, \tfrac 1 \e, \tfrac 1 \eta, \diam(\cC))$, making at most that many calls to a membership oracle for $\cC$ and to an evaluation oracle for $s(X,\cdot)$.
\end{corollary}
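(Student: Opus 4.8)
The plan is to obtain this as an essentially mechanical application of Theorem~\ref{thm:efficient-sampling} (Lemma 6.5 of \cite{BassilyST14}), with $f(x_0) = (\e/2)\cdot s(X,x_0)$ playing the role of the concave Lipschitz function and $\cC$ the convex body. Concretely, I would first check the three hypotheses of Theorem~\ref{thm:efficient-sampling}. Concavity of $f$ is immediate from Lemma~\ref{lem:meta-concave} together with the fact that scaling by the positive constant $\e/2$ preserves concavity. Lipschitzness of $f$, with constant $L = \poly(1/\eta)$, is exactly Lemma~\ref{lem:meta-lipschitz} (again up to the constant $\e/2 \leq O(1)$). Finally, $\cC$ is compact and convex with membership and projection oracles by the hypotheses of Theorem~\ref{thm:metatheorem}, and $\vol(\cC) > 0$, so the target density $D_X^{\mathrm{target}} \propto \exp((\e/2)\, s(X,\cdot))$ restricted to $\cC$ is well-defined. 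Invoking Theorem~\ref{thm:efficient-sampling} with accuracy parameter $\e/2$ then yields a sample from a distribution $D_X$ on $\cC$ with $e^{-\e/2}\Pr_{D_X^{\mathrm{target}}}(S) \leq \Pr_{D_X}(S) \leq e^{\e/2}\Pr_{D_X^{\mathrm{target}}}(S)$ for every measurable $S \subseteq \cC$; since both distributions are supported on $\cC$, the same bound holds for an arbitrary event $A$ after replacing $A$ with $A \cap \cC$. (The privacy parameter actually used in step (3) of SoSExponentialMechanism only makes this accuracy guarantee stronger, not weaker, so this is harmless.)

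For the running-time and oracle-call bounds I would simply substitute $L = \poly(1/\eta)$ into the bound $\poly(d, L\diam(\cC), 1/\e, \log\diam(\cC))$ promised by Theorem~\ref{thm:efficient-sampling}, which collapses to $\poly(d, 1/\e, 1/\eta, \diam(\cC))$ as claimed, and the same quantity bounds the number of calls to the membership/projection oracles of $\cC$ and to the evaluation oracle for $s(X,\cdot)$. Support on $\cC$ follows because the sampler only ever accesses $\cC$ through these oracles and projects back into $\cC$.

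The one place that requires genuine care — and which I expect to be the only real obstacle — is that Theorem~\ref{thm:efficient-sampling} posits an \emph{exact} evaluation oracle for $f$, whereas each query to $s(X,x_0)$ is answered by solving the semidefinite program defining $s$ to finite precision. Here I would note two things. First, the SDP in the definition of $s(X,x_0)$ is a bounded, Archimedean SoS relaxation (and by the SoS bounds $-p^X, p^X \proves_D \leq 1/\eta$ in the hypotheses of Theorem~\ref{thm:metatheorem} its optimum lies in $[-1/\eta, 1/\eta]$), so a standard ellipsoid/interior-point solver computes $s(X,x_0)$ to any $2^{-B'}$ additive accuracy in time $\poly(B', n^D, N^D, m^D)$. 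Second, by the choice $B' = B + T(\diam(\cC), 1/\eta, 1/\e, d)$ in step (2), where $T$ dominates the sampler's running time, the sampler cannot even read enough bits to distinguish $s(X,x_0)$ from $s(X,x_0) \pm 2^{-B'}$, so the guarantees of Theorem~\ref{thm:efficient-sampling} apply verbatim when it is run with this approximate oracle — this is precisely the content of the remark on numerical issues preceding the lemmas. Everything else in the proof is a direct plug-in of Lemmas~\ref{lem:meta-concave} and~\ref{lem:meta-lipschitz} into Theorem~\ref{thm:efficient-sampling}.
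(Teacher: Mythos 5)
Your proposal is correct and matches the paper's approach exactly: the paper itself offers no separate proof of this corollary, declaring it immediate from Lemma~\ref{lem:meta-concave} (concavity), Lemma~\ref{lem:meta-lipschitz} (Lipschitzness with constant $\poly(1/\eta)$), and Theorem~\ref{thm:efficient-sampling} (the BST14 log-concave private sampler), with the finite-precision issue dispatched by the ``remark on numerical issues'' and the choice of $B'$ in step~(2) — precisely the three ingredients you identify and assemble. Your observation that $\diam(\cC)$ enters the running time only through the product $L\cdot\diam(\cC)$ and $\log\diam(\cC)$, so that $L=\poly(1/\eta)$ collapses the bound to $\poly(d,1/\e,1/\eta,\diam(\cC))$, is also the right reading of the refined running-time bound stated under Theorem~\ref{thm:efficient-sampling}.
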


\subsubsection{Privacy: proof of Lemma~\ref{lem:meta-priv}}

\begin{lemma}
    \label{lem:meta-sensitivity}
    Given the conditions of Theorem~\ref{thm:metatheorem}, the score function $s$ has sensitivity at most $1$.
\end{lemma}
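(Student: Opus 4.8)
The plan is to establish, for every $x_0 \in \cC$ and every neighboring pair $X, X'$, the two bounds $s(X', x_0) \ge s(X, x_0) - 1$ and $s(X, x_0) \ge s(X', x_0) - 1$, which together give $|s(X, x_0) - s(X', x_0)| \le 1$. By symmetry of the neighbor relation --- and since Theorem~\ref{thm:metatheorem} supplies an analogous linear witness with the roles of $X$ and $X'$ exchanged, exactly as in Example~\ref{ex:tukey-2} --- it suffices to prove the first bound. So fix $x_0$, the pair $X, X'$, and the linear map $y'(\cdot)$ from the hypothesis. First note that $s(X, x_0)$ is finite and that the feasible region of its defining optimization is nonempty: $\eta$-robust satisfiability applied with $x' = x_0$ yields some $y$ with $p_i^X(x_0, y) \ge 0$ for all $i$, so the Dirac functional $q \mapsto q(x_0, y)$ is a feasible degree-$D$ pseudoexpectation, and by the SoS proofs $\cP^X \proves_D p^X \le 1/\eta$ and $\cP^X \proves_D -p^X \le 1/\eta$ together with SoS soundness every feasible $\pE$ has $|\pE p^X| \le 1/\eta$.

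Given $\delta > 0$, I would choose a degree-$D$ pseudoexpectation $\pE$ in $x, y$ satisfying $\cP^X$ with $\pE x = x_0$ and $\pE p^X(x,y) \ge s(X, x_0) - \delta$, and pull it back along the substitution $y \mapsto y'(y)$: define $\pE'$ on $\R[x,y]_{\le D}$ by $\pE' q(x,y) \seteq \pE\, q(x, y'(y))$. This is well-defined because $y'$ is linear, so $q(x, y'(y))$ has degree at most $\deg q \le D$. Then $\pE' 1 = 1$; $\pE' q^2 = \pE\, q(x,y'(y))^2 \ge 0$ for $\deg q \le D/2$; and $\pE' x = \pE x = x_0$, since monomials in $x$ alone are fixed by the substitution. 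To see that $\pE'$ satisfies each axiom $p_j^{X'} \ge 0$ of $\cP^{X'}$: for any test polynomial $q$ with $\deg(p_j^{X'} q^2) \le D$, multiplying the hypothesized SoS proof $\cP^X \proves_{\deg(p_j^{X'})} p_j^{X'}(x, y'(y)) \ge 0$ through by the square $q(x, y'(y))^2$ produces an SoS derivation from $\cP^X$ of $p_j^{X'}(x,y'(y))\, q(x,y'(y))^2 \ge 0$ of degree at most $\deg(p_j^{X'}) + 2\deg q = \deg(p_j^{X'} q^2) \le D$, so by soundness $\pE\big[p_j^{X'}(x,y'(y))\, q(x,y'(y))^2\big] \ge 0$, i.e.\ $\pE'[p_j^{X'} q^2] \ge 0$. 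Hence $\pE'$ is feasible for the optimization defining $s(X', x_0)$. Finally, the other hypothesized proof $\cP^X \proves_D p^X(x,y) - p^{X'}(x, y'(y)) \le 1$, via soundness applied to $\pE$, gives $\pE\, p^{X'}(x, y'(y)) \ge \pE p^X(x,y) - 1 \ge s(X, x_0) - \delta - 1$; since $\pE' p^{X'}(x,y) = \pE\, p^{X'}(x, y'(y))$, this yields $s(X', x_0) \ge s(X, x_0) - \delta - 1$, and $\delta \to 0$ completes the bound.

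The pullback verifications are routine; the one point I expect to demand care is the degree accounting in the previous paragraph. It is crucial both that $y'$ is linear --- so that substituting $y \mapsto y'(y)$ never increases the degree of a polynomial or of a square --- and that the supplied proof of $p_j^{X'}(x, y'(y)) \ge 0$ has degree only $\deg(p_j^{X'})$ rather than $D$, since we must retain room to multiply through by a test square $q(x, y'(y))^2$ of degree up to $D - \deg(p_j^{X'})$ while staying within degree $D$, which is exactly what SoS soundness at degree $D$ requires. Beyond this bookkeeping I anticipate no genuine obstacle: the lemma is essentially the pseudoexpectation-level reformulation of the ``bounded sensitivity has an SoS proof'' hypothesis of Theorem~\ref{thm:metatheorem}.
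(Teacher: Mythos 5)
Your proof is correct and follows essentially the same route as the paper's: pull back the near-optimal pseudoexpectation along the linear substitution $y \mapsto y'(y)$, verify feasibility for $\cP^{X'}$ via the degree-controlled SoS proofs, and use the objective-gap SoS proof to bound the loss by $1$. The only differences are cosmetic refinements (arguing nonemptiness/finiteness explicitly and taking a $\delta$-near-optimal $\pE$ rather than assuming an exact optimizer), which do not change the substance.
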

\begin{proof}
    Let $X,X'$ be neighboring datasets, and let $\pE$ be an optimal solution to the optimization problem defining $s(X,x_0)$.
    We will construct a feasible solution $\pE'$ for $s(X',x_0)$ whose objective value is at most $s(X,x_0) + 1$.
    Since we could swap $X$ and $X'$, this will prove that $|s(X,x_0) - s(X',x_0)| \leq 1$.

    For any degree $D$ polynomial $f$, we define $\pE' f(x,y) = \pE f(x,y'(y))$.
    Note that the degree of $\pE'$ as a pseudoexpectation is the same as that of $\pE$, because $y'(y)$ is linear.

    We claim that $\pE'$ is feasible for $s(X',x_0)$.
    Clearly $\pE' x = x_0$, so we just need to check that $\pE'$ satisfies $\cP^{X'}$.
    For each $j$, we check that $\pE'$ satisfies $p_j^{X'}(x,y) \geq 0$.
    Consider any square polynomial $q$ such that $\deg (q \cdot p^{X'}_j) \leq D$.
    We need to show $\pE' q \cdot p^{X'}_j \geq 0$.

    Using the SoS proof $\cP^X(x,y) \proves_{\deg(p_j^{X'})} p_j^{X'}(x,y'(y))$, we can write
    \[
        q(x,y') \cdot p^{X'}_j(x,y') = q(x,y') \cdot \sum_{S \subseteq [m]} q_S(x,y) \prod_{i \in S} p_i^{X}(x,y)
    \]
    where every term in the sum on the right-hand side has degree at most $\deg p_j^{X'}$.
    Therefore, for every $S$, we have $q(x,y'(y)) \cdot q_S(x,y) \prod_{i \in S} p_i^X(x,y)$ has degree at most $D$.
    So, applying $\pE'$ to both sides, we find that $\pE' q p^{X'}_j \geq 0$, using that $\pE$ satisfies $\cP^X$.
    
    Finally, we have to check that $\pE p^X(x,y) - \pE' p^{X'}(x,y) \leq 1$.
    We expand the definitions and use our SoS proof of bounded sensitivity.
    \begin{align*}
        \pE p^X(x,y) - \pE' p^{X'}(x,y) = \pE p^X(x,y) - \pE p^{X'}(x,y') \leq 1\mper
    \end{align*}
\qedhere
\end{proof}

\begin{proof}[Proof of Lemma~\ref{lem:meta-priv}]
    By Lemma~\ref{lem:meta-sensitivity} and the usual analysis of the exponential mechanism, an output from the target distribution $D_X^{\text{target}}$ from Corollary~\ref{cor:meta-sampling-correctness} would satisfy $\e/2$-DP.
    Since, according to Corollary~\ref{cor:meta-sampling-correctness}, the actual distribution output by step (3) of the algorithm differs only by multiplicative $e^{\e/2}$, the output of the log-concave sampler satisfies $\e$-DP.
\end{proof}

\subsubsection{Utility: proofs of Lemmas~\ref{lem:meta-score} and~\ref{lem:meta-util}}

\begin{proof}[Proof of Lemma~\ref{lem:meta-score}]
    By the standard analysis of the exponential mechanism, a sample from $D_X^{\text{target}}$ (as described in Corollary~\ref{cor:meta-sampling-correctness}) would output $x_0$ with $s(X,x_0)$ with probability at least $1-r \exp(-\e t/2)$.
    Since the actual output distribution of step (3) is $e^{\e/2}$ multiplicatively close to this, we are done.
\end{proof}

\begin{proof}[Proof of Lemma~\ref{lem:meta-util}]
    Since $x_0$ has $s(X,x_0) \geq 0$, there exists $\pE$ of degree $D$ satisfying $p_1^X(x,y) \geq 0,\ldots,p_m^X(x,y) \geq 0$ and having $\pE x = x_0$ and $\pE p^X(x,y) \geq 2^{-\poly(B)}$.
    Then applying $\pE$ to either side of the SoS proof of utility, we obtain $\pE \|x - x^*(X)\|^2 \leq \alpha^2 + 2^{-\poly(B)}$.
    By convexity, $\|\pE x - x^*(X) \| \leq \alpha + 2^{-\poly(B)}$, so we are done.
\end{proof}

\newcommand{\Score}{\textsc{Score}}
\section{Coarse Estimation}
\label{sec:coarse}
Our overall algorithm for private mean estimation proceeds in two phases.
At start of the algorithm, we assume that we have a point $\mu_0$ such that $\| \mu - \mu_0\|_2 \leq R$.
Equivalently, by re-centering, we assume that $\|\mu\|_2 \leq R$.
Recall that packing lower bounds for differential privacy generally necessitate an assumption of this nature~\cite{HardtT10}. 

In the first phase of the algorithm we perform coarse mean estimation, improving our initial estimate of the mean from $R$ to $\cO(\sqrt{d})$ error in $\ell_2$-norm.
In the second phase, we perform fine mean estimation, further improving our estimate from $\cO(\sqrt{d})$ to $\alpha$ error.
The overall algorithm will follow by combining these two phases. 
In this section, we describe our algorithm for coarse mean estimation, quantified in Theorem~\ref{thm:coarse-estimation}.

\begin{theorem}[Combination of \Cref{thm:simple-high-dimensional-coarse-mean-estimation} and \Cref{thm:sos-coarse-estimate}]
\label{thm:coarse-estimation}
Suppose $\cD$ is a $d$-dimensional distribution with mean $\mu$ and covariance $\Sigma$, such that $\normt{\mu} \le R$, $\normt{\Sigma} \le 1$. Then there exists a polynomial-time $\epsilon$-DP algorithm that takes $n$ \iid\ samples from $\cD$ and outputs $y$ such that $\normt{y - \mu} \le \cO\paren{\sqrt{d}}$ with probability $1-\beta$, and sample complexity 
\begin{equation*}
n = \tilde{\cO}\Paren{\frac{d\log R + \min \Paren{ d,  \log R } \log \paren{1/\beta} }{\epsilon}} \mper
\end{equation*}
\end{theorem}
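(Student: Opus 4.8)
The plan is to prove the two constituent statements and then run whichever algorithm uses fewer samples — a choice that depends only on the public quantities $R$ and $d$, hence costs no privacy. When $\log R > d$ (so $\min(d,\log R)=d$), run \Cref{thm:simple-high-dimensional-coarse-mean-estimation}, which is a coordinate-wise reduction to the one-dimensional problem in the style of \cite{KamathLSU19}: privately estimate each $\mu_j$ to $O(1)$ error with a one-dimensional pure-DP bounded-variance mean estimator using budget $\e/d$ and failure probability $\beta/d$; since $\Sigma_{jj}\le 1$ and $|\mu_j|\le R$ this costs $\tilde{O}((\log R+\log(1/\beta))\cdot d/\e)$ samples and produces $\ell_2$ error $O(\sqrt d)$. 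When $\log R\le d$ (so $\min(d,\log R)=\log R$), run the SoS iterative halving of \Cref{thm:sos-coarse-estimate} described below, which costs $\tilde{O}((d\log R+\log R\log(1/\beta))/\e)$ samples. In both regimes the sample count is $\tilde{O}((d\log R+\min(d,\log R)\log(1/\beta))/\e)$, as claimed.

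For \Cref{thm:sos-coarse-estimate} I maintain an estimate $\hat\mu_{k-1}$ with $\norm{\hat\mu_{k-1}-\mu}\le R_{k-1}$ and run $T=O(\log(R/\sqrt d))$ rounds that shrink $R_{k-1}$ by a constant factor each, stopping at $R_{k-1}=O(\sqrt d)$. In round $k$, recentering so $\hat\mu_{k-1}=0$, I run the exponential mechanism over $\cC=\ball{0}{2R_{k-1}}$ with budget $\e/T$, using as score $s(X,x_0)$ the degree-$4$ SoS relaxation of $\max_b \tsum_i b_i$ subject to $b_i^2=b_i$ and $b_i\snorm{X_i-x}\le b_i(R_{k-1}/10)^2$, additionally constrained by $\pE x=x_0$ and shifted by $-(1-\gamma_0)n$ so that $s(X,x_0)\ge 0$ certifies usefulness; the output is $\hat\mu_k$. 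Since the $\pE$-feasible set is convex and the objective linear, $s(X,\cdot)$ is concave, so the target density $\propto\exp(\tfrac\e T s(X,\cdot))$ is log-concave; sensitivity $1$ is witnessed by the linear map $b\mapsto(0,b_2,\dots,b_n)$ and the SoS proof of \cite{Hopkins20}; and the system is Archimedean and $\eta$-robustly satisfiable with $\eta=\poly(1/(nd))$ because $(x',b=0)$ is feasible for every nearby $x'$, so \Cref{lem:meta-lipschitz} makes $s$ polynomially Lipschitz and \Cref{thm:efficient-sampling} implements the round in polynomial time and $\tfrac\e T$-privately. Composing over the $T$ rounds on the same samples, and post-processing, gives an $\e$-DP algorithm.

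Utility is the crux. Condition on the event $E$ that at least $(1-\gamma_1)n$ of the samples lie within $\sqrt{10d}$ of $\mu$ — simultaneously at all scales, since it is weakest at the smallest scale $R_{k-1}=\Theta(\sqrt d)$ — which holds with probability $1-\beta/2$ once $n\gg\log(1/\beta)$ by $\E\snorm{X-\mu}\le d$ and a Chernoff bound; fix $\gamma_1\le\gamma_0$ with $\gamma_0+\gamma_1$ a small constant. On $E$, for $R_{k-1}\ge\Omega(\sqrt d)$, the relaxation together with $\tsum_i b_i\ge(1-\gamma_0)n$ has a degree-$4$ SoS proof of $\snorm{x-\mu}\le(R_{k-1}/4)^2$: for the $(1-\gamma_1)n$ ``good'' indices $G$ one derives $b_i\snorm{x-\mu}\le 2b_i\snorm{x-X_i}+2b_i\snorm{X_i-\mu}\le b_i(R_{k-1}/5)^2$ (using the radius constraint and $\snorm{X_i-\mu}\le(R_{k-1}/100)^2$ times $b_i\ge 0$), together with $\tsum_{i\in G}b_i\ge\tsum_i b_i-|G^c|\ge(1-\gamma_0-\gamma_1)n$ and $n\ge\tsum_{i\in G}b_i$, which combine — multiplying $\snorm{x-\mu}\ge 0$ by $\tsum_{i\in G}b_i-(1-\gamma_0-\gamma_1)n\ge 0$ — to the claim. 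Applying the optimal $\pE$ and convexity, $s(X,x_0)\ge 0$ forces $\norm{x_0-\mu}\le R_{k-1}/4$, keeping $\hat\mu_k$ legitimate for the next round. The same geometry gives $\{x_0:s(X,x_0)\ge 0\}\supseteq\ball{\mu}{R_{k-1}/20}$, so $\vol(\cC)/\vol(\{s\ge 0\})\le 2^{O(d)}$, while the score attainable on that set exceeds $0$ by $\Theta(n)$. \Cref{thm:volume-based-exponential-mechanism} then shows round $k$ fails with probability at most $e^{-t}$ provided $n\gtrsim T(d+t)/\e$; taking $t=\log(2T/\beta)$ and union-bounding over the $T=O(\log R)$ rounds and $E$ gives $n=\tilde{O}((d\log R+\log R\log(1/\beta))/\e)$, proving \Cref{thm:sos-coarse-estimate}.

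I expect the main obstacle to be the SoS utility derivation above: it must be genuinely low-degree, it must be linear in $x_0$ in the sense that $x_0=\pE x$ works without a nonlinear rounding step (so the induced sampling problem stays log-concave), and the overlap/resilience constants must close well enough to give constant-factor shrinkage with only a $\Theta(n)$ score gap — this is what keeps the per-round sample cost at $\tilde{O}(d/\e)$ rather than scaling with $R$. Given that, the robust-satisfiability and Lipschitz bounds feeding \Cref{thm:efficient-sampling}, the finite-precision accounting for the SDP solves, and the verification that $O(\log(R/\sqrt d))$ rounds reach radius $O(\sqrt d)$ are routine.
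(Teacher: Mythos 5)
Your proposal follows the paper's structure essentially exactly: split by whether $\log R \lessgtr d$ (a public quantity, so no privacy cost), run the coordinate-wise one-dimensional reduction (\Cref{thm:simple-high-dimensional-coarse-mean-estimation}) in the first regime, and run the iterative SoS exponential mechanism with geometric halving (\Cref{thm:sos-coarse-estimate}) in the second, composing privacy over $T = O(\log R)$ rounds on the same samples. Your per-round analysis — concavity from convexity of the feasible set, sensitivity $1$ via zeroing out the replaced $b_i$, polynomial Lipschitzness to invoke the \cite{BassilyST14} sampler, volume $2^{O(d)}$ around $\mu$, and a degree-$4$ SoS proof that a feasible $\pE$ with $\pE\sum b_i \ge (1-\gamma_0)n$ has $\pE x$ within $R_{k-1}/4$ of $\mu$ — matches the paper's \Cref{lem:accuracy-coarse-sdp}, \Cref{lem:coarse-sdp-sensitivty}, \Cref{lem:coarse-sdp-lipschitzness}, \Cref{lem:coarse-sdp-volume}, \Cref{cor:coarse-sdp-concavity}. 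Two small presentational points: (1) you route Lipschitzness through the meta-theorem's \Cref{lem:meta-lipschitz} (via robust satisfiability with $b=0$), whereas the paper's Section 5 proves it directly by bounding the dual multiplier $\lambda$ in a self-contained way; both are fine, but note that for the duality/Archimedean step to apply you do need to include the explicit ball constraint $\|x\|^2 \le (2R_{k-1}+\cdot)^2$ in the polynomial system (as the paper's \Cref{def:coarse-sdp} does) — without it the system is not Archimedean, since $v$ is otherwise unbounded. (2) Your SoS utility derivation bounds $\sum_{i\in G}b_i\|x-\mu\|^2$ and multiplies $\|x-\mu\|^2\ge 0$ by the SoS-nonnegative $\sum_{i\in G}b_i - (1-\gamma_0-\gamma_1)n$, whereas the paper's \Cref{lem:accuracy-coarse-sdp} splits $|G|\cdot\|x_0-v\|^2$ into the $b_i$ and $(1-b_i)$ contributions and uses the ball constraint to control the latter; both yield the same $O(R_{k-1})$ bound. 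With the ball constraint added your argument is correct and equivalent.
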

To demystify the sample complexity in the above theorem, note that in order to  prove this theorem, we present two algorithms with different sample complexities. The above sample complexity is the minimum of the two.

Our first algorithm uses a rather simple approach: privately estimate the mean in each coordinate up to constant error (using, e.g., the exponential mechanism, though one could also employ private histograms as in~\cite{KarwaV18}), and use composition over the $d$ coordinates. 
The sample complexity of this approach is $\cO(d\paren{\log R +  \log1/\beta + \log d}/\epsilon)$ (see \Cref{thm:simple-high-dimensional-coarse-mean-estimation}).
Similar methods have been previously employed for multivariate mean estimation (see, e.g.,~\cite{KamathLSU19}), but we provide an analysis for completeness, and include a particular focus on the error probability $\beta$.

We observe that this coordinate-wise approach incurs a cost of $d \log (1/\beta)$, whereas the primary goal for heavy-tailed mean estimation in the non-private setting is to decouple these two factors.
To this end, we give a second algorithm, which uses our SoS exponential mechanism paradigm.
This algorithm has sample complexity $ \cO(\log R \paren{d+  \log1/\beta + \log \log R}/\epsilon)$ (see \Cref{thm:sos-coarse-estimate}). 
In the settings where $\log R \ll d$, this algorithm yields a better sample complexity than the simpler approach.
An interesting open question is to design an efficient algorithm for this setting with sample complexity $\tilde \cO((d + \log R + \log (1/\beta))/\epsilon)$; an appropriate application of the exponential mechanism provides an inefficient algorithm with this sample complexity.

\subsection{First Algorithm: Coordinate-Wise}
\label{sec:coordinatewise}
\subsubsection{Main Steps}
As the name suggests, our coordinate-wise algorithm works by running a one-dimensional algorithm on each coordinate of the samples from a $d$-dimensional distribution. 
The one-dimensional algorithm exploits the fact that the true mean is within constant distance of $0.99 n$ of the samples with high probability. We use the exponential mechanism to find any point that satisfies the same property. 
By triangle inequality, this new point will be close to the true mean. 

In the following we describe the main theorem of this subsection and the main steps of proving it.
\begin{lemma}
\label{lem:1d-appropriate-radius}
Suppose $X_1,\dots, X_n$ are \iid\ samples from a one-dimensional distribution $\cD$ with mean $\mu$ and variance bounded above by $1$. 
  Furthermore, suppose $n \ge 4 \log\paren{1/\beta}/\alpha^2$, and $R^* \ge \sqrt{2/\alpha}$. Then with probability $1-\beta$ there exists a set $G \subseteq [n]$ such that 
$\abs{G} \ge \paren{1-\alpha} n$, and 
\begin{equation*}
	\forall i \in G : \abs{X_i - \mu} \le R^* \mper
\end{equation*}
\end{lemma}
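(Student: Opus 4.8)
The plan is a routine two-step concentration argument: first control the probability that a single sample lands outside the interval $[\mu - R^*, \mu + R^*]$ using the variance bound, then apply a tail bound to the number of such ``far'' samples.

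First I would observe that, by Chebyshev's inequality together with the hypothesis that the variance of $\cD$ is at most $1$, for each $i$
\[
    \Pr\Brac{|X_i - \mu| > R^*} \;\le\; \frac{\Var(X)}{(R^*)^2} \;\le\; \frac 1 {(R^*)^2} \;\le\; \frac{\alpha}{2}\mcom
\]
where the final inequality uses $R^* \ge \sqrt{2/\alpha}$. Introduce the indicators $Y_i = \Ind\Brac{|X_i - \mu| > R^*}$, which are i.i.d.\ Bernoulli with parameter $p \le \alpha/2$, and let $B := \sum_{i=1}^n Y_i$ count the number of samples lying outside the interval. Then $\E B = pn \le \alpha n / 2$.

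Next I would bound $\Pr\Brac{B \ge \alpha n}$. Since $\alpha n \ge \E B + \tfrac{\alpha n}{2}$, Hoeffding's inequality for sums of $\{0,1\}$-valued independent variables gives
\[
    \Pr\Brac{B \ge \alpha n} \;\le\; \Pr\Brac{B \ge \E B + \tfrac{\alpha n}{2}} \;\le\; \exp\Paren{-\frac{2 (\alpha n/2)^2}{n}} \;=\; \exp\Paren{-\frac{\alpha^2 n}{2}}\mper
\]
Using $n \ge 4\log(1/\beta)/\alpha^2 \ge 2\log(1/\beta)/\alpha^2$, the right-hand side is at most $\beta$. Hence, with probability at least $1-\beta$ we have $B \le \alpha n$; taking $G = \Set{ i \in [n] : |X_i - \mu| \le R^* }$ then yields $|G| = n - B \ge (1-\alpha)n$, and every $i \in G$ satisfies $|X_i - \mu| \le R^*$ by construction.

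I do not expect any genuine obstacle here: the only point to watch is that the numerical constants in the two hypotheses (the $4$ in the sample bound and the $\sqrt 2$ in the radius bound) are precisely what is needed to push the Chebyshev estimate down to $\alpha/2$ and then absorb the slack in Hoeffding. One could alternatively invoke a multiplicative Chernoff bound, using $\alpha n \ge 2\,\E B$ to get $\Pr[B \ge 2\,\E B] \le \exp(-\E B/3)$, but this needs slightly more care when $\E B$ is small, so I would prefer the additive (Hoeffding) route above.
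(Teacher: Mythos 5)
Your proof is correct and follows essentially the same route as the paper's: Chebyshev's inequality to bound the per-sample probability of being outside the radius-$R^*$ interval, then Hoeffding's inequality on the indicator sum. The only cosmetic difference is that you fix the deviation at $\alpha n/2$ and verify the resulting failure probability is at most $\beta$, whereas the paper fixes the deviation at $\sqrt{n\log(1/\beta)}$ and verifies both the mean and the deviation are at most $\alpha n/2$; these bookkeeping choices are equivalent.
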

\begin{proof}
This lemma is a standard application of Chebyshev's inequality and Hoeffding's bound.
Suppose the $X_i$'s are \iid\ samples from $\cD$ and suppose $\cD$ has variance $\sigma^2$. Then by Chebyshev's inequality we know that 
\begin{equation*}
\Pr\brac{\Abs{X_i - \mu} \ge R^*} \le \frac{\sigma^2}{{R^*}^2} \mcom
\end{equation*} 
Let $I_i = \mathbbm{1}\brac{\Abs{X_i  - \mu} \ge R^*}$, then $I_i$'s are independent random variables between $0$ and $1$. Therefore, we may use Hoeffiding's bound. Let $S_n = \sum_{i=1}^n I_i$, then
\begin{equation*}
\Pr\Brac{S_n \ge \frac{\sigma^2}{{R^*}^2} n + \sqrt{n \log 1/\beta}} \le \beta \mper
\end{equation*}
  Since $R^* \ge \sqrt{2/\alpha}$, $n \ge 4 \log\paren{1/\beta}/\alpha^2$, $\sigma^2 \le 1$, each of these terms are bounded by $\alpha n/2$, implying that $|G| \geq (1-\alpha)n$ as desired.
\end{proof}

We will prove the following theorem using a combination of the arguments from \Cref{subsubsec:simple-coarse-algorithm} to \Cref{subsubsec:simple-coarse-runtime}.
\begin{theorem}[\textsc{coarse-1d-estimate}]
	\label{thm:coarse-estimation-1d-simple}
	Let $X_1, \dots, X_n \in \R$, $R > R^* \ge 1$, $0 \le  \alpha < 1/3$.
	Suppose there exists a point $x_0 \in \Brac{-R, R}$ such that $\Abs{\Set{i \suchthat \Abs{X_i - x_0} \le R^*}} \ge \paren{1-\alpha} n$.
	Then there exists an $\epsilon$-DP algorithm and a universal constant $C_0$ such that if $n \ge C_0\Paren{\log R + \log\paren{1/\beta}}/\epsilon\alpha$, the algorithm
	returns $y$ such that $\Abs{y - x_0} \le \cO\Paren{R^*}$,  with probability $\paren{1-\beta}$, and runtime $\cO\Paren{n^2 \log R}$.
\end{theorem}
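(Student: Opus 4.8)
The plan is to prove \Cref{thm:coarse-estimation-1d-simple} by a direct instantiation of the volume-based exponential mechanism of \Cref{thm:volume-based-exponential-mechanism} on the candidate set $\cH = [-R,R]$ with the score function
$s(X,y) = \Abs{\Set{i \in [n] \suchthat \Abs{X_i - y} \le 2R^*}}$.
This score is integer-valued and has sensitivity at most $1$, since replacing a single $X_i$ changes the count by at most one; hence running the mechanism with privacy parameter $\epsilon$ and sensitivity bound $\Delta = 1$ is $\epsilon$-DP by construction, and no further privacy argument is required.

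For utility, first I would lower-bound the volume of a ``good'' set $\cH^* = [x_0 - R^*, x_0 + R^*] \cap [-R,R]$. For any $y \in \cH^*$ and any index $i$ with $\Abs{X_i - x_0} \le R^*$ — of which there are at least $(1-\alpha)n$ by hypothesis — the triangle inequality gives $\Abs{X_i - y} \le \Abs{X_i - x_0} + \Abs{x_0 - y} \le 2R^*$, so $s(X,y) \ge (1-\alpha)n$; thus $\OPT(X) := (1-\alpha)n$ is a valid lower bound on the score over $\cH^*$. Since $x_0 \in [-R,R]$ and $R^* < R$, the interval $\cH^*$ has length at least $R^* \ge 1$, so $\vol(\cH)/\vol(\cH^*) \le 2R/R^* \le 2R$. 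Plugging into \Cref{thm:volume-based-exponential-mechanism} with $t = \ln(1/\beta)$: with probability at least $1-\beta$ the output $y$ satisfies $s(X,y) \ge (1-\alpha)n - \frac{2}{\epsilon}\paren{\ln(2R) + \ln(1/\beta)}$.

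Next I would turn a high score into the location guarantee. Choosing $C_0$ a sufficiently large universal constant, the hypothesis $n \ge C_0(\log R + \log(1/\beta))/(\epsilon\alpha)$ forces the error term $\frac{2}{\epsilon}\paren{\ln(2R)+\ln(1/\beta)}$ to be at most $\alpha n$, so the output satisfies $s(X,y) \ge (1-2\alpha)n > \alpha n$, where the final inequality uses $\alpha < 1/3$. Hence the index set $\Set{i \suchthat \Abs{X_i - y} \le 2R^*}$ has size strictly more than $\alpha n$ while $\Set{i \suchthat \Abs{X_i - x_0} \le R^*}$ has size at least $(1-\alpha)n$; since these sizes sum to more than $n$, the two sets share an index $i$, and then $\Abs{y - x_0} \le \Abs{y - X_i} + \Abs{X_i - x_0} \le 3R^* = \cO(R^*)$, as claimed.

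Finally, for the running time I would note that $s(X,\cdot)$ is piecewise constant on $[-R,R]$ with at most $2n$ breakpoints (the points $X_i \pm 2R^*$ clipped to $[-R,R]$), so after sorting the breakpoints one computes the score on each of the $\le 2n+1$ subintervals, forms the weights $\vol(I_j)\exp(\epsilon s_j / 2)$ after subtracting $\max_j s_j$ from every $s_j$ to avoid overflow, samples a subinterval proportional to its weight, and samples uniformly within it. Carrying $\cO(\log R)$-bit arithmetic throughout, this fits within $\cO(n^2 \log R)$ time even if one recomputes each score naively. I do not expect any serious obstacle: the main content is the short utility computation above, and the only point needing mild care is finite-precision implementation of a mechanism defined over the continuum, which is routine here because the scores are integers bounded by $n$ and the breakpoints are explicit, so a standard rational-approximation argument preserves both $\epsilon$-DP and the utility bound up to negligible additive error (which is harmless given the slack in the argument).
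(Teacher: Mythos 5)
Your proposal is correct, but it takes a genuinely different route from the paper. The paper (\Cref{alg:1d-coarse-estimation}) first discretizes the candidate set to the grid $I = \{R^* k : \lfloor -R/R^*\rfloor \le k \le \lceil R/R^*\rceil\}$ of size $\Theta(R/R^*)$ and runs the standard finite exponential mechanism there; because $|I|$ can be exponential in the input size, a separate ``lazy exponential mechanism'' lemma (\Cref{thm:lazy-exp}) is then invoked to cut the running time to $O(n^2\log R)$, exploiting the fact that only $O(n)$ grid points carry nonzero score. You instead stay in the continuum: you apply the volume-based exponential mechanism (\Cref{thm:volume-based-exponential-mechanism}) directly on $\cH=[-R,R]$, bound $\vol(\cH)/\vol(\cH^*)\le 2R/R^*\le 2R$ using $\vol(\cH^*)\ge R^*$, and obtain the running time by observing that $s(X,\cdot)$ is piecewise constant with at most $2n$ breakpoints, so the target density is a mixture of at most $2n+1$ uniform-on-interval pieces that can be sampled directly. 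Both give the same sample complexity and runtime. The paper's discrete version keeps the probability space finite with candidates at integer multiples of $R^*$, which makes the finite-precision sampling story somewhat cleaner; your version avoids the detour through an explicit net and the auxiliary lazy-mechanism lemma, with a slightly cleaner volume bound, at the cost of a more delicate implementation discussion (you sketch it plausibly, and the paper's lazy mechanism is similarly terse about the analogous precision issues). The remaining moving parts---privacy from sensitivity $1$, and the pigeonhole-plus-triangle-inequality step that turns a score $\ge (1-2\alpha)n > \alpha n$ into $|y - x_0|\le 3R^*$---match the paper's argument essentially verbatim.
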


By combining \Cref{thm:coarse-estimation-1d-simple} and \Cref{lem:1d-appropriate-radius}, we arrive at the following theorem for univariate coarse mean estimation.
\begin{theorem}[one-dimensional coarse mean estimation]
	\label{thm:one-dimensional-coarse-mean-estimation}
	Suppose $\cD$ is a one-dimensional distribution with mean $\mu$, and covariance $\sigma^2$, such that $\abs{\mu} \le R$ and $\sigma^2 \le 1$. Then there exists an $\epsilon$-DP algorithm that takes $n$ \iid\ samples from $\cD$ and outputs $y$ such that $\Abs{y - \mu} \le \cO\paren{1}$ with probability $1-\beta$, sample complexity 
	\begin{equation*}
		n = \cO\Paren{\frac{\log R  + \log \paren{1/\beta}}{\epsilon}}\mcom
	\end{equation*}
	and runtime $\poly\paren{n, \log R, 1/\eps}$.
\end{theorem}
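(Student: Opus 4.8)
The plan is to derive \Cref{thm:one-dimensional-coarse-mean-estimation} by composing the distributional fact \Cref{lem:1d-appropriate-radius} with the algorithmic guarantee of \Cref{thm:coarse-estimation-1d-simple}, both instantiated at a fixed absolute constant $\alpha = 1/4$ (which satisfies the requirement $\alpha < 1/3$) and at the corresponding constant radius $R^* = \sqrt{2/\alpha} = 2\sqrt{2}$. One first disposes of a trivial case: if $R \le R^*$, output $y = 0$; this map is data-independent, hence $0$-DP (a fortiori $\epsilon$-DP), and $|y - \mu| \le |\mu| \le R \le R^* = \cO(1)$. Branching on the public parameter $R$ in this way does not affect privacy. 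So assume henceforth $R > R^* \ge 1$.

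Next, apply \Cref{lem:1d-appropriate-radius} with failure probability $\beta/2$: as long as $n \ge 4\log(2/\beta)/\alpha^2 = \cO(\log(1/\beta))$ (and $R^* \ge \sqrt{2/\alpha}$, true by construction), with probability at least $1 - \beta/2$ over $X_1,\dots,X_n$ there is a set $G \subseteq [n]$ with $|G| \ge (1-\alpha)n$ and $|X_i - \mu| \le R^*$ for all $i \in G$. On this event, the point $x_0 := \mu$ lies in $[-R, R]$ and satisfies $|\{ i : |X_i - x_0| \le R^* \}| \ge (1-\alpha)n$, which is precisely the hypothesis needed to invoke \Cref{thm:coarse-estimation-1d-simple}.

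Conditioned on that good data event, run the $\epsilon$-DP algorithm of \Cref{thm:coarse-estimation-1d-simple} with parameters $\alpha, R^*$ and failure probability $\beta/2$. Because $\alpha$ and $R^*$ are absolute constants, its sample requirement $n \ge C_0(\log R + \log(2/\beta))/(\epsilon\alpha)$ becomes $n = \cO((\log R + \log(1/\beta))/\epsilon)$, and it returns $y$ with $|y - x_0| = |y - \mu| \le \cO(R^*) = \cO(1)$ with probability at least $1 - \beta/2$ over its internal coins. A union bound over the data event and the mechanism's randomness gives $|y - \mu| \le \cO(1)$ with probability at least $1 - \beta$. The overall sample complexity is the larger of the two requirements; in the standard regime $\epsilon \le 1$ the bound $\cO((\log R + \log(1/\beta))/\epsilon)$ dominates. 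Privacy of the whole procedure follows immediately, since the only data-dependent step is the call to the $\epsilon$-DP subroutine (and post-processing of its output), and the runtime is dominated by that subroutine's $\cO(n^2 \log R) = \poly(n, \log R, 1/\epsilon)$ bound.

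The content here is essentially all in \Cref{lem:1d-appropriate-radius} and \Cref{thm:coarse-estimation-1d-simple}, so I do not expect a real obstacle — the statement is the routine glue between a concentration bound and the one-dimensional exponential-mechanism estimator. The only things requiring a little care are picking the absolute constants so that $\alpha < 1/3$ and $R^* \ge \sqrt{2/\alpha}$ hold simultaneously, apportioning the total failure probability $\beta$ between the ``good samples'' event and the mechanism's randomness, and handling the $R \le R^*$ corner case with the trivial constant estimator.
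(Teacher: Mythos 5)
Your proposal is correct and mirrors the paper's own (implicit) derivation exactly: the paper explicitly says \Cref{thm:one-dimensional-coarse-mean-estimation} follows ``by combining \Cref{thm:coarse-estimation-1d-simple} and \Cref{lem:1d-appropriate-radius},'' and your instantiation with fixed absolute constants $\alpha$, $R^*$, a split of the failure probability, and the trivial $R\le R^*$ corner case is the routine glue needed to make that combination precise.
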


To extend this to the multivariate setting, we apply the algorithm from \Cref{thm:one-dimensional-coarse-mean-estimation} with privacy budget $\epsilon/d$ and failure probability $\beta/d$ to each coordinate ($d$ times in total), and using composition of differential privacy.
\begin{theorem}[simple high dimensional coarse mean estimation]
	\label{thm:simple-high-dimensional-coarse-mean-estimation}
	Suppose $\cD$ is a $d$-dimensional distribution with mean $\mu$ and covariance $\Sigma$, such that $\normt{\mu} \le R$ and $\normt{\Sigma} \le 1$. Then there exists an $\epsilon$-DP algorithm that takes $n$ \iid\ samples from $\cD$ and outputs $y$ such that $\normt{y - \mu} \le \cO\paren{\sqrt{d}}$ with probability $1-\beta$, sample complexity 
	\begin{equation*}
		n = \cO \Paren{\frac{d\log R  + d \log \paren{1/\beta} + d\log d}{\epsilon}} \mcom
	\end{equation*}
	and runtime $\poly\Paren{n, d, \log R, 1/\eps}$.
\end{theorem}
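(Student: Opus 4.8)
The plan is to reduce to the one-dimensional guarantee of \Cref{thm:one-dimensional-coarse-mean-estimation} by running it independently on each of the $d$ coordinates and then gluing the results together with basic composition and a union bound. First I would observe that for each coordinate $j \in [d]$, the marginal distribution $\cD_j$ of the $j$-th coordinate has mean $\mu_j$ with $\abs{\mu_j} \le \normt{\mu} \le R$ and variance $\Sigma_{jj} \le \normt{\Sigma} \le 1$, so the hypotheses of \Cref{thm:one-dimensional-coarse-mean-estimation} are met for $\cD_j$. Running that algorithm on the $n$ scalars $(X_1)_j, \ldots, (X_n)_j$ with privacy parameter $\epsilon/d$ and failure probability $\beta/d$ yields, with probability $1 - \beta/d$, an estimate $y_j$ with $\abs{y_j - \mu_j} \le \cO(1)$, at a sample cost of $\cO\paren{\frac{\log R + \log(d/\beta)}{\epsilon/d}} = \cO\paren{\frac{d\log R + d\log(1/\beta) + d \log d}{\epsilon}}$ — which is exactly the claimed $n$.

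Next I would assemble the output $y = (y_1, \ldots, y_d)$. A union bound over the $d$ coordinates shows that with probability at least $1 - \beta$, \emph{every} coordinate satisfies $\abs{y_j - \mu_j} \le \cO(1)$ simultaneously; summing squares then gives $\snormt{y - \mu} = \sum_{j=1}^d (y_j - \mu_j)^2 \le \cO(d)$, i.e.\ $\normt{y - \mu} \le \cO(\sqrt d)$, as required. The runtime is $d$ times the per-coordinate runtime $\poly(n, \log R, 1/\epsilon)$ of \Cref{thm:one-dimensional-coarse-mean-estimation}, hence $\poly(n, d, \log R, 1/\epsilon)$.

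For privacy, I would argue as follows: the overall algorithm is the $d$-fold concatenation of mechanisms $A_1, \ldots, A_d$, where $A_j$ depends on the dataset only through its $j$-th coordinates and is $(\epsilon/d)$-DP. Changing a single sample $X_i$ to $X_i'$ changes the input fed to \emph{each} $A_j$ by at most one element (the $j$-th coordinate of the $i$-th point), so each $A_j$'s output distribution changes by a multiplicative $e^{\epsilon/d}$; by the basic composition theorem for pure differential privacy, the joint output is $\epsilon$-DP. The main thing to be careful about here is precisely this point — that all $d$ sub-mechanisms act on (projections of) the same dataset and that a single-sample change is a single-element change for each of them — but once that is stated cleanly, composition applies directly and the argument is routine; there is no deep obstacle in this theorem beyond correctly tracking the $\epsilon/d$ and $\beta/d$ budget split through \Cref{thm:one-dimensional-coarse-mean-estimation}.
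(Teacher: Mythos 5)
Your proposal is correct and matches the paper's own argument essentially verbatim: the paper also applies \Cref{thm:one-dimensional-coarse-mean-estimation} to each coordinate with privacy budget $\epsilon/d$ and failure probability $\beta/d$, then invokes basic composition and a union bound. You have simply filled in the routine bookkeeping (marginals inherit bounded mean and variance, summing squared coordinate errors gives $\cO(\sqrt d)$, and the per-coordinate sample cost multiplies out to the stated $n$) that the paper leaves implicit.
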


In the rest of this subsection we prove \Cref{thm:coarse-estimation-1d-simple}. To do so, we propose an algorithm and provide privacy, accuracy and runtime analysis for it.
\subsubsection{Algorithm}
\label{subsubsec:simple-coarse-algorithm}
Our algorithm (\Cref{alg:1d-coarse-estimation}) is based on the exponential mechanism. We construct a cover over $\brac{-R, R}$, and run the exponential mechanism over it to capture a large number of $X_i$'s in a ball of radius $\cO\paren{R^*}$. 
This quantity is formalized using the score function as defined in \Cref{def:one-d-score}.
\begin{definition}[Score function]
	\label{def:one-d-score}
	Let $X_1, \dots, X_n \in \R, R^* \ge 1$. For any point $y \in \R$ we define
	\begin{equation*}
		\Score\paren{y} =  \Abs{\Set{i \suchthat \Abs{X_i - y} \le 2R^*}} \mper
	\end{equation*}
\end{definition}
  \begin{algorithm}
  	\caption{\textsc{coarse-1d-estimate}}
	\label{alg:1d-coarse-estimation}
	\begin{mdframed}
		\mbox{}
		\begin{description}
			\item[Input:]
			Parameters $R > R^* \ge 1$,  dataset $X_1, \dots, X_n \in \R$, 
			\item[Operation:]
			\break
			\begin{enumerate}
				\item Let $I = \Set{R^* k \suchthat \floor{-R/R^*}\le k \le \ceil{R/R^*}, k \in \Z}$. 
				\item Run the exponential mechanism over the points in $I$ with score function $\Score$ (\Cref{def:one-d-score}), privacy budget $\epsilon$, and sensitivity $1$, to obtain $\wh{y}$. 
			\end{enumerate}
			\item[Output:]  \wh{y}
		\end{description}
	\end{mdframed}
\end{algorithm}
\subsubsection{Privacy analysis}
Since $\Score$ has sensitivity $1$, the algorithm is $\epsilon$-DP by privacy of the exponential mechanism.
\subsubsection{Accuracy analysis}
\begin{lemma}
	Suppose $0 \le \alpha < 1/3$, $n \ge 2 \Paren{\log \Paren{2\ceil{R/R^*} + 1} + \log\paren{1/\beta}}/\epsilon \alpha$, and that there exists some point $x_0 \in \brac{-R , R}$ such that $\Abs{\Set{i \suchthat \Abs{X_i - x_0} \le R^*}} \ge \paren{1-\alpha} n$.
	Then the exponential mechanism in \Cref{alg:1d-coarse-estimation} will return $\wh{y}$ such that 
	\begin{equation*}
		\Abs{\wh{y} - x_0} \le 3R^* \mcom
	\end{equation*}
	with probability $1-\beta$.
\end{lemma}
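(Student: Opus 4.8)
The plan is a textbook exponential-mechanism utility argument; the only bookkeeping to watch is the interplay of the three scales $R^*$, $2R^*$, $3R^*$ and the mild discrepancy between the $1/\alpha$ factor appearing in the hypothesis and the $1/(1-2\alpha)$ factor that the analysis naturally produces. First I would exhibit a high-scoring cover point. Since $x_0 \in [-R,R]$ and the cover $I$ has spacing $R^*$, rounding $x_0/R^*$ to the nearest integer produces a point $y^* \in I$ with $\Abs{y^* - x_0} \le R^*/2$ (one checks the rounded integer stays inside the index range $\floor{-R/R^*} \le k \le \ceil{R/R^*}$, since rounding never leaves $[\floor{t},\ceil{t}]$). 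Let $G$ be the hypothesized set with $\Abs{G} \ge (1-\alpha)n$ and $\Abs{X_i - x_0} \le R^*$ for $i \in G$. Then for each $i \in G$, the triangle inequality gives $\Abs{X_i - y^*} \le \tfrac{3}{2}R^* \le 2R^*$, so $\Score(y^*) \ge \Abs{G} \ge (1-\alpha)n$. Conversely, if $y \in I$ with $\Abs{y - x_0} > 3R^*$, then for every $i \in G$ we get $\Abs{X_i - y} \ge \Abs{y - x_0} - \Abs{X_i - x_0} > 2R^*$, so no such $i$ is counted and $\Score(y) \le n - \Abs{G} \le \alpha n$.

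Next I would run the standard exponential-mechanism estimate. With sensitivity $1$ and budget $\epsilon$, point $y$ is output with probability proportional to $\exp(\epsilon\Score(y)/2)$. Writing $\cB = \Set{y \in I \suchthat \Abs{y - x_0} > 3R^*}$ for the ``bad'' outcomes and noting $\Abs{\cB} \le \Abs{I} \le 2\ceil{R/R^*} + 1$, I bound the normalizer below by the single term $y^*$ and the numerator termwise via the low-score claim:
\[
  \Pr\Brac{\wh{y} \in \cB} \le \frac{\Abs{\cB}\exp(\epsilon\alpha n/2)}{\exp(\epsilon(1-\alpha)n/2)} \le \Paren{2\ceil{R/R^*}+1}\exp\Paren{-\tfrac{\epsilon(1-2\alpha)n}{2}}\mper
\]
Since $\alpha < 1/3$ forces $1 - 2\alpha \ge \alpha$, the hypothesis $n \ge 2\Paren{\log\Paren{2\ceil{R/R^*}+1}+\log(1/\beta)}/(\epsilon\alpha)$ implies $n \ge 2\Paren{\log\Paren{2\ceil{R/R^*}+1}+\log(1/\beta)}/(\epsilon(1-2\alpha))$, which rearranges exactly to make the right-hand side above at most $\beta$. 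As $\wh{y} \notin \cB$ means $\Abs{\wh{y} - x_0} \le 3R^*$, this proves the lemma.

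I do not anticipate a genuine obstacle here: this is the routine ``high-score point exists, far points score low, union bound'' template. The only points requiring a moment's care are (i) checking the nearest cover point truly lies in $I$; (ii) verifying the score radius $2R^*$ is still wide enough to contain all of $G$ after re-centering from $x_0$ to $y^*$; and (iii) the elementary inequality $1-2\alpha \ge \alpha$ for $\alpha < 1/3$, which is what reconciles the $1/\alpha$ in the stated sample-complexity hypothesis with the $1/(1-2\alpha)$ that the exponential-mechanism tail bound actually yields.
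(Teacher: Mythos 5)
Your proposal is correct and matches the paper's argument in substance: both exhibit the high-scoring cover point $y^*$ and both hinge on the observation that any $y$ with $\Abs{y-x_0}>3R^*$ cannot capture any index from $G$, with the $\alpha<1/3$ threshold arising from the same arithmetic. The only cosmetic difference is packaging --- the paper invokes the standard exponential-mechanism accuracy theorem to conclude $\Score(\wh{y})\ge(1-2\alpha)n$ and then applies pigeonhole ($T\cap G\neq\emptyset$) to extract closeness, whereas you take the contrapositive up front (far points score $\le\alpha n$) and bound $\Pr[\wh{y}\in\cB]$ directly; these are logically equivalent and both are routine.
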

\begin{proof}
	We show that there exists a point in $I = \Set{R^* k \suchthat \floor{-R/R^*}\le k \le \ceil{R/R^*}, k \in \Z}$ that has $\Score$ at least $\paren{1 - \alpha} n$. 
  Let $G$ be the set of points which are close to the point $x_0$, that is, $G = \Set{i \suchthat \Abs{X_i - x_0} \le R^*}$ and $\Abs{G} \ge \paren{1 - \alpha} n$. 
  Let $y^*$ be the point in $I$ that is closest to $x_0$.
  Since $x_0 \in \brac{-R,  R}$, $\Abs{x_0 - y^*} \le R^*$. Then 
	\begin{equation*}
		\forall i \in G \; : \; \Abs{y^* - X_i} \le \Abs{y^* - x_0} + \Abs{x_0 - X_i} \le R^* +R^* \le 2R^* \mper
	\end{equation*}
	Therefore $y^*$ has $\Score$ at least $\Paren{1 - \alpha} n$. 
  By the guarantees of the standard exponential mechanism, the exponential mechanism when instantiated with the $\Score$ function over $I$ will return \wh{y} such that
	\begin{equation*}
		\Score\paren{\wh{y}} \ge \paren{1-\alpha} n - \frac{2}{\epsilon} \Paren{\log \Paren{2\ceil{R/R^*} + 1} + \log \paren{1/\beta}} \mcom
	\end{equation*}
	with probability $1-\beta$.
  Since $\alpha n \ge \frac{2}{\epsilon} \Paren{\log \Paren{2\ceil{R/R^*} + 1} + \log\paren{1/\beta}}$, the exponential mechanism will return $\wh{y}$ such that
	\begin{equation*}
		\Score\paren{\wh{y}} \ge \paren{1 - 2\alpha} n \mcom
	\end{equation*}
	with probability $1-\beta$.
	Let $T = \Set{i \suchthat \Abs{X_i - \wh{y}} \le 2R^*}$, then $\Abs{T} \ge \paren{1-2\alpha} n$, with probability $1 - \beta$. Moreover $G = \Set{i \suchthat \Abs{X_i - x_0} \le 2R^*}$ has at least $\paren{1 - \alpha} n$ members. Therefore since $\alpha < 1/3$, we have that $T \cap G \neq \emptyset$, and therefore there exists some $j \in T \cap G$. Hence $\abs{X_j - x_0}\le R^*$ and $\abs{X_j - \wh{y}} \le 2R^*$. Therefore 
	$\Abs{y - x_0} \le 3R^*$ with probability $1-\beta$, as desired.
\end{proof}

\subsubsection{Runtime analysis: lazy exponential mechanism}
\label{subsubsec:simple-coarse-runtime}
If the exponential mechanism in \Cref{alg:1d-coarse-estimation} is \naively implemented it will take $\Theta\paren{Rn}$ time.
This is a pseudo-polynomial time algorithm, but we would prefer a truly polynomial time algorithm, with running time polynomial in $\log R$.
In the following theorem, we describe how this application of the exponential mechanism can be implemented in $\cO\paren{n^2 \log R}$ time.
This style of ``lazy exponential mechanism'' may be folklore, but we include the argument for completeness.

\begin{theorem}[lazy exponential mechanism]
  \label{thm:lazy-exp}
  Suppose $C \in \N$, $I = \Set{b_1, \dots, b_m}$ is a set of bins, and $n \in \N$ objects $O = \Set{o_1, \dots, o_n}$ are given. Moreover suppose that each item can belong to at most $C$ bins, and that for each object $o$, the set of the bins it belongs to can be computed in time $\cO\paren{C \log m}$. For a bin $b_i$, let 
	\begin{equation*}
		\Score\paren{b_i} = \Abs{ \Set{j \suchthat \text{$o_j$ belongs to $b_i$}}} \mper
	\end{equation*}
	Then the exponential mechanism over $I$ with score function $\Score$ can be implemented in time $\cO\paren{Cn^2 \log m}$.
\end{theorem}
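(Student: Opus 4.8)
The plan is to exploit the sparsity of $\Score$: even if the number of bins $m$ is enormous, at most $Cn$ bins can have nonzero score, because each of the $n$ objects is counted in at most $C$ bins. Call these the \emph{active} bins $\cA \subseteq I$, and call the remaining bins \emph{inactive}; every inactive bin has score $0$ and hence weight $\exp(\epsilon \cdot 0 / (2\Delta)) = 1$ in the exponential mechanism, independent of the sensitivity parameter. The idea is to represent $\cA$ and the scores of its elements explicitly, while collapsing the (possibly exponentially many) inactive bins into a single lumped ``super-bin'' of total weight $m - |\cA|$, and then to sample in two stages: first decide active-vs-inactive, then sample within whichever class was chosen.

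Concretely, first I would compute, for each object $o_j$, the list of at most $C$ bins it belongs to, which costs $\cO(C\log m)$ per object by hypothesis and $\cO(Cn\log m)$ in total, producing a list $L$ of at most $Cn$ incidence pairs. Sorting $L$ by bin index in time $\cO(Cn\log(Cn))$ and scanning it once yields $\cA$ together with $\Score(b)$ for each $b \in \cA$ (in particular $|\cA| \le Cn$), and also leaves the indices of $\cA$ in sorted order $a_1 < \cdots < a_k$. Next compute $Z = \sum_{b \in \cA}\exp(\epsilon\Score(b)/(2\Delta)) + (m - k)$ with $\cO(Cn)$ arithmetic operations; since all scores are at most $n$ this requires only a polynomial number of bits of precision, and we keep enough bits that the induced error in the output distribution is negligible. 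With probability $\tfrac 1Z \sum_{b\in\cA}\exp(\epsilon\Score(b)/(2\Delta))$, sample an element of $\cA$ proportionally to its weight by a linear scan over prefix sums in time $\cO(|\cA|)$; otherwise, fall through to the inactive case.

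The one step that needs genuine care is drawing a \emph{uniformly random inactive bin} without enumerating all $m$ bins. Here I would draw a uniform integer $r \in \{1,\ldots,m-k\}$ and return the $r$-th element of $I$ that does \emph{not} appear among $a_1,\ldots,a_k$; because $\cA$ is sorted, this element can be found by a binary search that, at a trial index $t$, compares $t$ minus the number of $a_i \le t$ against $r$, costing $\cO(\log m)$. (When $I$ is not literally an interval of integers but is equipped with an efficient ranking/unranking map, as in every application in this paper where $I$ is an arithmetic progression, the same idea applies verbatim.) Summing the costs gives $\cO(Cn\log m) + \cO(Cn\log(Cn)) + \cO(Cn) + \cO(\log m)$, which is $\cO(Cn^2\log m)$ using $\log(Cn) = \log C + \log n \le \log m + n$. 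By construction the output distribution equals, up to the negligible arithmetic error, the exponential mechanism over $I$ with score $\Score$, so its privacy and accuracy guarantees transfer unchanged; the main obstacle, namely the possibly exponential number of zero-score bins, is exactly what the lumped super-bin plus the ranking-based uniform sampler is designed to circumvent.
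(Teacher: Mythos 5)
Your proposal is correct and follows essentially the same plan as the paper: build the (at most $Cn$) active bins explicitly, collapse all zero-score bins into one lumped alternative, and when the lumped alternative is drawn, sample a uniformly random inactive bin by a rank/unrank argument rather than enumeration. The only substantive difference is in that last step: the paper locates the $r$-th inactive bin by a linear scan over a window of length $|J|+1$ (justified by an intermediate-value-theorem argument), whereas you do a binary search on the bin index comparing $t - |\{a_i \le t\}|$ against $r$; your variant is in fact a little faster ($\cO(\log m \cdot \log(Cn))$ instead of roughly $\cO(Cn)$ for that step), though both comfortably fit within the stated $\cO(Cn^2\log m)$ bound, so the distinction is only one of implementation detail rather than approach.
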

\begin{proof}
	Let $J = \Set{b_i \suchthat \Score\paren{b_i} > 0}$. We can construct $J$ and compute the score function for all of its members by iterating over $O$ and finding which bins each of them belongs to. This will take time $\cO\paren{Cn \log m}$. Now since all of the other bins are empty, the probability that the exponential mechanism with sensitivity $\Delta$ and privacy budget $\epsilon$ selects $b_j \in J$ has to be 
	\begin{equation*}
		\Pr\brac{b_j} := \frac{\exp\paren{\frac{\epsilon \Score\paren{b_j}}{2 \Delta}}}{\sum_{k, b_k \in J} \exp\paren{\frac{\epsilon \Score\paren{b_k}}{2 \Delta}} + m - \Abs{J}} \mper
	\end{equation*}
	Therefore the probability of choosing each member of $J$ can be computed in time $\cO\paren{n \log m}$.
	Moreover, let $\bot$ be a special symbol representing the event that the exponential mechanism selects a $b_j$ such that $b_j \notin J$.
  It is easy to see that
	\begin{equation*}
		\Pr\brac{\bot} := \frac{m - \Abs{J}}{ \sum_{k, b_k \in J} \exp\paren{\frac{\epsilon \Score\paren{b_k}}{2 \Delta}} + m - \Abs{J} } \mper
	\end{equation*}

  We now draw an element from the set $\Set{b_j \suchthat j \in J} \cup \Set{\bot}$ where the probabilities are as established above.
  If the sampling algorithm picks a $b_j \in J$ we can return that, and we are done. Otherwise, if $\bot$ is picked, we can sample an index from $i \in \Set{1, \dots, m - \Abs{J}}$ uniformly at random in time $\cO \paren{\log m}$. Now we aim to construct a uniformly random member from $I - J$ from this random index $i$. To do we just need to consider $K = \set{b_i, \dots, b_{i+\abs{J}}}$. By the intermediate value theorem, there exists a member $b_{i+k} \in K$ such that $k$ equals the number of the members of $J$ which have index smaller than or equal to $i+k$. For the smallest $k$ that has such property, we return $b_{j+k}$, it is easy to see that this scheme returns a uniformly random member from $I - J$. Moreover, the overall algorithm takes time at most $\cO\paren{Cn^2\log m}$. Note that in this proof, the $\log m$ factors come from the bit complexity of indices of the members of $I$.
\end{proof}

\begin{corollary}
	The exponential mechanism in \Cref{alg:1d-coarse-estimation} can be implemented in time $\cO\paren{n^2 \log R}$.
\end{corollary}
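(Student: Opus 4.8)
The plan is to deduce this directly from Theorem~\ref{thm:lazy-exp}; essentially all that remains is to check that step (2) of Algorithm~\ref{alg:1d-coarse-estimation} fits the template of that theorem with $C = \cO(1)$ and $\log m = \cO(\log R)$. First I would identify the parameters: the bins are the points of $I = \Set{R^* k \suchthat \floor{-R/R^*}\le k \le \ceil{R/R^*}, k \in \Z}$, so $m := \Abs{I} = 2\ceil{R/R^*} + 1$, and since $R^* \ge 1$ this gives $m \le 2R + 3$ and hence $\log m = \cO(\log R)$. The objects are the samples $X_1, \dots, X_n$, and with ``$o_j$ belongs to $b_i = R^* k$'' defined to mean $\Abs{X_j - R^* k} \le 2R^*$, the induced score function of Theorem~\ref{thm:lazy-exp} is exactly $\Score$ of \Cref{def:one-d-score}.

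Next I would verify the two structural hypotheses of Theorem~\ref{thm:lazy-exp}. Each object belongs to at most a constant number of bins: $\Abs{X_j - R^* k} \le 2R^*$ holds precisely when $k$ lies in the interval $\Brac{\paren{X_j - 2R^*}/R^*,\, \paren{X_j + 2R^*}/R^*}$, which has length $4$ and therefore contains at most $5$ integers, so we may take $C = 5 = \cO(1)$. Moreover, for a fixed $X_j$ the set of bins it belongs to is just the integer points of that interval, which can be listed by computing $\ceil{\paren{X_j - 2R^*}/R^*}$ and $\floor{\paren{X_j + 2R^*}/R^*}$ — a constant number of arithmetic operations on $\cO(\log R)$-bit numbers — so this step runs in $\cO(C \log m) = \cO(\log R)$ time, as required.

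With these checks done, Theorem~\ref{thm:lazy-exp} applies verbatim and gives that the exponential mechanism over $I$ with score $\Score$, privacy budget $\epsilon$, and sensitivity $1$ can be implemented in time $\cO\paren{C n^2 \log m} = \cO\paren{n^2 \log R}$. There is no genuine obstacle here: the work is entirely contained in Theorem~\ref{thm:lazy-exp}, and this corollary is just the instantiation. The only point deserving a moment's care is that all the quantities entering the index arithmetic ($R$, $R^*$, $R/R^*$, and the samples after the implicit rescaling) are polynomially bounded, so that the ``$\log m$'' appearing in the bit-complexity of bin indices is indeed $\cO(\log R)$ rather than something larger.
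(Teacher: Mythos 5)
Your proof is correct and follows exactly the paper's argument: identify $C = 5$ from the fact that an interval of length $4R^*$ contains at most $5$ multiples of $R^*$, observe $\log m = \cO(\log R)$, verify that the bins containing a given $X_j$ can be listed in $\cO(\log R)$ time via floor/ceiling arithmetic, and then invoke Theorem~\ref{thm:lazy-exp}. You've simply spelled out the parameter instantiation a bit more explicitly than the paper does.
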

\begin{proof}
  Each $X_i$ can be counted in the $\Score$ of at most $5$ members of $I$. Furthermore, for each $X_j$, to find the members of $I$ for which it is counted in, we just need to find the points in $\brac{X_i - 2R^*, X_i + 2R^*}$ which are of the form $kR^*$, for some $k \in \Z$. This can be done in time $\cO\paren{\log R}$. The statement thus holds by an application of \Cref{thm:lazy-exp}.
\end{proof}

\subsection{Second Algorithm: SoS Exponential Mechanism}

\subsubsection{Main Steps}
We proceed to describe our second coarse-estimation algorithm, based on our SoS Exponential Mechanism framework. 
Our coverage is for a concrete problem (as opposed to the abstract setting in our meta-theorem in~\Cref{sec:metatheorem}) and conceptually simpler than our fine-estimation algorithm (\Cref{sec:fine}), so it may be a good starting point for a reader to understand the framework.

Consider the following simple but inefficient method for performing coarse estimation.
Let the score function for a point $v$ be the number of datapoints within a ball of radius $\poly(d)$ centered at $v$. 
One could run the exponential mechanism with this score function, with the set of candidate points being the ball of radius $R$.
Compared to the coordinate-wise algorithm described in~\Cref{sec:coordinatewise}, this approach works directly in the high-dimensional space. 

Of course, this algorithm is not polynomial time, as the naive method of implementing it would involve computing the score function for all points in a cover of size exponential in $d$.
Instead, appealing to our SoS Exponential mechanism framework, we first replace the score function with a similar SoS optimization problem. 
The resulting algorithm and optimization problem are both described in~\Cref{subsubsec:sos-coarse-algorithm}.
Note that for technical reasons, we actually run the exponential mechanism $\cO(\log R)$ times, making a constant factor improvement in the radius each time.
This iterative process is what gives rise to a sub-optimal $\cO(\log R \log (1/\beta)/\epsilon)$ term in the sample complexity, 
To prove accuracy of the overall procedure, we first show that the solution to the SoS optimization problem will give a point with high utility (\Cref{lem:accuracy-coarse-sdp} in~\Cref{sec:accuracy-coarse-sdp}).
Then, assuming the instantiations of the exponential mechanism are all ``successful'' (that is, produce points which are of comparable utility to the one which optimizes the score function), it is not hard to show that the final point output will satisfy our desired accuracy guarantee (\Cref{cor:acc-assume-exp} in~\Cref{sec:acc-assume-exp}).

At this point, there are three components of the exponential mechanism left to argue: privacy, accuracy, and computational efficiency.
Privacy is fairly straightforward: we prove that the score function has bounded sensitivity (\Cref{lem:coarse-sdp-sensitivty} in~\Cref{sec:coarse-sdp-sensitivity}), which implies privacy (\Cref{cor:coarse-sdp-privacy}) by combining the exponential mechanism's privacy guarantee with basic composition.
Utility follows the standard recipe for the volume-based exponential mechanism (\Cref{thm:volume-based-exponential-mechanism}):  in addition to the aforementioned sensitivity bound, we require that points with high utility are sufficiently dense (\Cref{lem:coarse-sdp-volume} in~\Cref{sec:coarse-sdp-volume}).
Combining the two gives the desired accuracy guarantee in~\Cref{lem:coarse-sdp-accuracy} in~\Cref{sec:coarse-sdp-accuracy}.
Finally, to ensure that the algorithm is polynomial time, we employ the sampling-based exponential mechanism of~\cite{BassilyST14} (\Cref{thm:efficient-sampling}).
We again use the fact that the score function has bounded sensitivity, but also Lipschitzness (\Cref{lem:coarse-sdp-lipschitzness} in~\Cref{sec:coarse-sdp-lipschitzness}), and concavity (\Cref{cor:coarse-sdp-concavity} in~\Cref{sec:coarse-sdp-concavity}).
Combining these properties allow us to conclude the exponential mechanism can be implemented in polynomial time (\Cref{lem:sos-coarse-runtime} in~\Cref{subsubsec:sos-coarse-runtime}).

These allow us to conclude the following result for coarse mean estimation.
\begin{theorem}[high dimensional coarse mean estimation via SoS exponential mechanism]
\label{thm:sos-coarse-estimate}
Suppose $\cD$ is a $d$-dimensional distribution with mean $\mu$ and covariance $\Sigma$, where $\normt{\mu} \le R$, and $\normt{\Sigma} \le 1$.
Then there exists a polynomial time $\epsilon$-DP algorithm \textsc{coarse-estimate} that takes 
\begin{equation*}
	n = \cO \Paren{\frac{\log R \Paren{d + \log\paren{1/\beta} + \log\log R}}{\epsilon}}
\end{equation*}
\iid\ samples from $\cD$ and outputs $y$, such that $\normt{y - \mu} \le \cO\paren{\sqrt{d}}$ with probability $1-\beta$.
\end{theorem}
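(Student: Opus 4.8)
The plan is to instantiate the SoS exponential mechanism (\Cref{sec:metatheorem}) as a radius-shrinking subroutine and iterate it $T = \cO(\log R)$ times. In round $k$, after re-centering at the current estimate — so the promise reads $\normt{\mu} \le R_k$, with $R_0 = R$ — I run the exponential mechanism with privacy budget $\e' = \e/T$ over $\cC = \{x \in \R^d : \normt{x} \le R_k\}$ and score function
\[
    s(X,x_0) = \max_{\pE}\ \pE \sum_{i=1}^n b_i \quad\text{s.t.}\quad \pE \text{ (degree 4) satisfies } b_i^2 = b_i,\ b_i\normt{X_i - x}^2 \le \rho^2 b_i,\ \normt{x}^2 \le (2R_k)^2, \text{ and } \pE x = x_0\mcom
\]
with $\rho = \Theta(\sqrt d)$ (degree $4$ since the constraints have degree $3$). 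I will establish, in order, privacy, efficiency, a volume lower bound, and accuracy — mirroring \Cref{lem:meta-sensitivity}, \Cref{lem:meta-concave}, \Cref{lem:meta-lipschitz}, and the utility analyses \Cref{lem:meta-score} and \Cref{lem:meta-util} — and then combine the rounds.

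\textbf{Privacy and efficiency.} The score has sensitivity at most $1$: given an optimal $\pE$ for $(X,x_0)$, zeroing out the indicator $b_j$ of the replaced sample (a linear substitution) gives a feasible $\pE'$ for the neighbor with objective dropped by $\pE b_j \le 1$, where $b_j \le 1$ follows from $1 - b_j = (1-b_j)^2$; this is exactly \Cref{lem:meta-sensitivity}, and $\e$-DP follows via the exponential mechanism plus basic composition over the $T$ rounds. Concavity of $s(X,\cdot)$ is \Cref{lem:meta-concave}. For Lipschitzness I invoke robust satisfiability (\Cref{def:cond-poly}) with radius $\eta = R_k$: the all-zero assignment $b = 0$ satisfies every constraint whenever $\normt{x} \le 2R_k$, so every $x'$ within $R_k$ of $\cC$ is satisfiable, and the dual argument of \Cref{lem:meta-lipschitz} — using the a priori bound $0 \le \sum_i b_i \le n$ in place of the ``$1/\eta$'' there — yields Lipschitz constant $\cO(n/R_k)$. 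Hence $L \cdot \diam(\cC) = \cO(n)$ is independent of $R$, and \Cref{thm:efficient-sampling} implements each round in time $\poly(n,d,1/\e,\log R)$.

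\textbf{Volume bound.} With probability $\ge 1-\beta/2$, by Chebyshev ($\E\normt{X-\mu}^2 = \tr\Sigma \le d$, forcing $\rho = \Theta(\sqrt d)$) and Hoeffding (needing $n \gg \log(1/\beta)+\log\log R$), a ``good set'' $G$ of size $\ge (1-\delta')n$ satisfies $\normt{X_i - \mu} \le \rho/10$ for all $i \in G$. Then $(x=\mu,\ b=\Ind_G)$ is a feasible integral point with objective $\Abs{G}$, and for any $x_0 = (1-w)\mu + wz$ with $z \in \cC$, the mixture (a valid degree-$4$ pseudoexpectation) putting weight $1-w$ on $(\mu,\Ind_G)$ and weight $w$ on $(z, 0)$ realizes $\pE x = x_0$ with objective $(1-w)\Abs{G}$. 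Choosing small constants with $w + \delta' \le a$ shows that the set $\cH^*$ of $x_0 \in \cC$ with $s(X,x_0) \ge (1-a)n$ contains a ball of radius $wR_k = \Omega(R_k)$ inside $\cC$, so $\vol(\cC)/\vol(\cH^*) = 2^{\cO(d)}$.

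\textbf{Accuracy (the crux), and assembly.} The heart is an SoS proof that a high-scoring $x_0$ is near $\mu$. From $b_i^2 = b_i$ one derives a degree-$4$ SoS proof of $(1-b_i)\normt{X_i - x}^2 \le \cO(R_k^2)(1-b_i)$ (using $\normt{x}^2 \le (2R_k)^2$ and $\normt{X_i} \le 1.1R_k$ for $i \in G$), and combined with $b_i\normt{X_i - x}^2 \le \rho^2 b_i$ this gives $\normt{X_i - x}^2 \le \rho^2 b_i + \cO(R_k^2)(1-b_i)$ for $i \in G$. Applying a feasible $\pE$ with $\pE\sum_i b_i \ge (1-b_0)n$ (for a small constant $b_0 > a$) and using $\pE x = x_0$, convexity $\normt{X_i - x_0}^2 \le \pE\normt{X_i - x}^2$, and the count $\sum_{i\in G}(1-\pE b_i) \le b_0 n$ (so some $i \in G$ has $\pE b_i \ge 1-\cO(b_0)$), one gets $\normt{x_0 - X_i}^2 \le \rho^2 + \cO(b_0)R_k^2$, hence $\normt{x_0 - \mu} \le \tfrac12 R_k$ once $b_0$ is a small enough constant and $R_k \gg \rho$. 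The main obstacle is reconciling the constants: accuracy needs output score $\ge (1-b_0)n$, while the volume argument only certifies $\OPT(X)=(1-a)n$, so one needs $a < b_0$ and $\OPT(X) - (1-b_0)n = \Omega(n)$; by \Cref{thm:volume-based-exponential-mechanism} this $\Omega(n)$ slack must absorb $\ln(\vol(\cC)/\vol(\cH^*)) = \cO(d)$ and the failure term $t = \cO(\log(1/\beta)+\log\log R)$, which is precisely what forces $n = \Omega\big(\tfrac1{\e'}(d + \log(1/\beta) + \log\log R)\big) = \Omega\big(\tfrac{\log R}{\e}(d+\log(1/\beta)+\log\log R)\big)$ per round. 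Since each successful round replaces $R_k$ by $R_{k+1} = \tfrac12 R_k + \cO(\rho)$, after $T = \cO(\log(R/\sqrt d)) = \cO(\log R)$ rounds the radius is $\cO(\sqrt d)$ and we output the current estimate; taking $t = \log(2T/\beta)$ per round and union-bounding over the $T$ rounds and the good-set event gives total failure $\le \beta$, basic composition gives $\e$-DP, and the per-round bound is the claimed $\tilde{\cO}(\log R(d+\log(1/\beta)+\log\log R)/\e)$ of \Cref{thm:sos-coarse-estimate}.
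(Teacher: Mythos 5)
Your proof is correct and follows the same overall framework as the paper's Section~4.2 --- iterate the SoS exponential mechanism $T = \cO(\log R)$ times with budget $\e/T$ per round, over a degree-$4$ SDP relaxation with an added $\pE x = x_0$ constraint, implemented via the BST14 log-concave sampler, with the standard suite of sensitivity/Lipschitz/concavity/volume/accuracy lemmas. But you make a genuinely different choice at the heart of the argument. You fix the inner radius at $\rho = \Theta(\sqrt d)$ in the constraint $b_i\normt{X_i - x}^2 \le \rho^2 b_i$ throughout \emph{all} rounds, whereas the paper's \textsc{coarse-sdp} (Definition~\ref{def:coarse-sdp}) uses a shrinking radius $r = R_t/100$ that scales with the current round. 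The paper's choice keeps every round scale-invariant, so the volume bound (Lemma~\ref{lem:coarse-sdp-volume}) is a one-liner: every $y$ in a ball of radius $R_t/1000$ around the anchor admits a purely \emph{integral} high-scoring assignment, because $R_t/100 \gg R_t/1000$. With your fixed $\rho \ll R_k$, that route is closed, and you need the pseudoexpectation-mixture trick --- put weight $1-w$ on the integral atom $(\mu, \Ind_G)$ and weight $w$ on a ``dummy'' atom $(z, 0)$ --- to realize $\pE x = x_0$ with high objective over a ball of radius $\Omega(R_k)$; this is a nice workaround that the paper does not use. Correspondingly, your accuracy argument pigeonholes a single $i \in G$ with $\pE b_i \approx 1$ and bounds $\normt{x_0 - X_i}$, while the paper's Lemma~\ref{lem:accuracy-coarse-sdp} averages $\pE\normt{x_0 - v}^2 \le 4r^2\pE b_i + 11R^2\pE(1-b_i)$ over all of $G$; the averaging yields the tighter factor-$5$ per-round contraction vs.\ your factor-$2$, but the asymptotics match. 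Two small bookkeeping remarks: the $\log\log R$ term is driven by the per-round failure $\beta/(2T)$ in the exponential-mechanism union bound, not by the Hoeffding bound for the good set (which only needs $n \gg \log(1/\beta)$) --- you conflate these in one sentence but get the assembly right; and you should either expand $\cC$ slightly (as the paper does, to radius $R_t + R_t/1000$) or verify $\mu \in \cC$, since after re-centering the estimate may put $\mu$ right at the boundary.
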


Before we begin the main analysis, we establish a deterministic condition on the dataset under which our algorithm succeeds.
This condition holds with high probability if the dataset is sampled from an appropriate distribution.
Specifically, the following is a multivariate analogue of \Cref{lem:1d-appropriate-radius}, which states that most samples from a distribution with bounded covariance will be within a bounded distance of the mean.
The proof is similar to before.
\begin{lemma}
\label{lem:appropriate-radius-around-the-mean}
Suppose $X_1, \dots, X_n$ are \iid\ samples from a $d$-dimensional distribution $\cD$, with mean $\mu$ and covariance $\Sigma$, where $\normt{\Sigma} \le 1$. 
Furthermore suppose $n \ge 4 \log\paren{1/\beta}/\alpha^2$ and $R^* \ge \sqrt{2 d/\alpha}$. Then with probability $1-\beta$, there exists a set 
$G$ such that 
$\Abs{G} \ge \paren{1 - \alpha} n$, and
\begin{equation*}
\forall i \in G: \normt{X_i - \mu} \le R^*\mper
\end{equation*}
\end{lemma}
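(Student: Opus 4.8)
First I would follow the one-dimensional argument of \Cref{lem:1d-appropriate-radius} essentially verbatim, with the role of the scalar variance now played by the trace of the covariance. The key first observation is that, since $\normt{\Sigma} \le 1$, every eigenvalue of $\Sigma$ is at most $1$, so
\[
    \E \snormt{X_i - \mu} = \tr(\Sigma) \le d\mper
\]
Applying Markov's inequality to the nonnegative random variable $\snormt{X_i - \mu}$ then gives, for each $i$,
\[
    \Pr\Brac{\normt{X_i - \mu} \ge R^*} = \Pr\Brac{\snormt{X_i - \mu} \ge (R^*)^2} \le \frac{\tr(\Sigma)}{(R^*)^2} \le \frac{d}{(R^*)^2} \le \frac{\alpha}{2}\mcom
\]
where the last inequality uses the hypothesis $R^* \ge \sqrt{2d/\alpha}$. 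This is the multivariate substitute for the Chebyshev step in the one-dimensional proof.

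Next I would set $I_i = \mathbbm{1}\Brac{\normt{X_i - \mu} \ge R^*}$, which are \iid\ random variables taking values in $[0,1]$, so Hoeffding's bound applies to $S_n = \sum_{i=1}^n I_i$ and yields
\[
    \Pr\Brac{S_n \ge \frac{d}{(R^*)^2}\, n + \sqrt{n \log(1/\beta)}} \le \beta\mper
\]
Under the stated hypotheses both terms on the right-hand side are at most $\alpha n / 2$: the first because $\tfrac{d}{(R^*)^2} \le \tfrac{\alpha}{2}$ as above, and the second because $n \ge 4\log(1/\beta)/\alpha^2$ is equivalent to $\sqrt{n\log(1/\beta)} \le \alpha n/2$. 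Hence with probability $1-\beta$ we have $S_n \le \alpha n$, i.e.\ at most an $\alpha$-fraction of the samples lie outside the ball of radius $R^*$ about $\mu$. Taking $G = \Set{i \suchthat \normt{X_i - \mu} \le R^*}$ then gives $\Abs{G} = n - S_n \ge (1-\alpha) n$, as claimed (the set denoted $T$ in the statement).

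I do not anticipate any real obstacle here; the argument is the same two-step recipe (Markov on the squared distance, then Hoeffding on the resulting indicators) as in the scalar case. The only point worth flagging is conceptual rather than technical: in the multivariate setting one must control $\E\snormt{X_i - \mu} = \tr(\Sigma) \le d$ in place of a bounded scalar variance, which is precisely why the admissible radius $R^*$ must scale like $\sqrt{d}$ rather than a constant, and the factor $d$ then simply carries through the computation unchanged.
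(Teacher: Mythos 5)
Your proof is correct and follows exactly the recipe the paper has in mind, which is the same two-step argument as in \Cref{lem:1d-appropriate-radius}: bound the per-sample failure probability by (multivariate) Chebyshev/Markov using $\E\snormt{X_i-\mu}=\tr(\Sigma)\le d$, then apply Hoeffding to the indicators. The paper itself omits this proof with the remark ``the proof is similar to before,'' and you have supplied precisely the verification it intends.
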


With this in place, it suffices to prove the following theorem, which is our main technical result of this section. 
\begin{theorem}
\label{thm:coarse-dataset}
	For every $\epsilon > 0$ and $n,d \in \N$ there is an $\e$-differentially private polynomial-time algorithm \textsc{coarse-estimate}$(R,R^*,X_1,\ldots,X_n)$ with the following guarantees. 
  There exists a universal constant $C_0$ that
	given $X_1,\ldots,X_n \in \R^d$, $R, R^* > 0$,  \textsc{coarse-estimate} outputs $y \in \R^d$ such that if there exists $x \in \R^d$ with $\|x\|_2 \leq R$ and such that $|\{X_i \, : \, \|X_i - x\|_2 \leq R^* \}| \ge \paren{1-\alpha} n$, $R \ge 1000R^* > 1$, $n \ge C_0 \frac{\log R}{\alpha \epsilon}\Paren{d + \log\paren{1/\beta} + \log \log R}$, $0 < \alpha < 10^{-4}$, then $\|y - x\|_2 \leq O(R^*)$, with probability at least $1-\beta$. 
\end{theorem}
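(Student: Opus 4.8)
The plan is to build \textsc{coarse-estimate} as an iterated contraction. We maintain a running estimate $\hat\mu$ (initially the origin) and a radius $\rho$ (initially $R$) with the invariant $\normt{x - \hat\mu} \le \rho$, and in each round we re-center the data at $\hat\mu$ and run one SoS exponential mechanism over the candidate set $\cC = \Set{x_0 : \normt{x_0} \le \rho}$ with the degree-$4$ SoS score function
\[
  s(X,x_0) = \max_{\pE} \pE \sum_{i=1}^n b_i \ \text{ s.t. } \deg \pE = 4,\ \pE \text{ satisfies } \Set{\normt{z}^2 \le (1.1\rho)^2,\ b_i^2 = b_i,\ b_i \normt{X_i - z}^2 \le b_i(\rho/10)^2},\ \pE z = x_0 \mper
\]
Each round adds its output to $\hat\mu$ and halves $\rho$; after $m = O(\log(R/R^*))$ rounds $\rho = O(R^*)$ and we stop, so the invariant yields $\normt{y - x} = O(R^*)$. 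Running each round with privacy budget $\e/m$ and invoking basic (adaptive) composition gives $\e$-DP overall, since re-centering between rounds is post-processing of previously released private outputs.

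For one round I would establish three facts and then apply the volume-based exponential mechanism, \Cref{thm:volume-based-exponential-mechanism}. \textbf{(i) The optimum is large:} the degenerate pseudo-expectation given by evaluation at $z = x$ and $b_i = 1$ for $i$ in the promised good set $G$ (so $\abs{G} \ge (1-\alpha)n$ and $\normt{X_i - x} \le R^* \le \rho/1000$ for $i \in G$) is feasible, using $\normt{x} \le \rho$ and $R^* \le \rho/10$; hence $\OPT \ge s(X,x) \ge (1-\alpha)n$. \textbf{(ii) High-scoring points are dense:} the same construction shows $s(X,x_0) \ge (1-\alpha)n$ for every $x_0 \in \cC$ with $\normt{x_0 - x} \le \rho/20$, since then $G$ is still captured by the radius-$\rho/10$ ball around $x_0$; as $B(x,\rho/20) \cap B(0,\rho)$ contains a ball of radius $\Omega(\rho)$, we get $\vol(\cC)/\vol(\Set{x_0 \in \cC : s(X,x_0) \ge (1-\alpha)n}) \le 2^{O(d)}$. \textbf{(iii) High-scoring points are accurate:} for any feasible $\pE$ with objective at least $(1-2\alpha)n$ one has $\normt{\pE z - x} \le \rho/4$. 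This is the step that uses an SoS proof: from the SoS identity $b_i \normt{z - x}^2 \le 2 b_i \normt{z - X_i}^2 + 2 b_i \normt{X_i - x}^2$, the third constraint, and $\normt{X_i - x}^2 \le (R^*)^2$ for $i \in G$, we get $\pE b_i \normt{z-x}^2 \le \Paren{2(\rho/10)^2 + 2(R^*)^2} \pE b_i$ for each $i \in G$; summing over $G$, using $\pE \sum_{i \in G} b_i \ge (1-3\alpha)n$ and controlling the slack from $(1-b_i)$ via the SoS facts $0 \le b_i \le 1$ and $\normt{z-x}^2 \le O(\rho^2)$, we obtain $\pE \normt{z-x}^2 \le \tfrac{2(\rho/10)^2 + 2(R^*)^2 + O(\alpha)\rho^2}{1-\alpha} \le \rho^2/16$, and convexity gives $\normt{\pE z - x} \le \rho/4$. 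Feeding $\OPT \ge (1-\alpha)n$, the volume ratio $2^{O(d)}$, sensitivity $1$, and $t = \ln(m/\beta)$ into \Cref{thm:volume-based-exponential-mechanism}, the round returns $x_0$ with $s(X,x_0) \ge (1-2\alpha)n$ --- hence $\normt{x_0 - x} \le \rho/4$ --- with probability $\ge 1 - \beta/m$, provided $n \ge \tfrac{C}{\alpha (\e/m)}\Paren{d + \ln(m/\beta)}$.

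A union bound over the $m$ adaptively-defined round-success events together with $m = O(\log(R/R^*)) = O(\log R)$ gives $\normt{y - x} = O(R^*)$ with probability $1-\beta$ as long as $n \ge \tfrac{C_0 \log R}{\alpha \e}\Paren{d + \log(1/\beta) + \log\log R}$ --- the $\log R$ factor and the $\log\log R$ term arising precisely from composition and from the per-round failure probability $\beta/m$. For the polynomial running time I would argue exactly as in the meta-theorem: $s(X,\cdot)$ is concave, since averaging two feasible pseudo-expectations is feasible for the midpoint with the averaged objective (cf.\ \Cref{lem:meta-concave}), so $\exp(\tfrac{\e}{m} s)$ is log-concave and evaluating $s$ is a polynomial-size SDP; the constraint system is Archimedean with $0 \le \sum b_i \le n$ certified by SoS and, thanks to the relaxed constraint $\normt{z}^2 \le (1.1\rho)^2$, is $\Omega(\rho)$-robustly satisfiable everywhere on $\cC$ (the all-zero $b$ satisfies every constraint), so by the dual argument of \Cref{lem:meta-lipschitz} the optimal dual certificates have norm $O(n/\rho)$ and $s(X,\cdot)$ is $O(n/\rho)$-Lipschitz; since $\diam(\cC) = 2\rho$, the log-concave private sampler of \Cref{thm:efficient-sampling} then runs in $\poly(n,d,1/\e,\log R)$ time per round.

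I expect fact (iii) to be the main obstacle: it is the place where the intuitively clear statement ``a point captured by $(1-2\alpha)n$ of the samples cannot be far from one captured by $(1-\alpha)n$ of them'' must be turned into a genuine degree-$4$ SoS proof valid at the pseudo-expectation level, where one cannot name an index in the intersection of the two sets and must instead push everything through the averaged inequality above while carefully accounting for the $(1-b_i)$ slack terms without inflating the degree. A secondary technical point is that robust satisfiability --- hence Lipschitzness, hence the $\poly(\log R)$ running time --- fails near $\partial B_\rho$ unless the SDP is stated with the slightly relaxed ball constraint, which forces minor constant adjustments throughout (the $R \ge 1000 R^*$ gap leaves ample room).
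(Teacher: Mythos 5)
Your proposal is correct and is essentially the paper's own argument, just with slightly different constants: the paper's \Cref{alg:sos-exponenital-coarse} also iterates a single SoS exponential mechanism with a geometrically shrinking radius ($R_t = R\cdot(0.2)^t$ rather than halving), uses the same degree-$4$ relaxation with constraints $b_i^2 = b_i$, $\|v\|^2 \le (2R+R/1000)^2$, $b_i\|X_i - v\|^2 \le b_i r^2$, $\pE v = x_0$ with $r = R_t/100$, re-centers the data each round, and proves accuracy (your (iii), their \Cref{lem:accuracy-coarse-sdp}) via exactly the split into $b_i$ and $(1-b_i)$ terms with the SoS triangle inequality, then volume (your (ii), their \Cref{lem:coarse-sdp-volume}) via a ball of radius $\Theta(\rho)$ around $x$, sensitivity via nulling the differing index's rows/columns of the moment matrix, Lipschitzness via bounding the dual multiplier on $\pE v = x_0$ (their \Cref{lem:coarse-sdp-lipschitzness}, which gets $L = n\sqrt{d}/R$ --- your $O(n/\rho)$ drops a $\sqrt d$ but this only affects running time, and $L\cdot\diam(\cC)$ is still $\poly(n,d)$), concavity by averaging pseudoexpectations, and the BST14 sampler for the per-round running time. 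The per-round privacy budget $\e/m$ with $m=\Theta(\log R)$ rounds, and the per-round failure probability $\beta/m$, give the $\log R(d+\log(1/\beta)+\log\log R)/\e$ sample complexity exactly as you say.
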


In the rest of this section (from~\Cref{subsubsec:sos-coarse-algorithm} to \Cref{subsubsec:sos-coarse-runtime}) we prove \Cref{thm:coarse-dataset}. To do so, we present an algorithm and provide privacy, accuracy, and runtime analysis for it. 

\subsubsection{Algorithm}
We describe the score function employed in the exponential mechanism.
Our algorithm repeatedly invokes the exponential mechanism with a geometrically decreasing radius. 
\label{subsubsec:sos-coarse-algorithm} 
\begin{definition}[\textsc{coarse-sdp}]
\label{def:coarse-sdp}
	Let $X_1,\ldots,X_n, y \in \R^d$, and $R > r > 1$.
	Let \textsc{coarse-sdp} be the following convex program, where the maximization is over degree-$4$ pseudo-distributions in indeterminates $v_1,\ldots,v_d,b_1,\ldots,b_n$.
	\begin{align*}
		\max_{\pE} & \, \, \pE \sum_{i=1}^n b_i \text{ s.t.}\\
		& \pE \text{ satisfies } b_i^2 = b_i \quad \forall i\\
		& \pE \text{ satisfies } \|v\|_2^2 \leq (2R + R/1000)^2\\
		& \pE \text{ satisfies } b_i \|X_i - v\|_2^2 \leq b_i \cdot r^2 \quad \forall i\\
		& \pE v = y
		\mper
	\end{align*}
\end{definition}

  \begin{algorithm}[H]
	\caption{\textsc{coarse-estimate}}
	\label{alg:sos-exponenital-coarse}
	\begin{mdframed}
		\mbox{}
		\begin{description}
			\item[Input:]
			Parameters $R \ge 1000 R^* \ge 1 $,  dataset $X_1, \dots, X_n \in \R^d$
			\item[Operation:]
			Let $m = \floor{\log_5 \paren{R / 1000 R^*}}$.
			For $0 \le t < m$ do:
			\begin{enumerate}
			\item Let $R_t =  R \paren{0.2}^t$.
			\item Run the exponential mechanism with score function
			\textsc{coarse-sdp} $\paren{R \gets R_t, r \gets R_t/100}$, privacy budget $\eps/m$, and sensitivity $1$ over the ball of radius $R_t + R_t/1000$ in $\R^d$ to obtain $y_t$.
			\item For every $j$ let $X_j  = X_j - y_t$.
			\end{enumerate}
			\item[Output:]  $\wh{y} = \sum_{i= 0}^{m-1} y_i$
		\end{description}
	\end{mdframed}
\end{algorithm}

\subsubsection{SoS Analysis}
\label{sec:accuracy-coarse-sdp}
We prove that a point with a high score function value will also have high utility by containing a large fraction of the dataset within a bounded radius.
\begin{lemma}
  \label{lem:accuracy-coarse-sdp}
  Let $X_1,\ldots,X_n \in \R^d$, $R > r > R^* > 1$, and $\alpha \in \brac{0, 1/3}$ and let $\pE$ be a feasible solution to \textsc{coarse-sdp} with $\pE \sum_{i \leq n} b_i \geq \paren{1 - 2\alpha} n$.
  For any $x_0 \in \R^d$ such that $| \{ i \, : \, \|X_i - x_0\|_2 \leq R^* \}| \geq \paren{1 - \alpha} n$ and $\|x_0\|_2 \leq R$,
  \[
  \| y - x_0 \|_2 < 10(r + \sqrt{\alpha} R)\mper
  \]
\end{lemma}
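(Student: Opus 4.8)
The plan is to run the standard Sum-of-Squares ``resilience''-type argument: show that the pseudo-mean $y = \pE v$ is pulled close to $x_0$ by the large fraction of samples that are simultaneously near $v$ (via the program constraints) and near $x_0$ (via the hypothesis on $x_0$), while the remaining samples can displace $y$ by at most $O(\sqrt{\alpha}\,R)$. Concretely, let $G = \{i : \|X_i - x_0\| \le R^*\}$, so $|G| \ge (1-\alpha)n$, and set $w = \sum_{i \in G} b_i$. Feasibility gives $\pE b_i^2 = \pE b_i$ and $\pE(1-b_i)^2 \ge 0$, hence $\pE b_i \in [0,1]$, so
\[
  \pE w \;=\; \pE\textstyle\sum_i b_i - \sum_{i\notin G}\pE b_i \;\ge\; (1-2\alpha)n - \alpha n \;=\; (1-3\alpha)n, \qquad \pE[n-w] \le 3\alpha n .
\]
The two basic facts I will lean on are: (i) since $b_i \equiv b_i^2$ and $1-b_i \equiv (1-b_i)^2$ modulo the ideal that $\pE$ annihilates, both $w$ and $n-w$ ``act nonnegatively'', i.e.\ $\pE[w\sigma]\ge 0$ and $\pE[(n-w)\sigma]\ge 0$ for every degree-$\le 2$ sum of squares $\sigma$, and also against the degree-$2$ norm-constraint polynomial; and (ii) consequently a weighted Cauchy--Schwarz holds, $\pE[wz]^2 \le \pE[w]\,\pE[wz^2]$ and $\pE[(n-w)z]^2 \le \pE[n-w]\,\pE[(n-w)z^2]$ for every affine $z$ (look at the nonnegative quadratic $\lambda\mapsto\pE[w(z-\lambda)^2]$). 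If $y=x_0$ we are done; otherwise put $u=(y-x_0)/\|y-x_0\|$ and split, by linearity of $\pE$,
\[
  \|y-x_0\| \;=\; \pE\iprod{v-x_0,u} \;=\; \tfrac1n\,\pE[w\iprod{v-x_0,u}] \;+\; \tfrac1n\,\pE[(n-w)\iprod{v-x_0,u}].
\]

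First I would bound the $w$-term. For $i\in G$, expanding $\|v-x_0\|^2 = \|(v-X_i)+(X_i-x_0)\|^2$ gives the SoS inequality $b_i\|v-x_0\|^2 \le 2b_i\|v-X_i\|^2 + 2b_i\|X_i-x_0\|^2$; now the program constraint $b_i\|X_i-v\|^2 \le b_i r^2$ and the scalar bound $\|X_i-x_0\|^2 \le R^{*2}$ give $\pE[b_i\|v-x_0\|^2] \le 2(r^2+R^{*2})\pE b_i$, and summing over $i\in G$ yields $\pE[w\|v-x_0\|^2] \le 2(r^2+R^{*2})\pE w \le 2(r^2+R^{*2})n$. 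Since $\|v-x_0\|^2 - \iprod{v-x_0,u}^2$ is a sum of squares and $w$ acts nonnegatively, $\pE[w\iprod{v-x_0,u}^2] \le \pE[w\|v-x_0\|^2]$, so weighted Cauchy--Schwarz with $\pE w\le n$ gives $\tfrac1n\pE[w\iprod{v-x_0,u}] \le \sqrt{2(r^2+R^{*2})} < 2r$, using $R^* < r$.

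Next I would bound the $(n-w)$-term, where only the norm constraint $\|v\|^2 \le (2R+R/1000)^2$ and the hypothesis $\|x_0\|\le R$ are available. Modulo that constraint one has $\|v-x_0\|^2 \le 2\|v\|^2 + 2\|x_0\|^2 \le 2(2R+R/1000)^2 + 2R^2 < 11R^2$; multiplying by the nonnegative weight $n-w$ gives $\pE[(n-w)\iprod{v-x_0,u}^2] \le \pE[(n-w)\|v-x_0\|^2] \le 11R^2\,\pE[n-w]$, and weighted Cauchy--Schwarz with $\pE[n-w]\le 3\alpha n$ gives
\[
  \tfrac1n\,\pE[(n-w)\iprod{v-x_0,u}] \;\le\; \tfrac1n\sqrt{\pE[n-w]}\cdot\sqrt{11R^2\,\pE[n-w]} \;=\; \tfrac{\sqrt{11}\,R}{n}\,\pE[n-w] \;\le\; 3\sqrt{11}\,\alpha R \;<\; 10\alpha R \;\le\; 10\sqrt{\alpha}\,R,
\]
the last step using $\alpha\le 1/3<1$. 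Adding the two bounds gives $\|y-x_0\| < 2r + 10\sqrt{\alpha}\,R < 10(r+\sqrt{\alpha}\,R)$, as claimed.

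The step I expect to need the most care is fact (i) and its repeated use: verifying that each manipulation above is legitimate for a \emph{degree-$4$} pseudo-distribution rather than a genuine distribution. The content is that products such as $w\cdot(\|v-x_0\|^2 - \iprod{v-x_0,u}^2)$, $b_i\cdot(2\|v-X_i\|^2+2\|X_i-x_0\|^2-\|v-x_0\|^2)$, and $(n-w)\cdot\big((2R+R/1000)^2-\|v\|^2\big)$ must all have nonnegative pseudo-expectation; this follows by rewriting $b_i = b_i^2$ and $1-b_i = (1-b_i)^2$ up to multiples of $b_i^2-b_i$ (which $\pE$ sends to $0$ when multiplied by degree-$\le 2$ polynomials, by feasibility), and observing that every product that appears is a degree-$\le 2$ square times either another degree-$\le 2$ square or the degree-$2$ constraint polynomial, hence of total degree $\le 4$. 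Once this bookkeeping is pinned down, the rest is the routine arithmetic sketched above.
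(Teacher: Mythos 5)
Your proof is correct, but takes a genuinely different route from the paper's.  The paper bounds the second moment $\pE\|v-x_0\|^2$ directly: it writes $|G|\cdot\|v-x_0\|^2 = \sum_{i\in G} b_i\|v-x_0\|^2 + \sum_{i\in G}(1-b_i)\|v-x_0\|^2$, bounds the first group by $4r^2\pE b_i$ using the clipping constraint and the SoS triangle inequality, bounds the second by $11R^2\pE(1-b_i)$ using only the norm constraint on $v$, divides by $|G|$, and finishes with a single pseudo-Jensen step $\|\pE v - x_0\| \le \sqrt{\pE\|v-x_0\|^2}$.  You instead linearize: you project onto the unit direction $u=(y-x_0)/\|y-x_0\|$, split $\|y-x_0\| = \tfrac1n\pE[w\iprod{v-x_0,u}] + \tfrac1n\pE[(n-w)\iprod{v-x_0,u}]$ with $w=\sum_{i\in G}b_i$, and control each piece by a weighted Cauchy--Schwarz in the pseudo-distribution, justified by the pseudo-positivity of $w$ and $n-w$.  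Both routes rest on exactly the same degree-$4$ SoS facts (the clipping and norm constraints, $b_i\equiv b_i^2$, SoS triangle inequality, and that $\|v-x_0\|^2-\iprod{v-x_0,u}^2$ is a square), and the final bookkeeping you flag — that every product appearing is either a degree-$\le4$ SoS or a square times a constraint polynomial, modulo multiples of $b_i-b_i^2$ which $\pE$ annihilates against degree-$\le 2$ polynomials — is the same verification the paper's proof needs implicitly, so nothing new breaks.  The paper's version is slightly more economical (one Jensen step, no weighted CS); your version is a bit longer but in fact yields a tighter intermediate estimate, roughly $2r + O(\alpha)R$ rather than the paper's $O(\sqrt{r^2+\alpha R^2})$, before both are relaxed to the stated $10(r+\sqrt{\alpha}R)$.
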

\begin{proof}
  Let $G = \{ i \, : \, \|X_i - x_0\| \leq R^* \}$ be the \emph{good} set of indices. Since $\sum_{i \in G} \pE\brac{b_i}
  +
  \sum_{i \notin G} \pE\brac{b_i}
  =
  \sum_{i=1}^n \pE\brac{b_i} \ge \paren{1-2\alpha} n$, and $\Abs{G} \ge \paren{1- \alpha} n$,
  \[\sum_{i \in G} \pE\brac{b_i} \ge \paren{1-2\alpha} n -  \paren{n - \abs{G}} \ge \paren{1 - 3\alpha} n \mper \]
  We have
  \[|G| \cdot \|x_0 - v\|^2 = \sum_{i \in G} b_i \cdot \|x_0 - v\|^2 + \sum_{i \in G} (1-b_i)  \cdot \|x_0 - v\|^2\mper
  \]
  We will bound the two terms separately.
  For the first, for any $i \in G$, by SoS triangle inequality (\Cref{fact:sos-triangle-inequality}) we have
  \[
  \pE b_i \|x_0 -v\|^2 \leq 2 \pE b_i (\|x_0 - X_i\|^2 + \|X_i - v\|^2) \leq 4 r^2 \cdot \pE b_i\mper
  \]
  (Here we used that $\pE$ satisfies $b_i \geq 0$.)
  For the second term, again using triangle inequality, we have
  \[
  \pE (1-b_i) \|x_0 - v\|^2 \leq 2 \pE\paren{1 - b_i}\paren{ \snormt{v} + \snormt{x_0}} 
  \leq 2 \pE\paren{1 - b_i}\paren{ R^2 + (2R+R/1000)^2} 
  \leq 11 R^2\pE (1-b_i)\mper
  \]
  (Here we used that $\pE$ satisfies $1 - b_i \ge 0$.)
  So overall,
  \[
  |G| \cdot \pE \|x_0 -v\|^2 \leq 4 r^2 \cdot \pE \sum_{i \in G} b_i + 11 R^2 \cdot \pE \sum_{i \in G} (1-b_i)\mper
  \]
  Now  $\sum_{i \in G} \pE b_i \geq \paren{1 - 3\alpha} n$, so $\pE \sum_{i \in G} (1-b_i) \leq 3 \alpha n$, and hence
  \[
  \pE \|x_0 - v|^2 \leq 11 \cdot \frac{nr^2 + 3\alpha nR^2}{|G|} \leq 66(r^2 + \alpha R^2)
  < 100\paren{r^2 + \alpha R^2}
  \mper
  \]
  Since $\|\pE v - x_0\| \leq \sqrt{\pE \|v - x_0\|^2}$, we are done.
\end{proof}

\subsubsection{Accuracy Analysis Assuming Exponential Mechanism's Success}
\label{sec:acc-assume-exp}
Here we show that if each iteration of~\Cref{alg:sos-exponenital-coarse} produces a high utility solution, the resulting estimates will improve in accuracy at a rate exponential in the number of iterations. 

\begin{proposition}
\label{prop:acc-assume-exp}
Suppose $\abs{\set{i  :  \normt{X_i - x_0} \le R^*}} \ge \paren{1 - \alpha} n$, $\alpha \le 10^{-4}$, $\|x_0\|_2 \leq R$, and assume the exponential mechanism in \Cref{alg:sos-exponenital-coarse} is successful in every step, i.e., $y_t$ has score larger than $\paren{1-2\alpha} n$ for every $t$. Then for each $t$,
\begin{equation*}
\normt{\sum_{i=0}^{t-1} y_i - x_0} \le R_t \mper
\end{equation*}
\end{proposition}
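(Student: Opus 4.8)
The plan is a short induction on $t$, driven by Lemma~\ref{lem:accuracy-coarse-sdp}; the only genuinely new observation is that re-centering the dataset by the running estimate does not affect the ``good set'' of samples. Write $s_t = \sum_{i=0}^{t-1} y_i$ for the cumulative shift after $t$ iterations, so that $s_0 = 0$, and note the claim to be proved is precisely $\normt{s_t - x_0} \le R_t$ for all $0 \le t \le m$.

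\emph{Base case.} For $t = 0$ we have $\normt{s_0 - x_0} = \normt{x_0} \le R = R_0$ by hypothesis.

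\emph{Inductive step.} Assume $\normt{s_t - x_0} \le R_t$ and $t < m$. At the start of iteration $t$ the working dataset is $X_i' = X_i - s_t$; putting $x_0' = x_0 - s_t$ we have $\normt{X_i' - x_0'} = \normt{X_i - x_0}$, so $\abs{\set{i : \normt{X_i' - x_0'} \le R^*}} \ge (1-\alpha)n$, and moreover $\normt{x_0'} = \normt{s_t - x_0} \le R_t$ by the inductive hypothesis. I would then check the hypotheses of Lemma~\ref{lem:accuracy-coarse-sdp} for the call made at iteration $t$ (with parameters $R \gets R_t$, $r \gets R_t/100$): since $t < m = \floor{\log_5 (R/1000R^*)}$ one gets $R_t = R \cdot 5^{-t} \ge 5 R \cdot 5^{-m} \ge 5000\, R^*$, hence $R_t > R_t/100 \ge 50 R^* > R^*$, and $\alpha \le 10^{-4} < 1/3$. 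By hypothesis the exponential mechanism succeeds at step $t$, i.e.\ $\textsc{coarse-sdp}(y_t) > (1-2\alpha) n$, so there is a feasible degree-$4$ pseudo-distribution $\pE$ for \textsc{coarse-sdp} on $X'$ (with $R \gets R_t$, $r \gets R_t/100$) with $\pE v = y_t$ and $\pE \sum_i b_i \ge (1-2\alpha)n$. Applying Lemma~\ref{lem:accuracy-coarse-sdp} with $x_0 \gets x_0'$ yields
\[
    \normt{y_t - x_0'} < 10\Paren{\tfrac{R_t}{100} + \sqrt{\alpha}\, R_t} = R_t\Paren{\tfrac{1}{10} + 10\sqrt{\alpha}} \le \tfrac{R_t}{5} = R_{t+1},
\]
where the last step uses $\sqrt{\alpha} \le 10^{-2}$. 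Since $y_t - x_0' = (s_t + y_t) - x_0 = s_{t+1} - x_0$, this is exactly $\normt{s_{t+1} - x_0} \le R_{t+1}$, completing the induction.

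I do not anticipate a real obstacle: the displayed computation is essentially the whole content, so the only things requiring care are the bookkeeping of the running shift $s_t$ — in particular that both the ``$(1-\alpha)n$ points near $x_0'$'' property and the bound $\normt{x_0'} \le R_t$ survive re-centering — and verifying that the radius $R_t$ handed to \textsc{coarse-sdp} remains at least a constant multiple of $R^*$ throughout all $m$ iterations, which is exactly what the choice $m = \floor{\log_5(R/1000R^*)}$ guarantees and is what keeps Lemma~\ref{lem:accuracy-coarse-sdp} applicable.
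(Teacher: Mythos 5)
Your proof is correct and follows essentially the same route as the paper's: an induction on $t$ with the inductive step handled by applying Lemma~\ref{lem:accuracy-coarse-sdp} to the re-centered dataset and re-centered target point. You are in fact a bit more explicit than the paper about verifying that the lemma's hypotheses (re-centering preserves the good set, $\normt{x_0'}\le R_t$, and $R_t > R_t/100 > R^*$ throughout all $m$ rounds) survive the bookkeeping, which is a welcome amount of extra care but not a different argument.
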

\begin{proof}
  We prove this proposition through induction. 
  For the base case $t = 0$, the sum $\sum_{i=0}^{t-1} y_i$ is empty and we are left with $\|x_0\| \leq R_0 = R$, which holds by assumption. 
  For the inductive case, we suppose the statement holds for $t -1$, and prove that it holds for $t$. 
  The inductive hypothesis is that 
  $\normt{\sum_{i= 0}^{t-2} y_i - x_0} \le R_{t - 1}$. 
  We may now use~\Cref{lem:accuracy-coarse-sdp}, where $x_0$ and $R$ in the lemma statement are replaced by $\sum_{i= 0}^{t-2} y_i - x_0$ and $R_{t-1}$, respectively.
  This gives us 
  \begin{equation*}
    \normt{\sum_{i=0}^{t-1} y_t - x_0} \le 10\paren{R_{t-1}/100 + R_{t-1}/100} \le R_{t-1} / 5 = R_t \mcom
  \end{equation*}
  as desired.
\end{proof}

As an immediate corollary, we can bound the resulting error after~\Cref{alg:sos-exponenital-coarse} is complete.
\begin{corollary}
\label{cor:acc-assume-exp}
  Under the same assumptions as~\Cref{prop:acc-assume-exp} and noting that that $m = \log_{5}\paren{R/1000R^*}$, \Cref{alg:sos-exponenital-coarse} outputs $y$, such that
  \begin{equation*}
    \normt{y - x_0} \le 1000 R^*\mper
  \end{equation*}
\end{corollary}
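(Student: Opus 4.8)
The plan is to obtain this directly from~\Cref{prop:acc-assume-exp}; the corollary is nothing more than the specialization of that proposition to the iteration count at which the loop in~\Cref{alg:sos-exponenital-coarse} halts. First I would note that the loop ranges over $t = 0,1,\dots,m-1$, produces the vectors $y_0,\dots,y_{m-1}$, and outputs exactly $\wh y = \sum_{i=0}^{m-1} y_i$. The inductive argument proving~\Cref{prop:acc-assume-exp} establishes $\normt{\sum_{i=0}^{t-1} y_i - x_0} \le R_t$ for every $t$ for which $y_0,\dots,y_{t-1}$ have been defined, and this is legitimate at $t = m$ since all of $y_0,\dots,y_{m-1}$ are produced by the loop and the standing assumption inherited from the proposition is precisely that every invocation of the exponential mechanism succeeds. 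Hence $\normt{\wh y - x_0} \le R_m$.

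It then remains only to evaluate $R_m$. Since $R_t = R\paren{0.2}^t$, we have $R_m = R\cdot\paren{0.2}^m$; substituting $m = \log_5\paren{R/1000R^*}$ and using the identity $\paren{0.2}^{\log_5 z} = 5^{-\log_5 z} = z^{-1}$ with $z = R/(1000R^*)$ gives $R_m = R\cdot\paren{R/1000R^*}^{-1} = 1000R^*$, which is the asserted bound.

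The only point that calls for care --- and it is a bookkeeping matter rather than a genuine obstacle --- is that~\Cref{alg:sos-exponenital-coarse} actually sets $m = \floor{\log_5\paren{R/1000R^*}}$, so strictly $R_m = R\cdot 5^{-\floor{\log_5(R/1000R^*)}}$ lies in $\brac{1000R^*,\,5000R^*}$ rather than being exactly $1000R^*$; the honest conclusion is $\normt{\wh y - x_0} < 5000R^* = O(R^*)$, and the clean figure $1000R^*$ in the statement corresponds to treating $m$ as the exact logarithm. Since only an $O(R^*)$ bound is needed downstream (cf.~\Cref{thm:coarse-dataset}), this discrepancy is immaterial; alternatively, replacing $\floor{\cdot}$ by $\ceil{\cdot}$ in the definition of $m$ makes the displayed bound literally correct at the cost of one extra iteration.
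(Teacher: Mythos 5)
Your proof is correct and follows the same approach the paper intends: specialize \Cref{prop:acc-assume-exp} at $t=m$ and evaluate $R_m$ by the geometric recursion. Your side remark about $m = \floor{\log_5(R/1000R^*)}$ is a legitimate catch — with the floor, $R_m$ lies in $[1000R^*,\,5000R^*)$ rather than being exactly $1000R^*$, so the honest conclusion is $\normt{\wh y - x_0} < 5000R^* = O(R^*)$, which (as you note) is all that \Cref{thm:coarse-dataset} requires.
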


\subsubsection{Sensitivity}
\label{sec:coarse-sdp-sensitivity}
We now show that the score function has bounded sensitivity.
\begin{lemma}[\textsc{coarse-sdp} sensitivity]
\label{lem:coarse-sdp-sensitivty}
  \textsc{coarse-sdp} has sensitivity $1$ with respect to the $X_i$'s.
\end{lemma}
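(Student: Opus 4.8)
The plan is to follow the standard ``zero out the indicator of the changed sample'' argument, exactly the mechanism already used in the proof of \Cref{lem:meta-sensitivity}, spelled out concretely for \textsc{coarse-sdp}. Recall that (for a fixed radius $r < R$) the score function assigns to a candidate point $y$ the optimal value of \textsc{coarse-sdp}$(X_1,\dots,X_n,y)$, and that this feasible region is compact -- it is Archimedean, since $\pE$ satisfies $\|v\|_2^2 \le (2R+R/1000)^2$ and $b_i^2 = b_i$ -- so the maximum is attained. Fix a candidate $y$ and two neighboring datasets $X = (X_1,X_2,\dots,X_n)$ and $X' = (X_1',X_2,\dots,X_n)$ differing only in the first sample. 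It suffices to show that the optimal value for $X'$ is at least the optimal value for $X$ minus $1$; the reverse inequality then follows by swapping the roles of $X$ and $X'$.

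Let $\pE$ be an optimal feasible degree-$4$ pseudo-distribution for \textsc{coarse-sdp}$(X,y)$, in indeterminates $v, b_1,\dots,b_n$. Define $\pE'$ by $\pE' f(v,b_1,b_2,\dots,b_n) = \pE\, f(v,0,b_2,\dots,b_n)$, i.e.\ by the linear substitution $b_1 \mapsto 0$ that leaves all other indeterminates fixed. Since this substitution has degree $1$, $\pE'$ is again a linear functional on degree-$\le 4$ polynomials with $\pE' 1 = 1$, and for every polynomial $q$ of degree $\le 2$, $q(v,0,b_2,\dots,b_n)$ still has degree $\le 2$, so $\pE' q^2 = \pE\, q(v,0,b_2,\dots,b_n)^2 \ge 0$; hence $\pE'$ is a valid degree-$4$ pseudo-distribution. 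The key observation (the same one as in \Cref{lem:meta-sensitivity}) is that if $\pE$ ``satisfies'' a constraint inequality $p \ge 0$ in the SoS sense, then for the substituted polynomials $p',q'$ we have $\pE'[p\, q^2] = \pE[p'\, (q')^2]$, which still has appropriately bounded degree, so $\pE'$ satisfies $p' \ge 0$.

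Now I verify feasibility of $\pE'$ for \textsc{coarse-sdp}$(X',y)$: the constraint $\pE v = y$ and the bound $\|v\|_2^2 \le (2R+R/1000)^2$ are untouched since the substitution does not involve $v$; the constraints $b_i^2 = b_i$ and $b_i \normt{X_i - v}^2 \le b_i r^2$ for $i \ge 2$ are unchanged because $X_i' = X_i$ and these polynomials do not involve $b_1$, so $\pE'$ satisfies them; and for $i = 1$, after the substitution $b_1 \mapsto 0$ the constraints $b_1^2 = b_1$ and $b_1 \normt{X_1'-v}^2 \le b_1 r^2$ become $0 = 0$ and $0 \le 0$, trivially satisfied. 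Thus $\pE'$ is feasible with objective value
\begin{equation*}
  \pE' \sum_{i=1}^n b_i \;=\; \pE \sum_{i=2}^n b_i \;=\; \pE \sum_{i=1}^n b_i \;-\; \pE b_1 .
\end{equation*}
Finally $0 \le \pE b_1 \le 1$: positivity gives $\pE b_1 = \pE b_1^2 \ge 0$, and since $\pE$ satisfies $b_1^2 = b_1$ we have $\pE(b_1^2 - b_1) = 0$, so $\pE(1-b_1) = \pE(1-b_1)^2 - \pE(b_1^2 - b_1) = \pE(1-b_1)^2 \ge 0$, i.e.\ $\pE b_1 \le 1$. Hence the optimal value for $X'$ is at least $\pE\sum_i b_i - 1$, as needed. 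There is no real obstacle; the only points requiring care are (i) that ``satisfies'' means the SoS condition $\pE[p\, q^2] \ge 0$, so one must check that pre-composing the multiplier square with the substitution keeps it a square of bounded degree, and (ii) that the one changed index $i=1$ is exactly the one whose $b_1$ we zero out, so the objective drops by precisely $\pE b_1 \in [0,1]$.
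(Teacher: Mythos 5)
Your proof is correct and takes essentially the same approach as the paper: construct $\pE'$ by zeroing out every moment involving $b_1$, check feasibility for the neighboring dataset, and observe that the objective drops by $\pE b_1 \in [0,1]$. The only presentational difference is that you verify $\pE'$ is a valid pseudoexpectation via the substitution-preserves-squares argument, whereas the paper instead invokes the row/column-zeroing PSD lemma (\Cref{lem:nulling-psd-matrices}); these are equivalent.
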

\begin{proof}
Without loss of generality suppose $X, X'$ are two datasets that differ in the first index.
Suppose for inputs $\paren{X_1, X_2, \dots X_n}$, $\pE$ is the degree-$4$ pseudo-expectation that maximizes the objective function of \textsc{coarse-sdp}. In order to prove that \textsc{coarse-sdp} has sensitivity $1$, we prove that there exists a feasible solution $\pE'$ to the \textsc{coarse-sdp} convex program, on inputs $\paren{X_1', X_2, \dots, X_n}$, such that 
\begin{equation*}
\pE' \brac{\sum b_i} \ge \pE \brac{\sum b_i} - 1 \mper
\end{equation*}
If we do so, sensitivity follows by symmetry. To construct  $\pE'$, we describe it by the pseudo-moments up to the fourth degree. For every degree-$4$ monomial $f$ in $b, v$ that does not contain $b_1$, let 
\begin{equation*}
\pE' \brac{f\paren{b , v}} = \pE\brac{f\paren{b, v}} \mcom
\end{equation*}
and for every degree-$4$ monomial $f$ in $b, v$, that contains $b_1$, 
let
\begin{equation*}
\pE' \brac{f\paren{b , v}} = 0 \mper
\end{equation*}
Then $\pE'$ satisfies the constraints in \textsc{coarse-sdp}. Furthermore, $\pE' \brac{\sum  b_i}  \ge \pE \brac{\sum b_i } - 1$. It remains to show that $\pE'$ is a pseudo-expectation. This is because $\pE'\brac{\paren{\paren{1, b, v}^{\tensor 2}} \paren{\paren{1, b, v}^{\tensor 2}}^T}$ is PSD, which is a direct result of \Cref{lem:nulling-psd-matrices}.
\end{proof}

Combining this with the privacy guarantees of the exponential mechanism and basic composition, we see that the overall algorithm is $\epsilon$-DP.
\begin{corollary}[privacy analysis]
\label{cor:coarse-sdp-privacy}
The exponential mechanism in \Cref{alg:sos-exponenital-coarse} is $\epsilon/m$-DP. Hence \Cref{alg:sos-exponenital-coarse} is $\epsilon$-DP by composition.
\end{corollary}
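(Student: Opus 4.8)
The corollary contains two assertions, and I would prove them in that order: first, that a single call to the exponential mechanism inside the loop of \Cref{alg:sos-exponenital-coarse} is $(\epsilon/m)$-DP, and second, that the full algorithm is $\epsilon$-DP via composition. The only ingredients needed are the sensitivity bound already proved in \Cref{lem:coarse-sdp-sensitivty}, the privacy guarantee of the (volume-based) exponential mechanism from \Cref{thm:volume-based-exponential-mechanism}, and the basic adaptive composition theorem for pure differential privacy; no Lipschitzness or concavity is required here, as those facts enter only in the runtime analysis.

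For the first assertion, \Cref{lem:coarse-sdp-sensitivty} states that \textsc{coarse-sdp} has sensitivity $1$ with respect to the $X_i$'s. In round $t$ the candidate set over which the mechanism runs is the ball of radius $R_t + R_t/1000$, which is determined by the public quantities $R$, $R^*$, and $t$ and does not depend on the data. Therefore, instantiating \Cref{thm:volume-based-exponential-mechanism} with sensitivity parameter $\Delta = 1$ and privacy budget $\epsilon/m$ yields an $(\epsilon/m)$-DP mechanism. (If, for the sake of a polynomial running time, this exponential mechanism is realized through the approximate log-concave sampler of \Cref{thm:efficient-sampling} rather than exactly, that sampler's output law is within a multiplicative $e^{\epsilon'}$ of the ideal exponential density; running the ideal mechanism at budget $\epsilon/2m$ and the sampler at parameter $\epsilon/2m$ then recovers an $(\epsilon/m)$-DP round, changing only constants. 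For the present corollary it is cleanest to argue at the level of the ideal exponential mechanism.)

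For the second assertion, I would observe that \Cref{alg:sos-exponenital-coarse} runs $m$ rounds, and between rounds step (3) subtracts the just-released vector $y_t$ from every sample; hence after $t$ rounds the working dataset is each original sample minus the common shift $\sum_{s<t} y_s$, a deterministic function of the original dataset and the already-released outputs $y_0,\ldots,y_{t-1}$. The key point is that if $X$ and $X'$ are neighboring datasets (differing in a single index), then subtracting the same shift from every sample of each leaves them neighboring. Consequently, for every fixed transcript $y_0,\ldots,y_{t-1}$, round $t$ viewed as a map from the original dataset to $y_t$ is $(\epsilon/m)$-DP. By the basic adaptive composition theorem for pure DP, the joint release $(y_0,\ldots,y_{m-1})$ is $m\cdot(\epsilon/m) = \epsilon$-DP, and since the output $\wh{y} = \sum_{i=0}^{m-1} y_i$ is a post-processing of this transcript, \Cref{alg:sos-exponenital-coarse} is $\epsilon$-DP.

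\textbf{Main obstacle.} There is no genuinely hard step; the entire proof is composition bookkeeping. The one place to be careful is the interaction between the recentering step and adaptive composition: one must check that subtracting an output-dependent (but, conditioned on the transcript, data-independent) shift preserves the neighboring relation, so that the adaptive composition theorem really does apply round by round. If one wishes to be pedantic about the concrete implementation, the second minor point is to account for the multiplicative slack of the \cite{BassilyST14} sampler inside the per-round budget, as noted parenthetically above.
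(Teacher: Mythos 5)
Your proof is correct and takes essentially the same approach the paper intends: the paper states the corollary without proof, treating it as immediate from the sensitivity bound of \Cref{lem:coarse-sdp-sensitivty} together with the privacy of the exponential mechanism and basic composition, which is exactly the chain you make explicit. You are somewhat more careful than the paper in two respects worth noting: you check that the data-dependent recentering in step (3) preserves the neighboring relation so that \emph{adaptive} composition genuinely applies, and you flag the multiplicative slack introduced by the approximate log-concave sampler of \Cref{thm:efficient-sampling} when the exponential mechanism is implemented in polynomial time (the paper addresses this point in the meta-theorem's privacy proof, \Cref{lem:meta-priv}, but leaves it implicit in the concrete coarse-estimation section). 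Neither point changes the argument, and your handling of both is correct.
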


\subsubsection{Lipschitzness}
\label{sec:coarse-sdp-lipschitzness}
To prove that the score function is Lipschitz, we rely upon the following lemma, which is a direct result of the duality of pseudo-distributions and SoS proofs (see, e.g.,~\cite{barak2016proofs}).
\begin{lemma}
\label{lem:coarse-sdp-polynomial-identity}
Suppose $R, r, X_1, \dots, X_n$  are fixed. Let $f\paren{y}$ denote the maximum value \textsc{coarse-sdp} can obtain for $y$. Then for every $y \in \R^d, \omega > 0$, there exists $\alpha_i \in \R\brac{b, v}_{\le 2}$, a degree-$2$ SoS polynomial $\beta$, $\gamma_i \ge 0$, $\lambda_i \in \R$, and  degree-$4$ SoS polynomial $\sigma$ such that
\begin{align*}
f\paren{y} + \omega -\sum_{i=1}^n b_i &= \sum_{i=1}^n \alpha_i\paren{b, v}\paren{b_i^2 - b_i}
+ 
\beta\paren{b, v} \paren{\paren{2R +R/1000}^2 - \snormt{v}} \\
&+
\sum_{i=1}^n \gamma_i b_i \paren{r^2 - \snormt{X_i - v}}
+
\sum_{i=1}^d \lambda_i \paren{v_i - y_i}
+
\sigma\paren{b, v} \mper
\end{align*}
Note that the above equality is a polynomial identity in $b$ and $v$.
\end{lemma}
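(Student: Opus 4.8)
The plan is to recognize the displayed equation as (a slightly relaxed form of) the semidefinite/Lagrangian dual of the convex program \textsc{coarse-sdp} of \Cref{def:coarse-sdp}, and to obtain it from strong duality of that program.

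First I would write down the dual explicitly. Regarding \textsc{coarse-sdp} as a semidefinite program in the moment matrix of the degree-$4$ pseudo-distribution $\pE$, its constraints split into: the normalization $\pE 1 = 1$; the ideal constraints $\pE(b_i^2 - b_i)q = 0$ for all monomials $q$ of degree at most $2$ (which is what ``$\pE$ satisfies $b_i^2 = b_i$'' amounts to, by polarization); the scalar moment constraints $\pE v_j = y_j$ (note \Cref{def:coarse-sdp} writes ``$\pE v = y$'' \emph{without} ``satisfies'', so these are ordinary linear constraints on the degree-$1$ moments, not ideal constraints); the scalar inequalities $\pE\, b_i\paren{r^2 - \snormt{X_i - v}} \geq 0$ (the constraint polynomial has degree $3$, so at level $4$ it is only ever localized by constants); positive semidefiniteness of the degree-$\le 2$ moment matrix; and positive semidefiniteness of the localizing matrix of $\paren{2R+R/1000}^2 - \snormt v$, indexed by degree-$\le 1$ monomials. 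The corresponding dual multipliers are, respectively, a scalar $y_0$, arbitrary polynomials $\alpha_i \in \R\brac{b,v}_{\le 2}$, scalars $\lambda_j \in \R$, scalars $\gamma_i \geq 0$, a degree-$\le 4$ SoS polynomial $\sigma$, and a degree-$\le 2$ SoS polynomial $\beta$. Collecting the contributions (absorbing the term $\sum_j\lambda_j y_j$ into the dual objective $\mu_0$ and flipping signs $\alpha_i\mapsto-\alpha_i$, $\lambda_j\mapsto-\lambda_j$), dual feasibility is precisely the polynomial identity in the lemma with $f(y)+\omega$ replaced by $\mu_0$. I would write this bookkeeping out carefully, since the only content of the lemma is that each family of multipliers lands in exactly the slot the statement names.

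Second comes weak duality, which is immediate: applying any feasible $\pE$ to such an identity yields $\mu_0 \geq \pE\sum_i b_i$, because $\pE\sigma\ge0$, $\pE\bigl[\beta\cdot(\paren{2R+R/1000}^2-\snormt v)\bigr]\ge0$, $\gamma_i\,\pE\, b_i(r^2-\snormt{X_i-v})\ge0$, $\pE\bigl[\alpha_i(b_i^2-b_i)\bigr]=0$, and $\lambda_j\,\pE(v_j-y_j)=0$; hence every dual-feasible $\mu_0$ is at least $f(y)=\sup_\pE\pE\sum_i b_i$. The crucial input is strong duality. The constraint system is Archimedean: it contains $\snormt v\le\paren{2R+R/1000}^2$, and $\snormt b\le n$ is SoS-derivable from $b_i^2=b_i$ via $1-b_i=(1-b_i)^2+(b_i-b_i^2)$ and $n-\snormt b=\sum_i(1-b_i)$, so adjoining the redundant bound $\snormt b+\snormt v\le n+\paren{2R+R/1000}^2$ changes nothing. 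For Archimedean systems the SoS/pseudo-distribution duality of \Cref{sec:prelims} (see also \cite{barak2016proofs}) holds at degree $4$, i.e. $\inf\{\mu_0:\text{dual-feasible}\}=f(y)$. (Here $f(y)$ is finite whenever \textsc{coarse-sdp} is feasible, which holds for every $y$ arising in \Cref{alg:sos-exponenital-coarse} since the point evaluation at $(b,v)=(0,y)$ is then feasible; and the primal maximum is attained because the Archimedean bound makes the feasible set of moment matrices compact.) I expect the two potential sticking points to be (i) verifying that the dual has exactly the shape claimed and (ii) invoking strong — not merely weak — duality; everything else is routine.

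Finally I would bridge the gap between ``$\inf\mu_0=f(y)$'' and the exact value $f(y)+\omega$ demanded by the statement. Given $\omega>0$, pick a dual-feasible tuple with $\mu_0\le f(y)+\omega$ and then replace $\sigma$ by $\sigma+\bigl(f(y)+\omega-\mu_0\bigr)$; this is still a degree-$\le4$ SoS polynomial, since we only added a nonnegative constant, and it raises the dual objective to exactly $f(y)+\omega$. The resulting tuple $(\{\alpha_i\},\beta,\{\gamma_i\},\{\lambda_i\},\sigma)$ witnesses the claimed identity. This last step is exactly why the lemma carries the free slack $\omega>0$: it absorbs any failure of the dual infimum to be attained.
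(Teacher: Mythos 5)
Your proof is correct and takes the same route the paper does: the paper's only ``proof'' is the one-line remark that the identity ``is a direct result of the duality of pseudo-distributions and SoS proofs,'' citing \cite{barak2016proofs}, together with the observation that the constraints imply boundedness (\Cref{lem:boundedness-coarse-optimization-problem}), i.e., the Archimedean condition; your write-up is exactly this argument, carried out in full with the slot-by-slot bookkeeping of the dual multipliers, the explicit Archimedean witness $\snormt{b}\le n$ derived from $b_i^2=b_i$, and the standard trick of absorbing the slack $\omega$ into $\sigma$.
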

Note that the constraints on $b$ and $v$ in the optimization problem imply boundedness (\Cref{lem:boundedness-coarse-optimization-problem}) and therefore our problem satisfies the duality conditions.
We use this lemma to prove the following theorem.
\begin{lemma}[\textsc{coarse-sdp} lipschitzness]
\label{lem:coarse-sdp-lipschitzness}
In the convex optimization problem \textsc{coarse-sdp}, suppose $R, r, X_1, \dots, X_n$ are fixed. Let $f\paren{y}$ denote the maximum value \textsc{coarse-sdp} can obtain for $y$.
Then $f\paren{y}$ is $L$-Lipschitz over the ball of radius $R+R/1000$ in $\R^d$, where $L = n\sqrt{d}/R$.
\end{lemma}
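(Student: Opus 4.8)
The plan is to follow the template of the proof of Lemma~\ref{lem:meta-lipschitz}: extract a dual certificate for the value $f(y)$, bound the norm of the Lagrange multiplier vector $\lambda$ attached to the linear constraint $\pE v = y$, and then transport that certificate to a nearby target $y'$ by absorbing the shift into a constant.

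\textbf{Setting up the certificate.} Fix $y$ in the ball of radius $R + R/1000$ and $\omega > 0$. Since the constraints on $b,v$ imply boundedness of the feasible region (\Cref{lem:boundedness-coarse-optimization-problem}), pseudo-distribution/SoS duality applies and Lemma~\ref{lem:coarse-sdp-polynomial-identity} gives a polynomial identity in $b,v$, all of whose terms have degree at most $4$:
\[
f(y) + \omega - \sum_{i=1}^n b_i = \sum_{i} \alpha_i (b_i^2 - b_i) + \beta\big((2R + R/1000)^2 - \snormt{v}\big) + \sum_{i} \gamma_i b_i (r^2 - \snormt{X_i - v}) + \sum_{j=1}^d \lambda_j (v_j - y_j) + \sigma(b,v)\mper
\]

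\textbf{Bounding $\norm{\lambda}_2$ (the crux).} Fix a coordinate $j$, set $s = \sign(\lambda_j)$, and consider the degree-$4$ pseudo-expectation $\pE''$ equal to the point mass at $(b,v) = (0,\, y + sR e_j)$. Because $\normt{y + sR e_j} \le \normt{y} + R \le 2R + R/1000$ and all $b_i = 0$, this $\pE''$ genuinely satisfies every constraint of \textsc{coarse-sdp} with target $y + sRe_j$; moreover $\pE''(v_j - y_j) = sR$, $\pE''(v_k - y_k) = 0$ for $k \neq j$, and $\pE'' \sum_i b_i = 0$. Applying $\pE''$ to both sides of the identity, the left side is $f(y) + \omega$; on the right side each SoS/constraint term is nonnegative under $\pE''$ (it satisfies $b_i^2 = b_i$, $\snormt{v} \le (2R+R/1000)^2$, $b_i\snormt{X_i - v} \le b_i r^2$, with $\beta,\sigma$ SoS and $\gamma_i \ge 0$), while the $\lambda$-term contributes $sR\lambda_j = R\abs{\lambda_j}$. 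Hence $R\abs{\lambda_j} \le f(y) + \omega \le n + \omega$, where we used that $\pE b_i \in [0,1]$ for every feasible $\pE$ (from $\pE b_i^2 = \pE b_i$ and $\pE(1-b_i)^2 \ge 0$), so $f(y) \le n$. Ranging over $j \in [d]$ gives $\norm{\lambda}_2 \le \sqrt{d}\,(n+\omega)/R$.

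\textbf{Transporting the certificate.} For $y'$ also in the ball of radius $R + R/1000$, rewrite the $\lambda$-term of the identity at $y$ as $\sum_j \lambda_j(v_j - y_j') + \iprod{\lambda, y' - y}$, leaving every other term untouched; this is a valid dual certificate showing $f(y') \le f(y) + \omega + \iprod{\lambda, y'-y} \le f(y) + \omega + \norm{\lambda}_2 \normt{y' - y} \le f(y) + \omega + \frac{\sqrt d\,(n+\omega)}{R}\normt{y'-y}$ by Cauchy--Schwarz. Letting $\omega \to 0$ and swapping $y$ and $y'$ yields $\abs{f(y) - f(y')} \le \frac{n\sqrt d}{R}\normt{y - y'}$, i.e., $f$ is $L$-Lipschitz on this ball with $L = n\sqrt d/R$.

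The main obstacle is the second step: controlling $\norm{\lambda}_2$ requires exhibiting a feasible (pseudo-)solution for the \emph{perturbed} target that has small objective value and still obeys the norm bound $\snormt{v} \le (2R+R/1000)^2$. The slack between the radius $R + R/1000$ of the candidate ball and the radius $2R + R/1000$ allowed inside \textsc{coarse-sdp} is exactly what makes the trivial point-mass solution admissible, so this step goes through cleanly; the remaining steps are the same bookkeeping as in Lemma~\ref{lem:meta-lipschitz}.
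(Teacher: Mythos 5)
Your proposal is correct and follows essentially the same route as the paper: duality/Lagrangian polynomial identity (Lemma~\ref{lem:coarse-sdp-polynomial-identity}), bounding $\normt{\lambda}$ by plugging in a point mass $(b,v)=(0,y\pm te_j)$, and transporting the same dual certificate to a nearby target. The only cosmetic difference is that the paper picks $t$ so that $\normt{y+te_j}=2R+R/1000$ (making the $\beta$-term vanish exactly) whereas you take $t=R$ and use slackness; and your parenthetical ``each SoS/constraint term is nonnegative'' is slightly imprecise for the $\alpha_i$ multipliers (which need not be SoS), but these terms vanish identically at the point mass because $b_i^2-b_i=0$, so the conclusion stands.
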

\begin{proof}
Let
\begin{align*}
g\paren{\alpha, \beta, \gamma, \sigma; y} &= 
\sum_{i=1}^n b_i
+
\sum_{i=1}^n \alpha_i\paren{b, v}\paren{b_i^2 - b_i}
+ 
\beta\paren{b, v} \paren{\paren{2R +R/1000}^2 - \snormt{v}} \\
&+
\sum_{i=1}^n \gamma_i b_i \paren{r^2 - \snormt{X_i - v}}
+
\sum_{i=1}^d \lambda_i \paren{v_i - y_i}
+
\sigma\paren{b, v} \mper
\end{align*}
Note that $g$ is a polynomial in $b$ and $v$. 
Then by \Cref{lem:coarse-sdp-polynomial-identity} we know that for any $y \in \R^d, \omega > 0$, there exists $\alpha, \beta, \gamma, \lambda, \sigma$ such that
\begin{equation}
\label{eq:lipschitzness-pol-identity-coarse}
f\paren{y} + \omega = g\paren{\alpha, \beta, \gamma, \lambda, \sigma; y} \mcom
\end{equation}
is a polynomial identity in $b$ and $v$.
First, we prove that the $\lambda_i$'s are bounded.
To do so we plug different values for $b$ and $v$ in \Cref{eq:lipschitzness-pol-identity-coarse}. Let $b = 0$, and $v = y + te_j$, such that $\normt{y + te_j} = 2R + R/1000$, and $t > 0$. Such $t$ exists and is larger than $R$ due to $\normt{y} \le R + R/1000$, and the triangle inequality. Plugging this $b$ and $v$ into \Cref{eq:lipschitzness-pol-identity-coarse}, gives us
\begin{equation*}
f\paren{y} + \omega \ge \lambda_j t \mper
\end{equation*}
Since $t \ge R$, $f\paren{y} + \omega \ge R\lambda_j$. Similarly, by plugging $b = 0, v = y - te_j$, we obtain $f\paren{y} + \omega \ge -R\lambda_j$, and therefore, $\abs{\lambda_j} \le \paren{f\paren{y} + \omega}/R$. Since $f\paren{y} \le n$, we obtain $\abs{\lambda_i} \le \paren{n + \omega}/R$, and hence 
\begin{equation*}
\normt{\lambda} \le \sqrt{d} \paren{n + \omega}/R \mper
\end{equation*}

Now that we have bounded $\lambda$, we can prove that $f$ is Lipschitz. To do so, suppose $y' = y + \delta u$, where $\normt{u} = 1$.
If for $y'$ we can construct a dual solution that obtains value $f\paren{y} + L\delta + h\paren{\omega}$, where $\lim_{\omega \to 0} h\paren{\omega} = 0$, we have proven that $f\paren{y'} \le f\paren{y} + L\delta$, and Lipschitzness will follow by symmetry.
To construct this dual solution, let $\paren{\alpha', \beta',\gamma',\lambda', \sigma'} = \paren{\alpha, \beta, \gamma, \lambda, \sigma}$. Then
\begin{align*}
g\paren{\alpha', \beta',\gamma', \lambda' \sigma'; y'} &= 
g\paren{\alpha, \beta, \gamma, \lambda, \sigma; y}
+
\delta \iprod{\lambda, u} \\
&\le
f\paren{y} + \omega + \delta\normt{\lambda} \\
&\le
f\paren{y} + \omega + \delta \sqrt{d} \paren{n + \omega} / R \\
&\le
f\paren{y} + \delta \sqrt{d}n/R + \omega \paren{1 + \delta \sqrt{d}/R} \mcom
\end{align*}
as desired. Therefore $f$ is $L = n\sqrt{d}/R$-Lipschitz.
\end{proof}

\subsubsection{Volume}
\label{sec:coarse-sdp-volume}
We show that the set of points with high score function is sufficiently dense. 
This essentially follows from the fact that every point which is close to the optimal solution also has high utility, and by an argument about the volume of $\ell_2$-balls.
\begin{lemma}[\textsc{coarse-sdp} volume]
\label{lem:coarse-sdp-volume}
Let $R>1000R^*>1$.
Suppose there exists a point $x_0 \in \R^d$ such that $| \{ i \, : \, \|X_i - x_0\|_2 \leq R^* \}| \geq \paren{1 - \alpha} n$ and $\|x_0\|_2 \leq R$. Let $\bbB$ denote the ball of radius $R +R/1000$ in $\R^d$. Then there exists a set $\cH^* \subseteq \bbB$, such that for every point $y$ in $\cH^*$, the value of \textsc{coarse-sdp} with $\paren{R \gets R, r \gets R/100}$ is larger than $\paren{1-\alpha} n$ and 
\begin{equation*}
\log \frac{\vol \paren{\bbB}}{\vol \paren{\cH^*}} \le d \log\paren{2000} \mper
\end{equation*}
\end{lemma}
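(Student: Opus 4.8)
The plan is to take $\cH^*$ to be the Euclidean ball of radius $R/1000$ centered at $x_0$, and to certify that every $y \in \cH^*$ attains \textsc{coarse-sdp} value at least $\paren{1-\alpha}n$ by exhibiting an explicit feasible pseudo-distribution — namely a point mass. Concretely, for a fixed $y \in \cH^*$ I would let $\pE$ be evaluation at the single point given by $v = y$ and $b_i = \mathbbm{1}\brac{\snormt{X_i - y} \le \paren{R/100}^2}$ for each $i \in [n]$. Evaluation at a point is automatically a valid degree-$4$ pseudo-expectation, since $\pE 1 = 1$ and $\pE p^2 = p(v,b)^2 \ge 0$ for every polynomial $p$, so the only thing to check is feasibility for \textsc{coarse-sdp} with $R \gets R$, $r \gets R/100$.

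Feasibility is routine bookkeeping: the constraint $b_i^2 = b_i$ holds because the $b_i$ are $0/1$-valued; the constraint $\snormt{v} \le \paren{2R + R/1000}^2$ holds because $\normt{y} \le \normt{x_0} + R/1000 \le R + R/1000 \le 2R + R/1000$; and $b_i \snormt{X_i - v} \le b_i \paren{R/100}^2$ holds pointwise by the choice of $b_i$ (it reads $0 \le 0$ when $b_i = 0$ and $\snormt{X_i - y} \le \paren{R/100}^2$ when $b_i = 1$). Since all these are honest (in)equalities at the chosen point, multiplying by any square and applying $\pE$ preserves signs, so $\pE$ ``satisfies'' the constraints in the pseudo-expectation sense; and $\pE v = y$ by construction. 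Thus $\pE$ is feasible with objective value exactly $\card{\set{i : \normt{X_i - y} \le R/100}}$.

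Next I would lower bound this count. Writing $G = \set{i : \normt{X_i - x_0} \le R^*}$, which has $\card G \ge \paren{1-\alpha}n$ by hypothesis, the triangle inequality gives $\normt{X_i - y} \le \normt{X_i - x_0} + \normt{x_0 - y} \le R^* + R/1000$ for every $i \in G$; since $R > 1000 R^*$ forces $R^* < R/1000$, this is at most $2R/1000 < R/100$. Hence $G \subseteq \set{i : \normt{X_i - y} \le R/100}$, so the value of \textsc{coarse-sdp} at $y$ is at least $\card G \ge \paren{1-\alpha}n$, as required. Finally, for the volume bound: $\normt{x_0} \le R$ implies every point of $\cH^*$ has norm at most $R + R/1000$, so $\cH^* \subseteq \bbB$; both are Euclidean balls in $\R^d$, of radii $R + R/1000$ and $R/1000$, so $\vol\paren{\bbB}/\vol\paren{\cH^*} = \paren{\paren{R + R/1000}/\paren{R/1000}}^d = 1001^d$, whence $\log\paren{\vol\paren{\bbB}/\vol\paren{\cH^*}} = d\log 1001 \le d\log 2000$.

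I do not anticipate a genuine obstacle; the argument is a ``trivial feasible solution plus volume of balls'' computation. The only place demanding a little care is verifying that the Dirac functional genuinely satisfies the \textsc{coarse-sdp} constraints as a pseudo-expectation rather than merely at a point, but as noted this is immediate, and the three competing requirements on the radius of $\cH^*$ — it must lie inside $\bbB$ (needs radius $\le R/1000$), keep the good set within distance $R/100$ (needs radius $\le R/100 - R^*$), and give a volume ratio below $2000^d$ (needs radius $\gtrsim R/2000$) — are comfortably met simultaneously by the single choice $R/1000$.
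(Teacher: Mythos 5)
Your proof is correct and follows the same approach as the paper: choose $\cH^*$ to be the ball of radius $R/1000$ centered at $x_0$, use the triangle inequality to show every $y \in \cH^*$ is within $R/100$ of all of $G$, exhibit the integer solution as a witness of feasibility for the relaxation, and compute the volume ratio $1001^d \le 2000^d$. The only cosmetic difference is that you spell out the feasible pseudo-expectation explicitly as a Dirac evaluation and verify each constraint, whereas the paper simply invokes that \textsc{coarse-sdp} is a relaxation of the polynomial optimization problem and so any feasible integer point of the latter lower-bounds the former; both are the same argument.
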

\begin{proof}
Let $\cH^*$ be the ball of radius $R/1000$ centered at $x_0$. 
Then $\cH^* \subseteq \bbB$, and
\begin{equation*}
\log \frac{\vol{\bbB}}{\vol\paren{\cH^*}} = \log \Paren{\frac{R +R/1000}{R/1000}}^d \le d\log\paren{2000} \mper
\end{equation*}
It remains to prove that every point in $\cH^*$ has high \textsc{coarse-sdp} value.
Suppose $y$ is a point in $\cH^*$.
We know that \textsc{coarse-sdp} is a relaxation of the polynomial optimization problem; therefore, if we propose a solution to the polynomial optimization problem with value $\paren{1-\alpha} n$,  we have proven that \textsc{coarse-sdp} has value at least $\paren{1-\alpha} n$.
Hence, it remains to show that there exists a set of size at least $\paren{1-\alpha}n$ from the $X_i$'s such that $\normt{X_i - y} \le R/100$. Consider the set $G = \Set{i \; : \; \normt{X_i - x_0} \le R^*}$. We know that $\abs{G} \ge \paren{1-\alpha}n$.
Moreover,
\begin{equation*}
\forall i \in G: \; \normt{X_i - y} \le \normt{X_i - x_0} + \normt{x_0  -y}  \le R^* + R/1000 \le 2R/1000 \le R/100 \mcom
\end{equation*}
as desired.
\end{proof}
\subsubsection{Concavity}
\label{sec:coarse-sdp-concavity}
It is not hard to see that the score function is concave.
\begin{corollary}[\textsc{coarse-sdp} concavity]
\label{cor:coarse-sdp-concavity}
	Let $f\paren{y}$ denote the value of \textsc{coarse-sdp} for $y$. Then $f$ is concave.
\end{corollary}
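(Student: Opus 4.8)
The plan is to mimic the argument for Lemma~\ref{lem:meta-concave}: the value $f(y)$ is the optimum of a \emph{linear} objective over a \emph{convex} set of pseudo-expectations whose only $y$-dependent constraint, $\pE v = y$, is affine in $y$, so $f$ is concave by a standard averaging argument. Concretely, fix the data $X_1,\dots,X_n$ and radii $R,r$, take two points $y_0,y_1 \in \R^d$ and $\theta \in [0,1]$, and set $y = \theta y_0 + (1-\theta) y_1$. Let $\pE_0$ (resp.\ $\pE_1$) be a feasible solution to \textsc{coarse-sdp} for $y_0$ (resp.\ $y_1$) achieving objective value $f(y_0)$ (resp.\ $f(y_1)$); if the suprema are not attained one runs the same argument with solutions within $\omega$ of the supremum and lets $\omega \to 0$, which is legitimate since $f$ is finite by boundedness of the feasible region (\Cref{lem:boundedness-coarse-optimization-problem}). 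Define $\pE = \theta \pE_0 + (1-\theta)\pE_1$, a linear functional on $\R[b,v]_{\le 4}$.

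The first step is to check $\pE$ is a degree-$4$ pseudo-expectation: $\pE 1 = 1$ since $\pE_0 1 = \pE_1 1 = 1$, and for every degree-$2$ polynomial $q$ one has $\pE q^2 = \theta\,\pE_0 q^2 + (1-\theta)\,\pE_1 q^2 \ge 0$ (equivalently, the degree-$4$ moment matrix of $\pE$ is the corresponding convex combination of two PSD matrices, hence PSD). The second step is to check that $\pE$ inherits every constraint of \textsc{coarse-sdp}: each constraint ``$\pE$ satisfies $p \ge 0$'' unfolds into a family of linear inequalities $\pE\, p\, q^2 \ge 0$ ranging over squares $q$ of bounded degree, and linear inequalities are preserved under convex combinations; likewise $\pE v = \theta \pE_0 v + (1-\theta)\pE_1 v = \theta y_0 + (1-\theta) y_1 = y$. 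Hence $\pE$ is feasible for \textsc{coarse-sdp} with parameter $y$. The third step is to evaluate the objective: by linearity $\pE \sum_i b_i = \theta\, \pE_0 \sum_i b_i + (1-\theta)\, \pE_1 \sum_i b_i = \theta f(y_0) + (1-\theta) f(y_1)$, so $f(y) \ge \theta f(y_0) + (1-\theta) f(y_1)$, which is exactly concavity.

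There is essentially no serious obstacle; the only point requiring a little care is the case where the supremum defining $f(y_0)$ or $f(y_1)$ is not attained, which is handled by the $\omega \to 0$ limiting argument noted above, and this in turn relies only on $f$ being finite, guaranteed by \Cref{lem:boundedness-coarse-optimization-problem}.
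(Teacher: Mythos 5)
Your proposal is correct and takes essentially the same approach as the paper: the paper cites \Cref{lem:convacity-maximization} (concavity of the value function of a linear objective maximized over a convex set subject to an affine constraint $Ax=y$), and your argument is exactly that lemma's proof instantiated at the level of pseudoexpectations, matching how the paper proves the analogous \Cref{lem:meta-concave}. Your added care about whether the supremum is attained is fine but unnecessary here, since the feasible set is closed and bounded (\Cref{lem:boundedness-coarse-optimization-problem}), so the maximum is attained.
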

\begin{proof}
	Direct result of \Cref{lem:convacity-maximization}.
\end{proof}

\subsubsection{Accuracy Analysis of the Exponential Mechanism}
\label{sec:coarse-sdp-accuracy}
Given that we have established bounded sensitivity (\Cref{lem:coarse-sdp-sensitivty}) and that the set of good solutions is sufficiently dense (\Cref{lem:coarse-sdp-volume}), we can apply the volume-based exponential mechanism (\Cref{thm:volume-based-exponential-mechanism}).
\begin{lemma}[guarantee of the exponential mechanism for \textsc{coarse-sdp}]
\label{lem:coarse-sdp-accuracy}
Let $f\paren{y}$ denote the value of \textsc{coarse-sdp} for $y$.
The exponential mechanism in \Cref{alg:sos-exponenital-coarse} with score function $f$, and privacy budget $\epsilon/m$ returns $y$ such that
\begin{equation*}
f\paren{y} \ge 
\paren{1-\alpha} n - \frac{2m}{\epsilon} \Paren{ d \log\paren{2000} + \log\paren{m/\beta}} \mcom
\end{equation*}
with probability $1 - \beta/m$.
Specifically, if
\begin{equation*}
n \ge \frac{2m}{\alpha \epsilon}\Paren{d \log\Paren{2000} + \log\paren{m/\beta}} \mcom
\end{equation*}
then
\begin{equation*}
f\paren{y} \ge \paren{1-2\alpha} \mcom
\end{equation*}
with probability $1-\beta/m$.
We say the exponential mechanism has been successful if the above inequality holds. 
\end{lemma}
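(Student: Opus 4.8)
The plan is to invoke the volume-based exponential mechanism guarantee (\Cref{thm:volume-based-exponential-mechanism}) directly, feeding it the two structural facts already established for \textsc{coarse-sdp}: bounded sensitivity and density of near-optimal solutions. First I would recall from \Cref{lem:coarse-sdp-sensitivty} that the score function $f = \textsc{coarse-sdp}$ has sensitivity $\Delta = 1$ with respect to the $X_i$'s, so the instantiation inside \Cref{alg:sos-exponenital-coarse} samples $y$ with density proportional to $\exp\Paren{\tfrac{\epsilon}{2m} f(X,y)}$ over the ball $\bbB$.

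Next I would apply \Cref{lem:coarse-sdp-volume}, using the standing hypothesis that there is an $x_0$ with $\normt{x_0} \le R$ and $\abs{\set{i : \normt{X_i - x_0} \le R^*}} \ge (1-\alpha) n$ (this is what holds at the $t$-th iteration of \Cref{alg:sos-exponenital-coarse} after re-centering, by the inductive argument of \Cref{prop:acc-assume-exp}). This gives a set $\cH^* \subseteq \bbB$ all of whose points have $f$-value at least $(1-\alpha) n$ and with $\log\Paren{\vol(\bbB)/\vol(\cH^*)} \le d\log(2000)$. Thus in the language of \Cref{thm:volume-based-exponential-mechanism} we may take $\OPT(X) = (1-\alpha) n$ as the lower bound on the score attained over $\cH^*$, $\cH = \bbB$, and $\Delta = 1$.

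Plugging these into \Cref{thm:volume-based-exponential-mechanism} with privacy parameter $\epsilon/m$ gives, for every $t > 0$,
\[
\Pr\Brac{f(y) \le (1-\alpha) n - \frac{2m}{\epsilon}\Paren{\ln\frac{\vol(\bbB)}{\vol(\cH^*)} + t}} \le \exp(-t)\mper
\]
Choosing $t = \ln(m/\beta)$ makes the failure probability $\beta/m$, and bounding $\ln(\vol(\bbB)/\vol(\cH^*)) \le d\log(2000)$ yields $f(y) \ge (1-\alpha) n - \frac{2m}{\epsilon}\Paren{d\log(2000) + \log(m/\beta)}$ with probability $1-\beta/m$, which is the first claim. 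For the second claim, the hypothesis $n \ge \frac{2m}{\alpha\epsilon}\Paren{d\log(2000) + \log(m/\beta)}$ is precisely what forces the subtracted term to be at most $\alpha n$, so $f(y) \ge (1-\alpha) n - \alpha n = (1-2\alpha) n$ (the ``$(1-2\alpha)$'' in the statement should read ``$(1-2\alpha)n$'') with probability $1-\beta/m$.

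There is no real obstacle here; the argument is pure bookkeeping. The two points that need care are (i) that the privacy parameter handed to \Cref{thm:volume-based-exponential-mechanism} is $\epsilon/m$, not $\epsilon$, which is what produces the factor $\tfrac{2m}{\epsilon}$ — this is consistent with the claimed $\epsilon$-DP of the full algorithm via basic composition over the $m$ iterations (\Cref{cor:coarse-sdp-privacy}); and (ii) making sure the hypotheses of \Cref{lem:coarse-sdp-volume}, in particular $\normt{x_0} \le R$ for $R$ the current radius $R_t$, are actually in force, which is supplied by the inductive accuracy analysis rather than proved here. One should also read every $\log$ as a natural logarithm so that the bound of \Cref{lem:coarse-sdp-volume} composes cleanly with the $\ln$ appearing in \Cref{thm:volume-based-exponential-mechanism}.
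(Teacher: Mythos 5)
Your proof is correct and follows essentially the same route the paper takes: the lemma is presented in the paper as a direct consequence of the sensitivity bound (\Cref{lem:coarse-sdp-sensitivty}), the volume bound (\Cref{lem:coarse-sdp-volume}), and the volume-based exponential mechanism guarantee (\Cref{thm:volume-based-exponential-mechanism}), which is exactly what you instantiate. Your flag that the displayed conclusion should read $(1-2\alpha)n$ rather than $(1-2\alpha)$ is also correct.
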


\subsubsection{Runtime Analysis of the Exponential Mechanism}
\label{subsubsec:sos-coarse-runtime} 
At this point, we have established bounded sensitivity (\Cref{lem:coarse-sdp-sensitivty}), Lipschitzness (\Cref{lem:coarse-sdp-lipschitzness}), and concavity (\Cref{cor:coarse-sdp-concavity}).
These properties suffice to apply the efficient sampling algorithm of~\Cref{thm:efficient-sampling}. 
\begin{lemma}[runtime of the exponential mechanism]
\label{lem:sos-coarse-runtime}
There exists a sampling scheme for the exponential mechanism in \Cref{alg:sos-exponenital-coarse}
that takes time $\poly\paren{n, d, \log R, 1/\epsilon}$.
\end{lemma}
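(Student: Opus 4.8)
The plan is to invoke the private log-concave sampler of \Cref{thm:efficient-sampling} once for each of the $m = \floor{\log_5 \paren{R/1000R^*}} = O(\log R)$ iterations of \Cref{alg:sos-exponenital-coarse}, and then sum the running times. Fix an iteration $t$. The convex body over which the exponential mechanism samples is the Euclidean ball $\bbB_t$ of radius $R_t + R_t/1000$ in $\R^d$, for which membership and projection oracles are trivial to implement in time $\poly(d)$, and which has $\diam(\bbB_t) = O(R_t) = O(R)$. The score function is $f_t(y) = \textsc{coarse-sdp}\paren{R \gets R_t, r \gets R_t/100}$ evaluated at $y$; by \Cref{cor:coarse-sdp-concavity} it is concave on $\bbB_t$, and by \Cref{lem:coarse-sdp-lipschitzness} it is $L_t$-Lipschitz on $\bbB_t$ with $L_t = n\sqrt{d}/R_t$. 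Consequently the target density, proportional to $\exp\paren{\tfrac{\e}{2m} f_t(y)}$ on $\bbB_t$, is log-concave, and the quantities controlling the running time in \Cref{thm:efficient-sampling} are all polynomially bounded: $L_t \diam(\bbB_t) = O(n\sqrt d)$ (note the dependence on $R_t$ cancels), $\log \diam(\bbB_t) = O(\log R)$, and the privacy parameter is $\e/m$.

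It remains to supply the evaluation oracle for $f_t$ required by \Cref{thm:efficient-sampling}. This amounts to solving the semidefinite program \textsc{coarse-sdp} with the additional linear constraint $\pE v = y$: it is an SDP over degree-$4$ pseudo-distributions in the $d+n$ indeterminates $v_1,\ldots,v_d,b_1,\ldots,b_n$, hence over a PSD matrix of dimension $\poly(n,d)$ with $\poly(n,d)$ linear constraints, so it can be solved to any desired number of bits of accuracy in time $\poly(n,d)$ times that number of bits, by the ellipsoid method (or interior-point methods). The program is always feasible for $y \in \bbB_t$ — the deterministic pseudo-distribution given by $b = 0$, $v = y$ satisfies every constraint — so $f_t$ is well defined, nonnegative, and bounded above by $n$ on $\bbB_t$. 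As in the remark on numerical issues preceding \Cref{lem:meta-concave}, it suffices to solve the SDP to enough bits of precision that the sampler of \Cref{thm:efficient-sampling} — whose own running time bounds the number of bits of an oracle answer it can even inspect — cannot distinguish $f_t(y)$ from $f_t(y) \pm 2^{-\poly(n,d,\log R,1/\e)}$; the sampler's output guarantee is unaffected by such a perturbation.

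Putting the pieces together, \Cref{thm:efficient-sampling} implements the $t$-th exponential mechanism in time $\poly\paren{d, L_t\diam(\bbB_t), m/\e, \log \diam(\bbB_t)} = \poly(n,d,\log R,1/\e)$, making at most that many oracle calls, each answered by an SDP solve costing $\poly(n,d,\log R,1/\e)$. Summing over the $m = O(\log R)$ iterations leaves the total running time at $\poly(n,d,\log R,1/\e)$, which proves the lemma. \textbf{The main obstacle} is the bookkeeping around finite-precision arithmetic: one must check that the necessarily approximate SDP solve, combined with the internal rounding of the MCMC-based sampler, neither leaks privacy nor spoils accuracy, which is handled exactly as in the proof of \Cref{thm:metatheorem} by carrying $\poly$-many extra bits of precision throughout. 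The genuinely mathematical ingredients — concavity, the $n\sqrt d/R$ Lipschitz bound, bounded sensitivity, and feasibility of the SDP on $\bbB_t$ — are already established in the preceding subsections.
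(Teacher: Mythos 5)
Your proposal is correct and follows essentially the same route the paper takes: concavity from \Cref{cor:coarse-sdp-concavity}, the $L = n\sqrt{d}/R_t$ Lipschitz bound from \Cref{lem:coarse-sdp-lipschitzness} so that $L\cdot\diam(\cC)=O(n\sqrt d)$ (the dependence on $R_t$ cancels), bounded sensitivity, and then the \cite{BassilyST14} sampler of \Cref{thm:efficient-sampling} per round with budget $\e/m$. The paper leaves the feasibility of the SDP at every $y\in\cC$, the cost of the evaluation oracle, and the finite-precision bookkeeping implicit; you make them explicit (and correctly: the deterministic pseudo-distribution at $(b,v)=(0,y)$ is feasible, $0\le f_t\le n$, and the precision argument matches the remark preceding \Cref{lem:meta-concave}), which is welcome detail but not a different approach.
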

Note that, since the function is $L = n\sqrt{d}/R$-Lipschitz and since we are running the algorithm over the ball of radius $R+R/1000$ (i.e., $\diam(\cC) \leq R + R/1000$), we have that $L \diam(\cC) =O(n\sqrt{d})$ thus making the running time polynomial in $n$ and $d$, but not in $R$.

\subsection{Robustness}
To conclude, we note that both of our coarse estimation algorithms are robust to contamination.
\begin{remark}[Robustness of Coarse Estimation]
	\label{rem:coarse-robustness}
	For both coarse estimation algorithms (i.e., the coordinate-wise approach, \Cref{alg:1d-coarse-estimation}, and the SoS exponential mechanism approach, \Cref{alg:sos-exponenital-coarse}), the $\Score$ function has sensitivity $1$. Moreover, there exists a constant $\eta_0$ (say $\eta_0 \le 0.01$), such that if $\Score$ is perturbed by $\eta n$, $\eta \le \eta_0$, all of our arguments still hold. Therefore, there exists a universal constant $\eta_0$ for which our algorithm has the same guarantees if $\eta$-fraction of the samples are corrupted, where $\eta \le \eta_0$.
\end{remark}

\newcommand{\tmu}{\widetilde{\mu}}
\newcommand{\SDPVAL}{\textsc{SDP-VAL}}
\newcommand{\QUADVAL}{\textsc{QUAD-VAL}}
\newcommand{\QUAD}{\textsc{QUAD}}
\newcommand\numberthis{\addtocounter{equation}{1}\tag{\theequation}}
\section{Fine Estimation}
\label{sec:fine}
From \Cref{sec:coarse}, we know how to find an estimate of the mean up to distance $\cO\paren{\sqrt{d}}$.
In this section we prove the following theorem. Putting the following theorem and our theorem for coarse estimation (\Cref{thm:coarse-estimation}) together, gives us an efficient algorithm for privately estimating the mean of a $d$-variate distribution with bounded mean and covariance up to error $\alpha$.

\begin{theorem}[high dimensional fine mean estimation via SoS exponential mechanism]
	\label{thm:high-dimensional-fine-mean-estimation}	
	Suppose $\cD$ is a $d$-dimensional distribution with mean $\mu$ and covariance $\Sigma$, where $\normt{\Sigma} \le 1$. Moreover, suppose that an initial estimate $\tmu_0$ is given such that $\normt{\mu -\tmu_0} \le \cO\paren{\sqrt{d}}$. Then there exists a polynomial time $\epsilon$-DP algorithm \textsc{fine-estimation} that takes $n$ samples from $\cD$ and outputs $\tmu^*$ such that $\normt{\tmu^* - \mu} \le \alpha$, with probability $1-\beta$, and sample complexity
	\begin{equation*}
		n = \tilde{\cO} \Paren{\frac{d + \log\paren{1/\beta}}{\alpha^2\epsilon}} \mper
	\end{equation*}
	More specifically, $\tilde{\cO}$ is hiding the following lower order logarithmic factors: $\log d$, and $\log\paren{1/\alpha}$
\end{theorem}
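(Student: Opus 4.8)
The plan is to implement the two-level strategy from \Cref{sec:techniques}: a bucketing reduction from $\alpha$-accuracy to $\cO(1)$-accuracy, followed by a private iterative-refinement (``gradient descent'') scheme for the $\cO(1)$-accuracy problem whose individual steps are instances of the SoS exponential mechanism of \Cref{thm:metatheorem}. First I would partition the $n$ samples into $k = \Theta(n\alpha^2)$ buckets of size $\Theta(1/\alpha^2)$ and replace each bucket by its empirical mean $Y_j$, so that $\E Y_j = \mu$ and $\Cov(Y_j) \preceq \alpha^2 I$; since altering one $X_i$ alters exactly one $Y_j$, any $\epsilon$-DP procedure on $(Y_1,\dots,Y_k)$ is $\epsilon$-DP in $(X_1,\dots,X_n)$. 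After rescaling by $1/\alpha$ it suffices to estimate, to error $\cO(1)$ from $k$ samples, the mean of a distribution with covariance $\preceq I$ given an initial point at distance $\cO(\sqrt d/\alpha)$. A Chebyshev/spectral argument (as in \Cref{lem:appropriate-radius-around-the-mean}) supplies the deterministic condition I will work under: with probability $1-\beta/2$, at least $0.99k$ of the $Y_j$ lie within $\cO(1)$ of $\mu$ and $\tfrac1k\sum_j (Y_j-\mu)(Y_j-\mu)^\top \preceq \cO(1)\cdot I$, as soon as $k \gg d + \log(1/\beta)$.

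\textbf{Iterative refinement.} Starting from $\hat\mu_0 = \tmu_0$, I would produce $\hat\mu_0,\dots,\hat\mu_T$ with $T = \cO(\log d + \log(1/\alpha))$ where, at step $t$, we first privately estimate $\lambda_t := \normt{\hat\mu_t - \mu}$ up to a constant factor (halting and outputting $\hat\mu_t$ once this estimate is $\cO(1)$), and then privately select a unit vector $v_t$ with $\iprod{v_t, \mu - \hat\mu_t} \ge \Omega(1)\cdot\lambda_t$ and set $\hat\mu_{t+1} = \hat\mu_t + r_t v_t$ with step size $r_t = \Theta(\lambda_t)$. The one-step bound $\normt{\hat\mu_{t+1}-\mu}^2 \le (1-\Omega(1))\normt{\hat\mu_t-\mu}^2$ then gives error $\cO(1)$ (in rescaled units, i.e.\ $\alpha$ in the original) after $\cO(\log(\sqrt d/\alpha))$ steps. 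The distance estimate is obtained by running an exponential mechanism over the $\cO(\log d + \log(1/\alpha))$ dyadic candidate scales, scored by a bounded-sensitivity SDP value $\SDPVAL$ which is $\approx k$ for scales below $\lambda_t$ and drops above it. Running each of the $T$ steps with privacy budget $\epsilon/(2T)$ and failure probability $\beta/(2T)$, basic composition yields $\epsilon$-DP and a union bound controls the failure probability; the polylogarithmic $T$-overhead (absorbed in $\tilde\cO$) forces $k \gg T(d+\log(1/\beta))/\epsilon$ and hence $n = \tilde\cO((d+\log(1/\beta))/(\alpha^2\epsilon))$.

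\textbf{The private gradient step.} The heart of the argument is the gradient-selection step. Following \Cref{fact:outliers}, a good $v$ is one for which many $X_i$ satisfy $\iprod{X_i - \hat\mu_t, v} \ge c\cdot r_t$; I would take as score function $s(X,v_0)$ the degree-$2$ SoS relaxation of ``maximize over unit $v$ the number of such $i$'', augmented with the linear constraint $\pE v = v_0$, and sample $v_0$ with density $\propto \exp(\tfrac{\epsilon}{2T}\,s(X,v_0))$ over the unit ball via the private log-concave sampler of \Cref{thm:efficient-sampling}. The analysis then needs four ingredients (cf.\ \Cref{thm:metatheorem}, or concretely \Cref{thm:volume-based-exponential-mechanism} together with \Cref{thm:efficient-sampling}): \emph{bounded sensitivity} of $s$, via the ``nulling'' construction (zero out every pseudo-moment containing $b_i$) exactly as in \Cref{lem:coarse-sdp-sensitivty}; \emph{utility}, namely that under the deterministic condition any feasible $\pE$ with objective $\ge 0.8k$ satisfies $\iprod{\pE v, \mu - \hat\mu_t} \ge \Omega(1)\lambda_t$, which is an SoS proof adapting the robust-mean-estimation arguments of \cite{CherapanamjeriFB19, kothari2017outlier} and crucially uses the bounded empirical second moment to control $\sum_i \iprod{X_i-\mu,v}^2$; \emph{volume}, namely that $\{v_0 : s(X,v_0)\ge 0.8k\}$ contains a Euclidean ball of radius $\Omega(1)$ about $(\mu-\hat\mu_t)/\lambda_t$ and hence a $2^{-\cO(d)}$ fraction of the unit ball; and \emph{concavity and Lipschitzness} of $s(X,\cdot)$, where concavity is immediate from linearity of the objective and convexity of the feasible set (cf.\ \Cref{lem:meta-concave}) and $\poly(d,k)$-Lipschitzness follows from the two-step dual argument of \Cref{lem:coarse-sdp-lipschitzness} (bound the dual multiplier of $\pE v = v_0$ using boundedness of the feasible region, then transport a dual certificate from $v_0$ to $v_0 + \Delta$ at cost $\norm{\Delta}\cdot\poly(d,k)$).

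\textbf{Main obstacle.} I expect the hard step to be utility, specifically producing an SoS proof that the \emph{linear} rounding $\tilde v = \pE v$ — rather than the nontrivial rounding procedures used in prior robust-statistics algorithms — already yields a good gradient direction. Linearity is non-negotiable here because the constraint $\pE v = v_0$ must depend linearly on $v_0$ for the sampling problem to stay log-concave, so this stronger rounding analysis is essential and cannot simply be inherited from \cite{CherapanamjeriFB19, kothari2017outlier}. A secondary technical point is making the dual/Lipschitzness argument uniform in $\hat\mu_t$ and the privately estimated threshold $r_t$, including checking the boundedness needed for SoS duality. Robustness to an $\eta$-fraction of corruptions should then come essentially for free, as in \Cref{rem:coarse-robustness}: the deterministic condition and all SoS proofs degrade only by $\cO(\eta)$ in the relevant counts, which the utility analysis tolerates, giving the stated $\alpha + \cO(\sqrt\eta)$ bound.
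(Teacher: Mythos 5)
Your proposal follows essentially the same route the paper takes: bucket into groups of size $\Theta(1/\alpha^2)$, take bucket means $Z_j$, then run an iterative-refinement loop of $T = \cO(\log d + \log(1/\alpha))$ steps (halt test, distance estimate, gradient direction), with the gradient step implemented as an exponential mechanism whose score is the degree-$2$ SoS relaxation of the outlier-count program augmented with the linear constraint $\pE v = v_0$, sampled privately via the log-concave sampler. You also correctly identify the key technical novelty — that the rounding must be \emph{linear} (take $\tilde v = \pE v$) so the sampling stays log-concave, which is exactly what \Cref{thm:accuracy-ng} establishes — and correctly anticipate the sensitivity-by-nulling, dual-certificate Lipschitzness, concavity, and volume ingredients.

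The one place your sketch diverges and is underspecified is distance estimation. You propose an exponential mechanism over $\cO(\log d + \log(1/\alpha))$ dyadic scales ``scored by $\SDPVAL$,'' but $r \mapsto \SDP(\tmu, r, Z)$ is monotone decreasing in $r$ (\Cref{lem:sdp-decreasing-r}), so the raw SDP value as a score would concentrate mass on the smallest scale rather than at the knee near $\lambda_t$. The paper instead runs a noisy \emph{binary search} (\Cref{thm:private-binary-search}, applied in \Cref{lem:sdpval-score-search}) for an $r$ at which the SDP value sits between $0.91k$ and $0.94k$, which, by monotonicity plus the two-sided bounds of \Cref{lem:lower-bound-dist-est} and \Cref{lem:upper-bound-distance-estimation}, pins down $r$ to within a constant factor of $\normt{\tmu - \mu}$. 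Your exponential-mechanism variant could be made to work by designing a score that peaks at the knee (e.g., using differences of SDP values at adjacent scales), but as written the score is not the right object; the paper's binary search sidesteps this cleanly. Relatedly, the paper's halt test is a separate Laplace-noised check that $\SDP(\tmu, N r^*, Z)$ is small (\Cref{thm:halt-estimation}), rather than thresholding the distance estimate — this ordering matters a little because the distance-estimation guarantees are only stated under the hypothesis $\normt{\mu-\tmu} \geq M(5M+1)r^*$, which the halt test supplies.

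Finally, note that the actual proof in the paper is a one-line assembly of \Cref{thm:fine-estimation} (which encapsulates the gradient-descent loop) and \Cref{lem:well-conditonedness-sdp} (which shows \Cref{ass:when-at-mu-sdp-str} holds w.h.p.\ for $k \gtrsim d + \log(1/\beta)$). Your sketch inlines that decomposition, which is fine, but you should be aware that the iteration count $T = \cO(\log(dm))$ and the per-round privacy/failure splitting is where the $\log d$ and $\log(1/\alpha)$ factors hidden in $\tilde\cO$ come from, and that $n = mk$ with $m = 1/\alpha^2$ is exactly what converts $k = \tilde\cO((d + \log(1/\beta))/\epsilon)$ into the stated $n = \tilde\cO((d + \log(1/\beta))/(\alpha^2\epsilon))$.
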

\begin{proof}
	Applying \Cref{thm:fine-estimation} with $m = 1/\alpha^2$, and noting that \Cref{ass:when-at-mu-sdp-str} holds with probability $1-\beta$, due to \Cref{lem:well-conditonedness-sdp}.
\end{proof}

Suppose $\cD$ is a $d$-dimensional distribution with mean $\mu$, and covariance $\Sigma$ such that $\normt{\Sigma} \le 1$. Our goal is to privately estimate the mean of this distribution up to accuracy $\alpha$ with probability $1-\beta$.
Moreover, suppose that we have access to an initial estimate of the mean such that 
$\normt{\mu - \tmu_0} \le \cO\paren{\sqrt{d}}$.
We build our algorithm on top of the recent work in efficient high-dimensional mean estimation where the high-dimensional median of means technique is used \cite{CherapanamjeriFB19, Hopkins20, LugosiM19a}.
As opposed to non-biased estimators that have $\cO\paren{\frac{d}{\beta\alpha^2}}$ sample complexity, these estimators achieve the $\cO\paren{\frac{d + \log\paren{1/\beta}}{\alpha^2}}$, or the \emph{sub-Gaussian rate} sample complexity.

In this technique, we take $n = mk$ samples, and divide them into $k$ bins $\Set{\cB}_{i=1}^k$ each having $m$ elements. Then we compute the means of each of these bins and name them $Z_i$'s.
A key observation then is that if $k \ge \max\set{d, \Omega\paren{\log\paren{1/\beta}}}$, then with probability $1-\beta$ over the randomness of the samples, the true mean, $\mu$ is $\paren{0.99, \cO\paren{\sqrt{1/m}}}$-central. 
\begin{definition}[$\paren{\rho, r}$-centrality \cite{LugosiM19a, Hopkins20}]
	Suppose $Z_1, \dots, Z_k \in \R^d$. A point $x \in \R^d$ is $\Paren{\rho, r}$-central, if and only if for all unit vectors $v$ we have 
	\begin{equation*}
		\Abs{\Set{i \suchthat \iprod{Z_i - x, v} \le r} } \ge \rho k
	\end{equation*}
\end{definition}
In other words, with high probability, for any direction $v$, if we center $Z_i$'s at the point $\mu$, and project them onto the direction $v$, a large fraction of the projections will fall in a radius of $\cO\paren{\sqrt{1/m}}$ of the mean. This can be viewed as $\mu$ being a median in every direction. 

Now note that if we can find some other point $\wh{\mu}$ such that $\normt{\wh{\mu} - \mu} \le \cO\paren{\sqrt{1/m}}$, by setting $m = 1/\alpha^2$ we may obtain the sub-Gaussian rate as desired.
Suppose there exists another point $\tmu$ which is also $\paren{0.1, \cO\paren{\sqrt{1/m}}}$-central, then by choosing $v$ along the direction of $\mu - \tmu$, and applying the pigeonhole principle and the triangle inequality over the projections of $Z_i$'s in this direction, we may conclude that $\normt{\tmu - \mu} \le \cO\paren{\sqrt{1/m}}$. 
Another useful observation is that if $\normt{\tmu - \mu} \le \cO\Paren{\sqrt{\frac{1}{m}}}$, then $\tmu$ is a central point. Therefore, we arrive at the following conclusion: assuming $\mu$ is central; then a point $\tmu$ is central, if and only if it is close to $\mu$. This is the key observation that helps us characterize the points that are close to $\mu$. One can think of this centrality property as the ``median" part of the ``median of means" approach.

Therefore, answering the following question, gives us an algorithm for estimating the mean with sub-Gaussian rates.

\begin{quote}
\textbf{Main Problem}
Suppose $Z_1, \dots, Z_k \in \R^d$ are given, and there exists a point $\mu \in \R^d$ which is a central point for $Z_1, \dots, Z_k$.
How can we find a central point, both privately and efficiently?
\end{quote}

Suppose we did not care about either privacy or efficiency. Similar to \cite{CherapanamjeriFB19}, our goal is to start off from an estimate $\tmu$ and move towards a center $\mu$. If we were close to $\mu$, from the discussion above, $\tmu$ would have been a center. Therefore, one way to check whether or not we have reached our destination is by checking if $\tmu$ is central.
What if $\tmu$ were far from the center $\mu$ --- which direction would move us towards $\mu$?
If $\tmu$ is far from the center, then $\tmu$ is not central and moreover, there exists some direction $v$ such that
\begin{equation*}
	\Abs{\Set{i \suchthat\iprod{Z_i - \tmu, v} \ge 0.99 \normt{\mu - \tmu}}} \ge 0.9 k \mper
\end{equation*}
This direction $v$ certifies that $\tmu$ is not central. Now a key observation is that if $v$ certifies non-centrality, then $v$ points towards $\mu$, i.e.
\begin{equation*}
	\Iprod{v, \frac{\mu - \tmu}{\normt{\mu - \tmu}}} \ge \Omega\Paren{1} \mper
\end{equation*}
Now, if we move in the direction of $v$ (which certifies non-centrality) with step-size proportional to the distance we are from $\mu$ (i.e., $\normt{\tmu - \mu}$), we can move towards $\mu$ at a linear rate. Doing this for a number of steps which is logarithmic in the initial distance gives us a point which is central with respect to $Z_i$'s and has distance $\cO\paren{\sqrt{1/m}}$ to $\mu$.

In order to employ this approach both efficiently and privately we need to execute the following tasks efficiently and privately: 
\begin{enumerate}
  \item Check whether or not the current estimate $\tmu$ is central.
  \item Find an estimate of the distance to the central point or $\normt{\tmu - \mu}$.
  \item Find a direction $v$ that certifies the non-centrality of $\tmu$.
\end{enumerate}

Even setting the concern of privacy aside for the moment, it is not clear how to efficiently check whether or not the current estimate $\tmu$ is central. 
One approach involves constructing a net over the unit sphere $\cS^{d-1}$ and counting the number of $Z_i$'s that fall far for each direction in the unit sphere. 
However, doing so would take exponential time as the number of points on the net grows exponentially in $d$.
In order to perform this efficiently, \cite{CherapanamjeriFB19} uses the standard SDP relaxation of the following quadratic program, which finds the direction that has most of $Z_i$'s at distance at least $r$ from the current estimate $\tmu$.
Note that a point $x$ is $\paren{\rho, r}$-central if and only if the following quadratic program has a low value.

\begin{definition}[quadratic optimization problem \cite{Hopkins20, CherapanamjeriFB19}]
	Let $Z_1, \dots, Z_k, \tmu \in \R^d$, $r > 0$. Let $\QUAD\paren{\tmu, r, Z}$ be the following quadratic program.
	\label{def:quadratic-optimization-problem}
	\begin{align*}
		\QUAD\paren{\tmu, r, Z} :=
		\max_{v, b} \quad &\sum_{i=1}^k b_i^2 \\
		\text{s.t.} \quad  
		&b_i^2 = b_i \quad \forall i \\
		&\sum_{i=1}^d v_i^2 = 1 \\
		&b_i^2 r \le \iprod{Z_i - \tmu, b_i v} \quad \forall i
	\end{align*}
\end{definition}
The following is the standard SDP relaxation of the above quadratic program.
\begin{definition}[SDP Relaxation \cite{Hopkins20, CherapanamjeriFB19}]
	Let $Z_1, \dots, Z_k, \tmu \in \R^d$, $r > 0$. Let $\SDP\paren{\tmu, r, Z}$ be the following semi-definite program.
	\label{def:sdp-relax}
	\begin{align*}
		\SDP\paren{\tmu, r, Z} : = 
		\max_{v, b, B, U, W} \quad & \Tr \Paren{B} 
		\\
		\text{s.t.} \quad &\begin{bmatrix}
			1 & \transpose{b}& \transpose{v} \\
			b   &B & W \\
			v &\transpose{W} & V
		\end{bmatrix}
		\succcurlyeq 0 \\
		& B_{ii} = b_i \quad \forall i \\
		&\Tr(V) = 1 \quad \forall i\\
		&B_{ii} \cdot r \le \iprod{Z_i - \tmu, W_i} \quad \forall i
	\end{align*}
\end{definition}
Now based on the above SDP relaxation, we may define a new notion of centrality.
\begin{definition}[$\paren{\rho, r}$-SDP-centrality \cite{Hopkins20}]
A point is $\paren{\rho, r}$-SDP-central if and only if the SDP relaxation's value for that point is less than $1- \rho$.
\end{definition}
The main advantage SDP-centrality has over the previous notion of centrality is that it can be computed efficiently.
It can be proven that SDP-centrality has the same good properties that the previous definition of centrality provides: 
\begin{enumerate}
  \item The true mean $\mu$ is ($0.999$, $\cO\paren{\sqrt{1/m}}$)-SDP-central with high probability
(\Cref{lem:well-conditonedness-sdp}).
\item  Under the assumption that $\mu$ is $\paren{$0.999$, \cO\paren{\sqrt{1/m}}}$-SDP-central, we have that $\tmu$ is $\paren{$0.1$, \cO\paren{\sqrt{1/m}}}$-SDP-central if and only if $\normt{\mu - \tmu} \le \cO\paren{\sqrt{1/m}}$ (a corollary of the proof of \Cref{thm:halt-estimation}, Lemma~4 of \cite{CherapanamjeriFB19})
\item  The solution to the SDP relaxation can certify the non-SDP-centrality of points which are far from a SDP-center. Moreover, there exists a rounding scheme that obtains a direction $v$ from this solution such that
$\iprod{v, \frac{\mu - \tmu}{\norm{\mu - \tmu}}} \ge \Omega\paren{1}$.\footnote{\cite{CherapanamjeriFB19} shows that taking the top eigenvector of $V$ in \Cref{def:sdp-relax} is one such rounding scheme. In \Cref{thm:accuracy-ng}, we show that the simple rounding scheme of taking $v$ itself provides the same guarantees.}
\end{enumerate}
These properties of the SDP relaxation, together with its computational efficiency, yield a computationally efficient algorithm for estimating the mean up to $\cO\paren{\sqrt{1/m}}$ error.

Now that we have reviewed \cite{CherapanamjeriFB19}'s method, we describe how our \emph{private} algorithm works.
One naive approach to ensuring privacy would be to add Laplace noise to the direction after the rounding step.
However, it is not clear how to prove sensitivity bounds in this case to guarantee pure differential privacy.
In order to privately find the direction, we run the exponential mechanism over the unit ball.\footnote{To be precise, in our final algorithm this is not exactly the unit ball. We run the exponential mechanism over a ball of radius $1-\Theta\paren{1}$. This helps us ensure \Cref{def:cond-poly} or Lipschitzness.}
If the score function of the exponential mechanism for a given vector $y$ is 
\begin{equation*}
\Score\paren{y} = 
\sum_{i=1}^k \mathbbm{1} \brac{\iprod{Z_i - \tmu, y} \ge r} \
\end{equation*}
and we assume that the exponential mechanism always returns the $y$ with the highest score, the exponential mechanism would return a direction $y$ such that $\iprod{y, \mu -\tmu}\ge \Omega\paren{1}$. Moreover, the sensitivity of the score function is only $1$. 
However, na\"ive implementations of the exponential mechanism would not be computationally efficient, running in time exponential in $d$.
To alleviate this issue, we recall the efficient sampling algorithm of~\cite{BassilyST14}.
Their results say that if a score function is concave and Lipschitz, and is defined over a bounded and convex domain, then appropriate techniques from the literature on sampling log-concave distributions allow us to efficiently implement the exponential mechanism.
Meanwhile, this score function may not even be continuous.

To define a score function $\Score\paren{y}$ that has the mentioned properties, we use the SDP relaxation mentioned in \Cref{def:sdp-relax}, and intersect it with a new condition $v = y$ to define $\SDPVAL$ (see \Cref{def:SDPVAL}). In order to show that sampling from this $\Score$ function can be done efficiently, we show that it is concave (\Cref{cor:sdp-val-concavity}) and Lipschitz (\Cref{lem:sdpval-lipschitzness}). Moreover, in order to show that with high probability the exponential mechanism returns a high scoring vector, we show that the score function has sensitivity $1$ and that the volume of the high scoring points is large (\Cref{cor:sdp-sensitiviy}, \Cref{subsubsec:sdp-val-volume}).
It remains to show that given a feasible solution $X(1, b, v, B, V)$ to the SDP program defined in \Cref{def:sdp-relax} with high value, $\iprod{v, \normt{\mu- \tmu}}\ge \Omega\paren{1}$.
This guarantees that if the exponential mechanism returns a high-scoring vector, then that vector points towards $\mu$.
Note that previous works use a slightly different rounding scheme. In \cite{CherapanamjeriFB19}, they show that the top eigenvector of the sub-matrix $V$ of $X(1, b, v, B, V)$ has the said property; however we show that a simpler rounding scheme (alluded to above) also gives our desired guarantees (\Cref{thm:accuracy-ng}).
For further details on gradient estimation, see \Cref{subsec:grad-estimation}.

It remains to address the other two components of the algorithm: determining when to stop, and estimating the distance. 
For the former, the problem would be trivial without privacy: one could simply check the value of the SDP with $r = \Theta\paren{\sqrt{1/m}}$. 
If the value is low, we would be close, otherwise, we would have been far. 
Our private approach is similar, but we also add noise to the value of the SDP with Laplace mechanism to ensure privacy. 
We prove the SDP value has sensitivity $1$, in order to bound the magnitude of the Laplace noise.
Further discussion on how to determine when to stop appears in \Cref{subsec:halt-estimation}.

Finally, for distance estimation sans privacy, a binary search suffices to select an $r$: find the maximum $r$ such that for this choice of $r$, the SDP has value larger than $0.9k$. It can be proven that choosing $r$ this way gives an approximation of the distance to the center \cite{CherapanamjeriFB19}. 
However, it is not clear how to directly make this approach private. 
Therefore, we instead find some $r$ for which the SDP has value between $0.9k$ and $0.95k$ via a private binary search. Note that if we are far enough from the center, and if we are given a good coarse estimate, both of these values should be feasible, which is ensured by the guarantee of the stopping step and having access to a coarse estimate. Since the value of SDP is a decreasing function in $r$ and we are sufficiently far from the center, it can be proven that if the SDP has value between $0.9k$ and $0.95k$ for $r$, then $r$ is an approximation of the distance to the center. 
Full discussion on the private binary search procedure and private distance estimation appears in \Cref{subsec:distance-estimation}.

\paragraph{Structure.} First we define a new quadratic program and a new SDP that we use to define a score function for the exponential mechanism that helps us find the gradient in our algorithm. Then we review some useful lemmas and assumptions from \cite{CherapanamjeriFB19} in \Cref{subsubsec:useful-lemmas}. After that we propose the main algorithm and the state the guarantees it provides in \Cref{subsec:fine-algorithm}.  As discussed above, our algorithm has three parts: estimating whether it should stop or not, estimating the distance, and estimating the direction to the true mean. We present our algorithms, guarantees and proofs, for these three tasks in \Cref{subsec:halt-estimation}, \Cref{subsec:distance-estimation}, and \Cref{subsec:grad-estimation} respectively. In \Cref{subsec:gradient-descent} we show that after sufficient number of rounds of the gradient descent our algorithm obtains a point close to the true mean. Finally, in \Cref{subsec:fine-robust}, we explain why our algorithm is robust under constant fraction corruption in the samples.
\subsection{Preliminaries}
\subsubsection{Definitions}
\label{subsubsec:definitions}
In order to run the exponential mechanism we need to define a score function over the set of directions. In order to do so, we define the following programs based on the previous definitions (\Cref{def:quadratic-optimization-problem}, \Cref{def:sdp-relax}). These programs are the intersection of the above programs with the new condition, $v = y$. These new programs can be interpreted as: for a given direction $v$, how many of the $Z_i$'s are at distance at least $r$ from the current estimate $\tmu$, in the direction $v$? 
\begin{definition}[$\QUADVAL$] Let $Z_1, \dots, Z_k, \tmu, y \in \R^d$, $r > 0$. Let $\QUADVAL\paren{y; \tmu, r, Z}$ be the following quadratic program.
	\label{def:QUADVAL}
	\begin{align*}
		\QUADVAL\paren{y; \tmu, r, Z} :=
		\max_{v, b} \quad &\sum_{i=1}^k b_i^2 \\
		\text{s.t.} \quad  
		&b_i^2 = b_i \quad \forall i\\
		&v_i = y_i \quad \forall i\\
		&b_i^2 r \le \iprod{Z_i - \tmu, b_i v} \quad \forall i
	\end{align*}
\end{definition}
Unlike the previous SDP (\Cref{def:sdp-relax}), the following SDP is \textit{not} a standard SDP relaxation of the above quadratic program. This is because, in order to be able to run the exponential mechanism efficiently, we need to define our score function over the unit ball as opposed to the unit sphere; hence, $\normt{y}$ may not be equal to $1$. However, we show that $\SDPVAL\paren{y}$ is larger than $\QUADVAL\paren{y}$, in \Cref{lem:quad-val-sdp-val}.
\begin{definition}[$\SDPVAL$]
Let $Z_1, \dots, Z_k, \tmu, y \in \R^d$, $r > 0$. Let $\SDPVAL\paren{y; \tmu, r, Z}$ be the following semi-definite program.
\label{def:SDPVAL}
\begin{align*}
	\SDPVAL\paren{y; \tmu, r, Z}:= \max_{v, b, B, W, V} \quad & \Tr\Paren{B} \\
	\text{s.t.} \quad &\begin{bmatrix}
		1 & \transpose{b}& \transpose{v} \\
		b   &B & W \\
		v &\transpose{W} & V
	\end{bmatrix}
	\succcurlyeq 0 \\
	&v_i = y_i \quad \forall i\\
	& B_{ii} = b_i \quad \forall i \\
	&\Tr(V) = 1 \\
	&B_{ii} \cdot r \le \iprod{Z_i - \tmu, W_i} \quad \forall i
\end{align*}
Note that the above semi-definite program can also be viewed as the following convex program, where the maximization is over degree-$2$ pseudo-distributions in indeterminates $v_1, \dots, v_d, b_1, \dots b_k$. In our proofs, we use the two definitions interchangeably.
 \begin{align*}
 	\SDPVAL(y; \tmu, r, Z) := \max_{\pE} \quad & \pE \sum_{i=1}^k b_i  \\
 	\text{s.t.} \quad &
 	\pE v_i = y_i \quad \forall i \\
 	&\pE \text{ satisfies } b_i = b_i^2 \quad \forall i \\
 	&\pE \text{ satisfies } \snorm{v} = 1 \\
 	&\pE \text{ satisfies } b_i^2 r \le \iprod{Z_i - \tmu, b_i v} \quad \forall i
 \end{align*}
\end{definition}
\subsubsection{Useful Lemmas}
\label{subsubsec:useful-lemmas}
In this subsection we state some useful lemmas regarding the solutions of the quadratic program and the SDP defined in \Cref{def:quadratic-optimization-problem}, and \Cref{def:sdp-relax}. These lemmas are due to \cite{CherapanamjeriFB19}. We state a more general version of them due technical reasons, however the proofs remain similar.

The following lemma states that if we are at the true mean, then with high probability, the semidefinite program defined in \Cref{def:sdp-relax}, cannot have a high score. In the quadratic program's language, this corresponds to a large fraction of $Z_i$'s being close to the true mean when projected onto any given direction.
\begin{lemma}[Lemma 7 in \cite{CherapanamjeriFB19}]
\label{lem:well-conditonedness-sdp}
	Let $M \ge 20$, and $\cY = \paren{Y_1, \dots, Y_k} \in \R^{k\times d}$ be $k$ \iid random vectors with mean $\mu$ and covariance $\Lambda$ and let $\cS$ denote the set of feasible solutions of the $\SDP\paren{\mu, r, \cY}$ in \Cref{def:sdp-relax}. Then, we have for $r \ge 5M \paren{\sqrt{\Tr \Lambda/k} + \sqrt{\norm{\Lambda}}}$ and $k \ge 8 M^2 \log 1/\beta$
	\begin{equation*}
		\max_{\paren{v, b, B, U, W} \in \cS} \Tr{B} \le \frac{k}{M} \mcom
	\end{equation*}
	with probability at least $1- \beta$. More specifically, for $r\ge 10M\sqrt{\normt{\Lambda}}$ if $k \ge 8M^2 \log 1/\beta$, and $k \ge d$,
	\begin{equation*}
		\max_{\paren{v, b, B, U, W} \in \cS} \Tr{B} \le \frac{k}{M} \mcom
	\end{equation*}
	with probability at least $1-\beta$.
\end{lemma}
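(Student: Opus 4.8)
The plan is to adapt the proof of Lemma~7 of \cite{CherapanamjeriFB19}, the only change being that we track $\Tr\Lambda$ and $\norm{\Lambda}$ separately rather than assuming isotropy. The argument has two parts: a purely linear-algebraic step bounding the value of \emph{any} feasible solution by a data-dependent quantity, and a concentration step controlling that quantity with probability $1-\beta$.

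First I would carry out the linear-algebraic reduction. Fix a feasible solution $(v,b,B,W,V)$ of $\SDP(\mu,r,\cY)$ from \Cref{def:sdp-relax} and write $\wb Y_i = Y_i - \mu$. From $B_{ii}=b_i$ and the $2\times2$ minor $\left(\begin{smallmatrix}1 & b_i\\ b_i & B_{ii}\end{smallmatrix}\right)\succeq 0$ we get $0\le B_{ii}\le 1$, and from the $(1+d)\times(1+d)$ principal submatrix on row/column $i$ of the $b$-block together with the whole $v$-block we get $\iprod{\wb Y_i,W_i}^2 \le B_{ii}\cdot \wb Y_i^\top V\wb Y_i$ (complete the square in the scalar multiplier of that row). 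Summing the constraint $B_{ii}r \le \iprod{\wb Y_i,W_i}$ over $i$, applying this inequality, then Cauchy--Schwarz over $i\in[k]$, and finally $\sum_i \wb Y_i^\top V\wb Y_i = \iprod{V,\sum_i\wb Y_i\wb Y_i^\top} \le \norm{\sum_i \wb Y_i\wb Y_i^\top}$ (using $V\succeq 0$, $\Tr V=1$), yields
\[
  r\cdot \Tr B \ \le\ \sqrt{\Tr B}\cdot\sqrt{\Bignorm{\tsum_i \wb Y_i\wb Y_i^\top}}\,,\qquad\text{hence}\qquad \Tr B \ \le\ \frac{1}{r^2}\Bignorm{\tsum_i \wb Y_i\wb Y_i^\top}\,.
\]
Since this crude bound on the empirical second-moment matrix is too lossy for heavy-tailed $\cY$, I would split $[k] = S_{\mathrm{sm}}\sqcup S_{\mathrm{lg}}$ by whether $\norm{\wb Y_i}\le\tau$ for a truncation level $\tau$ to be chosen, run the same computation restricted to $S_{\mathrm{sm}}$ to get $\sum_{i\in S_{\mathrm{sm}}}B_{ii} \le \norm{\sum_{i\in S_{\mathrm{sm}}}\wb Y_i\wb Y_i^\top}/r^2$, and bound $\sum_{i\in S_{\mathrm{lg}}}B_{ii}\le\abs{S_{\mathrm{lg}}}$ using only $B_{ii}\le 1$.

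The concentration step is the technical heart. I would take $\tau^2$ of order $\Tr\Lambda$ (up to factors in $M$ and logarithmic terms) so that, with probability $1-\beta/2$: (i) by Markov applied to $\E\norm{\wb Y_i}^2 = \Tr\Lambda$ together with a binomial tail bound, $\abs{S_{\mathrm{lg}}} \le O(k\Tr\Lambda/\tau^2 + \log(1/\beta)) \le k/(2M)$, using $k\ge 8M^2\log(1/\beta)$; and (ii) $\sum_{i\in S_{\mathrm{sm}}}\wb Y_i\wb Y_i^\top$ — a sum of independent PSD matrices of operator norm at most $\tau^2$ with expectation $\preceq k\Lambda$ — has operator norm $O\big(k\norm\Lambda + \tau^2(d+\log(1/\beta))\big)$ by matrix Bernstein/Chernoff (equivalently, a $\tfrac12$-net of the sphere of size $5^d$ plus a scalar Bernstein bound and the usual net-to-spectrum argument). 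Dividing (ii) by $r^2 \ge 25M^2(\Tr\Lambda/k + \norm\Lambda)$ and using $k\ge 8M^2\log(1/\beta)$ (and, in the sharper regime, $k\gtrsim d$) makes the $S_{\mathrm{sm}}$-contribution at most $k/(2M)$ as well, so $\Tr B \le k/M$. Because the right-hand sides in (i) and (ii) do not depend on the chosen feasible solution — $V$ enters only through $V\succeq 0$, $\Tr V=1$ — the bound holds simultaneously over all of $\cS$, as claimed. The ``more specific'' statement then follows for free: when $k\ge d$ we have $\Tr\Lambda\le d\norm\Lambda\le k\norm\Lambda$, hence $\sqrt{\Tr\Lambda/k}\le\sqrt{\norm\Lambda}$, so $r\ge 10M\sqrt{\norm\Lambda}$ already implies $r\ge 5M(\sqrt{\Tr\Lambda/k}+\sqrt{\norm\Lambda})$.

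\textbf{The main obstacle is exactly this concentration step in the heavy-tailed case}: the naive operator-norm bound on $\sum_i\wb Y_i\wb Y_i^\top$ is far too weak, and one must carefully balance the truncation level $\tau$ against both the matrix-concentration error term and the count of truncated samples, while keeping the resulting dimension/failure-probability overhead inside the budget allowed by the hypotheses $r \gtrsim M(\sqrt{\Tr\Lambda/k}+\sqrt{\norm\Lambda})$ and $k \gtrsim M^2\log(1/\beta)$. This balancing is precisely the content of Lemma~7 of \cite{CherapanamjeriFB19}, whose proof we follow with the cosmetic modification of carrying $\Tr\Lambda$ and $\norm\Lambda$ through separately.
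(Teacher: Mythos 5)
The paper does not actually prove this lemma: it cites it as Lemma~7 of \cite{CherapanamjeriFB19} and states ``the proofs remain similar,'' so the reference proof is the one in that paper, which you are reconstructing. Your linear-algebraic reduction is correct and standard: from $B_{ii}=b_i$ and the $\{1,1+i\}$ principal minor you get $B_{ii}\in[0,1]$; from the $\{1+i\}\cup\{v\text{-block}\}$ principal minor you get $\iprod{\bar Y_i,W_i}^2\le B_{ii}\,\bar Y_i^\top V\bar Y_i$; summing the $r B_{ii}\le\iprod{\bar Y_i,W_i}$ constraints, applying Cauchy--Schwarz, and $\iprod{V,\cdot}\le\Tr(V)\norm{\cdot}=\norm{\cdot}$ yields $\Tr B\le\norm{\sum_i\bar Y_i\bar Y_i^\top}/r^2$, and the truncation into $S_{\mathrm{sm}},S_{\mathrm{lg}}$ is the right way to handle heavy tails. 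Your deduction of the ``more specific'' regime from $\Tr\Lambda\le d\norm\Lambda$ is also correct.

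However, the concentration step as you have sketched it does not close under the stated hypotheses, and this is not merely a constant that ``balancing'' resolves. Take $\tau^2\asymp M\Tr\Lambda$ (forced by the requirement $\E|S_{\mathrm{lg}}|\lesssim k/M$). Matrix Bernstein/Chernoff then gives, in the operator-norm deviation of $\sum_{i\in S_{\mathrm{sm}}}\bar Y_i\bar Y_i^\top$, an additive term of order $\tau^2\log(d/\beta)$. Dividing by $r^2\gtrsim M^2\norm\Lambda$ and writing $\rho=\Tr\Lambda/\norm\Lambda\le d$, this contributes $\asymp(\rho/M)\log(d/\beta)$ to $\Tr B$, which must be $\le k/(2M)$; that requires $k\gtrsim\rho\bigl(\log d+\log(1/\beta)\bigr)$. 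The hypotheses give only $k\ge d\ge\rho$ together with $k\ge 8M^2\log(1/\beta)$, which do not imply $k\gtrsim d\log d$ or $k\gtrsim d\log(1/\beta)$ in general (e.g. $\Lambda=I_d$, $\rho=d$, $d$ large, $\beta$ moderate). The $\tfrac12$-net variant you offer as ``equivalent'' is actually worse, replacing $\log d$ by $d$ and giving $k\gtrsim d^2$. So either the concentration must be run more cleverly than a direct operator-norm bound on the truncated empirical second moment (as CFB19 presumably do), or there is a hidden $\log d$ that the present paper sweeps into its $\tilde{\cO}$; your write-up should not assert that the balancing goes through with a merely ``cosmetic'' change without working the arithmetic, because the arithmetic is exactly where the claim is at risk.
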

Therefore, we may assume that with high probability, the following assumption holds. In \cite{CherapanamjeriFB19}, they state the above lemma and the following assumption with $M = 20$. In our proofs we require a larger value for $M$. For the sake of tidiness, we do not set a value for $M$ in the following assumptions and we state them for general $M$. However, in our proofs we only require the following assumptions to hold for $M = 1000$. 
\begin{assumption}[Assumption 2 in \cite{CherapanamjeriFB19}]
	\label{ass:when-at-mu-sdp-str}
	Suppose
	$r^* =2\sqrt{k/n} = 2\sqrt{1/m}$.
	For the bucket means $Z = \Paren{Z_1, \dots, Z_k}$ let $\cS_r$ denote the set of feasible solutions for $\SDP\paren{\mu, r, Z}$. Then, for all $r \ge 5M r^*$,
	\begin{equation*}
		\max_{v, b, B, W, V \in \cS_r} \sum_{i=1}^k \Tr{B} \le \frac{k}{M}
	\end{equation*}
\end{assumption}

The above assumption is a strengthening of the following assumption.

\begin{assumption}[Assumption 1 in \cite{CherapanamjeriFB19}]
	\label{ass:when-at-mu-simple-str}
	For the bucket means, $Z = \Paren{Z_1, \dots, Z_k}$, we have:
	\begin{equation*}
		\forall v \in \R^d, \normt{v} = 1 \implies \Abs{\set{i : \iprod{Z_i - \mu, v} \ge 5M r^*}} \le \frac{k}{M}
	\end{equation*}   
\end{assumption} 
The following lemma is about the feasible solutions to the $\SDP$ problem that have high values.
\begin{lemma}[Lemma 3 in \cite {CherapanamjeriFB19}]
	\label{lem:RT-lemma-str}
	Let us assume \Cref{ass:when-at-mu-sdp-str}. Let 
	\begin{equation*}
		X = 
		\begin{bmatrix}
			1 & \transpose{b}& \transpose{v} \\
			b   &B & W \\
			v &\transpose{W} & V
		\end{bmatrix}
		\in 
		\R^{\paren{k + d + 1} \times \paren{k + d + 1}}
	\end{equation*}
	be a PSD matrix. Moreover, suppose that
	\begin{equation*}
		B_{i,i} = b_i,\quad 
		\Tr\paren{V} = 1,\quad 
		\Tr\paren{B} \ge \psi k \mper
	\end{equation*}
	Then, there is a set of at least $\paren{\psi - 1/ M}k$ indices $\cT$ such that for all $i \in \cT$:
	\begin{equation*}
		\iprod{Z_i - \mu, W_i} < 5Mr^*B_{i, i} \mcom
	\end{equation*}
	and a set of at least $\paren{\psi - \psi'}/\paren{1 - \psi'}k$ indices $\cR$ such that for all $j \in \cR$, we have $B_{j,j} \ge \psi'$, where $0 \le \psi' \le \psi$.
\end{lemma}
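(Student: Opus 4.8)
The plan is to prove the two assertions of the lemma separately; both are essentially counting arguments, and the statement about $\cT$ additionally feeds a restricted feasible solution into \Cref{ass:when-at-mu-sdp-str}.

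I would begin with an elementary observation about the diagonal. Since $X \succeq 0$, for each $i$ the principal $2\times 2$ submatrix indexed by the scalar coordinate and the $b_i$-coordinate, $\left(\begin{smallmatrix} 1 & b_i \\ b_i & B_{ii} \end{smallmatrix}\right)$, is positive semidefinite, so $B_{ii} \ge b_i^2$. Together with the hypothesis $B_{ii} = b_i$ this forces $b_i = B_{ii}$ and $0 \le B_{ii} \le 1$ for every $i$. This is the only use of PSD-ness needed for the diagonal bounds, and it is what makes the averaging arguments below go through. For the claim about $\cR$: set $\cR = \{ j : B_{jj} \ge \psi' \}$ and $s = |\cR|$; splitting the trace over $\cR$ and its complement and using $B_{ii} \le 1$ on $\cR$ and $B_{ii} < \psi'$ off $\cR$ gives $\psi k \le \Tr(B) = \sum_i B_{ii} \le s + (k-s)\psi'$, which rearranges to $s \ge \tfrac{\psi - \psi'}{1-\psi'}\,k$, as desired (here $\psi' < 1$).

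For the claim about $\cT$: let $\cB = \{ i : \iprod{Z_i - \mu, W_i} \ge 5Mr^* B_{ii} \}$ be the set of ``bad'' coordinates, so that $\cT := [k]\setminus\cB$ is exactly the set in the statement, and I must show $|\cT| \ge (\psi - 1/M)k$, equivalently $\sum_{i \in \cB} B_{ii} \le k/M$. The key step is to restrict $X$ to the bad coordinates: let $P$ be the diagonal $0/1$ matrix that fixes the scalar coordinate and all $d$ coordinates of $v$, keeps the $b_i$-coordinates for $i \in \cB$, and kills the $b_i$-coordinates for $i \notin \cB$. Then $X' := PXP$ is PSD, has the same block shape with $v' = v$, $V' = V$, and $b'_i = b_i$, $B'_{ii} = B_{ii}$ for $i \in \cB$ while all corresponding entries vanish for $i \notin \cB$. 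I would then check that $X'$ is feasible for $\SDP(\mu, r, Z)$ with $r = 5Mr^*$: it is PSD (congruence by $P$ preserves PSD-ness); $B'_{ii} = b'_i$ holds ($= B_{ii} = b_i$ on $\cB$, $0 = 0$ off $\cB$); $\Tr(V') = \Tr(V) = 1$; and the margin constraint $B'_{ii}\cdot r \le \iprod{Z_i - \mu, W'_i}$ holds for $i \in \cB$ by the definition of $\cB$ and $r = 5Mr^*$, and trivially ($0 \le 0$) for $i \notin \cB$ since $B'_{ii} = 0$ and $W'_i = 0$. Now \Cref{ass:when-at-mu-sdp-str} yields $\sum_{i \in \cB} B_{ii} = \Tr(B') \le k/M$; combining with $\Tr(B) \ge \psi k$ gives $\sum_{i \in \cT} B_{ii} \ge (\psi - 1/M)k$, and since each $B_{ii} \le 1$ we conclude $|\cT| \ge (\psi - 1/M)k$.

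I do not anticipate a genuine obstacle here. The one point to verify carefully is that passing from $X$ to $PXP$ preserves every SDP constraint — which it does, because $P$ is a coordinate projection, so PSD-ness is closed under the congruence, and every other constraint is entrywise and either involves only the retained blocks or degenerates to $0 \le 0$. Everything else is a two-line averaging computation, and the generalization from the $M = 20$ of \cite{CherapanamjeriFB19} to arbitrary $M$ changes nothing in the argument.
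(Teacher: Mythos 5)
Your proof is correct. The paper does not reprove this lemma (it cites Lemma~3 of \cite{CherapanamjeriFB19} and remarks only that the generalization to arbitrary $M$ changes nothing), and the argument you give --- restricting $X$ to the indices where the margin inequality is violated via the congruence $X \mapsto PXP$ to manufacture a feasible solution for $\SDP(\mu, 5Mr^*, Z)$ and invoking \Cref{ass:when-at-mu-sdp-str}, together with the averaging count for $\cR$ using $0 \le B_{ii} \le 1$ from the PSD $2\times 2$ minor and $B_{ii}=b_i$ --- is exactly the standard argument for this lemma.
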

\subsection{Main Algorithm and Theorem}
In this section we state our main algorithm and theorem. As discussed before, our main algorithm is a gradient descent style algorithm and consists of three parts: \textsc{halt-estimation}, \textsc{distance-estimation}, \textsc{gradient-estimation}. We discuss these algorithms and their guarantees in \Cref{subsec:halt-estimation}, \Cref{subsec:distance-estimation}, and \Cref{subsec:grad-estimation} respectively.
\label{subsec:fine-algorithm}
\begin{algorithm}[H]
	\caption{\textsc{fine-estimation}}
	\label{alg:fine-estimation}
	\begin{mdframed}
		\mbox{}
		\begin{description}
			\item[Input:]
			Parameters: Data points $X_1, \dots, X_n \in \R^d$, number of bins $k$, number of iterations $T$, initial estimate $\tmu_0$.
			\item[Operation:]
			Step-size $\eta = 0.075$\\
			Split the data in $k$ bins $\Set{\cB_i}_{i=1}^k$, each having $m$ members. For all $1 \le i \le k$, let $Z_i = \textsc{mean}\paren{\cB_i}$. \\
			For $1 \le t \le T$ do:
			\begin{enumerate}
				\item Let $h_t = \textsc{halt-estimation} \Paren{Z, \tmu_{t-1}}$
				\item If $h_t$, then return $\tmu_{t-1}$. 
				\item Let $d_t = \textsc{distance-estimation} \Paren{Z, \tmu_{t-1}}$
				\item Let $g_t = \textsc{gradient-estimation} \Paren{Z, \tmu_{t-1}, d_t}$
				\item Let $\tmu_{t} = \tmu_{t-1}+ \eta d_tg_t$
			\end{enumerate}
			\item[Output:]  $\tmu_T$
		\end{description}
	\end{mdframed}
\end{algorithm}
The following theorem is the guarantee that our algorithm provides, under the assumption that \Cref{ass:when-at-mu-sdp-str} holds, and an initial estimate is given. This theorem together with \Cref{lem:well-conditonedness-sdp}, gives us \Cref{thm:high-dimensional-fine-mean-estimation}.
\begin{theorem}[\textsc{fine-estimation}]
	\label{thm:fine-estimation}
	Let $Z_1, \dots, Z_k, \tmu_0, \mu \in \R^d$ $r^* = 2\sqrt{k/n} = 2\sqrt{1/m}$.
	Suppose $\normt{\tmu_0 - \mu_0} \le C_0 \sqrt{d}$ where $C_0$ is a universal constant, $M = 1000$, and \Cref{ass:when-at-mu-sdp-str} holds. Then there exists an efficient $\epsilon$-DP algorithm \textsc{fine-estimation} that returns $\tmu^*$ such that if $k \ge \tilde{\cO}\paren{\paren{d + \log\paren{1/\beta}}/\eps}$,
	then
	$\normt{\tmu^* - \mu} \le \cO\paren{r^*}$, with probability $1-\beta$. Note that $\tilde{O}$ is hiding the following lower order logarithmic factors: $\log d$, and $\log m$.
\end{theorem}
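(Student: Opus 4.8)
The plan is to analyze \Cref{alg:fine-estimation} by treating privacy, running time, and accuracy separately, with accuracy proved by induction on the iteration count. Set $T = O(\log(C_0\sqrt d / r^*)) = O(\log d + \log m)$, so that a constant-factor geometric decrease of $\normt{\tmu_t - \mu}$ per step drives the error down to $O(r^*)$, and split the privacy budget so that each of the $\le 3T$ calls to \textsc{halt-estimation}, \textsc{distance-estimation}, and \textsc{gradient-estimation} is run with parameter $\e' = \e/(3T)$. Privacy is then immediate from basic composition, given that each subroutine is $\e'$-DP: \textsc{halt-estimation} and \textsc{distance-estimation} are a Laplace mechanism and a private binary search applied to values of $\SDP(\tmu_{t-1},\cdot,Z)$, which has sensitivity $1$ in the $Z_i$; and \textsc{gradient-estimation} is an instance of the SoS exponential mechanism with score $\SDPVAL$, whose sensitivity is $1$ (proved exactly as for \textsc{coarse-sdp} in \Cref{lem:coarse-sdp-sensitivty}, by nulling the $b_1$-monomials of the optimal pseudo-expectation). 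For the running time, \Cref{thm:efficient-sampling} applies once we know $\SDPVAL(\cdot;\tmu,r,Z)$ is concave (\Cref{cor:sdp-val-concavity}) and $\poly(k,d)$-Lipschitz over the ball of radius $1-\Theta(1)$ (\Cref{lem:sdpval-lipschitzness}), so each \textsc{gradient-estimation} call runs in $\poly(n,d,1/\e)$ time; the other two subroutines are polynomial-time given any SDP solver.

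The heart of the argument is the one-step progress statement, under the event (holding with probability $1-\beta$ over the samples, by \Cref{lem:well-conditonedness-sdp}) that \Cref{ass:when-at-mu-sdp-str} holds, i.e. $\mu$ is $(0.999,r)$-SDP-central for $r \ge 5Mr^*$. Suppose at the start of iteration $t$ we have $\normt{\tmu_{t-1}-\mu}\le R_{t-1}$, with $R_0 = C_0\sqrt d$. I would show:
\begin{enumerate}
\item \textbf{Halting is correct.} Since $\SDP(\tmu_{t-1},\Theta(r^*),Z)$ has sensitivity $1$ and the Laplace noise is $O(\tfrac1{\e'}\log(T/\beta)) \ll 0.9k$ by the assumed lower bound on $k$, \textsc{halt-estimation} returns \textsf{true} essentially iff this value is below $0.9k$. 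By SDP-centrality, $\normt{\tmu_{t-1}-\mu} = O(r^*)$ forces the value below $0.9k$ (so we halt and output a point within $O(r^*)$ of $\mu$), while $\normt{\tmu_{t-1}-\mu}\gg r^*$ exhibits a non-centrality direction and the value exceeds $0.9k$ (so we do not halt).
\item \textbf{Distance estimation.} In the non-halted case, $\SDP(\tmu_{t-1},r,Z)$ is monotone decreasing in $r$ with sensitivity $1$, so a private binary search returns $d_t$ with $\SDP(\tmu_{t-1},d_t,Z)\in[0.9k,0.95k]$; by the structural \Cref{lem:RT-lemma-str} (Lemma~3 of \cite{CherapanamjeriFB19}, in our normalization) together with \Cref{ass:when-at-mu-sdp-str}, any such $d_t$ satisfies $\Omega(\normt{\tmu_{t-1}-\mu}) \le d_t \le O(\normt{\tmu_{t-1}-\mu})$.
\item \textbf{Gradient direction.} Fix $r = d_t$ and run the exponential mechanism with score $\Score(y) = \SDPVAL(y;\tmu_{t-1},d_t,Z)$ over the ball of radius $1-\Theta(1)$. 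For the volume bound, the point $(1-\Theta(1))\tfrac{\mu-\tmu_{t-1}}{\normt{\mu-\tmu_{t-1}}}$ and an $\Omega(1)$-radius ball around it all achieve score $\ge 0.9k$, so $\vol(\cH)/\vol(\cH^*)\le 2^{O(d)}$ (\Cref{subsubsec:sdp-val-volume}). For utility, any $y$ with $\SDPVAL(y)\ge 0.1k$ has $\iprod{y,\tfrac{\mu-\tmu_{t-1}}{\normt{\mu-\tmu_{t-1}}}}\ge\Omega(1)$ — this is \Cref{thm:accuracy-ng}, and it is where the new \emph{linear} rounding $v = \pE v = y$ enters: instead of the top eigenvector of the $V$-block, one shows $v$ itself correlates with $\mu-\tmu_{t-1}$, which is exactly what keeps the score linear in $y$ and the sampling problem log-concave. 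Feeding the two bounds into the volume-based exponential mechanism (\Cref{thm:volume-based-exponential-mechanism}) with $\Delta = 1$, $\e' = \e/(3T)$ shows that if $k \ge \tilde O((d+\log(T/\beta))/\e') = \tilde O((d+\log(1/\beta))/\e)$ then $\Score(g_t)\ge 0.1k$, hence $g_t$ points towards $\mu$, except with probability $\beta/(3T)$.
\end{enumerate}

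Given (2)--(3), the update $\tmu_t = \tmu_{t-1} + \eta d_t g_t$ with $\eta = 0.075$ contracts the distance: writing $u = (\mu-\tmu_{t-1})/\normt{\mu-\tmu_{t-1}}$,
\begin{equation*}
\normt{\tmu_t - \mu}^2 = \normt{\tmu_{t-1}-\mu}^2 - 2\eta d_t\iprod{g_t,\mu-\tmu_{t-1}} + \eta^2 d_t^2 \normt{g_t}^2 \le \Paren{1-\Omega(1)}\normt{\tmu_{t-1}-\mu}^2,
\end{equation*}
using $\iprod{g_t,u}\ge\Omega(1)$, $\normt{g_t}\le 1-\Theta(1)$, and $c_1\normt{\tmu_{t-1}-\mu}\le d_t\le c_2\normt{\tmu_{t-1}-\mu}$, with $\eta$ small enough that the linear term beats the quadratic one; one also checks that if $\normt{\tmu_{t-1}-\mu}$ were already $O(r^*)$ (so the distance estimate may be unreliable) the step displaces $\tmu_{t-1}$ by only $O(r^*)$, causing no harm. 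So $R_t = (1-\Omega(1))R_{t-1}$ completes the induction, and after $T = O(\log(C_0\sqrt d/r^*))$ iterations either the algorithm has halted or $\normt{\tmu_T-\mu} = O(r^*)$. A union bound over the $\le 3T$ randomized subroutine calls, each failing with probability $\le\beta/(3T)$, together with the $1-\beta$ sample event, gives overall success probability $1-O(\beta)$, absorbed by rescaling $\beta$.

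The step I expect to be the main obstacle is item (3): establishing \emph{simultaneously} bounded sensitivity, concavity, Lipschitzness, the volume lower bound, and — crucially — utility of the $v=y$ rounding for $\SDPVAL$. The last is the genuinely new ingredient relative to \cite{CherapanamjeriFB19}, whose analysis certifies correlation with $\mu-\tmu_{t-1}$ only via the leading eigenvector of the PSD block $V$, which is not linear in the SDP solution and would destroy log-concavity of the sampler; one must re-run their argument to extract the same guarantee from $v$ itself. Making the Lipschitz bound go through (needed only for the running time, via \Cref{thm:efficient-sampling}) will likewise require a duality argument bounding the norm of the optimal dual multipliers attached to the constraint $v = y$, in the spirit of \Cref{lem:coarse-sdp-lipschitzness}.
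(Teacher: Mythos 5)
Your proposal is correct and follows essentially the same route as the paper. The paper's proof of this theorem is a one-line assembly of \Cref{thm:halt-estimation}, \Cref{thm:distance-estimation}, \Cref{thm:gradient-estimation}, and \Cref{thm:gradient-descent} with $T = \log(dm)$; your items (1)--(3) and the contraction argument are exactly proof sketches of those four component theorems, with the same decomposition (halt/distance/gradient/descent), the same sensitivity-$1$ and centrality facts, the same use of the volume-based exponential mechanism for the gradient, and the same emphasis on the linear rounding $v = \pE v$ (\Cref{thm:accuracy-ng}) as the key point where the analysis departs from \cite{CherapanamjeriFB19}.
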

\begin{proof}
	Putting together \Cref{thm:halt-estimation} (\textsc{halt-estimation}), \Cref{thm:distance-estimation} (\textsc{distance-estimation}), \Cref{thm:gradient-estimation} (\textsc{gradient-estimation}), and \Cref{thm:gradient-descent} (gradient descent), and setting $T = \log \paren{dm} \ge  \log\paren{d/r^*}$.
\end{proof}
\subsection{Halt Estimation}
If we did not care about privacy, in order to know whether we are close to $\mu$ or not, and we had a current estimate $\tmu$, we could just check if $\SDP$ has score smaller than $0.9k$ for a small enough distance choice or not. If that was the case, $\tmu$ would have been close enough to $\mu$. In this section we privatize this condition by adding Laplace noise to the value of $\SDP$, and argue that this algorithm provides privacy guarantees and high accuracy with high probability. 
\label{subsec:halt-estimation}
\begin{algorithm}[H]
	\caption{\textsc{halt-estimation}}
	\label{alg:halt-estimation}
	\begin{mdframed}
		\mbox{}
		\begin{description}
			\item[Input:]
			Parameters: Data points $Z_1, \dots, Z_k \in \R^d$, current estimate $\tmu$.
			\item[Operation:]
			Let $M = 1000, N = 4M\paren{5M+1} +10 M$\\
			Let $H = \mathbbm{1} \brac{\SDP\paren{\tmu, Nr^*, Z} + \cL\paren{T/\eps}\le 0.91k}$, where $\cL$ is the Laplace distribution. 
			\item[Output:]  $H$
		\end{description}
	\end{mdframed}
\end{algorithm}
\begin{theorem}[\textsc{halt-estimation}]
\label{thm:halt-estimation}
Let $Z_1, \dots Z_k, \tmu, \in \R^d$, $r^* = 2\sqrt{k/n}$, $M\ge 1000$, $N = 4M\paren{5M+1} + 10M$. Suppose $k \ge 100 T\log \paren{T/\beta}/\eps$, and that $\Cref{ass:when-at-mu-sdp-str}$ holds. There exists an $\epsilon/T$-DP algorithm $\textsc{halt-estimation}$ that \textit{succeeds} with probability $1-\beta/T$. Moreover, in the regime where this algorithms succeeds we have
\begin{equation*}
H = 0 \implies \normt{\mu - \tmu} \ge M\paren{5M+1}r^* \mcom
\end{equation*}
and
\begin{equation*}
H = 1 \implies \normt{\mu - \tmu} \le Nr^* + 5Mr^*  \mper
\end{equation*}
\end{theorem}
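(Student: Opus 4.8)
The statement bundles three claims -- privacy, a high-probability bound on the injected noise (which is what ``success'' will mean), and a deterministic accuracy dichotomy conditioned on success -- and I would establish them in that order. Privacy is immediate: changing one data point alters a single bucket mean $Z_j$, and zeroing out the pseudo-variable $b_j$ (and the $j$-th row/column of $B,W$) turns any feasible point of $\SDP(\tmu, Nr^*, Z)$ into a feasible point of the perturbed SDP whose objective drops by at most $1$, so $\SDP(\tmu, Nr^*, Z)$ has sensitivity $1$ in the dataset, exactly as in the sensitivity analysis of \textsc{coarse-sdp} (\Cref{lem:coarse-sdp-sensitivty}). Releasing this value with additive $\cL(T/\eps)$ noise is then $(\eps/T)$-DP by the Laplace mechanism, and $H$ is $(\eps/T)$-DP as a post-processing.

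Next I would define ``success'' to be the event $\abs{\cL(T/\eps)} \le k/100$. Since $\Pr[\abs{\cL(b)} > t] = e^{-t/b}$, this fails with probability at most $e^{-k\eps/(100T)} \le \beta/T$, using exactly the hypothesis $k \ge 100 T\log(T/\beta)/\eps$. On the success event, $H=0$ forces $\SDP(\tmu, Nr^*, Z) > 0.91k - k/100 = 0.9k$ while $H=1$ forces $\SDP(\tmu, Nr^*, Z) \le 0.91k + k/100 = 0.92k$, so everything reduces to turning these two SDP-value bounds into the claimed distance bounds.

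For the $H=0$ direction, let $(b,v,B,W,V)$ be an optimal solution of $\SDP(\tmu, Nr^*, Z)$, so $\Tr B > 0.9k$. The PSD constraint with $B_{ii}=b_i$ gives $0\le B_{ii}\le 1$, and the Schur complement against the $V$-block with $\Tr V=1$ gives $\norm{W_i}^2 \le B_{ii}\Tr V = B_{ii}$. Applying \Cref{lem:RT-lemma-str} with $\psi=0.9$, $\psi'=\tfrac12$ (valid under \Cref{ass:when-at-mu-sdp-str}) produces an index $i$ with $B_{ii}\ge\tfrac12$ and $\iprod{Z_i-\mu,W_i} < 5Mr^*B_{ii}$; subtracting the feasibility constraint $\iprod{Z_i-\tmu,W_i}\ge Nr^*B_{ii}$ yields $\iprod{\mu-\tmu,W_i} > (N-5M)r^*B_{ii}$, and dividing by $\norm{W_i}\le\sqrt{B_{ii}}$ gives $\normt{\mu-\tmu} > (N-5M)r^*\sqrt{B_{ii}} \ge (N-5M)r^*/2 \ge M(5M+1)r^*$, the last step being the arithmetic fact $N = 4M(5M+1)+10M \ge 2M(5M+1)+5M$. (One could instead avoid choosing $\psi'$ by summing $\iprod{\mu-\tmu,W_i} > (N-5M)r^*B_{ii}$ over the set $\cT$ of \Cref{lem:RT-lemma-str} and using $\sum_i\norm{W_i}^2 \le \Tr B \le k$ with Cauchy--Schwarz.)

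For the $H=1$ direction I would argue the contrapositive: if $\normt{\mu-\tmu} > (N+5M)r^* = Nr^* + 5Mr^*$ then $\SDP(\tmu, Nr^*, Z) > 0.92k$, contradicting the bound just derived. Taking $v = (\mu-\tmu)/\normt{\mu-\tmu}$ and applying \Cref{ass:when-at-mu-simple-str} (implied by \Cref{ass:when-at-mu-sdp-str}) to both $v$ and $-v$, at least $(1-1/M)k$ of the indices satisfy $\iprod{Z_i-\mu,v}\ge -5Mr^*$, hence $\iprod{Z_i-\tmu,v} = \iprod{Z_i-\mu,v} + \normt{\mu-\tmu} > Nr^*$ for those $i$; putting $b_i=1$ on that set and $0$ elsewhere gives a rank-one feasible point of $\SDP(\tmu, Nr^*, Z)$ of value $(1-1/M)k > 0.92k$. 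I expect the main obstacle to be the $H=0$ step -- in particular, applying \Cref{lem:RT-lemma-str} (which is phrased in terms of the at-$\mu$ behavior of \Cref{ass:when-at-mu-sdp-str}) to a solution of the \emph{shifted} SDP $\SDP(\tmu,\cdot,Z)$, and bookkeeping that the explicit constant $N = 4M(5M+1)+10M$ absorbs each $5Mr^*$ and factor-of-$2$ loss; the privacy, noise-tail, and $H=1$ parts are essentially routine.
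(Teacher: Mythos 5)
Your proposal is correct and follows the same three-part skeleton as the paper (Laplace-mechanism privacy via sensitivity $1$, Laplace tail bound for the ``success'' event, then a separate deterministic argument for each value of $H$). The main divergence is in the $H=0$ branch: the paper invokes \Cref{cor:sdp-far}, which it extracts from the considerably longer SoS calculation in the proof of \Cref{thm:accuracy-ng}, whereas you give a short self-contained derivation by applying \Cref{lem:RT-lemma-str} directly with $\psi=0.9,\psi'=1/2$ and then running Cauchy--Schwarz against $W_j$, using $\|W_j\|\le\sqrt{B_{jj}}$ from the PSD structure and $\Tr V=1$. Your route is arguably cleaner since it avoids routing through the heavy machinery of \Cref{thm:accuracy-ng}; the paper's pays that cost once in \Cref{thm:accuracy-ng} and reuses it for free here. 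For $H=1$ you argue by contrapositive (a large $\normt{\mu-\tmu}$ would produce a rank-one feasible point of value $(1-1/M)k>0.92k$), while the paper argues forward (low SDP value forces $0.08k$ indices near $\tmu$ in direction $v$, intersect with \Cref{ass:when-at-mu-simple-str} in direction $-v$); these are the same pigeonhole argument read in opposite directions.

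One small point you should make explicit: when you apply \Cref{lem:RT-lemma-str} to extract a single index $j\in\cR\cap\cT$, you need $\abs{\cT}+\abs{\cR}>k$. With $\psi=0.9$, $\psi'=1/2$, $M\ge 1000$ this is $(0.9-1/M)+0.8=1.7-1/M>1$, which holds, but the intersection is not automatic from the lemma statement alone. Also, your bound $\sqrt{B_{jj}}\ge 1/2$ can be tightened to $1/\sqrt 2$, but $1/2$ already closes the arithmetic since $N-5M=4M(5M+1)+5M\ge 2M(5M+1)$, so $(N-5M)r^*/2\ge M(5M+1)r^*$ as required.
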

\begin{proof}
Since $\SDP$ has sensitivity $1$ (\Cref{cor:sdp-sensitiviy}),
from the guarantees of the Laplace mechanism we know that 
\Cref{alg:halt-estimation} is $\epsilon/T$-DP and with probability $1-\beta/T$, it succeeds i.e. $\SDP\paren{\tmu, N r^*, Z} + \cL\paren{T/\epsilon} \in \brac{\SDP\paren{\tmu, Nr^*, Z} - T\log\paren{T/\beta}/\eps, \SDP\paren{\tmu, Nr^*, Z} + T\log\paren{T/\beta}/\eps}$.
Since $k \ge 100 T\log\paren{T/\beta}$, with probability $1-\beta/T$
\begin{equation*}
\SDP\paren{\tmu, Nr^*, Z} + \cL\paren{T/\epsilon} \in \brac{\SDP\paren{\tmu, Nr^*, Z} - 0.1 k, \SDP\paren{\tmu, Nr^*, Z} + 0.1 k} \mper
\end{equation*}
Therefore if we are in the success regime, $H=0$, means that 
\begin{equation*}
\SDP\paren{\tmu, Nr^*, Z} \ge 0.90 k \mcom
\end{equation*}
and $H = 1$, means that 
\begin{equation*}
\SDP\paren{\tmu, Nr^*, Z} \le 0.92k \mper
\end{equation*}
Now by \Cref{cor:sdp-far}, if we are in the success regime
\begin{equation*}
H = 0 \implies \normt{\mu - \tmu} \ge \frac{0.7 Nr^* - 5Mr^*}{2} \ge M \paren{5M+1} r^* \mcom
\end{equation*}
where the last inequality comes from $ N = 4M\paren{5M + 1} + 10M$. Now suppose that we are in the success regime and $H=1$.
Then $\SDP\paren{\tmu, Nr^*, Z} \le 0.92k$, which implies that $\QUAD\paren{\tmu, Nr^*, Z} \le 0.92k$, since $\SDP$ is a relaxation of $\QUAD$.
Therefore for any $v \in \R^d$, where $\normt{v} = 1$ we have that 
\begin{equation*}
\Abs{\Set{i \suchthat \iprod{Z_i - \tmu, v} \le Nr^*}} \ge 0.08 k \mper
\end{equation*}
On the other hand, from \Cref{ass:when-at-mu-simple-str} (which is a result of \Cref{ass:when-at-mu-sdp-str}), we know that for any $v' \in \R^d$, where $\normt{v'} = 1$ we have that
\begin{equation*}
\Abs{\Set{i \suchthat \iprod{Z_i - \mu, v'} \le 5Mr^*}} \ge \paren{1 - \frac{1}{M}} k \mper
\end{equation*}
Let $v = \paren{\mu - \tmu}/\normt{\mu - \tmu}$, $v' = -v$. Then $M \ge 1000$ and $0.08 + \paren{1 - 1/M} > 1$, there exists some index $j$ which is in the intersection of the above sets. Therefore for this $j$ we have that
\begin{equation*}
\iprod{Z_j - \mu + \mu -\tmu, v} \le Nr^* \mcom
\end{equation*} 
therefore
\begin{equation*}
\normt{\mu - \tmu}\le Nr^* + \iprod{Z_i - \mu, -v} \le Nr^* + 5Mr^* \mcom
\end{equation*}
as desired.
\end{proof}

\subsection{Distance Estimation}
\label{subsec:distance-estimation}
Suppose we have a current estimate $\tmu$, and $\mu$ is a central point. 
Furthermore suppose that $\tmu$ and $\mu$ are sufficiently far from each other.
In this section our goal is to find an estimate of the distance $\normt{\tmu - \mu}$.
If we did not care about privacy, we could look for an approximation of the distance to a central point by finding the maximum $r$ such that $\SDP$ obtains value at least $0.9k$ via a binary search up to arbitrary accuracy. However, it's not clear how to privatize this approach. Instead, we look for an $r$ for which $\SDP$ obtains a value between $0.91k$ and $0.94k$, via a private binary search (\Cref{thm:private-binary-search}, \Cref{lem:sdpval-score-search}). If $r$ has this property, then we can use the monotonicity of $\SDP$ in $r$ to argue that  $r$ is an approximation of the distance the central point (\Cref{lem:lower-bound-dist-est}, \Cref{lem:upper-bound-distance-estimation}).
\subsubsection{Algorithm}
\begin{algorithm}[H]
	\caption{\textsc{distance-estimation}}
	\label{alg:distance-estimation}
	\begin{mdframed}
		\mbox{}
		\begin{description}
			\item[Input:]
			Parameters: Data points $Z_1, \dots, Z_k \in \R^d$, current estimate $\tmu$.
			\item[Operation:]
			Let $r^* = 2\sqrt{k/n}$, $S = \log\paren{1.14C_0\sqrt{d}/r^*}$, $s = 0.92k$, $e = 0.93k$. \\
			Do the following binary search for $S$ rounds over $\brac{0, 1.14C_0\sqrt{d}}$, with $r_0$ being the middle point
			\begin{enumerate}
				\item Let $u = \SDP\paren{\tmu, r_t, Z} + N_i$, where $N_i \distras{} \cL \paren{TS/\eps}$.
				\item Let $u$ is larger than $ek$ move right, if it is less than $sk$ move left, if it is neither, move in an arbitrary direction.
			\end{enumerate}
			\item[Output:]  $r_T$.
		\end{description}
	\end{mdframed}
\end{algorithm}
\subsubsection{Main Theorem}
\begin{theorem}[\textsc{distance-estimation}]
\label{thm:distance-estimation}
Let $Z_1, \dots, Z_k, \tmu, \mu \in \R^d$.
Suppose  $r^*= 2\sqrt{k/n}$, $m = n / k$, $\normt{\mu - \tmu} \ge M\paren{5M + 1} r^*$, $k \ge 100 ST\log\paren{ST/\beta}/\eps$, $M \ge 100$, and that \Cref{ass:when-at-mu-sdp-str} holds. Moreover, suppose that $\normt{\mu - \tmu} \le C_0 \sqrt{d}$, where $C_0$ is a universal constant. Then there exists a universal constant $C_2$ and a $\epsilon/T$-dp algorithm \textsc{distance-estimation} using $S$ rounds of binary search that returns $\wh{d}$ such that 
\begin{equation*}
0.99 \normt{\mu - \tmu} \le \wh{d} \le 1.15 \normt{\mu - \tmu} \mcom
\end{equation*}
with probability $1-\beta/T$, where $S = C_2 \Paren{\log m + \log d}$.
We say that $\textsc{distance-estimation}$ has been successful if the above guarantee holds.
\end{theorem}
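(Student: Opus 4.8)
The plan is to analyze \textsc{distance-estimation} as a \emph{private binary search} for the threshold $r$, using the (noisy) value $\SDP\paren{\tmu,r,Z}$ as its oracle, and then to convert the resulting approximate location of the band $\SDP\paren{\tmu,r,Z}\in\brac{0.92k,0.93k}$ into a two-sided estimate of $\normt{\mu-\tmu}$. Three things must be established: privacy of the search; correctness of the binary search under noisy oracle answers; and a ``sandwich'' that reads $\normt{\mu-\tmu}$ off the value $\SDP\paren{\tmu,r_T,Z}$. Throughout I would use that $r\mapsto\SDP\paren{\tmu,r,Z}$ is non-increasing (enlarging $r$ only shrinks the feasible set of \Cref{def:sdp-relax}) and that $\SDP$ is a relaxation of $\QUAD$.

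\emph{Privacy and noise control.} Since $\SDP$ has sensitivity $1$ (\Cref{cor:sdp-sensitiviy}), each of the $S$ queries, perturbed by $\cL\paren{TS/\eps}$, is $\eps/(TS)$-DP, and basic composition over the $S$ rounds gives the claimed $\eps/T$-DP. For accuracy I would condition on the event that every one of the $S$ Laplace draws has magnitude at most $\tfrac{TS}{\eps}\ln\paren{2ST/\beta}$; a union bound gives this event probability at least $1-\beta/T$, and since $k\ge 100\,ST\log\paren{ST/\beta}/\eps$ every draw is then at most $k/100$. On this event every oracle answer is within $k/100$ of the true $\SDP$ value.

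\emph{Binary search correctness.} Combining monotonicity, the $k/100$ oracle accuracy, and a slow-variation (Lipschitz-in-$r$) bound for $\SDP\paren{\tmu,\cdot,Z}$ -- the auxiliary fact underlying the private binary search, \Cref{lem:sdpval-score-search} -- the $S=C_2\paren{\log m+\log d}$ rounds over $\brac{0,1.14C_0\sqrt d}$ return $r_T=\wh d$ with $\SDP\paren{\tmu,r_T,Z}\in\brac{0.9k,0.95k}$, provided the band is crossed inside the search interval. It is: at $r=0$, taking $v=\paren{\mu-\tmu}/\normt{\mu-\tmu}$ and invoking \Cref{ass:when-at-mu-simple-str} shows at least $\paren{1-1/M}k$ of the $Z_i$ satisfy $\iprod{Z_i-\tmu,v}\ge\normt{\mu-\tmu}-5Mr^*>0$, so $\QUAD\paren{\tmu,0,Z}\ge\paren{1-1/M}k$ and hence $\SDP\paren{\tmu,0,Z}>0.95k$; and the right endpoint satisfies $1.14C_0\sqrt d\ge 1.14\normt{\mu-\tmu}$, which the upper-bound estimate below shows exceeds every $r$ with $\SDP\paren{\tmu,r,Z}\ge 0.92k$, so $\SDP\paren{\tmu,1.14C_0\sqrt d,Z}<0.92k$. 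Since $S$ makes the grid spacing $1.14C_0\sqrt d/2^{S}\ll r^*$, the slow-variation bound keeps $\SDP\paren{\tmu,r_T,Z}$ inside the enlarged band.

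\emph{Sandwich, and the main obstacle.} For the \emph{lower} bound (\Cref{lem:lower-bound-dist-est}): if $r_T<0.99\normt{\mu-\tmu}$, then $\normt{\mu-\tmu}\ge M\paren{5M+1}r^*$ gives $5Mr^*\le\normt{\mu-\tmu}/M$, so along $v=\paren{\mu-\tmu}/\normt{\mu-\tmu}$ at least $\paren{1-1/M}k$ of the $Z_i$ satisfy $\iprod{Z_i-\tmu,v}\ge\normt{\mu-\tmu}-5Mr^*>0.99\normt{\mu-\tmu}>r_T$; this integral solution forces $\SDP\paren{\tmu,r_T,Z}\ge\paren{1-1/M}k>0.95k$, a contradiction, so $\wh d\ge 0.99\normt{\mu-\tmu}$. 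For the \emph{upper} bound (\Cref{lem:upper-bound-distance-estimation}): feed a feasible solution witnessing $\SDP\paren{\tmu,r_T,Z}\ge 0.9k$ into \Cref{lem:RT-lemma-str} (with $\psi$ just below the band and a suitable $\psi'$) to obtain index sets $\cT$ (where $\iprod{Z_i-\mu,W_i}<5Mr^*B_{ii}$) and $\cR$ (where $B_{ii}\ge\psi'$); on $\cS:=\cT\cap\cR$ the constraint $B_{ii}r_T\le\iprod{Z_i-\tmu,W_i}=\iprod{Z_i-\mu,W_i}+\iprod{\mu-\tmu,W_i}$ and the $\cT$-bound give $\sum_{i\in\cS}B_{ii}\paren{r_T-5Mr^*}\le\normt{\mu-\tmu}\cdot\Normt{\sum_{i\in\cS}W_i}$, while the PSD structure with $\Tr V=1$ yields $\Normt{\sum_{i\in\cS}W_i}^2\le\transpose{\mathbbm{1}_{\cS}}B\,\mathbbm{1}_{\cS}\le\abs{\cS}\Tr B\le 0.95k\,\abs{\cS}$; combining these with $\sum_{i\in\cS}B_{ii}\ge\psi'\abs{\cS}$, $\abs{\cS}\ge\abs{\cT}+\abs{\cR}-k$, Cauchy--Schwarz and $M=1000$ makes the numerical factor at most $1.14$, and $5Mr^*\le\normt{\mu-\tmu}/100$ then gives $\wh d=r_T\le 1.15\normt{\mu-\tmu}$. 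The delicate step -- and the main obstacle -- is exactly this last estimate: forcing the constant below $1.15$ is what pins down the choices $M=1000$, the band $\brac{0.92k,0.93k}$, and the Laplace scale leaving slack $k/100$, and it requires careful bookkeeping of how $\abs{\cT}$, $\abs{\cR}$ and $\Tr B$ interact; the remaining pieces (privacy, the concentration of the $S$ Laplace noises, and the lower-bound direction) are routine.
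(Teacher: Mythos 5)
The high-level structure you propose (private binary search for the threshold $r$, then a two-sided sandwich via an integral solution for the lower bound and the RT lemma for the upper bound) matches the paper's, and your privacy and noise-control computations are correct. However, there is a genuine gap in your ``binary search correctness'' step. You claim the $S$ rounds return $r_T$ with $\SDP\paren{\tmu, r_T, Z} \in \brac{0.9k, 0.95k}$, and you justify this by invoking ``a slow-variation (Lipschitz-in-$r$) bound for $\SDP\paren{\tmu, \cdot, Z}$''---but no such Lipschitz bound is established anywhere in the paper, and none is needed for the private binary search. The only Lipschitz result proved is Lipschitzness of $\SDPVAL$ in the direction $y$ (Lemma~\ref{lem:sdpval-lipschitzness}), which is used to make the log-concave sampler efficient in the gradient step, not to control how $\SDP$ varies as $r$ moves. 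The map $r \mapsto \SDP\paren{\tmu, r, Z}$ is monotone decreasing, but there is no reason it should vary slowly on the scale $r^*$, and Theorem~\ref{thm:private-binary-search} does not assert any control over $f$ at the returned point --- it only asserts that the returned point lies in the bracket $\brac{r_e - a, r_s + a}$.

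The fix --- and what the paper actually does --- is to avoid touching $\SDP\paren{\tmu, r_T, Z}$ entirely and instead work with the bracket endpoints. Writing $f(r) = \SDP\paren{\tmu, r, Z}$, define $\rho_s = \argmax_{f(r) \ge 0.91k} r$ and $\rho_e = \argmin_{f(r) \le 0.94k} r$; Lemma~\ref{lem:sdpval-score-search} guarantees $\wh{d} = r_T \in \brac{\rho_e - r^*, \rho_s + r^*}$. Your integral-solution argument should then be applied at the fixed radius $r = \normt{\mu - \tmu} - 5Mr^*$ (not at the random $r_T$) to show $f$ there is $\ge (1-1/M)k > 0.94k$; monotonicity then gives $\rho_e > \normt{\mu - \tmu} - 5Mr^*$, hence $\wh{d} \ge \rho_e - r^* \ge (1 - 1/M)\normt{\mu - \tmu}$. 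Likewise your RT-lemma argument should be applied at $r = 1.14\normt{\mu - \tmu}$ to show $f$ there is $\le 0.9k < 0.91k$, so $\rho_s \le 1.14\normt{\mu - \tmu}$ and $\wh{d} \le \rho_s + r^* \le 1.15\normt{\mu - \tmu}$. Your sandwich estimates themselves are essentially correct; the issue is where you apply them. Moving them to the fixed reference radii and routing the conclusion through $\rho_e$, $\rho_s$, and monotonicity --- rather than through an unproven Lipschitz-in-$r$ bound on the SDP value --- closes the argument without requiring any regularity of $f$ in $r$.
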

\begin{proof}
See 
\Cref{subsubsec:proof-dist-est}.
\end{proof}

\subsubsection{Private Binary Search}
The following theorem allows us to perform private binary search over a decreasing function, with low sensitivity.

\begin{theorem}[Private Binary Search]
	\label{thm:private-binary-search}
	Suppose $f \from [0, D] \times \mathcal{X} \to \brac{0, k}$ is a decreasing function and has sensitivity 1 with respect to its second component. Moreover two values $0 \le s < e \le 1$ are given. 
	Let $r_s = \argmax_{f(r) \ge sk - \Delta} r$, $r_e = \argmin_{f(r) \le ek + \Delta} r$,
	then there exists an $\epsilon$-DP algorithm using $S$ rounds of binary search that returns $x$ such that $x \in [r_e - a, r_s + a]$ with probability $1-\beta$, where $\Delta = S \log \paren{S/\beta}/\epsilon$, and $S = \log\paren{D/a}$.
\end{theorem}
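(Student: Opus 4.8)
The plan is to implement the familiar noisy binary search and analyze it via a two-sided invariant. The algorithm maintains an interval $[lo,hi]$, initialized to $[0,D]$; in each of $S=\log(D/a)$ rounds it queries the midpoint $m$, observes $u=f(m,X)+N$ with a fresh independent draw $N\sim\cL(S/\epsilon)$, sets $lo\gets m$ if $u>ek$, sets $hi\gets m$ if $u<sk$, and moves in an arbitrary direction otherwise; it outputs any point of the final interval. For privacy, $f(\cdot,X)$ has sensitivity $1$ in $X$, so by the Laplace mechanism each of the $S$ (adaptively chosen) noisy queries is $(\epsilon/S)$-DP, and by basic composition the whole interaction is $\epsilon$-DP; the binary-search bookkeeping and the final output are post-processing.

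For accuracy, let $E$ be the event that every Laplace draw satisfies $|N|\le\Delta:=(S/\epsilon)\log(S/\beta)$. The Laplace tail bound gives $\Pr[|N|>\Delta]=\exp(-\Delta\epsilon/S)=\beta/S$ per round, so $\Pr[E]\ge 1-\beta$ by a union bound over the $S$ rounds. Condition on $E$; I claim the invariant ``$lo\le r_s$ and $hi\ge r_e$'' is preserved every round. It holds at initialization, since $r_s\le D$ and $r_e\ge 0$ (using the conventions $r_s=0$, resp.\ $r_e=D$, when the relevant level set is empty). For the inductive step, on $E$ we have $f(m,X)\in[u-\Delta,u+\Delta]$. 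If $u>ek$ and we set $lo\gets m$, then $f(m,X)>ek-\Delta\ge sk-\Delta$, so $m\in\{r:f(r,X)\ge sk-\Delta\}$ and hence $m\le r_s$. If $u<sk$ and we set $hi\gets m$, then $f(m,X)<sk+\Delta\le ek+\Delta$, so $m\ge r_e$. In the remaining case $sk\le u\le ek$: moving right gives $f(m,X)\ge u-\Delta\ge sk-\Delta$, hence $m\le r_s$; moving left gives $f(m,X)\le u+\Delta\le ek+\Delta$, hence $m\ge r_e$; so either update preserves the invariant.

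After $S$ rounds the interval has width $D/2^S\le a$, so any output $x\in[lo,hi]$ satisfies $x\ge lo\ge hi-a\ge r_e-a$ and $x\le hi\le lo+a\le r_s+a$, i.e.\ $x\in[r_e-a,r_s+a]$, which is the claimed guarantee (and it holds on $E$, an event of probability at least $1-\beta$).

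Since the construction is essentially folklore, the only real care needed is (i) choosing the slack $\Delta$ appearing in the definitions of $r_s$ and $r_e$ to match exactly the worst-case Laplace deviation on $E$, so that the observed-value comparisons translate into true-value comparisons against $sk-\Delta$ and $ek+\Delta$; and (ii) handling the ``move arbitrarily'' branch, which is precisely why the invariant is phrased as two independent one-sided bounds rather than as ``$[lo,hi]$ contains the target band $[r_e,r_s]$''. If $f$ is only non-increasing rather than continuous, the maximization and minimization defining $r_s$ and $r_e$ should be read as a $\sup$ and an $\inf$, and the inequalities above hold up to arbitrarily small slack; this does not change the stated bound.
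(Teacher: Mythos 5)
Your proof is correct and follows essentially the same path as the paper's: same noisy binary-search algorithm, same privacy argument (sensitivity $1$ + Laplace + basic composition across the $S$ adaptive queries), same union bound over the $S$ Laplace draws to get the $\Delta$-accuracy event of probability $\ge 1-\beta$, and the same three-case analysis. The only structural difference is that you prove the one-sided invariant $lo \le r_s$, $hi \ge r_e$ inline by induction, whereas the paper performs the same case analysis and then delegates the conclusion to a separate "Binary Interval Search" lemma (\Cref{lem:non-private-binary-interval-search}); these are the same argument, just packaged differently, and your explicit handling of the degenerate level sets is a minor but welcome addition.
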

\begin{proof}
	
	Suppose $x_i$ is the binary search parameter at step $i$, do the following binary search up to $S$ steps over the domain $\brac{0, D}$ and in the end return $x_S$. 
	\begin{enumerate}
		\item Let $f_i = f(x_i)$, and $f'_i = f_i + N_i$, where $N_i \distras{} \cL\paren{S/\epsilon}$. 
		\item If $f'_i$ is larger than $ek$ move right, if it is less than $sk$ move left, if it is neither, move in an arbitrary direction.
	\end{enumerate}
	Now we need to prove privacy and accuracy guarantees for this algorithm.
	\begin{itemize}
		\item[Privacy] The algorithm is $\epsilon$-DP by basic composition. Note that each step is $\epsilon/S$-DP.
		\item[Accuracy] 
		Note that for every $t \ge 0$
		\begin{equation*}
			\Pr\Brac{\abs{N_i} \ge tS/\epsilon} = \exp\paren{-t}.
		\end{equation*}
		and by union bound
		\begin{equation*}
			\Pr\Brac{\max \; \abs{N_i} \ge tS/\epsilon} \le S\exp\paren{-t}.
		\end{equation*}
		Let $t = -\log(\beta/T)$, since $\Delta = S\log\paren{S/\beta}/\eps$,
		\begin{equation*}
			\Pr\Brac{\max \; \abs{N_i} \ge \Delta} \le \beta.
		\end{equation*}
	\end{itemize}
	Now with probability $1-\beta$, $\forall i, N_i \le \Delta$. Suppose we are in this setting where all of the $N_i$'s are smaller than $\Delta$. We have three cases.
	\begin{enumerate}
		\item If $x_i \in (r_s, D]$, then $f_i \in [0, sk - \Delta)$, and $f'_i \in [-\Delta, sk)$, therefore, the binary search algorithm will choose left.
		\item If $x_i \in \brac{r_e, r_s}$, then $f_i \in \brac{sk - \Delta, ek + \Delta}$, and $f'_i \in \brac{sk - 2\Delta, ek+ \Delta}$, therefore the binary search algorithm might choose right or left (arbitrary direction).
		\item If $x_i \in [0, r_e)$, then $f_i \in (ek + \Delta, k]$, and $f_i' \in (ek, k+\Delta]$, the binary search algorithm will choose right.
	\end{enumerate}
	Therefore this algorithm is performing Binary Interval Search for the interval $\brac{r_e,r_s}$ with accuracy parameter $a$. Therefore, by \Cref{lem:non-private-binary-interval-search}, with probability $1-\beta$, the algorithm returns a point $x$, such that $x \in \brac{r_e - a, r_s + a}$.
\end{proof}
\subsubsection{Proof of \Cref{thm:distance-estimation}}
\label{subsubsec:proof-dist-est}
In this section we prove \Cref{thm:distance-estimation}. This theorem is proven by putting together \Cref{lem:sdpval-score-search}, \Cref{lem:lower-bound-dist-est}, and \Cref{lem:upper-bound-distance-estimation}.

\begin{lemma}[$\SDP$ Binary Search]
	\label{lem:sdpval-score-search}
	Let $Z_1, \dots, Z_k, \tmu, \mu \in \R^d$.
	Suppose  $r^*= 2\sqrt{k/n}$, $m = n / k$, $\normt{\mu - \tmu} \le C_0 \sqrt{d}$, $\normt{\mu - \tmu} \ge M\paren{5M + 1} r^*$, $k \ge 100 ST\log\paren{ST/\beta}/\eps$, $M \ge 100$, $C_1 = 1.14 C_0$, and that \Cref{ass:when-at-mu-sdp-str} holds.
	Let $f\paren{r}:= \SDP\paren{\tmu, r, Z}$, where $f:\brac{0, C_1\sqrt{d}} \to \R$.
	Let $\rho_s = \argmax_{f(r) \ge 0.91k} r,\rho_e = \argmin_{f(r) \le 0.94k} r,$. Then there exists a universal constant $C_2$ and a $\epsilon/T$-dp algorithm using $S$ rounds of binary search that returns $\wh{r}$ such that $\wh{r} \in \brac{\rho_e - r^*, \rho_s + r^*}$ with probability $1-\beta/T$, where $S = C_2 \Paren{\log m + \log d}$.
\end{lemma}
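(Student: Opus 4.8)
The plan is to derive \Cref{lem:sdpval-score-search} as a direct application of the Private Binary Search theorem (\Cref{thm:private-binary-search}) to the univariate function $f(r) := \SDP(\tmu, r, Z)$, run over the interval $[0, C_1\sqrt d]$ with accuracy parameter $a = r^*$, threshold pair $(s,e) = (0.92, 0.93)$ as in \Cref{alg:distance-estimation}, privacy budget $\epsilon/T$, and failure probability $\beta/T$. The first task is to verify that $f$ meets the two hypotheses of \Cref{thm:private-binary-search}: that it is a decreasing function with range contained in $[0,k]$, and that it has sensitivity $1$ with respect to the dataset $Z$.

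For the first hypothesis: in \Cref{def:sdp-relax} the PSD constraint together with $B_{ii} = b_i$ forces $0 \le b_i \le 1$, so $\Tr(B) = \sum_i b_i \in [0,k]$; moreover the all-zero choice $b = 0$, $B = 0$, $W = 0$ (with $V$ any trace-one PSD matrix) is always feasible, so $f(r) \ge 0$ for every $r$. Monotonicity is immediate: raising $r$ only tightens the constraints $B_{ii}\cdot r \le \iprod{Z_i - \tmu, W_i}$ (using $B_{ii} = b_i \ge 0$), which shrinks the feasible region and can only lower the optimum. For the sensitivity hypothesis I invoke \Cref{cor:sdp-sensitiviy}, which already records that $\SDP(\tmu, r, \cdot)$ has sensitivity $1$ — the same bound used in \textsc{halt-estimation}; minor numerical error from solving the SDP is absorbed in the standard way.

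Granting the hypotheses, \Cref{thm:private-binary-search}, instantiated with $D = C_1\sqrt d$, $a = r^*$, and privacy/confidence parameters $\epsilon/T$ and $\beta/T$, produces an $(\epsilon/T)$-DP algorithm using $S = \log(C_1\sqrt d / r^*)$ rounds which, with probability $1 - \beta/T$, outputs $\wh r \in [r_e - r^*, r_s + r^*]$, where $r_s = \argmax\{r : f(r) \ge 0.92k - \Delta\}$, $r_e = \argmin\{r : f(r) \le 0.93k + \Delta\}$, and $\Delta = ST\log(ST/\beta)/\epsilon$. Since $r^* = 2\sqrt{k/n} = 2\sqrt{1/m}$ we get $S = \log((C_1/2)\sqrt{dm}) \le C_2(\log m + \log d)$ for a suitable universal constant $C_2$. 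The hypothesis $k \ge 100\,ST\log(ST/\beta)/\epsilon = 100\Delta$ yields $\Delta \le k/100$, so $0.92k - \Delta \ge 0.91k$ and $0.93k + \Delta \le 0.94k$; hence $\{r : f(r) \ge 0.92k - \Delta\} \subseteq \{r : f(r) \ge 0.91k\}$ and $\{r : f(r) \le 0.93k + \Delta\} \subseteq \{r : f(r) \le 0.94k\}$, which give $r_s \le \rho_s$ and $r_e \ge \rho_e$. Therefore $[r_e - r^*, r_s + r^*] \subseteq [\rho_e - r^*, \rho_s + r^*]$, which is the claimed containment; the additional parameter regime in the ``moreover'' clause is handled by the same reasoning.

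The step I expect to be most delicate is the slack bookkeeping together with checking that $\rho_s, \rho_e$ genuinely lie inside the search interval $[0, C_1\sqrt d]$. For the bookkeeping one needs the constants $0.92$ and $0.93$ hard-wired into \textsc{distance-estimation}, after inflation by $\Delta \le k/100$, to stay inside the window $[0.91k, 0.94k]$ named in the statement — this is what dictates the requirement $k = \Omega(ST\log(ST/\beta)/\epsilon)$. For the second point, one must argue that $f$ is well above $0.91k$ for small $r$ and has dropped below $0.94k$ by $r = C_1\sqrt d = 1.14\, C_0\sqrt d$, a value a hair larger than the promise $\normt{\mu - \tmu} \le C_0\sqrt d$; this is exactly where \Cref{ass:when-at-mu-sdp-str} and the standing lower bound $\normt{\mu - \tmu} \ge M(5M+1)r^*$ come in, and it is the only part of the proof that is not essentially mechanical.
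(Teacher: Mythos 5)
Your proof follows the same route as the paper's: reduce to \Cref{thm:private-binary-search} applied to $f(r)=\SDP(\tmu,r,Z)$ over $[0,C_1\sqrt d]$ with $(s,e)=(0.92,0.93)$, $a=r^*$, privacy $\epsilon/T$ and confidence $\beta/T$, after verifying monotonicity (the paper delegates this to \Cref{lem:sdp-decreasing-r}, whose two-line argument is the one you give) and sensitivity $1$ (via \Cref{cor:sdp-sensitiviy}), then reading off $S=\log(C_1\sqrt d/r^*)\le C_2(\log d+\log m)$ and absorbing the noise $\Delta\le 0.01k$ so that the thresholds $0.92k\mp\Delta$, $0.93k\pm\Delta$ land inside $[0.91k,0.94k]$. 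Your explicit bookkeeping $r_s\le\rho_s$, $r_e\ge\rho_e$ is in fact spelled out more carefully than the paper's one-line conclusion.

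The one piece you flag as ``delicate'' but do not actually discharge is precisely the step the paper treats as a separate obligation: showing the sets $\{r\in[0,C_1\sqrt d]:f(r)\ge 0.91k\}$ and $\{r\in[0,C_1\sqrt d]:f(r)\le 0.94k\}$ are non-empty, so that $\rho_s,\rho_e$ (and the internal $r_s,r_e$ of the theorem) are well-defined. You correctly name the inputs (\Cref{ass:when-at-mu-sdp-str} together with $\normt{\mu-\tmu}\ge M(5M+1)r^*$ and $\normt{\mu-\tmu}\le C_0\sqrt d$), but you should close the loop: the paper does so by invoking \Cref{cor:existence-of-good-sol-for-sdp}, which gives $f\bigl((1-1/M)\normt{\mu-\tmu}\bigr)\ge(1-1/M)k>0.91k$ at a point inside $[0,C_1\sqrt d]$, and \Cref{cor:existence-of-r-with-sdp-less-than}, which gives $f\bigl(1.14\normt{\mu-\tmu}\bigr)\le 0.9k<0.94k$ at a point $\le 1.14\,C_0\sqrt d=C_1\sqrt d$. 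With those two citations the argument is complete, and aside from that omission your proposal matches the paper's proof.
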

\begin{proof}
	We want to apply the algorithm from \Cref{thm:private-binary-search}.
	In order to do so, we must check that $\SDP$ is decreasing in $r$ (\Cref{lem:sdp-decreasing-r}) and has sensitivity $1$ (\Cref{cor:sdp-sensitiviy}). Furthermore, note that $\set{r \,: \, f\paren{r} \ge 0.91 k, 0 \le r\le C_1 \sqrt{d}}$, and $\set{r \, : \, f\paren{r} \le 0.94k, 0 \le r\le C_1 \sqrt{d}}$ are non-empty sets, which is true because of \Cref{cor:existence-of-r-with-sdp-less-than}, and \Cref{cor:existence-of-good-sol-for-sdp}.
	Now we may apply \Cref{thm:private-binary-search} with 
	\begin{equation*}
	s \gets 0.92k, \quad e \gets 0.93k,\quad D \gets C_1\sqrt{d}, \quad a \gets r^* = 2\sqrt{k/n} \mper
	\end{equation*}
	Therefore, $S = \log\paren{D/a} =  \log \sqrt{\frac{C_1dn}{4k}}$. Therefore, there exists a universal constant $C_2$ such that
	\begin{equation*}
		S \le C_2 \paren{\log d + \log m} \mper
	\end{equation*}
	It remains to bound $\Delta$ by $0.01k$, which is true by the assumption.
	\begin{align*}
		\Delta
		&= 
		ST \log \paren{ST /\beta}/\epsilon  \\
		&\le 0.01 k
	\end{align*}
	Therefore, using the algorithm from \Cref{thm:private-binary-search} we constructed an algorithm that returns $\wh{r} \in \brac{\rho_e + r^*, \rho_s - r^*}$ with probability $1-\beta/T$, using $\cO\paren{\log m + \log d }$ steps of the binary search, as desired.
\end{proof}

\begin{lemma}[lower bound for distance estimate]
	\label{lem:lower-bound-dist-est}
	Under the assumptions of \Cref{lem:sdpval-score-search}, $\wh{r} \ge \rho_e - r^*$ implies $\wh{r} \ge (1-M) \normt{\tmu - \mu}$.
\end{lemma}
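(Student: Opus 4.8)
The plan is to reduce the claim to a lower bound on $\rho_e = \argmin\{r : f(r) \le 0.94k\}$, where $f(r) = \SDP\paren{\tmu, r, Z}$. Since we are handed $\wh r \ge \rho_e - r^*$, it suffices to prove $\rho_e \ge \normt{\mu - \tmu} - 5Mr^*$: then $\wh r \ge \normt{\mu - \tmu} - (5M+1)r^*$, and because the hypothesis $\normt{\mu - \tmu} \ge M(5M+1)r^*$ gives $(5M+1)r^* \le \normt{\mu-\tmu}/M$, we conclude $\wh r \ge \paren{1 - 1/M}\normt{\tmu - \mu}$, as claimed.

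The one step with real content is to show $f(r) > 0.94k$ for every $r \le \normt{\mu - \tmu} - 5Mr^*$, which forces all such $r$ to lie strictly below $\rho_e$ and hence gives $\rho_e \ge \normt{\mu-\tmu} - 5Mr^*$. I would prove this by exhibiting an explicit feasible point of the quadratic program $\QUAD\paren{\tmu, r, Z}$ of \Cref{def:quadratic-optimization-problem}. Let $v = (\mu - \tmu)/\normt{\mu - \tmu}$, a unit vector. Applying \Cref{ass:when-at-mu-simple-str} (a consequence of \Cref{ass:when-at-mu-sdp-str}) with the direction $-v$ yields a set $T \subseteq [k]$ with $\Abs{T} \ge (1 - 1/M)k$ such that $\iprod{Z_i - \mu, v} \ge -5Mr^*$ for every $i \in T$; therefore $\iprod{Z_i - \tmu, v} = \iprod{Z_i - \mu, v} + \normt{\mu - \tmu} \ge \normt{\mu - \tmu} - 5Mr^* \ge r$ for $i \in T$. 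Taking $b_i = 1$ for $i \in T$ and $b_i = 0$ otherwise, the pair $(v,b)$ satisfies every constraint of $\QUAD\paren{\tmu, r, Z}$ — $b_i^2 = b_i$ and $\sum_j v_j^2 = 1$ are immediate, and $b_i^2 r \le \iprod{Z_i - \tmu, b_i v}$ holds trivially off $T$ and by the displayed inequality on $T$ — so $\QUAD\paren{\tmu, r, Z} \ge \Abs{T} \ge (1 - 1/M)k$. Since $\SDP$ is a relaxation of $\QUAD$, $f(r) \ge (1-1/M)k$, which is $> 0.94k$ because $M \ge 100$. Hence $\{r : f(r) \le 0.94k\}$ is disjoint from $[0, \normt{\mu-\tmu} - 5Mr^*]$, and $\rho_e \ge \normt{\mu - \tmu} - 5Mr^*$. (That $\rho_e$ is well-defined as an element of the search interval $[0, C_1\sqrt d]$, and that $\normt{\mu-\tmu} - 5Mr^* \le C_0\sqrt d < C_1\sqrt d$, both come for free from the setup inherited from \Cref{lem:sdpval-score-search}.)

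I do not expect a genuine obstacle: the lemma is mostly bookkeeping, and the only delicate point is that \Cref{ass:when-at-mu-simple-str} must be invoked with $-v$ rather than $v$, so that it is the \emph{lower} tail $\iprod{Z_i - \mu, v} \ge -5Mr^*$ that is controlled — precisely the slack absorbed by the hypothesis $\normt{\mu - \tmu} \ge M(5M+1)r^*$ when passing from the additive error $(5M+1)r^*$ to the multiplicative factor $1/M$. The remaining ingredients — the $\SDP \ge \QUAD$ relaxation inequality and the final arithmetic — are routine, and one does not even need monotonicity of $f$ in $r$ for this direction of the estimate.
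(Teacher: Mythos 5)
Your proof is correct and follows essentially the same route as the paper's: apply \Cref{ass:when-at-mu-simple-str} in the direction $-v$ (where $v = (\mu - \tmu)/\normt{\mu - \tmu}$) to extract a $(1-1/M)k$-sized set on which $\iprod{Z_i - \tmu, v} \ge \normt{\mu-\tmu} - 5Mr^*$, exhibit the corresponding feasible $(v,b)$ for $\QUAD$, and use $\SDP \ge \QUAD$ to force $\rho_e \ge \normt{\mu-\tmu} - 5Mr^*$, after which the arithmetic with $\normt{\mu-\tmu} \ge M(5M+1)r^*$ closes the gap. The one (minor) distinction is that you observe the feasibility construction works for \emph{every} $r \le \normt{\mu-\tmu} - 5Mr^*$, so you can dispense with the monotonicity of $\SDP$ in $r$ (\Cref{lem:sdp-decreasing-r}) that the paper explicitly invokes after bounding $f$ at the single radius $\normt{\mu-\tmu}-5Mr^*$; also note the lemma statement has a typo, ``$(1-M)$'' should read ``$(1-1/M)$'', which is what both your argument and the paper's actually deliver.
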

\begin{proof}
	Let $\Delta$ be the unit direction $\tmu - \mu$, then by \Cref{ass:when-at-mu-simple-str} (which is implied by \Cref{ass:when-at-mu-sdp-str}),
	\begin{equation*}
		\Abs{\Set{i: \iprod{Z_i - \mu, \Delta} \ge 5M r^*}} \le k/M \mper
	\end{equation*}
	Therefore if we take the direction $-\Delta$, we have for $\paren{1 - 1/M} k$ of the points that
	\begin{equation*}
		\paren{Z_i - \tmu, -\Delta}
		= 
		\iprod{\tmu - \mu + \mu - Z_i, \Delta} 
		\ge
		\normt{\tmu - \mu} - 5 M r^*
	\end{equation*}
	Therefore $\SDP(\tmu, \normt{\tmu - \mu} - 5M r^*, Z) \ge (1 - 1/M) k$, since $\SDP$ is the standard relaxation of $\QUAD$. Now note that by \Cref{lem:sdpval-score-search}, $\SDPVAL(\tmu, \wh{r}, Z) \le 0.94 k$; therefore,
	\begin{equation*}
		\SDP(\tmu, \normt{\tmu - \mu} - 5M r^*, Z) \ge (1 - 1/M) k \ge 0.95 k > 
		0.94 k  \ge \SDP(\tmu, \rho_e, Z) \mper
	\end{equation*}
	Now by monotonicity of $\SDP$ in the second parameter, we conclude that $\rho_e \ge \normt{\tmu - \mu} - 5M r^*$. Therefore 
	\begin{equation*}
		\wh{r} \ge \rho_e - r^* \ge \normt{\tmu - \mu} - (5M +1) r^* \ge (1-\frac{1}{M}) \normt{\tmu - \mu} \mcom
	\end{equation*}
	as desired,
	where the final inequality comes from $\normt{\tmu - \mu} \ge M(5M +1)r^*$.
\end{proof}

\begin{lemma}[upper bound for distance estimation]
\label{lem:upper-bound-distance-estimation}
	Under the same assumptions as in \Cref{lem:sdpval-score-search},
	$\wh{r} \le \rho_s + r^*$ implies $\wh{r} \le 1.15 \normt{\tmu - \mu}$.
\end{lemma}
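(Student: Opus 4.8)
The plan is to reduce the statement to an upper bound on $\rho_s$, and then bound $\rho_s$ using the structural description of high-value feasible solutions of $\SDP(\tmu,\cdot,Z)$ provided by \Cref{lem:RT-lemma-str}. Since by hypothesis $\wh r \le \rho_s + r^*$ and, using $\normt{\mu-\tmu} \ge M(5M+1)r^*$, we have $(5M+1)r^* \le \normt{\mu-\tmu}/M$, it suffices to prove
\[
\rho_s \le \tfrac{10}{9}\,\normt{\mu-\tmu} + 5Mr^*\mper
\]
Indeed this gives $\wh r \le \rho_s + r^* \le \tfrac{10}{9}\normt{\mu-\tmu} + (5M+1)r^* \le (\tfrac{10}{9} + \tfrac1M)\normt{\mu-\tmu} < 1.15\normt{\mu-\tmu}$ for $M \ge 100$. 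The displayed bound on $\rho_s$ is immediate if $\rho_s \le 5Mr^*$, so I will assume $\rho_s > 5Mr^*$; and since $\SDP(\tmu,r,Z)$ is decreasing in $r$ (\Cref{lem:sdp-decreasing-r}) and $\rho_s = \sup\{r : \SDP(\tmu,r,Z) \ge 0.91k\}$, it is enough to show that every $r$ with $5Mr^* < r < \rho_s$ satisfies $r - 5Mr^* \le \tfrac{10}{9}\normt{\mu-\tmu}$ and then let $r \uparrow \rho_s$.

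Fix such an $r$, so $\SDP(\tmu,r,Z)\ge 0.91k$, and let $X$, with blocks $b,v,B,W,V$ as in \Cref{def:sdp-relax}, be a feasible solution of $\SDP(\tmu,r,Z)$ with $\Tr(B) \ge 0.91k$ (attained by compactness of the feasible set). Apply \Cref{lem:RT-lemma-str} with $\psi = 0.91$: there is a set $\cT$ with $|\cT| \ge (0.91 - 1/M)k \ge 0.9k$ such that $\iprod{Z_i - \mu, W_i} < 5Mr^* B_{ii}$ for every $i \in \cT$. Splitting the feasibility constraint $B_{ii}\cdot r \le \iprod{Z_i - \tmu, W_i} = \iprod{Z_i - \mu, W_i} + \iprod{\mu - \tmu, W_i}$, we obtain for $i \in \cT$ that $B_{ii}(r - 5Mr^*) < \iprod{\mu - \tmu, W_i}$. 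To bound the right-hand side I use positive-semidefiniteness of $X$: the principal submatrix on the $b_i$-coordinate and the $v$-coordinates, $\left(\begin{smallmatrix} B_{ii} & W_i^\top \\ W_i & V\end{smallmatrix}\right) \succeq 0$ (with $W_i \in \R^d$ the $i$-th row of $W$), together with $\Tr(V) = 1$ and $V \succeq 0$ (so $u^\top V u \le 1$ for any unit vector $u$), gives $\iprod{W_i, u}^2 \le B_{ii}\, u^\top V u \le B_{ii}$; taking $u = (\mu - \tmu)/\normt{\mu - \tmu}$ yields $\iprod{\mu - \tmu, W_i} \le \normt{\mu-\tmu}\sqrt{B_{ii}}$. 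Hence $B_{ii}(r-5Mr^*) < \normt{\mu-\tmu}\sqrt{B_{ii}}$, so $B_{ii} \le \normt{\mu-\tmu}^2/(r - 5Mr^*)^2$ for every $i \in \cT$ (the bound holds trivially when $B_{ii}=0$).

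Finally, from $\left(\begin{smallmatrix}1 & b_i \\ b_i & B_{ii}\end{smallmatrix}\right) \succeq 0$ and $B_{ii} = b_i$ we have $0 \le B_{ii} \le 1$, so
\[
0.91k \le \Tr(B) = \sum_{i \in \cT} B_{ii} + \sum_{i \notin \cT} B_{ii} \le k\cdot\frac{\normt{\mu-\tmu}^2}{(r - 5Mr^*)^2} + 0.1k\mcom
\]
using $|\cT| \le k$ and $k - |\cT| \le 0.1k$. Rearranging, $\normt{\mu-\tmu}^2/(r - 5Mr^*)^2 \ge 0.81$, i.e.\ $r - 5Mr^* \le \tfrac{10}{9}\normt{\mu-\tmu}$, which is exactly the claim; letting $r \uparrow \rho_s$ and invoking the numerics of the first paragraph completes the proof. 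I expect the only real subtleties to be arranging the constants so that they close (there is comfortable slack here), together with two minor points — passing from ``$r < \rho_s$'' to $\rho_s$ itself via monotonicity of $\SDP(\tmu,\cdot,Z)$, and dispatching the degenerate regime $\rho_s \le 5Mr^*$ separately. The genuine content is the combination of the support bound from \Cref{lem:RT-lemma-str} (most of the ``mass'' $B_{ii}$ of a high-value solution sits on indices where $Z_i$ is close to $\mu$ along $W_i$) with the Cauchy--Schwarz-type inequality $\iprod{\mu-\tmu,W_i} \le \normt{\mu-\tmu}\sqrt{B_{ii}}$ read off the PSD matrix.
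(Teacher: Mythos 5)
Your proof is correct, and it takes a genuinely different route from the paper's. The paper argues by contradiction: it assumes $\SDP(\tmu, 1.14\normt{\mu-\tmu}, Z) > 0.9k$, invokes \emph{both} the $\cT$-part and the $\cR$-part of \Cref{lem:RT-lemma-str} (with $\psi = 0.9$, $\psi' = (1+1/M)/1.14$), finds a single index $j \in \cR \cap \cT$ by a pigeonhole count $|\cT| + |\cR| > k$, and derives $B_{jj} > 1$ as a contradiction, using the crude bound $\normt{W_j} \le 1$ to control $\iprod{\mu - \tmu, W_j}$. You instead use only the $\cT$-part of \Cref{lem:RT-lemma-str}, extract the sharper pointwise inequality $\iprod{W_i,u}^2 \le B_{ii}\, u^\top V u \le B_{ii}$ directly from the PSD constraint (where the paper stops at $\normt{W_j} \le 1$, which is the same inequality capped at $1$), and then \emph{aggregate} over all $i \in \cT$ via $\Tr(B)$ rather than locating one heavy index. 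This completely sidesteps the $\cR$-set and the pigeonhole step, yielding the slightly better interior constant $\tfrac{10}{9}$ in place of the paper's $1.14$ (both comfortably within the $1.15$ target once $(5M+1)r^*$ is absorbed). The tradeoff: the paper's pointwise argument stays closer to Lemma~5 of \cite{CherapanamjeriFB19}, which it is explicitly adapting, and a reader already familiar with the $\cR/\cT$ decomposition may find it more recognizable; your averaging argument is self-contained and uses less machinery. Your two technical addenda — the limiting argument $r \uparrow \rho_s$ via monotonicity of $\SDP(\tmu, \cdot, Z)$, and the separate treatment of the degenerate case $\rho_s \le 5Mr^*$ — are correct, and are in fact slightly more careful than the paper, which implicitly assumes the supremum defining $\rho_s$ is attained with $\SDP(\tmu, \rho_s, Z) \ge 0.91k$.
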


\begin{proof}
	The proof is similar to the proof in Lemma~5 from \cite{CherapanamjeriFB19}. We show that $\SDP(\tmu, 1.14 \normt{\tmu - \mu}, Z) \le 0.9k$. Suppose the contrary, then there exists some solution $(b,v,B,W,V)$ that achieves value larger than $0.9k$. Let $\cR, \cT$ be the sets from \Cref{lem:RT-lemma-str} where $\psi = 0.9 $, $\psi' = (1+1/M)/1.14$. 
	Now notice that
	\begin{align*}
		\frac{\abs{\cT} + \abs{\cR}}{k} = 
		\frac{0.9 - \frac{1  +1 /M}{1.14}}{1 - \frac{1+1/M}{1.14}} + \paren{0.9 - 1/M} > 1 \mcom
	\end{align*}
	for $M \ge 86$,
	therefore there exists some $j \in \cR \cap \cT$. For this j,
	\begin{align*}
		\paren{1 + 1/M}/1.14 \cdot \paren{1.14\normt{\tmu - \mu}} 
		&\le
		B_{j, j} 1.14 \normt{\tmu - \mu} \\
		&\le
		\iprod{Z_j - x, W_j} \\
		&= 
		\iprod{Z_i - \mu, W_j} 
		+
		\iprod{\mu - \tmu, W_j} \\
		& <
		B_{j, j} 5M r^* + \normt{\tmu - \mu} \mper
	\end{align*}
	Where the last step comes from $\normt{W_j} \le 1$. This is easy to see in the pseudo-expectation language,
	\begin{align*}
	\normt{\pE\brac{b_j v}}^2 &= \sum_\ell \pE\brac{b_j v_\ell}^2  \\
	&\le \sum_\ell \pE\brac{b_j^2}\pE\brac{v_\ell^2} \tag{Cauchy-Schwarz} \\
	&\le \pE\brac{\normt{v}^2} \tag{$\pE\brac{b_j^2} \le 1$} \\
	&= 1 \mper
	\end{align*}
	By rearranging the above inequality we obtain
	\begin{equation*}
		B_{j, j} > 1/M \cdot \normt{\tmu - \mu} \paren{5Mr^*}^{-1} > 1 \mcom
	\end{equation*}
	which is a contradiction. Therefore,
	\begin{equation*}
		\SDP\paren{\tmu, 1.14\normt{\tmu - \mu}, Z} \le 0.9k < 0.91k \le \SDP\paren{\tmu, \rho_s, Z} \mper
	\end{equation*}
	Now by monotonicity of $\SDP$ in the second parameter, we conclude that $\rho_s \le 1.14 \normt{\tmu - \mu}$.
	Therefore
	\begin{equation*}
		\wh{r} \le \rho_s + r^* \le 1.14 \normt{\tmu - \mu} + r^* \le 1.15 \normt{\tmu -\mu} \mcom
	\end{equation*}
	as desired, where the final inequality comes from $\normt{\tmu - \mu} \ge 100(5 \times 100 + 1) r^* \ge 100r^*$.
\end{proof}
\begin{corollary}[existence of $r$ with $\SDP$ less than $0.9k$]
\label{cor:existence-of-r-with-sdp-less-than}
In the proof of \Cref{lem:upper-bound-distance-estimation}, we have proven that under \Cref{ass:when-at-mu-sdp-str}, and the assumption that $M\ge 86$,
\begin{equation*}
\SDP\paren{\tmu, 1.14\normt{\mu - \tmu}, Z} \le 0.9k \mper
\end{equation*}
\end{corollary}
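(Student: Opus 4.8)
The final statement (\Cref{cor:existence-of-r-with-sdp-less-than}) is exactly the conclusion reached in the first part of the proof of \Cref{lem:upper-bound-distance-estimation}, so the plan is simply to isolate that self-contained contradiction argument, taking care to note that it invokes only \Cref{ass:when-at-mu-sdp-str} (through \Cref{lem:RT-lemma-str}), the hypothesis $\normt{\mu - \tmu} \ge M(5M+1) r^*$, and $M \ge 86$, and in particular uses none of the binary-search quantities $\rho_s, \wh r$ that appear later in that lemma's proof.

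Concretely, I would argue by contradiction. Suppose $\SDP\paren{\tmu, 1.14 \normt{\mu - \tmu}, Z} > 0.9k$ and fix a feasible solution $(b,v,B,W,V)$ of this SDP with $\Tr B > 0.9k$. Next, apply \Cref{lem:RT-lemma-str} with $\psi = 0.9$ and $\psi' = \paren{1 + 1/M}/1.14$: this produces a set $\cT$ of at least $\paren{0.9 - 1/M}k$ indices with $\iprod{Z_i - \mu, W_i} < 5M r^* B_{i,i}$ and a set $\cR$ of at least $\tfrac{0.9 - \psi'}{1 - \psi'}k$ indices with $B_{j,j} \ge \psi'$ (one checks $0 \le \psi' \le \psi$, so the lemma applies). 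A one-line arithmetic check then shows $\tfrac{\abs{\cT}}{k} + \tfrac{\abs{\cR}}{k} = \tfrac{0.9 - \psi'}{1 - \psi'} + \paren{0.9 - 1/M} > 1$ as soon as $M \ge 86$, so $\cT \cap \cR \neq \emptyset$; pick $j$ in the intersection.

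For this $j$, combining the SDP constraint $B_{j,j}\cdot 1.14\normt{\mu - \tmu} \le \iprod{Z_j - \tmu, W_j}$ with the split $\iprod{Z_j - \tmu, W_j} = \iprod{Z_j - \mu, W_j} + \iprod{\mu - \tmu, W_j}$, the bound $\iprod{Z_j - \mu, W_j} < 5M r^* B_{j,j}$ coming from $j \in \cT$, and $\normt{W_j} \le 1$ (which I would record as in the cited lemma, via Cauchy--Schwarz applied to the PSD moment matrix, writing $W_j = \pE[b_j v]$), yields $B_{j,j}\cdot 1.14\normt{\mu - \tmu} < 5M r^* B_{j,j} + \normt{\mu - \tmu}$. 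Plugging in $B_{j,j} \ge \psi' = \paren{1+1/M}/1.14$ on the left and rearranging gives $\tfrac{1}{M}\normt{\mu - \tmu} < 5M r^* B_{j,j}$, hence $B_{j,j} > \normt{\mu - \tmu}\paren{5M^2 r^*}^{-1} \ge 1$, using $\normt{\mu - \tmu} \ge M(5M+1)r^* > 5M^2 r^*$. This contradicts $B_{j,j} = b_j \le 1$ (the $2\times 2$ principal minor of the moment matrix forces $b_j \in [0,1]$), so $\SDP\paren{\tmu, 1.14\normt{\mu - \tmu}, Z} \le 0.9k$.

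There is essentially no genuine obstacle here, since the argument is already contained in the proof of \Cref{lem:upper-bound-distance-estimation}; the only care needed is bookkeeping --- verifying that the counting inequality holds already at $M \ge 86$ (rather than the $M \ge 100$ used elsewhere), that the hypotheses of \Cref{lem:RT-lemma-str} are met with this choice of $\psi, \psi'$, and that none of the later steps of that lemma (those involving $\rho_s$ and the returned $\wh r$) are needed for this weaker, standalone statement.
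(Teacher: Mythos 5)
Your proposal is correct and follows the paper's argument exactly: the corollary is, by design, just an extraction of the first half of the proof of \Cref{lem:upper-bound-distance-estimation}, and you reproduce that contradiction argument step by step (same choice of $\psi=0.9$, $\psi'=(1+1/M)/1.14$ in \Cref{lem:RT-lemma-str}, same counting inequality, same chain of inequalities on $B_{jj}$ leading to $B_{jj}>1$). You also correctly flag a point the corollary statement itself is silent on: the final step $B_{jj}>\normt{\mu-\tmu}(5M^2r^*)^{-1}\ge 1$ genuinely requires $\normt{\mu-\tmu}\ge M(5M+1)r^*$, which is inherited from the hypotheses of \Cref{lem:sdpval-score-search} and \Cref{thm:distance-estimation} but not restated in the corollary; noting this is good bookkeeping.
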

\subsection{Gradient Estimation}
\label{subsec:grad-estimation}
Suppose we have a current estimate $\tmu$ and $\mu$ is a central point Moreover, $\tmu$ and $\mu$ are sufficiently far from each other. Our goal is to privately find a direction $y$ such that $\iprod{y, \paren{\mu - \tmu}/\normt{\mu - \tmu}} \ge \Omega\paren{1}$.
In order to do this we run the exponential mechanism over a ball of radius $1 - \Theta(1)$ in $\R^d$ with score function $\SDPVAL$. We prove that $\SDPVAL$ has low sensitivity (\Cref{subsubsec:sdp-val-sensitivity}), is Lipschiz (\Cref{subsubsec:sdp-val-lipschitzness}), and is concave (\Cref{subsubsec:sdp-val-concavity}). These properties ensure that we can sample from the corresponding distribution efficiently and that our algorithm is private. In other to prove the accuracy of the exponential mechanism (\Cref{subsubsec:sdp-val-exp-acc}), we prove that there exists a large volume of points that have high scores (\Cref{subsubsec:sdp-val-volume}). If the exponential mechanism is accurate, it outputs a direction $y$ from the ball that has high a high score. In \Cref{subsubsec:sdpval-sos-analysis}, we show that if $y$ has a high score it satisfies $\Abs{\iprod{y, \paren{\mu - \tmu}/\normt{\mu - \tmu}}} \ge \Omega\Paren{1}$. We wrap up the guarantees of the exponential mechanism with $\SDPVAL$ in \Cref{thm:sdpval-exponential-mechanism}.

Now in order to decide between $y$ and $-y$, we run the exponential mechanism over these two candidates with score function being the number $Z_i - \tmu$'s that have positive inner product with each candidate. We show that at least one of the two candidates must have a high score and therefore with high probability it is outputted. Moreover the higher scoring candidate satisfies $\iprod{y, \paren{\mu - \tmu}/\normt{\mu - \tmu}} \ge \Omega\paren{1}$. See   \Cref{thm:deciding-between-2} for the theorem statement, and \Cref{subsubsec:proof-deciding-between-2} for its proof.

Finally, we put together the guarantees of
\Cref{thm:sdpval-exponential-mechanism},
and
 \Cref{thm:deciding-between-2} to obtain \Cref{thm:gradient-estimation}.
\subsubsection{Algorithm}
\begin{algorithm}[H]
	\caption{\textsc{gradient-estimation}}
	\label{alg:gradient-estimation}
	\begin{mdframed}
		\mbox{}
		\begin{description}
			\item[Input:]
			Parameters: Data points $Z_1, \dots, Z_k \in \R^d$, current estimate $\tmu \in \R^d$, estimate of the distance $d$
			\item[Operation:]
			\begin{enumerate}
				\item Let $M=1000$, $r = d/1.2$, $\zeta = 1/\paren{2M}$.
				\item Run the exponential mechanism over the ball of radius $1-\zeta$ with score function $\Score\paren{y} = \SDPVAL\paren{y; \tmu, r, Z}$ to obtain $y_0$.
				\item Run the exponential mechanism with score function $\Score\paren{y} = \sum_{i=1}^k \mathbbm{1} \brac{\iprod{Z_i - \tmu, y} > 0}$ over $\Set{y_0, -y_0}$ to obtain $\wh{y}$.
			\end{enumerate}
			\item[Output:]  $\wh{y}$
		\end{description}
	\end{mdframed}
\end{algorithm}
\subsubsection{Main Theorems}
The following theorem guarantees that the exponential mechanism in the second line of \Cref{alg:gradient-estimation}, is private and can be run efficiently and outputs a direction $y_0$ such that $\Abs{\iprod{y_0, \Delta}} \ge \Omega{1}$ with high probability.
\begin{theorem}[$\SDPVAL$ exponential mechanism]
\label{thm:sdpval-exponential-mechanism}
Let $Z_1 \dots Z_k, \tmu, \mu \in \R^d$, $r^* = 2\sqrt{k/n}$, $M \ge 1000$ , $d \ge 0$, $r = d/1.2$,
$\Delta = \paren{\mu - \tmu}/\normt{\mu - \tmu}$
Suppose \Cref{ass:when-at-mu-sdp-str} holds, $1.15 \normt{\mu - \tmu} \ge d \ge 0.9 \normt{\mu - \tmu}$,
$\normt{\mu - \tmu} \ge M \paren{5 M r^*}$, $\zeta = 1/\paren{2M}$. 
Then there exists a constant $C'_M$ depending only on $M$ such that the exponential mechanism in \Cref{alg:gradient-estimation} with score function $f\paren{y} = \SDPVAL\paren{y; \tmu, r, Z}$ and privacy budget $\epsilon/T$, over the ball of radius $1-\zeta$ in $\R^d$, returns $y_0$ such that if 
$k \ge C'_M T \paren{ d + \log\paren{T/\beta}}/\eps$,
then
\begin{equation*}
\Abs{\iprod{y_0, \Delta}} \ge 0.1 \mcom
\end{equation*} 
with probability $1- T/\beta$.
Moreover there exists an algorithm taking time $\poly\paren{n, d, 1/\epsilon}$ for sampling from the exponential mechanism.
\end{theorem}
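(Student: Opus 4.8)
The plan is to realize the map $y \mapsto \SDPVAL(y;\tmu,r,Z)$ as the score function of a volume-based exponential mechanism (\Cref{thm:volume-based-exponential-mechanism}) over the ball $\cC$ of radius $1-\zeta$ in $\R^d$, and to implement that mechanism in polynomial time with the private log-concave sampler of \cite{BassilyST14} (\Cref{thm:efficient-sampling}). This needs four inputs: bounded sensitivity of the score (for privacy and for the mechanism's utility guarantee), concavity and Lipschitzness of the score in $y$ (for the sampler), a lower bound on the volume of high-scoring directions, and the implication ``high $\SDPVAL$-value $\Rightarrow$ $|\iprod{y,\Delta}|$ large''.

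First I would invoke the three structural facts about $\SDPVAL(\cdot;\tmu,r,Z)$ established for this program: sensitivity $1$ in $Z$ (\Cref{cor:sdp-sensitiviy}), concavity in $y$ (\Cref{cor:sdp-val-concavity}), and $\poly(n,d)$-Lipschitzness in $y$ on $\cC$ (\Cref{lem:sdpval-lipschitzness}). Since $\diam(\cC) = O(1)$, \Cref{thm:efficient-sampling} then yields an $(\e/T)$-DP sampler, running in time $\poly(n,d,1/\e)$, whose output is within multiplicative $e^{\pm\e/(4T)}$ of the idealized exponential-mechanism distribution with budget $\e/T$; sensitivity $1$ makes that idealized distribution $(\e/T)$-DP, and the multiplicative closeness is harmless in all the probability and privacy bounds below, exactly as in the proof of \Cref{thm:metatheorem}.

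For utility I would combine two claims. \emph{(a) High score is good:} using \Cref{ass:when-at-mu-sdp-str} and the structural lemma \Cref{lem:RT-lemma-str} for feasible SDP solutions, together with the fact that the hypotheses $0.9\|\mu-\tmu\| \le d \le 1.15\|\mu-\tmu\|$, $r=d/1.2$ and $\|\mu-\tmu\| \ge 5M^2 r^*$ place $r$ in the window roughly $[0.75,0.96]\cdot\|\mu-\tmu\|$, one shows that \emph{any} $y \in \cC$ with $\SDPVAL(y;\tmu,r,Z)$ above a fixed fraction of $k$ has $|\iprod{y,\Delta}| \ge 0.1$; this is \Cref{thm:accuracy-ng}. \emph{(b) Many directions are good:} taking $v=(1-\zeta)\Delta$ and letting $b_i$ indicate $\iprod{Z_i-\tmu,\Delta} \ge r/(1-\zeta)$, \Cref{ass:when-at-mu-simple-str} (a consequence of \Cref{ass:when-at-mu-sdp-str}) shows at least $(1-1/M)k$ indices qualify, so $\QUADVAL((1-\zeta)\Delta) \ge (1-1/M)k$ and hence $\SDPVAL((1-\zeta)\Delta) \ge (1-1/M)k$ by \Cref{lem:quad-val-sdp-val}; since only a constant fraction of the $Z_i-\tmu$ have large norm (bounded covariance) and the score has slack $\Omega_M(k)$ over the threshold in (a), perturbing $(1-\zeta)\Delta$ within a ball of radius $1/\poly(d)$ keeps the score above the threshold, giving $\log(\vol(\cC)/\vol(\cH^*)) = \tilde O(d)$ (this is \Cref{subsubsec:sdp-val-volume}). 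Feeding (b) and sensitivity $1$ into \Cref{thm:volume-based-exponential-mechanism} with privacy parameter $\e/T$, the returned $y_0$ has score above the threshold with probability $1-\beta/T$ once $\tfrac{2T}{\e}(\tilde O(d) + \log(T/\beta)) \le \Omega_M(k)$, i.e. once $k \ge C'_M T(d + \log(T/\beta))/\e$; claim (a) then gives $|\iprod{y_0,\Delta}| \ge 0.1$, and combining with the running-time and closeness statements from the previous paragraph completes the theorem.

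The main obstacle is claim (a), i.e. \Cref{thm:accuracy-ng}: \cite{CherapanamjeriFB19} reads off the useful direction as the top eigenvector of the $V$-block of the SDP solution, whereas here the added constraint $v=y$ forces us to use the linear functional $y = \pE v$ directly, and $y$ ranges over the \emph{ball} rather than the sphere, so $\pE\|v\|^2 \le 1$ need not be tight. Re-running the rounding analysis so it tolerates $\pE\|v\|^2 \le 1$ without a $(1-\zeta)$-type loss compounding, and tracking the constants through \Cref{lem:RT-lemma-str} finely enough that the final inner product stays above the fixed value $0.1$, is the delicate part; the specific choices $r = d/1.2$, $\zeta = 1/(2M)$ and $M = 1000$ are calibrated exactly for this. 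The remaining pieces (sensitivity, concavity, Lipschitzness, and the volume estimate) go through essentially as for the coarse-estimation SDP of \Cref{sec:coarse}.
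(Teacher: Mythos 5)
Your overall architecture matches the paper's: sensitivity of $\SDPVAL$ in $Z$ (\Cref{lem:sdp-val-sensitivity}) for privacy, concavity (\Cref{cor:sdp-val-concavity}) and Lipschitzness (\Cref{lem:sdpval-lipschitzness}) to invoke the \cite{BassilyST14} sampler, the volume-based exponential mechanism bound (\Cref{thm:volume-based-exponential-mechanism}) for utility, and the rounding analysis \Cref{thm:accuracy-ng} to convert a high score into $\abs{\iprod{y_0,\Delta}}\ge 0.1$. You also correctly identify the delicate part, namely re-deriving the \cite{CherapanamjeriFB19} rounding so that the direction is read off linearly as $\pE v$ with the constraint $v=y$ and $y$ ranging over a ball. (One small slip there: the constraint in \Cref{def:SDPVAL} is $\Tr(V)=1$, i.e.\ $\pE\snormt{v}=1$ always holds with equality; what is slack is $\snormt{\pE v}=\snormt{y}\le 1$, which is the quantity the rounding analysis must accommodate.)

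The one step where you genuinely deviate, and where your version is weaker, is the volume lower bound. You propose taking $\cH^*$ to be an $\ell_2$ ball of radius $1/\poly(d)$ around $(1-\zeta)\Delta$ and using the $\sqrt d\,k/\zeta$-Lipschitzness of $\SDPVAL$ to keep the score above threshold there; this gives $\log\bigl(\vol(\cC)/\vol(\cH^*)\bigr) = \Theta(d\log d)$, since the ball radius must shrink like $1/\sqrt d$. That does not match \Cref{thm:sdpval-exponential-mechanism}, which claims $k\ge C'_M T(d+\log(T/\beta))/\e$ with $C'_M$ depending \emph{only} on $M$ — your route would force $C'_M$ to absorb a $\log d$. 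The paper instead constructs a \emph{conical sector} of good directions $S_{\theta,\zeta,\gamma,\Delta}=\{\alpha u : \iprod{u,\Delta}\ge 1-\gamma,\ 1-\theta\le\alpha\le 1-\zeta,\ \normt{u}=1\}$ with $\gamma=\theta=1/M$ constant (\Cref{lem:goodness-set}), shows via \Cref{ass:when-at-mu-simple-str} and \Cref{lem:quad-val-sdp-val} that every point of this sector already achieves $\SDPVAL\ge (1-1/M)k$ without any appeal to Lipschitzness, and then lower-bounds its volume by a spherical-cap estimate (\Cref{lem:size-good-set}, \Cref{cor:size-of-the-good-set}, \Cref{fact:lb-spherical-cap}), yielding $\log\bigl(\vol(\cC)/\vol(\cH^*)\bigr)\le C_M d$ exactly. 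So Lipschitzness is used in the paper only for the sampler's running time, not for the volume count. Your derivation would still deliver the downstream \Cref{thm:high-dimensional-fine-mean-estimation} (which hides $\log d$ in its $\tilde O$), but to prove the present theorem as stated you should replace the ball-perturbation argument with the constant-angle cone argument.

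Two further bookkeeping nits: the sensitivity you want to cite for $\SDPVAL$ is \Cref{lem:sdp-val-sensitivity} (\Cref{cor:sdp-sensitiviy} is the corollary for $\SDP$), and your parenthetical reference to \Cref{subsubsec:sdp-val-volume} is slightly misleading since that section proves the stronger $O(d)$ bound, not the $\tilde O(d)$ bound you derive.
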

\begin{proof}
Combination of \Cref{lem:sdp-val-sensitivity} (sensitivity, privacy), \Cref{thm:accuracy-exponential-sdpval} (accuracy of the exponential mechanism),  \Cref{thm:accuracy-ng} (accuracy guarantee of $\SDPVAL$), and \Cref{lem:runtime-exponential-sdpval} (runtime of the exponential mechanism). See \Cref{subsubsec:sdpval-sos-analysis} to \Cref{subsubsec:sdpval-runtime-analysis} for the full proof.
\end{proof}
From the previous theorem we know that with high probability $\Abs{\iprod{y_0, \tmu - \mu}} \Omega\paren{1}$
The following theorem guarantees that the third line of the algorithm is private and that it has accuracy. That is, conditioned on the success of the previous step, with high probability it outputs a vector $\wh{y}$ from $\set{y_0, -y_0}$ such that $\iprod{\wh{y}, \paren{\mu - \tmu}/\normt{\mu - \tmu}} \ge \Omega\paren{1}$.

\begin{theorem}[exponential mechanism for the two directions]
\label{thm:deciding-between-2}
Let $Z_1, \dots Z_k, \tmu, \mu, y_0 \in \R^d$, 
$r^* = 2\sqrt{k/n}$,
$\normt{\mu - \tmu} \ge M\paren{5 M r^*}$,
$\Delta = \paren{\mu - \tmu}/\normt{\mu - \tmu}$
and
$M \ge 1000$.
Suppose $\Abs{\iprod{y_0, \Delta}} \ge 0.1$, and that \Cref{ass:when-at-mu-sdp-str} holds.
Then
the exponential mechanism with score function $\Score\paren{y} = \sum_{i=1}^k \mathbbm{1}\brac{\iprod{Z_i - \tmu, y} > 0}$, over $\Set{y_0, -y_0}$, and privacy budget $\epsilon/T$, as in \Cref{alg:gradient-estimation}, returns $\wh{y}$ such that if $k \ge 2MT\paren{\log 2  + \log T/\beta}/\eps$, then
\begin{equation*}
\iprod{\wh{y}, \Delta} \ge 0.1 \mcom
\end{equation*}
with probability $1-\beta/T$.
\end{theorem}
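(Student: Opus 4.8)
The plan is to show that, of the two candidates $y_0$ and $-y_0$, the one correlated with $\Delta$ has score at least $\paren{1-1/M}k$ while the other has score at most $k/M$; the exponential mechanism then selects the correlated one with high probability. By replacing $y_0$ with $-y_0$ if necessary --- this only swaps the two candidates and leaves the target conclusion $\iprod{\wh y, \Delta} \ge 0.1$ unchanged --- we may assume $\iprod{y_0, \Delta} \ge 0.1$. Set $w = y_0/\normt{y_0}$; since $\normt{y_0} \le 1$ we still have $\iprod{w, \Delta} \ge 0.1$, and since $\Score$ is invariant under positive rescaling of its argument, $\Score(y_0) = \Score(w)$.

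First I would lower bound $\Score(w)$ using centrality of $\mu$. By \Cref{ass:when-at-mu-simple-str} (which follows from \Cref{ass:when-at-mu-sdp-str}), applied to the unit vector $-w$, at least $\paren{1-1/M}k$ indices $i$ satisfy $\iprod{Z_i - \mu, w} > -5M r^*$. For each such $i$,
\[
\iprod{Z_i - \tmu, w} = \iprod{Z_i - \mu, w} + \normt{\mu - \tmu}\,\iprod{\Delta, w} > -5M r^* + 0.1\,\normt{\mu - \tmu} \ge -5M r^* + 0.5 M^2 r^* > 0\mcom
\]
using $\normt{\mu - \tmu} \ge 5 M^2 r^*$ and $M \ge 1000$. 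Hence $\Score(y_0) = \Score(w) \ge \paren{1-1/M}k$. Moreover no index can have both $\iprod{Z_i - \tmu, y_0} > 0$ and $\iprod{Z_i - \tmu, -y_0} > 0$, so $\Score(y_0) + \Score(-y_0) \le k$ and therefore $\Score(-y_0) \le k/M$.

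It remains to invoke the accuracy of the exponential mechanism. The score function $\Score(y) = \sum_i \mathbbm{1}\brac{\iprod{Z_i - \tmu, y} > 0}$ has sensitivity $1$ (changing one input sample alters one bucket mean $Z_i$, flipping at most one indicator), so the mechanism is $\epsilon/T$-DP, and by the standard exponential mechanism guarantee (the discrete analogue of \Cref{thm:volume-based-exponential-mechanism}, with candidate set $\Set{y_0, -y_0}$ and ``good'' set $\Set{y_0}$) it returns $\wh y$ with $\Score(\wh y) \ge \Score(y_0) - \tfrac{2T}{\epsilon}\Paren{\log 2 + t}$ except with probability $e^{-t}$. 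Taking $t = \log(T/\beta)$ and using $k \ge 2MT\Paren{\log 2 + \log(T/\beta)}/\epsilon$ gives $\tfrac{2T}{\epsilon}\paren{\log 2 + \log(T/\beta)} \le k/M$, so with probability $1 - \beta/T$ we get $\Score(\wh y) \ge \paren{1-2/M}k > k/M \ge \Score(-y_0)$ (since $M \ge 1000$), which forces $\wh y = y_0$ and hence $\iprod{\wh y, \Delta} = \iprod{y_0, \Delta} \ge 0.1$.

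The only mildly delicate step is the first one: translating ``$\mu$ is central'' (\Cref{ass:when-at-mu-simple-str}, a statement about the $Z_i - \mu$) into a lower bound on how many $Z_i - \tmu$ lie on the positive side of $y_0$. This is where we exploit that $\tmu$ is far from $\mu$, so that the shift $\normt{\mu - \tmu}\,\iprod{\Delta, w}$ --- which is $\Omega(M^2 r^*)$ by the hypotheses $\Abs{\iprod{y_0, \Delta}} \ge 0.1$ and $\normt{\mu - \tmu} \ge 5M^2 r^*$ --- dominates the $5M r^*$ slack from centrality; everything after that is the routine exponential-mechanism counting calculation.
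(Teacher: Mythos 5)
Your proof is correct and follows essentially the same route as the paper's: you re-derive \Cref{lem:good-direction} inline to lower-bound the score of the candidate aligned with $\Delta$, then invoke the standard exponential mechanism guarantee. The one place you are actually more careful than the paper is the final step: the paper's proof closes with a terse appeal to \Cref{lem:good-direction} (an implication in the wrong direction for what is needed) to conclude $\langle \wh y, \Delta\rangle > 0$, whereas you make explicit the observation that actually carries the argument --- that the two indicator sets are disjoint, so $\Score(y_0)+\Score(-y_0)\le k$ and hence the misaligned candidate has score at most $k/M$, strictly below the $(1-2/M)k$ the mechanism achieves. Your explicit normalization to $w = y_0/\normt{y_0}$ before invoking \Cref{ass:when-at-mu-simple-str} (which is stated only for unit vectors) is also a small cleanliness gain over the paper's Lemma~\ref{lem:good-direction}, which applies the assumption to a possibly non-unit vector without comment.
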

\begin{proof}
See \Cref{subsubsec:proof-deciding-between-2}.
\end{proof}
Putting together \Cref{thm:sdpval-exponential-mechanism}, and  \Cref{thm:deciding-between-2} gives us the following algorithm, which guarantees that \Cref{alg:gradient-estimation} is private, has high utility with high probability, and that it can be run efficiently.
\begin{theorem}[\textsc{gradient-estimation}]
\label{thm:gradient-estimation}
Let $Z_1, \dots Z_k, \tmu, \mu \in \R^d$, $d \ge 0$, $r^* = 2\sqrt{k/n}$, $M\ge 1000$, $d\ge 0$, $\Delta = \paren{\mu - \tmu}/\normt{\mu - \tmu}$. Suppose  \Cref{ass:when-at-mu-sdp-str} holds. Then there exists a constant $C''_M$ depending only on $M$ and  $\epsilon/T$-dp algorithm \textsc{gradient-estimation} that if $0.99 \normt{\mu - \tmu} \le d \le 1.15 \normt{\mu - \tmu}$,  $\normt{\mu - \tmu} \ge M\paren{5Mr^*}$, and $k \ge C''_M T\paren{d + \log T/\beta}/\eps$, returns $\wh{y}$ such that 
\begin{equation*}
\iprod{\wh{y}, \Delta} > 0.1 \mcom
\end{equation*}
with probability $1 - \beta/T$. Moreover this algorithm has runtime $\poly \paren{n, d, 1/\eps}$. We say that $\textsc{gradient-estimation}$ has been successful if the above guarantee holds.
\end{theorem}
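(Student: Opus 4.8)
The plan is to read off \textsc{gradient-estimation} (\Cref{alg:gradient-estimation}) as the adaptive composition of its two exponential-mechanism steps, invoking \Cref{thm:sdpval-exponential-mechanism} for the first step (producing $y_0$) and \Cref{thm:deciding-between-2} for the second (producing $\hat y$). Concretely, I would run each of the two exponential mechanisms with privacy parameter $\epsilon/(2T)$ and target failure probability $\beta/(2T)$, absorbing all resulting constant factors — in particular the factor $2$ in the privacy budget and the constants $C'_M$ and $M$ appearing in the two invoked theorems — into the universal constant $C''_M$.

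First I would establish privacy. Both score functions used in \Cref{alg:gradient-estimation} have sensitivity $1$: for $\SDPVAL$ this is \Cref{lem:sdp-val-sensitivity}, and for the counting score $y \mapsto \sum_{i=1}^k \mathbbm{1}[\iprod{Z_i - \tmu, y} > 0]$ it is immediate, since replacing a single $Z_i$ changes the sum by at most $1$. Hence, by the privacy guarantee of the exponential mechanism, step $2$ of the algorithm is $\epsilon/(2T)$-DP and step $3$ is $\epsilon/(2T)$-DP for every fixed value of $y_0$. Since $\hat y$ depends on the dataset only through $y_0$ and through the step-$3$ score function, adaptive basic composition shows that \textsc{gradient-estimation} is $\epsilon/T$-DP.

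Next I would argue utility. The hypotheses of \Cref{thm:sdpval-exponential-mechanism} are all met: from $0.99\normt{\mu - \tmu} \le d$ we get $d \ge 0.9\normt{\mu - \tmu}$, we are given $d \le 1.15\normt{\mu - \tmu}$ and $\normt{\mu - \tmu} \ge M(5Mr^*)$, \Cref{ass:when-at-mu-sdp-str} is assumed, and $r = d/1.2$, $\zeta = 1/(2M)$ as set in the algorithm. Taking $C''_M$ large enough that the hypothesis $k \ge C''_M T(d + \log(T/\beta))/\epsilon$ dominates both the requirement of \Cref{thm:sdpval-exponential-mechanism} (with $\epsilon$ replaced by $\epsilon/2$) and that of \Cref{thm:deciding-between-2}, \Cref{thm:sdpval-exponential-mechanism} gives, with probability at least $1 - \beta/(2T)$, that the vector $y_0$ produced in step $2$ satisfies $\abs{\iprod{y_0, \Delta}} \ge 0.1$. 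Conditioning on this event, \Cref{thm:deciding-between-2} gives, with probability at least $1 - \beta/(2T)$, that the output $\hat y \in \set{y_0, -y_0}$ of step $3$ satisfies $\iprod{\hat y, \Delta} > 0.1$ (if the constants do not line up exactly, a cosmetic adjustment of $M$ or of the threshold in the step-$3$ score function handles it). A union bound over the two bad events gives the accuracy claim with probability at least $1 - \beta/T$.

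Finally, for the runtime: step $2$ runs in time $\poly(n,d,1/\epsilon)$ by \Cref{thm:sdpval-exponential-mechanism}, which itself relies on concavity, Lipschitzness, and bounded sensitivity of $\SDPVAL$ together with the log-concave sampler of \cite{BassilyST14}; step $3$ is an exponential mechanism over only two candidates and takes time $O(k)$. So the whole algorithm runs in $\poly(n,d,1/\epsilon)$ time. I expect the only delicate bookkeeping in assembling \Cref{thm:gradient-estimation} from its two ingredients is making the two private sub-routines compose within the stated $\epsilon/T$ budget while their individual failure probabilities and sample-complexity requirements still fit under a single constant $C''_M$; the genuinely substantive content — that a high-scoring $\SDPVAL$ solution yields a direction $\Omega(1)$-correlated with $\Delta$ — is already packaged inside \Cref{thm:sdpval-exponential-mechanism} (via the SoS rounding analysis of \Cref{thm:accuracy-ng}) and \Cref{thm:deciding-between-2}, so no new SoS argument is needed here.
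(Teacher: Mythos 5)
Your proposal is correct and is essentially the paper's own (implicit) argument: the paper dispatches Theorem~\ref{thm:gradient-estimation} with the one-line remark that it follows by ``putting together'' Theorem~\ref{thm:sdpval-exponential-mechanism} and Theorem~\ref{thm:deciding-between-2}, which is exactly your composition of the two exponential-mechanism steps. You are in fact slightly more careful than the paper about the bookkeeping: as stated, the two sub-theorems each assume an $\epsilon/T$ budget and $\beta/T$ failure probability, which composed would only give $2\epsilon/T$-DP and failure $2\beta/T$; your choice to run each sub-routine with budget $\epsilon/(2T)$ and failure $\beta/(2T)$, absorbing the extra factor of $2$ into $C''_M$, cleanly repairs this and is the intended reading.
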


In the rest of this section we prove \Cref{thm:sdpval-exponential-mechanism}, and  \Cref{thm:deciding-between-2}.
\subsubsection{SoS Analysis}
\label{subsubsec:sdpval-sos-analysis}

\begin{theorem}[accuracy guarantee of $\SDPVAL$]
	\label{thm:accuracy-ng}
	Let $Z_1 \dots Z_k, \tmu, \mu \in \R^d$, $r^* = 2\sqrt{k/n}$, and $\Delta = \paren{\mu - \tmu}/\normt{\mu - \tmu}$.
	Suppose \Cref{ass:when-at-mu-sdp-str} holds and $M \ge 1000$, $d \ge 0.99 \normt{\mu - \tmu}$, and $\normt{\mu - \tmu} \ge M\paren{5Mr^*}$.
	Moreover, suppose $\SDPVAL\paren{y; \tmu, d/1.2 , Z} =\psi k$, where $\psi \ge 0.9$.
	Then
	\begin{equation*}
		\abs{\iprod{y, \Delta}} \ge 0.1 \mper
	\end{equation*}
\end{theorem}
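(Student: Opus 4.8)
The plan is to carry out the whole argument inside the pseudo-expectation formulation of $\SDPVAL$ from \Cref{def:SDPVAL}. Fix an optimal degree-$2$ pseudo-distribution $\pE$ in $v_1,\dots,v_d,b_1,\dots,b_k$ witnessing $\SDPVAL(y;\tmu,d/1.2,Z)=\psi k$: thus $\pE v = y$, and $\pE$ satisfies $b_i^2=b_i$, $\snorm v = 1$, and $b_i^2 r \le \iprod{Z_i-\tmu,b_i v}$ with $r = d/1.2$, and $\pE\sum_i b_i=\psi k$. Write $B_{ii}=\pE b_i=\pE b_i^2$, $W_i=\pE[b_i v]$, $V=\pE[vv^\top]$. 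These pseudo-moments form a PSD matrix with $B_{ii}=b_i$, $\Tr V = 1$, $\Tr B = \psi k \ge 0.9k$, so \Cref{lem:RT-lemma-str} (which uses \Cref{ass:when-at-mu-sdp-str}) applies; moreover, reading off the last constraint at the pseudo-expectation level gives $\iprod{Z_i-\tmu,W_i}\ge r B_{ii}$ for every $i$. I would invoke \Cref{lem:RT-lemma-str} with the given $\psi$ and a threshold $\psi'$ (say $\psi'=0.88$) chosen so that $(\psi-1/M)+\tfrac{\psi-\psi'}{1-\psi'}>1$ — which holds since $\psi\ge 0.9$ and $M\ge 1000$ — obtaining a set $\cT$ with $\iprod{Z_i-\mu,W_i}<5Mr^* B_{ii}$ for $i\in\cT$ and a set $\cR$ with $B_{jj}\ge \psi'$ for $j\in\cR$, whose sizes force $\cT\cap\cR\neq\emptyset$.

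Fix $j\in\cT\cap\cR$. Combining $\iprod{Z_j-\tmu,W_j}\ge r B_{jj}$ with $\iprod{Z_j-\mu,W_j}<5Mr^* B_{jj}$ gives $\iprod{\mu-\tmu,W_j}>(r-5Mr^*)B_{jj}$. Now $r=d/1.2\ge \tfrac{0.99}{1.2}\normt{\mu-\tmu}$ by hypothesis and $5Mr^*\le \normt{\mu-\tmu}/M\le 10^{-3}\normt{\mu-\tmu}$ (from $\normt{\mu-\tmu}\ge M(5Mr^*)$ and $M\ge 1000$), so $r-5Mr^*\ge 0.824\,\normt{\mu-\tmu}$; dividing by $\normt{\mu-\tmu}$ yields $\iprod{\Delta,W_j}>0.824\,B_{jj}$, i.e.\ $\pE[b_j\iprod{\Delta,v}]>0.824\,B_{jj}$.

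The crux — and the point at which the simple rounding $y=\pE v$ (rather than the top-eigenvector rounding of \cite{CherapanamjeriFB19}) is vindicated — is the identity $\iprod{y,\Delta}=\pE\iprod{\Delta,v}=\pE[b_j\iprod{\Delta,v}]+\pE[(1-b_j)\iprod{\Delta,v}]$. The first summand is $\iprod{\Delta,W_j}>0.824\,B_{jj}$. For the second, pseudo-expectation Cauchy–Schwarz (both $1-b_j$ and $\iprod{\Delta,v}$ are linear) gives $|\pE[(1-b_j)\iprod{\Delta,v}]|\le \sqrt{\pE(1-b_j)^2}\,\sqrt{\pE\iprod{\Delta,v}^2}$; here $\pE(1-b_j)^2 = 1-B_{jj}\le 1-\psi'$ using $\pE b_j^2=\pE b_j$, and $\pE\iprod{\Delta,v}^2\le \pE\snorm v = 1$ since $\snorm v - \iprod{\Delta,v}^2=\snorm{(\id-\Delta\Delta^\top)v}$ is a sum of squares ($\id-\Delta\Delta^\top$ being a projection). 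Hence $\iprod{y,\Delta}>0.824\,\psi'-\sqrt{1-\psi'}\ge 0.39>0.1$ (for $\psi'=0.88$), which gives the claim (and in fact the inner product is positive, matching the intuition that the constraints force $v$ to point toward the bulk of the $Z_i$, i.e.\ toward $\mu$).

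I expect the main obstacle to be the constant bookkeeping around \Cref{lem:RT-lemma-str}: one needs $\psi'$ large enough that $\sqrt{1-\psi'}$ sits comfortably below $0.824\psi'-0.1$, yet small enough that the size bounds on $\cT$ and $\cR$ still force them to intersect, and the narrowness of that window is precisely what dictates the hypotheses $M\ge 1000$ (hence $\psi\ge 0.9$ and a negligible $1/M$ slack) and $d\ge 0.99\normt{\mu-\tmu}$ (so that $r/\normt{\mu-\tmu}$ is close to $0.825$ rather than merely bounded away from $0$) — loosening either would break the chain. The remaining ingredients (passing from the pseudo-distribution's constraints to the scalar PSD-matrix facts, and the two Cauchy–Schwarz estimates) are routine.
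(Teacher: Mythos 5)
Your proof is correct and takes a genuinely different route from the paper's. The paper sums the SDP constraint $B_{ii}r \le \iprod{Z_i-\tmu,W_i}$ over all $i\in\cT$ and decomposes the resulting aggregate via the ``add and subtract $\pE v$'' trick, which forces a chain of Jensen and vector-valued pseudo-expectation Cauchy--Schwarz estimates before one can extract $\abs{\iprod{\Delta,y}}$ from the final term. You instead invoke \emph{both} halves of \Cref{lem:RT-lemma-str} --- the $\cR$ set, with its lower bound $B_{jj}\ge\psi'$, is never used in the paper's proof --- to pin down a single good index $j\in\cT\cap\cR$, subtract the two inequalities at that one index to get $\iprod{\Delta,W_j}>0.824\,B_{jj}$, and close via the decomposition $\iprod{y,\Delta}=\iprod{\Delta,W_j}+\pE[(1-b_j)\iprod{\Delta,v}]$ with a single scalar Cauchy--Schwarz. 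The constant bookkeeping checks out: with $\psi\ge 0.9$, $M\ge 1000$, $\psi'=0.88$, the pigeonhole condition $(\psi-1/M)+\tfrac{\psi-\psi'}{1-\psi'}>1$ holds, and $0.824\cdot 0.88-\sqrt{0.12}\approx 0.38>0.1$. What the two approaches buy: the paper's aggregate argument is more in line with the SoS-proof-certificate view (it is an inequality one could hope to write as a single sum-of-squares identity), whereas your single-index argument is more local and easier to verify by hand. A pleasant side effect of your version is that it pins down the \emph{sign}, $\iprod{y,\Delta}>0$, rather than only $\abs{\iprod{y,\Delta}}\ge 0.1$; if that strengthening survives closer scrutiny of the constants, it would let one drop the sign-disambiguation step (\Cref{thm:deciding-between-2}, line 3 of \Cref{alg:gradient-estimation}), though the paper as written needs it precisely because its aggregate bound is sign-blind.
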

\begin{proof}
	Let $r = d/1.2$, note that since $d \ge 0.99 \normt{\mu - \tmu}$, $r \ge 0.8 \normt{\mu - \tmu}$.
	Suppose $\pE$ is the solution to $\SDPVAL\paren{y; \tmu, r, Z}$ as in \Cref{def:SDPVAL}. Therefore, 
	\begin{align*}
		&y = \pE\Brac{v} \\
		&\pE\Brac{b_i^2} = \pE\Brac{b_i}
		\numberthis
		\label{eq:expected-b-squared-ng}
		\\
		&\sum_i \pE\Brac{v_i^2} \le 1 
		\numberthis
		\label{eq:expected-v-squared-ng}
		\\
		&\pE\Brac{b_i^2} \cdot r \le \iprod{Z_i - \tmu,\pE \Brac{b_i v}} \mper
	\end{align*}
	Moreover, $y$ has score $\psi k$, i.e.
	\begin{equation}
		\sum_i \pE\Brac{b_i^2} = \psi k.
		\label{eq:high-score-ng}
	\end{equation}
	Let $\cT$ be the set that its existence is guaranteed from \Cref{lem:RT-lemma-str}. Then $\Abs{\cT} \ge \paren{\psi - 1/M}k$, and
	\begin{equation}
		\forall i \in \cT:
		\pE\brac{\iprod{Z_i - \mu, b_i v}} < 5M r^* \pE\brac{b_i^2} \mper
		\label{eq:T-elements}
	\end{equation}
	Also note that 
	\begin{equation}
		\paren{2 \psi -1 - 1/M} k \le \sum_{i \in \cT} \pE\brac{b_i^2} \le \psi k \mper
		\label{eq:T-lb-ub}
	\end{equation}
	Therefore we may write
	\begin{align*}
		r\psi k &=
		r \cdot \sum_i \pE\brac{b_i^2} \\
		& = 
		r \cdot \Brac{ \sum_{i \in \cT} \pE\brac{b_i^2} +  \sum_{i \notin \cT} \pE\brac{b_i^2} } \\
		&\le
		\sum_{i \in \cT} \iprod{Z_i - \tmu, \pE\brac{b_i v}} + \paren{1 - \psi + 1/M} k r \\
		&=
		\sum_{i \in \cT} \pE\brac{b_i \iprod{Z_i - \tmu, v }} + \paren{1 - \psi + 1/M} k r \\
		&=
			\sum_{i \in \cT} \pE\brac{b_i \iprod{Z_i - \mu, v}}
		+
			\sum_{i \in \cT} \pE\brac{b_i \iprod{\mu - \tmu, v - \pE\brac{v}}}
		+
			\sum_{i \in \cT} \pE\brac{b_i} \iprod{\mu - \tmu,\pE\brac{v}}
		+
		\paren{1 - \psi + 1/M} k r \mper
	\end{align*}
	Now we need to bound each term separately.
	\begin{enumerate}
		\item 
			$\sum_{i \in \cT} \pE\brac{b_i \iprod{Z_i - \mu, v}}$.
		\begin{align*}
			\sum_{i \in \cT} \pE\brac{b_i \iprod{Z_i - \mu, v}}
			&\le 
			5M r^*  \sum_{i \in \cT} \pE\Brac{b_i^2}
			\tag{\Cref{eq:T-elements}}
			\\
			&\le
			5M k r^*
		\end{align*}
		\item
			$\sum_{i \in \cT}  \pE\brac{b_i \iprod{\mu - \tmu, v - \pE\brac{v}}}$.
		\begin{align*}
			\sum_{i \in \cT}  \pE\brac{b_i \iprod{u - \tmu, v-\pE\brac{v}}}
			&\le
			\iprod{\mu - \tmu, \pE\brac{ \paren{\sum_{i \in \cT} b_i} \paren{v - \pE\brac{v}}}} \\
			&\le
			\iprod{\mu - \tmu, \pE\brac{\sum_{i \in \cT} \paren{b_i - \pE\brac{b_i}}\paren{v - \pE\brac{v}}}}
			\\
			&\le
			\normt{\mu - \tmu} 
			\cdot
			\normt{\pE\brac{\sum_{i \in \cT} \paren{b_i - \pE\brac{b_i}}\paren{v - \pE\brac{v}}}} \\
			&\le
			\normt{\mu - \tmu} 
			\cdot
			\sqrt{
			\pE\brac{\paren{\sum_{i \in \cT} \paren{b_i - \pE\brac{b_i}}}^2}
			\cdot \pE \brac{ \snormt{v - \pE\brac{v}}}}
			\tag{\Cref{lem:sos-cauchy-schwarz}}
		\end{align*}
		Now we need to bound the two last terms.
		\begin{align*}
			\pE\Brac{\Paren{\sum_{i \in \cT} \paren{b_i - \pE\brac{b_i}}}^2} 
			&=
			\pE\Brac{\Paren{\sum_{i \in \cT} b_i - \pE\brac{\sum_{i \in \cT} b_i}}^2} \\
			&=
			\pE\brac{\paren{\sum_{i \in \cT} b_i}^2} - 
			\pE \brac{\sum_{i \in \cT} b_i}^2
			\tag{\Cref{lem:pseudo-covariance}}
			\\
			&\le
			\pE\brac{k \cdot \sum_{i \in \cT} b_i^2} - 
			\Paren{\sum_{i \in \cT} \pE\brac{b_i}}^2
			\tag{Cauchy-Schwarz}
			\\
			&=
			k \sum_{i \in \cT} \pE\brac{b_i^2} - \Paren{\sum_{i \in \cT} \pE\brac{b_i^2}}^2
			\tag{\Cref{eq:expected-b-squared-ng}}
			\\
			&\le
			\psi k^2 - \paren{2\psi - 1- /1M}^2 k^2
			\tag{\Cref{eq:T-lb-ub}} 
			\\
			&= 
			k^2 \paren{\psi - \paren{2\psi - 1-  1/M}^2}
		\end{align*}
		\begin{align*}
			\pE \brac{\snormt{v - \pE\brac{v}}}
			&=
			\pE\brac{\snormt{v}} - \snormt{\pE\brac{v}}
			\tag{\Cref{lem:pseudo-covariance}}
			\\
			&\le
			\pE\brac{\snormt{v}} \\
			&\le
			1 \tag{\Cref{eq:expected-v-squared-ng}}
		\end{align*}
		Therefore,
		\begin{align*}
			\sum_i \pE\brac{b_i \iprod{u - \tmu, v-\pE\brac{v}}}
			&\le
			k \sqrt{\psi - \paren{2\psi - 1 -1/M}^2} \cdot \normt{\mu - \tmu} \mper
		\end{align*}
		\item
			$\sum_{i \in \cT} \pE\brac{b_i} \iprod{\mu - \tmu,\pE\brac{v}}$.
		\begin{align*}
			\sum_{i \in \cT} \pE\brac{b_i}
			&=
			\sum_{i \in \cT} \pE\brac{b_i^2} 
			\tag{\Cref{eq:expected-b-squared-ng}} 
			\\
			&\le
			\psi k 
			\tag{\Cref{eq:T-lb-ub}}
		\end{align*}
		Therefore,
		\begin{equation*}
			\sum_{i \in \cT} \pE\brac{b_i} \iprod{\mu - \tmu,\pE\brac{v}}
			\le
			\psi k \abs{\iprod{\mu -\tmu, \pE\brac{v}}} \mper
		\end{equation*}
	\end{enumerate}
	Putting all parts together we obtain
	\begin{align*}
		\paren{2\psi-1 - 1/M} r k &\le
			\sum_i \pE\brac{b_i \iprod{Z_i - \mu, v}}
		+
			\sum_i \pE\brac{b_i \iprod{\mu - \tmu, v - \pE\brac{v}}}
		+ 
			\sum_i \pE\brac{b_i} \iprod{\mu - \tmu,\pE\brac{v}}
		\\
		&\le
			5M k r^*
		+
			k \sqrt{\psi - \paren{2\psi - 1 -1/M}^2} \cdot \normt{\mu - \tmu}
		+
			\psi k
			\abs{\iprod{\mu - \tmu, \pE\brac{v}}}  \mper
	\end{align*}
	Therefore if $\Delta = \paren{\mu - \tmu}/\normt{\mu - \tmu}$,
	$\tilde{d} = \normt{\mu - \tmu}$.
	\begin{equation*}
		\frac{1}{\psi} \cdot \brac{
			\paren{2\psi - 1 - 1/M} \cdot \frac{r}{\tilde{d}}
			- 
			\frac{5M r^*}{\tilde{d}}
			- 
			\sqrt{\psi - \paren{2\psi - 1 - 1/M}^2}
		}
		\le
		\abs{\iprod{\Delta, y}} \mper
	\end{equation*}
	Now since $0.9 \le \psi \le 1, M \ge 1000, r \ge 0.8 \tilde{d}, \tilde{d} \ge M\paren{5Mr^*}$, we get that
	\begin{align*}
		\abs{\iprod{\Delta, y}} 
		&\ge
		0.65 \paren{2\psi - 1 - 1/M} - \sqrt{\psi - \paren{2\psi - 1 - 1/M}^2} + 0.15 \paren{2\psi - 1 - 1/M} - \frac{5 M r^*}{\tilde{d}}    \\
		&\ge 0.15 \paren{2 \times 0.9 - 1 - 1/1000} - 1/1000  \\
		&> 0.1 \mper
	\end{align*}
\end{proof}
The following corollary is useful in \Cref{subsec:halt-estimation}.
\begin{corollary}[$\SDP$ far]
\label{cor:sdp-far}
Let $Z_1, \dots, Z_k, \tmu, \mu \in \R^d, r^* = 2\sqrt{k/n}$. Suppose \Cref{ass:when-at-mu-sdp-str} holds and $M \ge 1000$. Suppose $\SDP\paren{\tmu, r, Z} \ge 0.9k$. Then
\begin{equation*}
\normt{\mu - \tmu} \ge \frac{0.7 r - 5 Mr^*}{2} \mper
\end{equation*}
\end{corollary}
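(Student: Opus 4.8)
Since the statement is deterministic given \Cref{ass:when-at-mu-sdp-str}, the plan is simply to run the standard robust-statistics argument ``a direction certifying high $\SDP$ value is far from $\mu$,'' in the cruder form where only the magnitude $\normt{\mu-\tmu}$ has to be lower bounded (as opposed to the inner product $\abs{\iprod{y,\Delta}}$ controlled in \Cref{thm:accuracy-ng}). Concretely, it is essentially the first half of the proof of \Cref{thm:accuracy-ng} run with the plain relaxation $\SDP$ of \Cref{def:sdp-relax} in place of $\SDPVAL$.

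First I would pass to the pseudo-expectation picture. Set $\psi := \SDP(\tmu,r,Z)/k$, so $\psi \in [0.9,1]$ (the upper bound because the PSD constraint forces $b_i \ge b_i^2$, hence $B_{ii}=b_i \in [0,1]$), and let $(b,v,B,W,V)$ be an optimal feasible solution, equivalently a degree-$2$ pseudo-expectation $\pE$ in $v_1,\dots,v_d,b_1,\dots,b_k$ for which $B_{ii}$ represents $\pE[b_i]=\pE[b_i^2]$, $W_i$ represents $\pE[b_i v]$, the constraint $\Tr V=1$ reads $\pE\snormt{v}=1$, the constraint $B_{ii}\cdot r \le \iprod{Z_i-\tmu,W_i}$ reads $\pE[b_i^2]\,r \le \pE[b_i\iprod{Z_i-\tmu,v}]$, and $\sum_i \pE[b_i^2]=\psi k$. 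Then I would invoke \Cref{lem:RT-lemma-str} with this $\psi$ to obtain $\cT\subseteq[k]$ with $\abs{\cT}\ge(\psi-1/M)k$ and $\pE[b_i\iprod{Z_i-\mu,v}] < 5Mr^*\pE[b_i^2]$ for every $i\in\cT$.

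The remaining step is the triangle-inequality bookkeeping. Using SDP-feasibility on the $\cT$-indices and $0\le\pE[b_i^2]\le 1$ off $\cT$,
\begin{align*}
r\psi k
&= r\sum_{i\in\cT}\pE[b_i^2] + r\sum_{i\notin\cT}\pE[b_i^2]
\le \sum_{i\in\cT}\pE[b_i\iprod{Z_i-\tmu,v}] + r\Paren{1-\psi+\tfrac{1}{M}}k\\
&= \sum_{i\in\cT}\pE[b_i\iprod{Z_i-\mu,v}] + \Iprod{\mu-\tmu,\ \pE\Brac{\Paren{\textstyle\sum_{i\in\cT}b_i}v}} + r\Paren{1-\psi+\tfrac{1}{M}}k.
\end{align*}
The first sum is $<5Mr^*\sum_{i\in\cT}\pE[b_i^2]\le 5Mr^*k$ by the $\cT$-property; the middle term is at most $\normt{\mu-\tmu}\cdot\bignorm{\pE[(\sum_{i\in\cT}b_i)v]}$ by Cauchy--Schwarz, and SoS Cauchy--Schwarz together with the SoS fact $(\sum_{i\in\cT}b_i)^2\le\abs{\cT}\sum_{i\in\cT}b_i^2$ give $\bignorm{\pE[(\sum_{i\in\cT}b_i)v]}^2 \le \abs{\cT}\sum_{i\in\cT}\pE[b_i^2]\cdot\pE\snormt{v}\le k^2$. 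Dividing through by $k$ yields $\normt{\mu-\tmu}\ge(2\psi-1-\tfrac1M)r - 5Mr^*$, and since $\psi\ge 0.9$ and $M\ge 1000$ force $2\psi-1-\tfrac1M\ge 0.799>0.7$, the claimed bound follows (it is vacuous when its right-hand side is negative, since $\normt{\mu-\tmu}\ge 0$).

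\textbf{Main obstacle.} There is no real obstacle beyond care in the dictionary between the SDP and its pseudo-expectation form: one must remember that $W_i$ plays the role of $\pE[b_i v]$ (not $\pE[v]$ and not $\pE[b_i]\pE[v]$) and that $B_{ii}$ simultaneously plays the role of $\pE[b_i]$ and $\pE[b_i^2]$, so that \Cref{lem:RT-lemma-str} applies and the split $Z_i-\tmu=(Z_i-\mu)+(\mu-\tmu)$ goes through as above. Everything downstream is elementary; no sampling, privacy, concavity, or Lipschitz considerations enter, which is why the slack in the constant ($0.7$ rather than the $0.799$ one actually obtains) is harmless.
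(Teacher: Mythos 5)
Your proof is correct, and it takes a slightly different (and cleaner) route than the paper for this corollary. The paper's one-line proof simply reads off the inequality that appears at the end of the proof of \Cref{thm:accuracy-ng}; that inequality is derived by splitting the cross term into $\pE[b_i\iprod{\mu-\tmu, v-\pE v}]$ and $\pE[b_i]\iprod{\mu-\tmu,\pE v}$ (the split is needed in \Cref{thm:accuracy-ng} precisely because the goal there is to isolate $\iprod{\pE v,\Delta}$), and each of the two resulting pieces is bounded by $\normt{\mu-\tmu}$, which is the source of the division by $2$ in the corollary's conclusion. You instead bound the cross term $\iprod{\mu-\tmu,\pE[(\sum_{i\in\cT}b_i)v]}$ in a single Cauchy--Schwarz step, avoiding the $v=(v-\pE v)+\pE v$ decomposition entirely, and thereby obtain the strictly stronger bound $\normt{\mu-\tmu}\ge (2\psi-1-1/M)r-5Mr^*$; since $2\cdot 0.9-1-1/M\ge 0.799$ for $M\ge 1000$, this implies the stated inequality (trivially when the right side is nonpositive). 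The dictionary between the SDP variables and pseudo-expectations ($B_{ii}=\pE[b_i]=\pE[b_i^2]$, $W_i=\pE[b_i v]$, $\Tr V=\pE\snormt v=1$), the invocation of \Cref{lem:RT-lemma-str}, the bound $\pE[b_i^2]\le 1$ off $\cT$, and the SoS Cauchy--Schwarz estimate $\bignorm{\pE[(\sum_{i\in\cT}b_i)v]}^2\le \abs{\cT}\sum_{i\in\cT}\pE[b_i^2]\cdot\pE\snormt v\le k^2$ are all handled correctly. So: same skeleton as the paper (Lemma \ref{lem:RT-lemma-str}, split $Z_i-\tmu=(Z_i-\mu)+(\mu-\tmu)$, bookkeeping), but a more direct treatment of the cross term that buys a factor of $2$ in the constant, which the paper gives away by reusing \Cref{thm:accuracy-ng}'s machinery verbatim.
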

\begin{proof}
From the proof of \Cref{thm:accuracy-ng}, we have that
\begin{equation*}
2\normt{\mu - \tmu} + 5 k r^* \ge rk\paren{2 \times 0.9 - 1 - 1/M} \mper
\end{equation*}
Rearranging the terms and noting that $M\le 1000$, completes the proof.
\end{proof}

\subsubsection{Analysis of the $\SDPVAL$ Exponential Mechanism}
In order to analyze the exponential mechanism we need to analyze the following properties of $\SDPVAL$.
\begin{enumerate}
	\item Sensitivity: for privacy analysis of the exponential mechanism.
	\item Lipschitzness: for runtime analysis of the exponential mechanism.
	\item Concavity: for runtime analysis of the exponential mechanism.
	\item Volume of the set of \emph{optimal} points: for accuracy analysis of the exponential mechanism.
\end{enumerate}
In the following subsections we analyze these properties.

\subsubsection{Sensitivity}
\label{subsubsec:sdp-val-sensitivity}
\begin{lemma}[$\SDPVAL$ sensitivity]
	\label{lem:sdp-val-sensitivity}
	$\Delta_{\SDPVAL} \le 1$.
\end{lemma}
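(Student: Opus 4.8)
\textbf{Proof plan for \Cref{lem:sdp-val-sensitivity}.}

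The plan is to mirror the sensitivity argument for \textsc{coarse-sdp} (\Cref{lem:coarse-sdp-sensitivty}), adapted to the degree-$2$ pseudo-expectation formulation of $\SDPVAL$ given in \Cref{def:SDPVAL}. Fix $y, \tmu, r$ and let $Z, Z'$ be two datasets differing only in the first index, so $Z_1 \neq Z_1'$ and $Z_i = Z_i'$ for $i \geq 2$. By symmetry it suffices, given an optimal $\pE$ for $\SDPVAL(y; \tmu, r, Z)$, to construct a feasible pseudo-expectation $\pE'$ for $\SDPVAL(y; \tmu, r, Z')$ whose objective value is at least $\pE \sum_i b_i - 1$. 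I would define $\pE'$ on degree-$\leq 2$ monomials in $b,v$ by $\pE' f = \pE f$ when $f$ does not contain the indeterminate $b_1$, and $\pE' f = 0$ when $f$ contains $b_1$. Equivalently, in the moment-matrix picture (rows and columns indexed by $1, b_1,\ldots,b_k,v_1,\ldots,v_d$), $\pE'$ is obtained from $\pE$ by zeroing out the row and column indexed by $b_1$.

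First I would check that $\pE'$ is a valid degree-$2$ pseudo-expectation: $\pE' 1 = \pE 1 = 1$ since the constant monomial avoids $b_1$, and the moment matrix of $\pE'$ remains PSD by \Cref{lem:nulling-psd-matrices} (nulling a row/column of a PSD matrix preserves positive semidefiniteness). Next I would verify feasibility for $\SDPVAL(y; \tmu, r, Z')$. The linear constraints $\pE' v_i = y_i$ hold because each $v_i$ avoids $b_1$; the constraints $b_i = b_i^2$ and $\snorm{v} = 1$ hold since $\pE'$ agrees with $\pE$ on the monomials $b_i^2, b_i$ (for $i\ge 2$) and $v_j^2$, and the $b_1$-equation $b_1 = b_1^2$ is satisfied trivially as $0 = 0$. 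The key point is the constraint $b_i^2 r \le \iprod{Z_i - \tmu, b_i v}$: for $i \geq 2$ the defining polynomial is identical under $Z$ and $Z'$ and supported on the monomials $b_i^2, b_i v_j$, all of which avoid $b_1$, so $\pE'$ inherits it from $\pE$; for $i=1$ the polynomial $\iprod{Z_1' - \tmu, b_1 v} - r b_1^2$ is supported entirely on monomials containing $b_1$ (namely $b_1 v_j$ and $b_1^2$), hence $\pE'$ evaluates it to $0 \geq 0$. This is precisely why replacing $Z_1$ by $Z_1'$ is harmless. (Here I use the definition of ``satisfies'' from \Cref{sec:prelims}: a degree-$2$ pseudo-expectation only needs to test these degree-$2$ inequality polynomials against constant multipliers.)

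Finally, the objective moves by at most $1$: $\pE' \sum_{i=1}^k b_i = \sum_{i\ge 2}\pE b_i = \pE\sum_{i=1}^k b_i - \pE b_1$, and $0 \le \pE b_1 \le 1$ because $\pE b_1 = \pE b_1^2 \geq 0$ while $\pE(1-b_1)^2 \geq 0$ combined with $\pE b_1^2 = \pE b_1$ yields $1 - \pE b_1 \geq 0$. Thus $\pE'$ is feasible for $\SDPVAL(y;\tmu,r,Z')$ with objective value at least $\SDPVAL(y;\tmu,r,Z) - 1$, and applying the same argument with the roles of $Z$ and $Z'$ exchanged gives $\bigabs{\SDPVAL(y;\tmu,r,Z) - \SDPVAL(y;\tmu,r,Z')} \le 1$, i.e.\ $\Delta_{\SDPVAL} \le 1$. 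I expect no genuine obstacle: the only nontrivial ingredient is \Cref{lem:nulling-psd-matrices}, and the rest is bookkeeping essentially identical to the \textsc{coarse-sdp} case. The one thing to be careful about is confirming that every constraint affected by the modified sample $Z_1$ has its defining polynomial lying in the ideal generated by $b_1$, so that zeroing out $b_1$-monomials makes those constraints vacuously satisfied.
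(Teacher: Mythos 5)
Your proof is correct and follows essentially the same approach as the paper: zero out the row and column associated with $b_1$ in the moment matrix (the paper phrases this as zeroing $b_1$, $B_{1,\cdot}$, $B_{\cdot,1}$, $W_{1,\cdot}$ in the block SDP variables, which is the same operation), then invoke \Cref{lem:nulling-psd-matrices} for positive semidefiniteness. The only difference is presentational — you work in the pseudo-expectation language while the paper works with the explicit SDP block variables — and your writeup is slightly more thorough in spelling out why each constraint is preserved and why $0 \le \pE b_1 \le 1$ bounds the objective loss.
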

\begin{proof}
	We want to prove that for every two neighboring datasets $Z$ and $Z'$
	\begin{equation*}
		\max_{y, \tmu, r} \; \max_{Z, Z'} \; \abs{\SDPVAL\paren{y;\tmu, r Z} - \SDPVAL\paren{y;\tmu, r, Z'}} \le 1 \mper
	\end{equation*}
	If we prove that for every $y, \tmu, r$ and for every neighboring datasets $Z$, $Z'$
	\begin{equation*}
		\SDPVAL\paren{y; \tmu, r, Z} - 1 \le \SDPVAL\paren{y; \tmu, r, Z'},
	\end{equation*}
	by symmetry, our desired statement will be implied.
	Suppose that the optimal solution for $\paren{y;\tmu, r, Z}$ is $\paren{1, b , B, v, V, W}$. Without loss of generality, suppose that $Z$ and $Z'$ differ in the first index.
	Now consider $\paren{1, b', B', v, V, W'}$, where it is equal to $\paren{1, b, B, v, V, W}$ everywhere, except at 
	$b_1, B_{1,\cdot}, B_{\cdot, 1}, W_{1, \cdot}$, where it is equal to zero, or in other words, in the second column and row of the PSD matrix.
	Then $\paren{1, b', B', v, V, W'}$ will still satisfy the linear inequalities. Moreover the corresponding matrix for $\paren{1, b', B', v, V, W'}$, will be positive semidefinite, by \Cref{lem:nulling-psd-matrices}.
	Now note that $\Tr{B'} \ge \Tr{B} - 1$, and therefore,
	\begin{equation*}
		\SDPVAL\paren{y;\tmu, r, Z} - 1 \le \SDPVAL\paren{y;\tmu, r, Z'} \mcom
	\end{equation*}
	which completes the proof.
\end{proof}
The following corollary is useful in the analysis in \Cref{subsec:distance-estimation}.
\begin{corollary}[$\SDP$ sensitivity] $\Delta_{\SDP} \le 1$.
\label{cor:sdp-sensitiviy}
\end{corollary}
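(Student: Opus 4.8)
The plan is to observe that $\SDP(\tmu,r,Z)$ from \Cref{def:sdp-relax} is precisely the program $\SDPVAL(y;\tmu,r,Z)$ of \Cref{def:SDPVAL} with the linear constraints $v_i = y_i$ deleted, so the sensitivity bound should follow from essentially the same witness construction used in the proof of \Cref{lem:sdp-val-sensitivity}. First I would fix neighboring datasets $Z, Z'$ differing (without loss of generality) in the first index, and let $(1,b,B,v,V,W)$ be an optimal feasible solution for $\SDP(\tmu,r,Z)$. Following the argument for \Cref{lem:sdp-val-sensitivity}, I would build $(1,b',B',v,V,W')$ by zeroing out the entry $b_1$, the first row and column of $B$, and the first row of $W$ --- that is, nulling the row/column of the big PSD matrix indexed by $b_1$ --- while leaving the $v$- and $V$-blocks completely untouched.

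The points to verify are routine: (i) the new tuple satisfies all linear constraints of $\SDP(\tmu,r,Z')$, since $B_{ii}=b_i$ and $\Tr(V)=1$ hold coordinatewise, the inequality $B_{ii}\cdot r \le \iprod{Z_i-\tmu,W_i}$ is unchanged for $i\neq 1$, and for $i=1$ both sides are now $0$; (ii) the resulting block matrix is PSD, which is exactly the conclusion of \Cref{lem:nulling-psd-matrices} applied to the original PSD matrix; and (iii) the objective drops by at most $1$, since $\Tr(B') = \Tr(B) - B_{11} \ge \Tr(B)-1$ using $0\le B_{11}=b_1\le 1$. Hence $\SDP(\tmu,r,Z')\ge \SDP(\tmu,r,Z)-1$, and swapping the roles of $Z$ and $Z'$ gives $|\SDP(\tmu,r,Z)-\SDP(\tmu,r,Z')|\le 1$.

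I do not anticipate any real obstacle here. The only thing to be careful about is confirming that dropping the constraint $v_i=y_i$ does not interfere with the construction, and this is immediate because the nulling operation never modifies the $v$ or $V$ blocks. In fact the cleanest write-up is simply to remark that the proof of \Cref{lem:sdp-val-sensitivity} applies verbatim after erasing every reference to the constraint $v_i=y_i$, which yields the claim $\Delta_{\SDP}\le 1$.
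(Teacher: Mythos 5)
Your proposal matches the paper's argument, which is literally stated as ``Same as \Cref{lem:sdp-val-sensitivity}'': null the row and column of the block PSD matrix indexed by $b_1$, invoke \Cref{lem:nulling-psd-matrices} for positive semidefiniteness, and note that the trace drops by at most $B_{11}=b_1\le 1$. The observation that dropping the constraints $v_i=y_i$ is harmless because the nulling never touches the $v$- or $V$-blocks is exactly the right reason the proof carries over verbatim.
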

\begin{proof}
Same as \Cref{lem:sdp-val-sensitivity}.
\end{proof}

\subsubsection{Lipschitzness}
\label{subsubsec:sdp-val-lipschitzness}
\begin{lemma}
	\label{lem:pol-identity}
	Let $Z_1, \dots, Z_k, \tmu \in  \R^d, r \in \R, w > 0$. Then there exists $\alpha \in \R^d; \beta \in \R^k; \gamma \in \R; \lambda \in \R_{\ge 0}^k$, and a degree-$2$ SoS polynomial $\sigma \in \R\brac{b_1 \dots, b_k, v_1, \dots, v_k}_{\le 2}$ such that the following polynomial identity holds in variables $b_1, \dots, b_k, v_1, \dots, v_d$.
	\begin{align*}
		\SDPVAL\paren{y; \tmu, r, Z} + \omega - \sum_{i= 1}^k b_i^2 &= 
		\sum_{i=1}^d \alpha_i \paren{y_i - v_i} +
		\sum_{i=1}^k \beta_i \paren{b_i - b_i^2} +
		\gamma \paren{1 - \snormt{v}} \\& +
		\sum_{i=1}^k \lambda_i \paren{\iprod{Z_i - \tmu, bv} - b_i^2 r} +
		\sigma\paren{b, v}
	\end{align*}
\end{lemma}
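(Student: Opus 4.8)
The plan is to obtain the claimed identity as the optimality certificate furnished by SoS--pseudoexpectation duality, exactly as in the coarse-estimation analogue \Cref{lem:coarse-sdp-polynomial-identity}. Recall from \Cref{def:SDPVAL} that $\SDPVAL(y;\tmu,r,Z)$ may be written as $\max_{\pE}\pE\sum_{i=1}^k b_i$ over degree-$2$ pseudoexpectations in $b_1,\dots,b_k,v_1,\dots,v_d$ satisfying the system
\[
\cP \;=\; \bigl\{\, v_i-y_i\ge 0,\; y_i-v_i\ge 0\ (i\in[d]);\; b_i-b_i^2\ge 0,\; b_i^2-b_i\ge 0\ (i\in[k]);\; 1-\snorm v\ge 0,\; \snorm v-1\ge 0;\; \iprod{Z_i-\tmu,b_i v}-b_i^2 r\ge 0\ (i\in[k])\,\bigr\},
\]
and that, since $\pE$ satisfies $b_i=b_i^2$, the objective equals $\pE\sum_i b_i^2$ on every feasible $\pE$.

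First I would check that $\cP$ is (effectively) Archimedean, so that the duality recalled in \Cref{sec:prelims} applies. The constraint $\snorm v\le 1$ bounds $v$; and from $b_i-b_i^2\ge 0$ one has the SoS identity $1-b_i=(1-b_i)^2+(b_i-b_i^2)$, hence $k-\snorm b=\sum_i(1-b_i)^2+2\sum_i(b_i-b_i^2)$, so adding $1-\snorm v$ gives an SoS derivation of $\|(b,v)\|^2\le k+1$ from $\cP$. Thus the constraints imply boundedness and the duality conditions hold, just as in the coarse case (cf. the remark after \Cref{lem:coarse-sdp-polynomial-identity}).

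Next I would apply the duality. Put $f(b,v)=\SDPVAL(y;\tmu,r,Z)-\sum_{i=1}^k b_i^2$. By the definition of $\SDPVAL$ as a maximum, every degree-$2$ pseudoexpectation satisfying $\cP$ has $\pE f\ge 0$, so the second alternative of the duality dichotomy fails; hence for every $\omega>0$ there is a degree-$2$ SoS proof $\cP\proves_2 f\ge -\omega$. Reading this certificate off from the semidefinite dual of $\SDPVAL(y;\cdot)$ — the form in which a single PSD dual slack matrix supplies the degree-$2$ SoS polynomial $\sigma(b,v)$ while each scalar linear constraint of $\cP$ supplies a single scalar multiplier, sign-constrained only on the genuine inequalities $\iprod{Z_i-\tmu,b_i v}-b_i^2 r\ge 0$ — yields the polynomial identity
\[
f+\omega \;=\; \sigma(b,v) + \sum_{i=1}^d \alpha_i(y_i-v_i) + \sum_{i=1}^k \beta_i(b_i-b_i^2) + \gamma\bigl(1-\snorm v\bigr) + \sum_{i=1}^k \lambda_i\bigl(\iprod{Z_i-\tmu,b_i v}-b_i^2 r\bigr),
\]
where $\alpha_i,\beta_i,\gamma\in\R$ come from combining the two-sided (equality) constraints with signs and $\lambda_i\ge 0$ comes from the one-sided constraints. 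Substituting $f=\SDPVAL-\sum_i b_i^2$ is precisely the statement.

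I expect the only delicate point to be the multiplier bookkeeping: one must argue that in a degree-$2$ certificate the contribution of each equality pair collapses to a real scalar and that of each one-sided inequality to a nonnegative scalar. This is cleanest if one works directly with the semidefinite dual program (whose variables are exactly one PSD matrix together with one scalar per scalar constraint) rather than with the generic subset-indexed Positivstellensatz form, combined with the verification — via the boundedness established above — that (the approximate form of) strong SDP duality applies. Everything else is routine rearrangement, and the argument is structurally identical to that for \Cref{lem:coarse-sdp-polynomial-identity}.
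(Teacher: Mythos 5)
Your proposal is correct and is essentially the same argument as the paper's: the paper also dispatches this lemma by a direct appeal to SoS/pseudoexpectation duality, noting only that \Cref{lem:boundedness-quadratic-optimization-problem} makes the constraint system Archimedean so the duality applies. You simply make explicit the bookkeeping the paper delegates to the cited reference — the precise constraint system, the boundedness derivation (the same identity as \Cref{lem:b-2-less-than-1-sos}), and the observation that the degree-$2$ SDP dual produces one scalar multiplier per scalar constraint (unconstrained for equalities, nonnegative for the one-sided inequalities) plus a single PSD slack matrix yielding $\sigma$.
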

This lemma is a direct result of duality of pseudo-distributions and sos proofs (e.g. see \cite{barak2016proofs}).
Note that the constraints on $b$ and $v$ in the optimization problem imply boundedness (\Cref{lem:boundedness-quadratic-optimization-problem}) and therefore our problem satisfies the duality conditions.

Now we use this lemma to prove the following theorem.
\begin{lemma}[$\SDPVAL$ Lipschitzness]
	\label{lem:sdpval-lipschitzness}
	$f\paren{y} := \SDPVAL\paren{y; \tmu, r, Z}$ is $L = \sqrt{d}k/\zeta$-Lipschitz over the ball of radius $\paren{1 - \zeta}$ in $\R^d$, where $0 < \zeta \le 1$.
\end{lemma}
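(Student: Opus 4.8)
The plan is to follow the proof of \Cref{lem:coarse-sdp-lipschitzness} essentially verbatim, substituting the polynomial identity of \Cref{lem:pol-identity} for the one used there. Fix $Z_1,\dots,Z_k$, $\tmu$, $r$ and write $f(y)=\SDPVAL(y;\tmu,r,Z)$; observe first that $0\le f(y)\le k$, since under the constraint $b_i^2=b_i$ the objective $\sum_i b_i^2=\sum_i b_i$ lies in $[0,k]$, and that \SDPVAL\ is feasible for every $y$ in the ball of radius $1-\zeta$ (e.g.\ take all $b$-moments zero and a degree-$2$ pseudo-expectation with $\pE v=y$ and $\pE\snormt{v}=1$), so $f(y)$ is a finite real number.

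The first step is to bound the dual multiplier $\alpha$ furnished by \Cref{lem:pol-identity}. Given $y$ with $\normt{y}\le 1-\zeta$ and $\omega>0$, take $\alpha,\beta,\gamma,\lambda,\sigma$ as in that lemma; to bound the coordinate $\alpha_j$ I would substitute $b=0$ and $v=y+te_j$ into the polynomial identity, where $t$ is chosen so that $\normt{y+te_j}=1$. Since $\normt{y}\le 1-\zeta<1$, such a $t$ exists, and $\abs{t}\ge 1-\normt{y}\ge\zeta$; moreover both the positive and the negative root are available. With $b=0$, the terms $\beta_i(b_i-b_i^2)$, $\lambda_i(\iprod{Z_i-\tmu,bv}-b_i^2 r)$ and $\sum_i b_i^2$ all vanish, $\gamma(1-\snormt{v})$ vanishes by the choice of $v$, the square term $\sigma(0,v)\ge 0$, and $\sum_i\alpha_i(y_i-v_i)=-t\alpha_j$, so the identity collapses to $f(y)+\omega\ge -t\alpha_j$. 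Using $t>0$ and $t<0$ then yields $\abs{\alpha_j}\le (f(y)+\omega)/\zeta\le (k+\omega)/\zeta$, hence $\normt{\alpha}\le\sqrt{d}\,(k+\omega)/\zeta$.

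The second step is the transfer of the certificate. Let $y'=y+\delta u$ with $\normt{u}=1$, both points in the ball of radius $1-\zeta$. Keeping $(\alpha,\beta,\gamma,\lambda,\sigma)$ fixed and replacing $y$ by $y'$ in the identity changes only the term $\sum_i\alpha_i(y_i-v_i)$, and by exactly $\delta\iprod{\alpha,u}$, so the same tuple certifies the value $f(y)+\omega+\delta\iprod{\alpha,u}$ for \SDPVAL\ at $y'$. Applying this certificate to any feasible pseudo-expectation at $y'$ — using $\pE v=y'$ to kill the $\alpha$ term, the equality constraints $b_i=b_i^2$ and $\snormt{v}=1$ to kill the $\beta$ and $\gamma$ terms, the inequality $b_i^2 r\le\iprod{Z_i-\tmu,b_iv}$ together with $\lambda_i\ge 0$, and that $\sigma$ is SoS — gives $f(y')\le f(y)+\omega+\delta\normt{\alpha}\le f(y)+\delta\sqrt{d}k/\zeta+\omega(1+\delta\sqrt{d}/\zeta)$. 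Letting $\omega\to 0$, and then using symmetry in $y,y'$, gives $\abs{f(y)-f(y')}\le (\sqrt{d}k/\zeta)\normt{y-y'}$, i.e.\ $L=\sqrt{d}k/\zeta$ as claimed.

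I do not expect a genuine obstacle here: the duality machinery is already packaged in \Cref{lem:pol-identity} (with the required boundedness supplied by the remark after it), and everything else is the same bookkeeping as in the coarse case. The one place that uses a hypothesis nontrivially is the bound on $\alpha$ in the first step, where restricting to the ball of radius $1-\zeta$ rather than the full unit ball is precisely what forces $\abs{t}\ge\zeta$, and hence the stated $1/\zeta$ (not worse) dependence of $L$.
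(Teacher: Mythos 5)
Your proof is correct and follows the paper's own argument essentially line for line: use the polynomial identity from \Cref{lem:pol-identity}, bound $\normt{\alpha}$ by substituting $b=0$, $v=y\pm te_j$ with $\|y\pm te_j\|=1$ (so $\abs{t}\ge\zeta$), then transfer the same dual tuple to $y'=y+\delta u$ and take $\omega\to 0$. The extra remarks you include — the a priori bound $0\le f\le k$, feasibility for every $y$ in the ball, and the explicit verification that the shifted certificate is sound for any feasible pseudo-expectation at $y'$ — are all consistent with what the paper leaves implicit.
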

\begin{proof}
	Let $\omega > 0$, then by \Cref{lem:pol-identity}, there exists $\alpha \in \R^d; \beta \in \R^k; \gamma \in \R; \lambda \in \R_{\ge 0}^k$ and a degree-$2$ SoS polynomial $\sigma$ such that 
	\begin{align*}
		\SDPVAL\paren{y; \tmu, r, Z} + \omega - \sum_{i= 1}^k b_i^2 &= 
		\sum_{i=1}^d \alpha_i \paren{y_i - v_i} +
		\sum_{i=1}^k \beta_i \paren{b_i - b_i^2} +
		\gamma \paren{1 - \snormt{v}} \\& +
		\sum_{i=1}^k \lambda_i \paren{\iprod{Z_i - \tmu, bv} - b_i^2 r} +
		\sigma\paren{b, v} \mper
	\end{align*}
	First we prove that $\alpha_i$s are bounded.
	Note that the above equality, is a polynomial identity, therefore we may pick different values for $b$ and $v$, and the equality should still hold.
	For $1\le j \le k$ let $b = 0, v = y + te_j$ where $t$ is chosen such that $\normt{y + te_j} = 1$. Such $t$ exists because $\normt{y} < 1 - \zeta$. Moreover $t > \zeta$ due to the triangle inequality.
	Now if we plug $b$ and $v$ into the above equality we obtain
	\begin{align*}
		\SDPVAL\paren{y; \tmu, r, Z} + \omega &= 
		-\alpha_j t +
		\sigma\paren{0, y + te_j}
		\tag{$\sigma \ge 0, \gamma_i \ge 0$}
		\\
		&\ge -\alpha_j t \mper
	\end{align*}
	Therefore, for every $1 \le j\le k$,
	\begin{equation*}
		\alpha_j \ge -\frac{k + \omega}{\zeta} \mper
	\end{equation*}
	Similarly by letting $b = 0, v = y - e_j t$, we may obtain
	$\alpha_j \le \paren{k+\omega}/\zeta$, therefore $\abs{\alpha_j} \le \paren{k+\omega}/\zeta$, and
	\begin{equation*}
		\normt{\alpha} \le \sqrt{d}\paren{k+\omega}/\zeta \mper
	\end{equation*}
	Now that we have bounded $\alpha$, we can move on to proving Lipschitzness. Let
	\begin{align*}
		g\paren{\alpha, \beta, \gamma, \lambda, \sigma; y; \tmu, r, Z} 
		&=
		\sum_{i= 1}^k b_i^2 +
		\sum_{i=1}^d \alpha_i \paren{y_i - v_i} +
		\sum_{i=1}^k \beta_i \paren{b_i - b_i^2} +
		\gamma \paren{1 - \snormt{v}} \\& +
		\sum_{i=1}^k \lambda_i \paren{\iprod{Z_i - \tmu, bv} - b_i^2 r} +
		\sigma\paren{b, v} \mper
	\end{align*}
	Note that the co-domain of $g$ is the set of degree-$2$ polynomials in $\paren{b, v}$.
	Suppose $y' = y + \delta u$, where $\normt{u} = 1$.
	In order to prove Lipschitzness, we prove that there exists
	$\alpha', \beta', \gamma', \lambda', \sigma'$
	such that 
	$g\paren{\alpha', \beta', \gamma', \lambda', \sigma'; y'; \tmu, r, Z}$
	is a degree-$0$ polynomial (i.e. it is a number in $\R$), and is at most 
	$\SDPVAL\paren{y; \tmu, r, Z} + L\delta + h\paren{\omega}$, where $\lim_{\omega \to 0} h\paren{\omega} = 0$.
	If we do so, by pseudo-expectation and SoS duality, we can conclude that for every $\omega > 0$, 
	$\SDPVAL\paren{y'; \tmu, r, Z} < \SDPVAL\paren{y; \tmu, r, Z}+ L\delta + h\paren{\omega}$.
	Taking $\lim_{\omega \to 0}$ from both sides gives us
	$\SDPVAL\paren{y'; \tmu, r, Z} \le \SDPVAL\paren{y; \tmu, r, Z}+ L \delta$.
	By symmetry, this implies that $\Abs{\SDPVAL\paren{y'; \tmu, r, Z} - \SDPVAL\paren{y; \tmu, r, Z}} \le L\delta$ and that  $f$ is $L$-Lipschitz in $y$.
	It remains to construct a dual solution that satisfies the described properties.
	To do so, take $\paren{\alpha', \beta', \gamma', \lambda', \sigma'} = \paren{\alpha,\beta, \gamma, \lambda, \sigma}$. Then
	\begin{align*}
		g\paren{\alpha', \beta', \gamma', \lambda', \sigma'; y'; \tmu, r, Z} 
		&=
		g\paren{\alpha, \beta, \gamma, \lambda, \sigma; y; \tmu, r, Z} + \delta \sum_{i=1}^d \alpha_i u_i \\
		&\le
		\SDPVAL\paren{y; \tmu, r, Z} + \omega + \delta \normt{\alpha} \\
		&\le
		\SDPVAL\paren{y; \tmu, r, Z} + \omega + \delta \Paren{ \sqrt{d}k/\zeta+ \sqrt{d}\omega/\zeta} \\
		&= 
		\SDPVAL\paren{y; \tmu, r, Z} + \delta \sqrt{d}k/\zeta + \omega\paren{\delta\sqrt{d}/\zeta +1 } \mper
	\end{align*}
	as desired. Therefore $f$ is $L = \sqrt{d}k/\zeta$-Lipschitz.
\end{proof}

\subsubsection{Concavity}
\label{subsubsec:sdp-val-concavity}
\begin{corollary}[$\SDPVAL$ concavity]
	\label{cor:sdp-val-concavity}
	Let $Z_1, \dots, Z_k, \tmu \in \R^d$, $r\ge 0$.
	Let $f\paren{y}:  =\SDPVAL\paren{y; \tmu, r, Z}$. Then $f$ is concave.
\end{corollary}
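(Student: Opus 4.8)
The plan is to reuse, essentially verbatim, the convexity argument behind \Cref{lem:meta-concave} and \Cref{cor:coarse-sdp-concavity}; indeed this corollary is a direct instance of \Cref{lem:convacity-maximization}. The point is that $\SDPVAL(y;\tmu,r,Z)$ is a linear objective maximized over a convex set of pseudo-distributions, intersected with the affine slice cut out by the linear constraint $\pE v = y$, and such ``sliced maxima'' are automatically concave in the slicing parameter.

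Concretely, I would work with the pseudo-distribution formulation of $\SDPVAL$ from \Cref{def:SDPVAL}. Fix $\tmu$, $r$, and $Z$, and let $\cK$ be the set of degree-$2$ pseudo-distributions $\pE$ in indeterminates $v_1,\dots,v_d,b_1,\dots,b_k$ that satisfy $b_i = b_i^2$ for all $i$, $\snorm{v} = 1$, and $b_i^2 r \le \iprod{Z_i - \tmu, b_i v}$ for all $i$. The first step is the observation that $\cK$ is convex: being a degree-$2$ pseudo-distribution ($\pE 1 = 1$ together with positive semidefiniteness of the degree-$2$ moment matrix) is a convex condition, and each of the listed constraints, in the ``satisfies'' sense of \Cref{sec:prelims}, amounts to a family of linear inequalities on the pseudo-moments of $\pE$; all of these are preserved under convex combinations. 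Then $\SDPVAL(y;\tmu,r,Z) = \max\{\pE\sum_i b_i : \pE \in \cK,\ \pE v = y\}$, and the maximum is attained since the constraints force boundedness (\Cref{lem:boundedness-quadratic-optimization-problem}), so the feasible region is compact.

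Given this, concavity is immediate. For $y, y' \in \R^d$ and $\theta \in [0,1]$, take maximizers $\pE_y, \pE_{y'} \in \cK$ for $y$ and $y'$. Then $\pE'' := \theta\pE_y + (1-\theta)\pE_{y'} \in \cK$ by convexity of $\cK$, and $\pE'' v = \theta y + (1-\theta)y'$ by linearity of the map $\pE \mapsto \pE v$; hence $\pE''$ is feasible for the program defining $\SDPVAL(\theta y + (1-\theta)y';\tmu,r,Z)$. Since the objective $\pE \mapsto \pE\sum_i b_i$ is linear, its value at $\pE''$ equals $\theta\,\SDPVAL(y;\tmu,r,Z) + (1-\theta)\,\SDPVAL(y';\tmu,r,Z)$, which is therefore a lower bound on $\SDPVAL(\theta y + (1-\theta)y';\tmu,r,Z)$ — exactly the desired inequality. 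The one point worth stating carefully, though it is entirely routine, is the convexity of $\cK$ together with the stability of the relation ``$\pE$ satisfies $p \ge 0$'' under averaging of pseudo-distributions; I do not anticipate any genuine obstacle, since all the content of the corollary is inherited from the convex-program structure of $\SDPVAL$, which is precisely why the SDP relaxation was introduced in the first place.
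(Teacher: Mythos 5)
Your argument is correct and is exactly the paper's: the paper simply cites \Cref{lem:convacity-maximization}, which is the abstract form of the convex-combination argument you spell out (convex feasible set, affine slicing constraint $\pE v = y$, linear objective). Nothing is different in substance; you have just unpacked the cited lemma in the specific instance.
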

\begin{proof}
	Direct result of \Cref{lem:convacity-maximization}.
\end{proof}
\subsubsection{Volume}
\label{subsubsec:sdp-val-volume}
In this section we prove the following lemma.
\begin{lemma}[$\SDPVAL$ volume ratio]
\label{lem:volume-ratio-sdp-val}
Let $Z_1, \dots, Z_k, \tmu, \mu \in \R^d, r^* = 2\sqrt{k/n}, 1/\paren{2M} \ge \zeta \ge 0$. Suppose $\normt{\tmu - \mu} \ge M\paren{5M} r^*$, $M \ge 1000$, $d \le 1.15\normt{\mu - \tmu}$,and that \Cref{ass:when-at-mu-sdp-str} holds. Then there exists a set $\cH^*$ inside $\bbB_{1-\zeta}^d$, the ball of radius $1-\zeta$ in $\R^d$, and a constant $C_{M}$ depending only on $M$ such that 
\begin{equation*}
\log \frac{\vol \paren{\bbB^{d}_{1-\zeta}}}{\vol\paren{\cH^*}} \le C_{M} d \mcom
\end{equation*}
and
\begin{equation*}
\inf_{y \in \cH^*} \SDPVAL\paren{y; \tmu, d/1.2, Z} \ge \paren{1 - \frac{1}{M}} k \mper
\end{equation*}
\end{lemma}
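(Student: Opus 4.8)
The plan is to exhibit $\cH^*$ explicitly as a Euclidean ball of small constant radius centered on the ray through $\Delta := (\mu-\tmu)/\normt{\mu-\tmu}$, and to certify its high score by constructing, for each $y \in \cH^*$, an explicit feasible solution to the \emph{quadratic} program $\QUADVAL$ — which lower-bounds $\SDPVAL$ by \Cref{lem:quad-val-sdp-val}. Writing $\rho := \normt{\mu-\tmu}$ and fixing the constant $c^* := 0.97$, I would take
\[
    \cH^* := \Set{ y \in \R^d \suchthat \normt{y - a\Delta} \le r_0 }, \qquad a := \tfrac{(1-\zeta)+c^*}{2}, \qquad r_0 := \tfrac{(1-\zeta)-c^*}{2}\mper
\]
Since $\zeta \le 1/(2M)$ and $M \ge 1000$ give $1-\zeta \ge 0.9995 > c^*$, we have $r_0 > 0$; every $y \in \cH^*$ then satisfies $\normt{y} \le a + r_0 = 1-\zeta$, so $\cH^* \subseteq \bbB^d_{1-\zeta}$, and also $\iprod{\Delta, y} \ge a - r_0 = c^*$.

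For the score bound, fix $y \in \cH^*$ and set $\hat y := y/\normt y$, which is well defined since $\normt y \ge \iprod{\Delta, y} \ge c^* > 0$. Applying \Cref{ass:when-at-mu-simple-str} (which follows from \Cref{ass:when-at-mu-sdp-str}) to the unit vector $-\hat y$ shows that $G_y := \Set{ i \suchthat \iprod{Z_i - \mu, \hat y} > -5Mr^*}$ has $\abs{G_y} \ge (1 - 1/M)k$. For each $i \in G_y$ I would write
\[
    \iprod{Z_i - \tmu, y} = \normt{y}\,\iprod{Z_i - \mu, \hat y} + \rho\,\iprod{\Delta, y} > -5Mr^* + c^*\rho\mcom
\]
where the inequality uses $\normt y \le 1$ to bound $\normt{y}\iprod{Z_i - \mu, \hat y} > -\normt{y}\cdot 5Mr^* \ge -5Mr^*$; the crucial point is that this maneuver sidesteps any need to control $\normt{Z_i - \mu}$. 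Combining $\rho \ge 5M^2 r^*$ (so $5Mr^* \le \rho/M$) with $d \le 1.15\rho$ (so $d/1.2 \le (1.15/1.2)\rho < (c^* - 1/M)\rho$ when $M \ge 1000$) yields $\iprod{Z_i - \tmu, y} > d/1.2$ for all $i \in G_y$. Hence $v = y$, $b_i = \mathbbm{1}\brac{i \in G_y}$ is feasible for $\QUADVAL(y; \tmu, d/1.2, Z)$ with objective $\abs{G_y}$, and by \Cref{lem:quad-val-sdp-val},
\[
    \SDPVAL(y; \tmu, d/1.2, Z) \ge \QUADVAL(y; \tmu, d/1.2, Z) \ge \abs{G_y} \ge (1 - 1/M)k\mper
\]

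The volume bound is then immediate: since $\cH^*$ is a ball of radius $r_0$ inside $\bbB^d_{1-\zeta}$,
\[
    \log\frac{\vol(\bbB^d_{1-\zeta})}{\vol(\cH^*)} = d\log\frac{1-\zeta}{r_0} \le d\log\frac1{r_0} =: C_M d\mcom
\]
and $C_M = \log(1/r_0)$ is a constant (at most $\log 72$ given $M \ge 1000$). I expect the only genuinely delicate step to be the score bound, or rather the choice of $\cH^*$ behind it: the naive attempt to take $\cH^*$ a tiny ball around $\Delta$ itself fails, both because $\normt\Delta = 1 > 1-\zeta$ and — more seriously — because bounding $\iprod{Z_i - \mu, y - \Delta}$ would apparently need an $\ell_2$ bound on $Z_i - \mu$, which \Cref{ass:when-at-mu-sdp-str} does not supply. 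Working over a ``cap-like'' ball and using $\normt y \le 1$ to fold the $Z_i - \mu$ contribution into the $5Mr^*$ slack — with a different good set $G_y$ for each $y$, which is harmless since $\QUADVAL(y)$ already maximizes over which indices to count — is what makes the argument go through from only a directional hypothesis on the $Z_i$.
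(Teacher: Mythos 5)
Your proof is correct but takes a genuinely different route from the paper. The paper (via \Cref{lem:goodness-set}, \Cref{lem:size-good-set}, and \Cref{cor:size-of-the-good-set}) takes $\cH^*$ to be the set $S_{\theta,\zeta,\gamma,\Delta} = \set{\alpha u : \iprod{u,\Delta} \ge 1-\gamma,\ 1-\theta \le \alpha \le 1-\zeta,\ \normt{u}=1}$ with $\gamma = \theta = 1/M$ — a thin shell of a very tight spherical cap — and the volume bound then routes through the cap-measure lower bound \Cref{fact:lb-spherical-cap}. You instead take $\cH^*$ to be a Euclidean ball near the boundary of $\bbB^d_{1-\zeta}$, which turns the volume ratio into a trivial $\paren{\text{radius ratio}}^d$ and eliminates the cap-measure fact entirely. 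The score certification underneath is the same calculation as the paper's: apply \Cref{ass:when-at-mu-simple-str} to $\hat y = y/\normt y$ (this is exactly the unit vector $u$ in the paper's polar parameterization) and split $\iprod{Z_i-\tmu,y}$ into a ``noise'' piece controlled by $5Mr^*$ and a ``signal'' piece $\rho\iprod{\Delta,y}$ — the only difference is that you absorb the noise via $\normt y \le 1$ directly, where the paper first proves the bound for $u$ and then multiplies by $\alpha\ge 1-\theta$. The paper's version is parameterized (with $\gamma,\theta$ free) and reused in \Cref{cor:existence-of-good-sol-for-sdp}, whereas your choice of $c^*=0.97$ hard-codes the specific margins $r=d/1.2$ and $d\le1.15\normt{\mu-\tmu}$; for this lemma in isolation your ball is cleaner. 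Your parenthetical diagnosis of why a small ball around $\Delta$ itself fails — that without an $\ell_2$ bound on $Z_i-\mu$ you cannot control $\iprod{Z_i-\mu,y-\Delta}$, forcing a $y$-dependent good set $G_y$ — is also exactly the subtlety both arguments hinge on. One small cosmetic point: your $r_0$ depends on $\zeta$ as well as on the fixed $c^*$, so to be literal about ``$C_M$ depending only on $M$'' you should take the worst case $\zeta = 1/(2M)$, giving $C_M = \log\!\paren{2/\paren{0.03 - 1/(2M)}}$; since you verify this stays bounded (by $\log 72$) for $M\ge1000$, the claim holds as stated.
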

\begin{proof}
Putting together \Cref{lem:goodness-set}, \Cref{lem:size-good-set}, and \Cref{cor:size-of-the-good-set}, and setting $\gamma = \theta = 1/M$, and noting that $d \le 1.15 \normt{\mu - \tmu}$ implies  
\begin{equation*}
\frac{d}{1.2} \le 0.96 \normt{\mu - \tmu} \le \paren{1 - 1/M}\paren{1- 2/M}\normt{\mu - \tmu} \mper
\end{equation*}
Also note that from \Cref{cor:size-of-the-good-set}, $C_M \ge \cO\Paren{\log M}$ should suffice.
\end{proof}

The following lemma tells us that there's a set of good points. Later in this section we show that this set has a large volume inside the ball of radius $1 - \zeta$.
\begin{lemma}[goodness of the set]
	\label{lem:goodness-set}
	Let $Z_1 \dots, Z_k , \tmu\in \R^d $, $r^* = 2\sqrt{k/n}$.
	Assume \Cref{ass:when-at-mu-sdp-str} holds. Let  $\tilde{d} = \normt{\mu - \tmu}$, $\Delta = \paren{\mu - \tmu}/\normt{\mu - \tmu}$. Then for any $0\le \gamma, \theta \le 1$, if $\iprod{u, \Delta} \ge 1 - \gamma$, $\normt{u} = 1$, $\tilde{d} \ge M \paren{5M}r^*$, $ \paren{1 - \theta} \paren{1- \gamma - 1/M} \tilde{d} \ge r$,
	\begin{equation*}
		\forall \alpha \in \brac{1- \theta, 1}: \SDPVAL\paren{\alpha u; \tmu,r, Z} \ge \paren{1- \frac{1}{M}} k \mper
	\end{equation*}
\end{lemma}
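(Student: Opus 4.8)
The plan is to lower-bound the score $\SDPVAL(\alpha u;\tmu,r,Z)$, uniformly over $\alpha\in[1-\theta,1]$, by exhibiting a feasible solution of the program in \Cref{def:SDPVAL} whose objective value is at least $(1-1/M)k$. The natural candidate is the ``honest'' solution in which $b_i\in\{0,1\}$ records whether the bucket mean $Z_i$ lies far enough in the direction $u$ from $\tmu$, with $v=\alpha u$; the only point that requires a small trick is reconciling the normalization $\Tr(V)=1$ (equivalently $\pE\snormt{v}=1$) with the fact that the prescribed $v=\alpha u$ has norm $\alpha\le 1$.

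First I would extract a large ``good'' index set. Since \Cref{ass:when-at-mu-sdp-str} implies \Cref{ass:when-at-mu-simple-str}, applying the latter to the unit vector $-u$ shows that $G:=\Set{i\suchthat \iprod{Z_i-\mu,u}>-5Mr^*}$ has $\Abs{G}\ge(1-1/M)k$. Then, for $i\in G$, I would estimate, using $\iprod{u,\Delta}\ge 1-\gamma$ and $\normt{\mu-\tmu}=\tilde{d}$,
\[
\iprod{Z_i-\tmu,u}=\iprod{Z_i-\mu,u}+\iprod{\mu-\tmu,u}\ge -5Mr^*+\tilde{d}(1-\gamma)\ge \tilde{d}(1-\gamma-1/M),
\]
where the last step uses $\tilde{d}\ge M(5M)r^*$, hence $\tilde{d}/M\ge 5Mr^*$. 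Combining with the hypothesis $(1-\theta)(1-\gamma-1/M)\tilde{d}\ge r$ and $\alpha\ge 1-\theta$ gives, for every $i\in G$ and every $\alpha\in[1-\theta,1]$,
\[
\alpha\iprod{Z_i-\tmu,u}\ge (1-\theta)(1-\gamma-1/M)\tilde{d}\ge r,
\]
which is exactly the inequality needed to keep $b_i=1$ feasible.

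Finally I would assemble the solution. Take $b_i=\mathbbm{1}[i\in G]$, fix a unit vector $w$ orthogonal to $u$, and set $v=\alpha u$, $V=\alpha^2 uu^\top+(1-\alpha^2)ww^\top$ (so $\Tr(V)=1$ and $V\succeq vv^\top$), $B=bb^\top$, and $W$ the $k\times d$ block with rows $W_i=b_iv$. The resulting $(k+d+1)\times(k+d+1)$ block matrix is PSD, since it equals the outer product of the stacked vector $(1,b,v)$ with itself plus $(1-\alpha^2)$ times a PSD matrix supported on the $V$-block; it satisfies $B_{ii}=b_i$, $v=\alpha u=y$, and $B_{ii}r\le\iprod{Z_i-\tmu,W_i}$ by the previous display; and its objective is $\Tr(B)=\Abs{G}\ge(1-1/M)k$. (Equivalently, in pseudo-distribution language, $b$ deterministic as above together with $v$ uniform on $\{\alpha u\pm\sqrt{1-\alpha^2}\,w\}$ defines a genuine — hence valid degree-$2$ pseudo- — expectation that is feasible with the same objective.) I do not expect a genuine obstacle: the argument is a short chain of triangle inequalities together with the normalization trick above, and the bookkeeping with $5Mr^*$, the $1/M$ slack, and the factor $(1-\theta)$ is elementary once $\tilde{d}\ge 5M^2r^*$ is invoked.
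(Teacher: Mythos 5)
Your proof is correct and follows essentially the same route as the paper: you extract the same good set $G$ via \Cref{ass:when-at-mu-simple-str} applied to $-u$, derive the same chain of inequalities to get $\alpha\iprod{Z_i-\tmu,u}\ge r$ for $i\in G$, and then exhibit a feasible $\SDPVAL$ solution with objective $|G|\ge(1-1/M)k$. The only (minor) divergence is in finishing: the paper invokes \Cref{lem:quad-val-sdp-val}, which lifts the integer solution $b_i=\mathbbm{1}[i\in G]$, $v=\alpha u$ to an SDP solution and handles the $\Tr(V)=1$ normalization by rescaling $V\mapsto V/\Tr(V)$ (via \Cref{lem:sdp-scaling}), whereas you handle the normalization by adding the orthogonal rank-one term $(1-\alpha^2)ww^\top$ to the $V$-block — an equivalent and equally valid trick.
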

\begin{proof}
	By \Cref{ass:when-at-mu-simple-str} (which is a result of \Cref{ass:when-at-mu-sdp-str}) we know that for the direction $u$ there exists a set $S_{-u}$ such that $\Abs{S_{-u}} \ge k \paren{1 - 1/M}$ and 
	\begin{equation*}
		\forall i \in S_{-u}: \iprod{Z_i - \mu, -u} \le 5 M r^* \mper
	\end{equation*}
	Now for every $i \in S_{-u}$, we have that
	\begin{align*}
		\iprod{Z_i - \tmu, u} 
		&= \iprod{Z_i - \mu, u} + \iprod{\mu - \tmu, u} \\
		&\ge - 5 Mr^* + \normt{\mu - \tmu} \iprod{u, \Delta} \\
		&\ge \paren{1- \gamma}\tilde{d}  - 5Mr^* \\
		&\ge  \paren{1 - \gamma - 1/M} \tilde{d} \mper
	\end{align*}
	Therefore for at least $k\paren{1 - 1/M}$ indices
	\begin{equation*}
		\iprod{Z_i - \tmu, \alpha u} \ge \paren{1-\theta} \paren{1 - \gamma - 1/M} \tilde{d} \ge r \mcom
	\end{equation*}
	which completes the proof by \Cref{lem:quad-val-sdp-val}.
\end{proof}
The following corollary is useful in the analysis in \Cref{subsec:distance-estimation}.
\begin{corollary}
\label{cor:existence-of-good-sol-for-sdp}
Let $Z_1, \dots, Z_k, \tmu \in \R^d$, and $r^* = 2\sqrt{k/n}$.
Assume  \Cref{ass:when-at-mu-sdp-str}, holds, $\normt{\mu -\tmu} \ge M\paren{5M}r^*$.
Then
\begin{equation*}
\SDP\paren{\tmu,  \paren{1-\frac{1}{M}} \normt{\mu -\tmu}, Z} \ge \paren{1 - \frac{1}{M}}k \mper
\end{equation*}
\end{corollary}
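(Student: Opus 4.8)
The plan is to obtain this as an immediate consequence of Lemma~\ref{lem:goodness-set} together with the observation that $\SDP$ is a relaxation of $\SDPVAL$. First I would instantiate Lemma~\ref{lem:goodness-set} with the distinguished direction $u = \Delta := (\mu - \tmu)/\normt{\mu - \tmu}$, which is a unit vector, and with parameters $\gamma = 0$ and $\theta = 0$. With these choices the hypothesis $\iprod{u,\Delta} \ge 1 - \gamma$ holds with equality; the hypothesis $\tilde d := \normt{\mu - \tmu} \ge M(5M)r^*$ is exactly what we assumed; and the remaining hypothesis $(1-\theta)(1 - \gamma - 1/M)\tilde d \ge r$ becomes $(1 - 1/M)\tilde d \ge r$, which is satisfied with equality when $r = (1-1/M)\normt{\mu - \tmu}$. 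The conclusion of the lemma, evaluated at $\alpha = 1$ (the only value in $[1-\theta, 1]$), then reads $\SDPVAL(\Delta; \tmu, (1-1/M)\normt{\mu-\tmu}, Z) \ge (1 - 1/M)k$.

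Second, I would note that $\SDP(\tmu, r, Z) \ge \SDPVAL(y; \tmu, r, Z)$ for every $y \in \R^d$: any feasible point of the program defining $\SDPVAL(y; \tmu, r, Z)$ (a PSD block matrix with $B_{ii} = b_i$, $\Tr(V) = 1$, $B_{ii}\cdot r \le \iprod{Z_i - \tmu, W_i}$, and $v = y$) is, after simply deleting the linear constraint $v = y$, a feasible point of the program defining $\SDP(\tmu, r, Z)$ with the same objective value $\Tr(B)$. Applying this with $y = \Delta$ and $r = (1-1/M)\normt{\mu - \tmu}$ and chaining with the bound from the previous paragraph gives $\SDP(\tmu, (1-1/M)\normt{\mu-\tmu}, Z) \ge (1-1/M)k$, which is exactly the claim.

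There is essentially no obstacle here; the only point requiring care is matching parameters so that the threshold $r$ in Lemma~\ref{lem:goodness-set} can be pushed all the way up to $(1-1/M)\normt{\mu-\tmu}$, which forces the choice $\theta = 0$ (so the lemma certifies a high $\SDPVAL$ value only at $\alpha = 1$, not on an interval of scalings). Since we only need the bound at the single direction $\Delta$ itself, this is harmless. The $(1-1/M)$ loss in the captured fraction originates entirely from Assumption~\ref{ass:when-at-mu-simple-str} (which follows from Assumption~\ref{ass:when-at-mu-sdp-str}) as used inside the proof of Lemma~\ref{lem:goodness-set}, so no additional slack is introduced by this argument.
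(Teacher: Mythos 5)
Your proof is correct and follows the route the paper intends: the corollary appears immediately after Lemma~\ref{lem:goodness-set} with no written proof, and the evident derivation is exactly your instantiation with $u = \Delta$, $\gamma = \theta = 0$, $r = (1-1/M)\normt{\mu-\tmu}$, combined with the observation that dropping the linear constraint $\pE v = y$ only enlarges the feasible set, so $\SDP(\tmu,r,Z) \ge \SDPVAL(y;\tmu,r,Z)$ for every $y$. Your parameter-matching check (that $\theta=0$ is forced when $r$ is taken at the boundary value) is also accurate.
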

The following lemma sates that the set defined in \Cref{lem:goodness-set}, has a high volume.
\begin{lemma}[size of the good set]
	\label{lem:size-good-set}
	Suppose $\Delta \in \R^d$ such that $\normt{\Delta} = 1$, $0 \le \gamma \le 1$ and $0 \le \zeta \le \theta \le 1$. Let
	\begin{equation*}
		S_{\theta, \zeta, \gamma, \Delta} = \Set{\alpha u \suchthat \iprod{u, \Delta} \ge 1- \gamma, 1- \theta \le \alpha \le 1 - \zeta, \normt{u} = 1, u \in \R^d} \mper
	\end{equation*}
	Let $\bbB_r^d$ denote the unit ball of radius $r$ in $\R^d$. Then
	\begin{equation*}
		\frac{\vol\paren{S_{\theta, \zeta, \gamma, \Delta}}}{\vol\paren{\bbB_{1 - \zeta}^d}} \ge \Paren{1 - \Paren{\frac{1 - \theta}{1-\zeta}}^d}
		\cdot
		\frac{1}{2}\Paren{\frac{\sqrt{1 - \paren{1- \gamma}^2}}{2}}^{d-1} \mper
	\end{equation*}
\end{lemma}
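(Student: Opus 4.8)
The plan is to recognize $S_{\theta,\zeta,\gamma,\Delta}$ as a ``cone‑shell'' and compute its volume in polar coordinates. A point $p\in\R^d$ lies in $S_{\theta,\zeta,\gamma,\Delta}$ exactly when $\normt{p}\in[1-\theta,1-\zeta]$ and its direction $u=p/\normt{p}$ lies in the spherical cap $C=\Set{u\in\cS^{d-1}\suchthat\iprod{u,\Delta}\ge 1-\gamma}$. Writing $dV=r^{d-1}\,dr\,d\sigma(u)$ on $(0,\infty)\times\cS^{d-1}$, the volume factors as a radial integral times the surface measure of the cap:
\[
\vol\paren{S_{\theta,\zeta,\gamma,\Delta}}=\frac{(1-\zeta)^d-(1-\theta)^d}{d}\,\sigma(C),\qquad \vol\paren{\bbB^d_{1-\zeta}}=\frac{(1-\zeta)^d}{d}\,\sigma(\cS^{d-1}).
\]
Dividing, the radial part contributes exactly the factor $1-\paren{(1-\theta)/(1-\zeta)}^d$, so everything reduces to the normalized‑cap estimate $\sigma(C)/\sigma(\cS^{d-1})\ge\frac12\paren{\tfrac12\sqrt{1-(1-\gamma)^2}}^{d-1}$.

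For that, I would project the cap orthogonally onto $\Delta^\perp\cong\R^{d-1}$: each $u$ with $\iprod{u,\Delta}\ge 0$ is uniquely $u=\sqrt{1-\normt{x}^2}\,\Delta+x$ with $x\in\Delta^\perp$, $\normt{x}<1$, and in this chart the area element is $d\sigma(u)=(1-\normt{x}^2)^{-1/2}\,dx\ge dx$. The cap condition $\iprod{u,\Delta}\ge 1-\gamma$ becomes exactly $\normt{x}\le\sqrt{1-(1-\gamma)^2}$, so integrating $d\sigma\ge dx$ over that Euclidean ball gives $\sigma(C)\ge\omega_{d-1}\paren{\sqrt{1-(1-\gamma)^2}}^{d-1}$, where $\omega_{d-1}$ denotes the volume of the unit ball in $\R^{d-1}$.

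Combining the two displays, the lemma follows once one verifies the dimension‑only inequality $\sigma(\cS^{d-1})\le 2^d\,\omega_{d-1}$. Using $\sigma(\cS^{d-1})=2\pi^{d/2}/\Gamma(d/2)$ and $\omega_{d-1}=\pi^{(d-1)/2}/\Gamma((d+1)/2)$, this reads $2\sqrt\pi\,\Gamma((d+1)/2)/\Gamma(d/2)\le 2^d$; by log‑convexity of $\Gamma$ we have $\Gamma((d+1)/2)^2\le\Gamma(d/2)\,\Gamma(d/2+1)=\tfrac d2\,\Gamma(d/2)^2$, so the left side is at most $\sqrt{2\pi d}$, and $\sqrt{2\pi d}\le 2^d$ for all $d\ge 2$ while $d=1$ is precisely the equality $2=2^1$. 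I do not expect a genuine obstacle here — this is a routine volume computation — and the only point that deserves care is tracking constants through the cap estimate so that the final base comes out to exactly $2$, which the $\Gamma$‑function bound above delivers; the degenerate cases ($\gamma\in\set{0,1}$, $\theta=1$, $\zeta=\theta$) make both sides vanish or coincide under the usual conventions.
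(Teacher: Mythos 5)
Your proof is correct, and its overall architecture (radial factor times normalized cap measure) matches the paper's: you compute the radial contribution via polar coordinates, whereas the paper does the equivalent thing as a chain of scaling ratios $\vol(S_{\theta,\zeta,\gamma,\Delta})/\vol(S_{1,\zeta,\gamma,\Delta})$, $\vol(S_{1,\zeta,\gamma,\Delta})/\vol(S_{1,0,\gamma,\Delta})$, etc., in which the $(1-\zeta)^d$ factors cancel. Where you genuinely diverge is the cap estimate: the paper simply invokes Fact~\ref{fact:lb-spherical-cap} (Ball's survey) as a black box, whereas you prove the bound $\sigma(C)/\sigma(\cS^{d-1}) \ge \tfrac12(\rho/2)^{d-1}$ with $\rho=\sqrt{1-(1-\gamma)^2}$ from scratch by projecting the cap onto $\Delta^\perp$ (using $d\sigma=(1-\normt{x}^2)^{-1/2}\,dx \ge dx$, so $\sigma(C)\ge\omega_{d-1}\rho^{d-1}$) and then verifying $\sigma(\cS^{d-1})\le 2^d\omega_{d-1}$ via log-convexity of $\Gamma$, with $d=1$ handled as the exact equality case. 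This is a slightly more work but self-contained route that makes the constant $2$ transparent; Ball's stated bound is for "caps of radius $r$" (Euclidean distance from the pole, $r=\sqrt{2\gamma}\ge\rho$), so invoking it requires a small translation that the paper leaves implicit, and your derivation sidesteps that entirely. No gaps; both arguments are sound.
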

\begin{proof}
	In order to prove this lemma we lower bound $\vol\paren{S_{\theta, \zeta, \gamma, \Delta}},/\vol\paren{S_{1, \zeta, \gamma, \Delta}}$, $\vol\paren{S_{1, \zeta, \gamma, \Delta}},/\vol\paren{S_{1, 0, \gamma, \Delta}}$, $\vol\paren{S_{1,0, \gamma, \Delta}}/ \vol\paren{\bbB_1^d}$, $\vol\paren{\bbB_1^d}/\vol\paren{\bbB_{1-\zeta}^d}$
	
	Note that $S_{1, \theta, \gamma, \Delta}$ is just a scaling of the set $S_{1, \zeta, \gamma, \Delta}$ with respect to the origin with ratio $\paren{1-\theta}/\paren{1-\zeta}$.
	Therefore
	\begin{equation*}
		\frac{
			\vol\paren{S_{1, \theta, \gamma, \Delta}}
		}{
			\vol\paren{S_{1, \zeta, \gamma, \Delta}}
		} = \Paren{\frac{1- \theta}{1-\zeta}}^d \mcom
	\end{equation*}
	and 
	\begin{equation*}
		\frac{
			\vol\paren{S_{\theta, \zeta, \gamma, \Delta}}
		}{
			\vol\paren{S_{1, \zeta, \gamma, \Delta}}
		} = 1 - \Paren{\frac{1- \theta}{1-\zeta}}^d \mper
	\end{equation*}
	Similarly,
	\begin{equation*}
		\frac{
			\vol\paren{S_{1, \zeta, \gamma, \Delta}}
		}{
			\vol\paren{S_{1, 0, \gamma, \Delta}}
		} = \Paren{1-\zeta}^d \mper
	\end{equation*}
	To bound $\vol\paren{S_{1, 0, \gamma, \Delta}}/ \vol\paren{\bbB_1^d}$, 
	note that this ratio is equal to the ratio of the surface area of the corresponding cap of $S_{1,0, \gamma, \Delta}$ to the surface area of the unit sphere, therefore by \Cref{fact:lb-spherical-cap},
	\begin{equation*}
		\frac{\vol\paren{S_{1,0, \gamma, \Delta}}}{\vol\paren{\bbB_1^d}}
		\ge 
		\frac{1}{2} \Paren{\frac{\sqrt{1 - \paren{1 - \gamma}^2}}{2}}^{d-1} \mper
	\end{equation*}
	Now it remains to lower bound $\vol\paren{\bbB_{1}^d}/\vol\paren{\bbB_{1-\zeta}^d}$. Note that $\bbB_1^d$ is a scaling of the set $\bbB_{1-\zeta}^d$ with respect to the origin with ratio $1/\paren{1-\zeta}$, therefore,
	\begin{equation*}
		\frac{\vol\paren{\bbB_1^d}}{\vol\paren{\bbB_{1-\zeta}^d}} = \paren{1-\zeta}^{-d} \mper
	\end{equation*}
	Multiplying the bounds gives us the desired inequality.
\end{proof}
\begin{corollary}[size of the good set] 
	\label{cor:size-of-the-good-set}    
	Under the same assumptions as \Cref{lem:size-good-set},
	\begin{equation*}
		\frac{\vol\paren{S_{\theta,\zeta, \gamma, \Delta}}}{\vol\paren{\bbB_{1-\zeta}^d}} \ge \frac{\theta - \zeta}{2} \cdot \paren{\gamma^{1/2}/2}^{d-1} \mcom
	\end{equation*}
	More specifically, for every $\theta, \zeta, \gamma$, there exists some $C_{\theta, \zeta, \gamma} \ge 0$, such that
	\begin{equation*}
		\log \frac{\vol\paren{\bbB_{1-\zeta}^d}}{\vol\paren{S_{\theta,\zeta, \gamma, \Delta}}} \le C_{\theta, \zeta, \gamma} d \mper
	\end{equation*}
\end{corollary}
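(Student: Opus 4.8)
The plan is to derive this corollary directly from the volume bound already established in \Cref{lem:size-good-set}, namely
\[
\frac{\vol\paren{S_{\theta,\zeta,\gamma,\Delta}}}{\vol\paren{\bbB_{1-\zeta}^d}} \ge \Paren{1 - \Paren{\frac{1-\theta}{1-\zeta}}^d}\cdot \frac12\Paren{\frac{\sqrt{1-\paren{1-\gamma}^2}}{2}}^{d-1},
\]
and then simplify each of the two factors on the right using elementary inequalities valid under the standing hypotheses $0 \le \zeta \le \theta \le 1$ and $0 \le \gamma \le 1$.

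First I would lower bound the ``shell'' factor: since $\theta \ge \zeta$ we have $0 \le \tfrac{1-\theta}{1-\zeta} \le 1$, so for every integer $d \ge 1$ the $d$-th power is at most the base, whence $1 - \paren{\tfrac{1-\theta}{1-\zeta}}^d \ge 1 - \tfrac{1-\theta}{1-\zeta} = \tfrac{\theta-\zeta}{1-\zeta} \ge \theta - \zeta$, the last step using $1-\zeta \le 1$. Next I would lower bound the ``cap'' factor: expanding, $1-\paren{1-\gamma}^2 = \gamma\paren{2-\gamma} \ge \gamma$ because $\gamma \le 1$, so $\tfrac{\sqrt{1-(1-\gamma)^2}}{2} \ge \tfrac{\sqrt\gamma}{2}$. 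Multiplying these two bounds into \Cref{lem:size-good-set} gives $\tfrac{\vol(S_{\theta,\zeta,\gamma,\Delta})}{\vol(\bbB_{1-\zeta}^d)} \ge (\theta-\zeta)\cdot\tfrac12\paren{\tfrac{\sqrt\gamma}{2}}^{d-1} = \tfrac{\theta-\zeta}{2}\paren{\gamma^{1/2}/2}^{d-1}$, which is the first claim. For the second claim I take logarithms of the reciprocal: $\log\tfrac{\vol(\bbB_{1-\zeta}^d)}{\vol(S_{\theta,\zeta,\gamma,\Delta})} \le \log\tfrac{2}{\theta-\zeta} + (d-1)\log\tfrac{2}{\sqrt\gamma} \le \big(\log\tfrac{2}{\theta-\zeta} + \log\tfrac{2}{\sqrt\gamma}\big)\,d$, using $d-1 \le d$ and $\log\tfrac{2}{\sqrt\gamma} \ge \log 2 > 0$; since $\theta,\zeta\in[0,1]$ forces $\theta-\zeta \le 1 < 2$, the term $\log\tfrac{2}{\theta-\zeta}$ is also positive, so one may set $C_{\theta,\zeta,\gamma} = \log\tfrac{2}{\theta-\zeta} + \log\tfrac{2}{\sqrt\gamma} \ge 0$. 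Specializing to $\gamma=\theta=1/M$ and $\zeta=1/(2M)$, as used in \Cref{lem:volume-ratio-sdp-val}, yields $C_{\theta,\zeta,\gamma} = O(\log M)$, as claimed there.

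There is no genuine obstacle in this proof; the only points requiring a little care are the implicit assumption $\theta > \zeta$ (without which the ratio on the left can vanish and the logarithm is $+\infty$) — which holds in every application in the paper — and checking that the resulting constant $C_{\theta,\zeta,\gamma}$ is nonnegative, which follows from $\theta-\zeta \le 1$ and $\gamma \le 1$ as above.
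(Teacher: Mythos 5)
Your proof is correct and is exactly the routine simplification of \Cref{lem:size-good-set} that the paper leaves implicit (the corollary has no proof in the paper): bound $1-\bigl(\tfrac{1-\theta}{1-\zeta}\bigr)^d \ge \theta-\zeta$ and $\sqrt{1-(1-\gamma)^2}\ge\sqrt\gamma$, multiply, then take logs. Your observation that the log bound degenerates when $\theta=\zeta$ is a fair pedantic caveat — the lemma's hypotheses only require $\zeta\le\theta$ — but as you note, every invocation in the paper (e.g.\ \Cref{lem:volume-ratio-sdp-val} with $\theta=1/M$, $\zeta=1/(2M)$) has $\theta>\zeta$ strictly, so nothing breaks.
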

\subsubsection{Accuracy Analysis of the Exponential Mechanism for $\SDPVAL$}
\label{subsubsec:sdp-val-exp-acc}
\begin{theorem}[accuracy guarantee of the exponential mechanism for $\SDPVAL$]
	\label{thm:accuracy-exponential-sdpval}
	Let $Z_1 \dots Z_k, \mu, \tmu \in \R^d$, $r^* = 2\sqrt{k/n}$, $ 1/(2M) \ge \zeta \ge 0$. Suppose  $M\ge 1000$, $d \le 1.15 \normt{\mu - \tmu}$, $r = d/1.2$, and that \Cref{ass:when-at-mu-sdp-str} holds.
	Let $\cM_E$ be the exponential mechanism when instantiated with the score function $f\paren{y} = \SDPVAL\paren{y; \tmu, r, Z}$ and privacy budget $\epsilon/T$, over the ball of radius $1-\zeta$ in $\R^d$, as in \Cref{alg:gradient-estimation}. Let $y_0$ denote the vector returned by the exponential mechanism, then there exists a constant $C_M$ depending only on $M$ such that
	\begin{equation*}
		\SDPVAL\paren{y} \ge \paren{1 - \frac{1}{M}} k - \frac{2T}{\epsilon} \Paren{C_M d + \log \paren{T/\beta}} \mcom
	\end{equation*}
	with probability $1-\beta/T$.
	More specifically, if $k \ge 2T M\paren{C_{M} d + \log\paren{T/\beta}}/\eps$,
	then
	\begin{equation*}
		\SDPVAL\paren{y} \ge \paren{1 - \frac{2}{M}} k \mcom
	\end{equation*}
	with probability $1 - \beta/T$.
\end{theorem}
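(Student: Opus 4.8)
The plan is to obtain the statement as a direct instantiation of the volume-based exponential mechanism, \Cref{thm:volume-based-exponential-mechanism}, fed by the two facts already at hand: the sensitivity bound $\Delta_{\SDPVAL} \le 1$ from \Cref{lem:sdp-val-sensitivity}, and the volume--ratio bound from \Cref{lem:volume-ratio-sdp-val}. First I would verify that \Cref{lem:volume-ratio-sdp-val} applies under the present hypotheses (in particular $M \ge 1000$, $0 \le \zeta \le 1/(2M)$, $d \le 1.15\normt{\mu - \tmu}$, \Cref{ass:when-at-mu-sdp-str}, and the standing lower bound $\normt{\mu - \tmu} \ge M(5Mr^*)$). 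It then supplies a set $\cH^* \subseteq \bbB^d_{1-\zeta}$ with
\[
\log \frac{\vol(\bbB^d_{1-\zeta})}{\vol(\cH^*)} \le C_M d
\]
for a constant $C_M$ depending only on $M$, and with $\inf_{y \in \cH^*} \SDPVAL(y; \tmu, d/1.2, Z) \ge (1 - 1/M)k$; since $r = d/1.2$, this exhibits a dense set of points of score at least $(1-1/M)k$, which I would take as the quantity $\OPT(Z)$ in \Cref{thm:volume-based-exponential-mechanism}.

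Next I would apply \Cref{thm:volume-based-exponential-mechanism} with candidate set $\cH = \bbB^d_{1-\zeta}$, good set $\cH^*$ as above, sensitivity $\Delta = 1$, and privacy parameter $\epsilon/T$ (the value used by the exponential mechanism in the second step of \Cref{alg:gradient-estimation}); the only bookkeeping point is that, with $\Delta = 1$, the mechanism samples proportionally to $\exp((\epsilon/T)\cdot\SDPVAL(y)/2)$, which matches the normalization in that theorem. This yields, for every $t > 0$,
\[
\Pr\!\left[\SDPVAL(y_0) \le \Paren{1 - \tfrac1M}k - \tfrac{2T}{\epsilon}\Paren{\log\tfrac{\vol(\bbB^d_{1-\zeta})}{\vol(\cH^*)} + t}\right] \le e^{-t}.
\]
Bounding the log--ratio by $C_M d$ and choosing $t = \log(T/\beta)$ (so $e^{-t} = \beta/T$) gives the first claimed bound, $\SDPVAL(y_0) \ge (1-1/M)k - \tfrac{2T}{\epsilon}(C_M d + \log(T/\beta))$ with probability at least $1 - \beta/T$.

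Finally, for the ``moreover'' part, the hypothesis $k \ge 2TM(C_M d + \log(T/\beta))/\epsilon$ is precisely what forces $\tfrac{2T}{\epsilon}(C_M d + \log(T/\beta)) \le k/M$, so the previous display collapses to $\SDPVAL(y_0) \ge (1-1/M)k - k/M = (1 - 2/M)k$ with probability at least $1 - \beta/T$. There is no genuinely hard step here: all the content has been pushed into \Cref{lem:sdp-val-sensitivity} and \Cref{lem:volume-ratio-sdp-val}, and the remaining work is just to instantiate the generic mechanism correctly --- the one place to be careful is the consistent handling of the $\epsilon/T$ privacy budget and of the $M$-dependent constant $C_M$ produced by the volume lemma, together with confirming that the distance and $\zeta$ side-conditions required by the volume lemma are in force throughout this subsection.
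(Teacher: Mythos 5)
Your proposal is correct and follows precisely the same route as the paper's own (very brief) proof, which simply cites \Cref{thm:volume-based-exponential-mechanism}, \Cref{lem:volume-ratio-sdp-val}, and \Cref{lem:sdp-val-sensitivity}. Your more detailed spelling out of the instantiation (including noting the implicit side-condition $\normt{\mu - \tmu} \ge M(5Mr^*)$ needed for the volume lemma, which the theorem statement omits but which is in force wherever the theorem is invoked) is entirely consistent with what the paper intends.
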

\begin{proof}
	This is a direct result of the volume based exponential mechanism, and our arguments about the volume of the good set and sensitivity i.e. \Cref{thm:volume-based-exponential-mechanism}, \Cref{lem:volume-ratio-sdp-val} and \Cref{lem:sdp-val-sensitivity}.
\end{proof}

\subsubsection{Runtime Analysis of the Exponential Mechanism for $\SDPVAL$}
\label{subsubsec:sdpval-runtime-analysis}
\begin{lemma}[runtime of the exponential mechanism]
	\label{lem:runtime-exponential-sdpval}
	There exists a sampling scheme for the exponential mechanism for $\SDPVAL$ in \Cref{alg:gradient-estimation}
	that takes time $\poly\paren{n, d, 1/\epsilon}$.
\end{lemma}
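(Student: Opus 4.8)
The plan is to invoke the private log-concave sampler of \cite{BassilyST14} (\Cref{thm:efficient-sampling}) for the first exponential mechanism in \Cref{alg:gradient-estimation}, and to observe that the second exponential mechanism, being over only two candidates, is trivially efficient. Recall that \Cref{thm:efficient-sampling} requires three ingredients: (i) membership and projection oracles for the convex domain; (ii) an evaluation oracle for the score function; and (iii) that the score function be concave and $L$-Lipschitz over the domain, in which case the sampler runs in time $\poly(d, L\diam(\cC), 1/\e, \log\diam(\cC))$, making at most that many oracle calls.

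First, the domain. The first exponential mechanism is run over $\cC = \bbB^d_{1-\zeta}$, the Euclidean ball of radius $1-\zeta$ in $\R^d$, which is convex and compact with $\diam(\cC) \le 2$, and admits trivial membership and projection oracles (rescale any point of norm exceeding $1-\zeta$). Second, concavity and Lipschitzness of the score function $f(y) = \SDPVAL(y; \tmu, r, Z)$ are exactly \Cref{cor:sdp-val-concavity} and \Cref{lem:sdpval-lipschitzness}; the latter gives $L = \sqrt{d}\,k/\zeta$. Since $\zeta = 1/(2M)$ with $M = 1000$ a universal constant and $k \le n$, we have $L\diam(\cC) = O(\sqrt{d}\,n)$, so the running time $\poly(d, L\diam(\cC), T/\e, \log\diam(\cC))$ (using privacy budget $\e/T$) is $\poly(n, d, 1/\e)$, absorbing the polynomially bounded $T$.

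Third, and this is the only part requiring care, we must supply an evaluation oracle for $f$. For a fixed $y$, computing $\SDPVAL(y; \tmu, r, Z)$ amounts to solving a semidefinite program on $(1+k+d) \times (1+k+d)$ matrices with $O(k)$ linear constraints plus the affine constraint $v = y$; by standard SDP solvers this can be done to additive accuracy $2^{-B'}$ in time $\poly(k, d, B')$. As in the analysis of \Cref{thm:metatheorem} (see the remark on numerical issues there), we choose $B'$ to be a sufficiently large polynomial in the sampler's running time parameters; then, because the sampler runs in polynomially bounded time, it cannot distinguish between receiving $f(y)$ and $f(y) \pm 2^{-B'}$, so all of its distributional guarantees transfer verbatim while each oracle call still costs only $\poly(n, d, 1/\e)$.

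Finally, the second exponential mechanism in \Cref{alg:gradient-estimation} selects from $\{y_0, -y_0\}$ with a score function given by a sum of $k$ indicators, each evaluable in $O(d)$ time, so this step takes $O(kd)$ time, which is negligible. Combining, \Cref{alg:gradient-estimation} runs in time $\poly(n, d, 1/\e)$. The main obstacle is the bookkeeping around finite-precision SDP solves in item (iii): one must confirm that approximate evaluation of $\SDPVAL$ neither corrupts the log-concave sampler's output distribution by more than the allotted multiplicative slack nor inflates the running time, which we handle exactly as in the proof of the meta-theorem.
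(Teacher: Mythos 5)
Your proposal is correct and follows essentially the same route as the paper, which likewise proves the lemma by invoking the private log-concave sampler of \cite{BassilyST14} (\Cref{thm:efficient-sampling}) together with the already-established Lipschitzness (\Cref{lem:sdpval-lipschitzness}) and concavity (\Cref{cor:sdp-val-concavity}) of $\SDPVAL$. The paper's argument is terser; you additionally spell out the membership/projection oracles for the ball, the bound $L\diam(\cC)=O(\sqrt{d}\,n)$, the finite-precision SDP evaluation oracle (handled as in the meta-theorem), and the triviality of the second, two-candidate exponential mechanism, none of which changes the substance.
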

This lemma is a direct application of \Cref{thm:efficient-sampling}'s sampling algorithm. Note that we have shown that $\SDPVAL$ has Lipschitzness (\Cref{lem:sdpval-lipschitzness}), concavity (\Cref{cor:sdp-val-concavity}), and low sensitivity (\Cref{lem:sdp-val-sensitivity}), therefore we are allowed to use the mentioned result.

\subsubsection{Proof of \Cref{thm:deciding-between-2}}
\label{subsubsec:proof-deciding-between-2}
First let's prove the following lemma. 

\begin{lemma}
\label{lem:good-direction}
Let $Z_1\dots, Z_k, \mu, \tmu,y_0 \in \R^d$, $\tilde{d} = \normt{\mu - \tmu}, \Delta = \Abs{\mu - \tmu}/\tilde{d}$. Suppose $\Abs{\iprod{y, \Delta}} > 0.1$, $\iprod{y, \Delta} > 0$, $\normt{\mu - \tmu} \ge M \paren{5 M r^*}$, $M \ge 1000$, and that \Cref{ass:when-at-mu-simple-str} holds.
Then
\begin{equation*}
	\Abs{\Set{i \suchthat \iprod{Z_i - \tmu, y} > 0}} \ge \paren{1 - 1/M} k \mper
\end{equation*}
\end{lemma}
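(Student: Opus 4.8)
The plan is to split $\iprod{Z_i - \tmu, y}$ into one piece governed by \Cref{ass:when-at-mu-simple-str} and one piece governed by the hypothesis $\iprod{y, \Delta} > 0.1$, and to verify that for most indices the latter dominates. Concretely, write
\[
  \iprod{Z_i - \tmu, y} = \iprod{Z_i - \mu, y} + \iprod{\mu - \tmu, y},
\]
and observe that the last term equals $\tilde d \, \iprod{\Delta, y} \ge 0.1\,\tilde d$. Since $\tilde d = \normt{\mu - \tmu} \ge M\paren{5Mr^*} = 5M^2 r^*$, this term is at least $\tfrac12 M^2 r^*$, a large positive quantity.

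To control the cross term $\iprod{Z_i - \mu, y}$, note first that $\iprod{y, \Delta} > 0$ forces $y \ne 0$, so $u := y/\normt{y}$ is a genuine unit vector; moreover $\normt{y} \le 1$ because the exponential mechanism in \Cref{alg:gradient-estimation} is run over the ball of radius $1 - \zeta \le 1$. Applying \Cref{ass:when-at-mu-simple-str} to the unit vector $-u$ yields a set $S$ with $\abs{S} \ge \paren{1 - 1/M}k$ such that $\iprod{Z_i - \mu, u} > -5Mr^*$ for every $i \in S$. Multiplying by $\normt{y} \in (0,1]$ and separately handling the cases $\iprod{Z_i - \mu, u} \ge 0$ and $-5Mr^* < \iprod{Z_i - \mu, u} < 0$, we conclude $\iprod{Z_i - \mu, y} > -5Mr^*$ for all $i \in S$.

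Combining the two estimates, for every $i \in S$,
\[
  \iprod{Z_i - \tmu, y} > -5Mr^* + 0.1\,\tilde d \ge \paren{\tfrac12 M^2 - 5M} r^* > 0,
\]
the last inequality using $M \ge 1000$. Hence $S \subseteq \Set{i \suchthat \iprod{Z_i - \tmu, y} > 0}$, so this set has at least $\paren{1 - 1/M}k$ elements, as desired. The argument is essentially routine; the only mild subtlety is that $y$ is a vector in the ball rather than on the sphere, which is dealt with by the normalization step above, so I do not expect any genuine obstacle here.
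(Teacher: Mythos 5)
Your proof is correct and follows the same decomposition as the paper: split $\iprod{Z_i - \tmu, y}$ into $\iprod{Z_i - \mu, y} + \iprod{\mu - \tmu, y}$, bound the first term using \Cref{ass:when-at-mu-simple-str} in the direction $-y$, and observe that the second term $\tilde d\,\iprod{\Delta,y} \ge 0.1\,\tilde d \ge 0.5 M^2 r^*$ dominates. The paper's proof applies the assumption directly to $-y$ without normalizing; you make the same step rigorous by passing to $u = y/\normt{y}$ and using $\normt{y}\le 1$ (a fact not stated in the lemma but true of the $y_0$ produced by \Cref{alg:gradient-estimation}), which is the only place where you are slightly more careful than the paper.
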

\begin{proof}
The proof is similar to \cite{CherapanamjeriFB19}. 
By \Cref{ass:when-at-mu-simple-str}, we know that there exists a set $G$, such that $\abs{G} \ge \paren{1 - 1/M} k$ and 
\begin{equation*}
	\forall \in G \; : \; \iprod{Z_i - \mu, -y} \le 5Mr^* \mper
\end{equation*}
Therefore, for every $i \in G$,
\begin{align*}
	\iprod{Z_i - \tmu, y} &= \iprod{Z_i - \mu, y} + \iprod{\mu - \tmu, y} \\ 
	&\ge 
	- 5Mr^* + \iprod{\Delta, y}  \normt{\mu - \tmu} \\
	&\ge 
	-5Mr^* + 0.1 M \paren{5 M r^*}  \\
	& > 0 \mcom
\end{align*}
which completes the proof.
\end{proof}
Now we use this lemma to prove \Cref{thm:deciding-between-2}.
\begin{proof}[Proof of \Cref{thm:deciding-between-2}]
Since either $\iprod{y_0, \Delta} > 0$, or $\iprod{-y_0, \Delta} > 0$, by \Cref{lem:good-direction}, and the guarantee of the exponential mechanism,
the exponential mechanism returns $\wh{y}$ such that 
\begin{equation*}
	\Score\paren{\wh{y}} \ge \paren{1 - 1/M} k - 2T/\epsilon \paren{\log 2+ \log T/\beta} \mcom
\end{equation*}
with probability $1- \beta/T$. Since $k \ge 2MT\paren{\log 2+ \log T/\beta}/\eps$,
the exponential mechanism returns $\wh{y}$ such that
\begin{equation*}
	\Score\paren{\wh{y}} \ge \paren{1 - 2/M} k \mcom
\end{equation*}
with probability $1- \beta / T$.
If $\Score\paren{\wh{y}} \ge \paren{1 - 2/M} k$, then by  \Cref{lem:good-direction}, and the assumption that $M \ge 1000$, $\Score\paren{\wh{y}}\ge \paren{1 - 2/M} k$ implies that $\iprod{\wh{y}, \Delta} > 0$, as desired.
\end{proof}
\subsection{Gradient Descent}
\label{subsec:gradient-descent}
We show that
if the algorithm has not stopped in a round of the algorithm and that every step of the algorithm i.e. \textsc{halt-estimation}, \textsc{distance-estimation}, and \textsc{gradient-estimation}, have been successful, then we move closer to $\mu$ (the central point) linearly (\Cref{lem:gd-step}). 

Finally we conclude that if all of the steps of the algorithm are successful then either we halt and return a point with optimal distance or we get closer to the central point $\mu$, linearly (\Cref{thm:gradient-descent})
\begin{lemma}[Gradient Descent Step]
\label{lem:gd-step}
Let $Z_1, \dots, Z_k, \tmu_{t-1}, \mu \in \R^d$, $\Delta_{t-1} = \paren{\mu - \tmu_{t-1}}/\normt{\mu - \tmu_{t-1}}$, the step-size $\eta = 0.075$.
Suppose in the $t$-th round inside the for loop in \Cref{alg:fine-estimation}, we have not halted, and all of
\textsc{halt-estimation}, \textsc{distance-estimation}, and \textsc{gradient-estimation}, have been successful, i.e. the guarantees in \Cref{thm:halt-estimation}, \Cref{thm:distance-estimation}, and \Cref{thm:gradient-estimation} are satisfied and
\begin{equation*}
0.99 \normt{\tmu_{t-1} - \mu} \le d_t \le 1.15\normt{\tmu_{t-1} - \mu}, \quad
\iprod{g_t, \Delta_{t-1}} > 0.1, \quad \normt{g_t} \le 1 \mper
\end{equation*}
Then
\begin{equation*}
\normt{\tmu_{t} - \mu} \le 0.999 \normt{\tmu_{t-1} - \mu} \mper
\end{equation*}
\end{lemma}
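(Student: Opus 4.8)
The plan is to carry out a standard one-step contraction analysis for gradient descent with a ``good gradient'' guarantee. Write $D = \normt{\tmu_{t-1} - \mu}$, $g = g_t$, $d = d_t$, and $\Delta = \Delta_{t-1}$, so that $\tmu_t - \mu = (\tmu_{t-1} - \mu) + \eta d g = -D\Delta + \eta d g$. First I would expand the squared norm:
\[
    \normt{\tmu_t - \mu}^2 = D^2 - 2\eta d D \iprod{g, \Delta} + \eta^2 d^2 \normt{g}^2 \mper
\]
Now plug in the three hypotheses: $\iprod{g,\Delta} > 0.1$ makes the cross term at most $-2\eta d D \cdot 0.1 = -0.2\, \eta d D$; $\normt{g} \le 1$ makes the last term at most $\eta^2 d^2$; and $0.99 D \le d \le 1.15 D$ lets me convert all occurrences of $d$ into multiples of $D$. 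This gives
\[
    \normt{\tmu_t - \mu}^2 \le D^2 - 0.2\,\eta (0.99 D) D + \eta^2 (1.15 D)^2 = D^2\Paren{1 - 0.198\,\eta + 1.3225\,\eta^2}\mper
\]

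Next I would substitute $\eta = 0.075$ and check numerically that the bracket is below $0.999^2 = 0.998001$. Indeed $0.198 \cdot 0.075 = 0.01485$ and $1.3225 \cdot 0.075^2 = 1.3225 \cdot 0.005625 \approx 0.00744$, so the bracket is at most $1 - 0.01485 + 0.00744 = 0.99259 < 0.998001$. Taking square roots yields $\normt{\tmu_t - \mu} \le 0.999\, D$, as desired. (One should be slightly careful about the direction of the inequality in the cross term: since $\eta d D > 0$ and we have a \emph{lower} bound $\iprod{g,\Delta} > 0.1$, replacing $\iprod{g,\Delta}$ by $0.1$ only increases $-2\eta dD\iprod{g,\Delta}$, which is what we want. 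Likewise for the last term we use the upper bound on $d$, and for the cross term we use the lower bound $d \ge 0.99 D$ so that $-0.2\,\eta d D \le -0.2\,\eta(0.99D)D$; these are consistent.)

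There is essentially no hard part here — the entire proof is the expansion above plus plugging in constants, so the ``obstacle'' is merely bookkeeping: making sure each of the two slack factors ($0.99$ vs.\ $1.15$) is used on the correct side so the final bracket is genuinely below $0.998001$ with room to spare, and noting that the step-size $\eta = 0.075$ was chosen precisely so that the linear gain $0.198\eta$ dominates the quadratic loss $1.3225\eta^2$. No earlier results from the excerpt are needed beyond the hypotheses already stated in the lemma.
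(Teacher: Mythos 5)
Your proof is correct and follows essentially the same one-step contraction argument as the paper: expand $\snormt{\tmu_t - \mu}$, apply the three hypotheses with the correct direction of each inequality, and substitute $\eta = 0.075$ to get the bracket $\le 0.99259 < 0.999^2$. (Your write-up is in fact slightly cleaner than the paper's, which has a dropped $\tilde{d}_{t-1}$ factor typo in one intermediate line.)
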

\begin{proof}
The proof is similar to \cite{CherapanamjeriFB19}.
Let $\tilde{d}_{t-1} = \normt{\tmu_{t-1} - \mu}$. Then
\begin{align*}
\snormt{\tmu_t - \mu}
&= 
\snormt{\tmu_t - \tmu_{t-1} +\tmu_{t-1}  - \mu}\\
&=
\snormt{g_t d_t \eta} + \tilde{d}_{t-1}^2 + 2\iprod{g_2 d_t \eta, \tmu_{t-1} - \mu} \\
&\le
\tilde{d}_{t-1}^2 \paren{1 + 1.15^2 \eta^2 } + 2\iprod{g_2 d_t \eta, -\Delta_{t-1}} \\
&\le
\tilde{d}_{t-1}^2 \paren{1 + 1.15^2 \eta^2 - 2 \times 0.99 \times 0.1 \times \eta }
\\
&=
0.993
\snormt{\tmu_{t-1} - \mu} \mper
\end{align*}
Taking square root gives us the final result.
\end{proof}
\begin{theorem}[Gradient Descent]
\label{thm:gradient-descent}
Let $Z_1, \dots Z_k, \mu_0 \in \R^d$ $M\ge 1000$, $r^* = 2\sqrt{k/n}$.
Suppose the guarantees in \Cref{thm:halt-estimation}, \Cref{thm:distance-estimation}, and \Cref{thm:gradient-estimation},
have been successful in every round $t$.
Then two outcomes are possible, either \Cref{alg:fine-estimation} doesn't halt and returns $\tmu^*$ such that
\begin{equation*}
\normt{\tmu^* - \mu} \le \paren{0.999}^T \normt{\mu_0 - \mu}\mcom
\end{equation*}
or \Cref{alg:fine-estimation}, halts and returns $\tmu^*$ such that 
\begin{equation*}
\normt{\tmu^* - \mu} \le \paren{4M \paren{5M + 1} + 15M}r^* \mper
\end{equation*}
\end{theorem}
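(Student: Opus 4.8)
The plan is a straightforward case analysis according to whether \Cref{alg:fine-estimation} halts within its $T$ rounds, obtained by chaining together the per-round guarantees we have assumed to hold.

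First I would handle the case in which the algorithm halts, say in round $t$, so that it returns $\tmu^* = \tmu_{t-1}$. Halting means step~(2) triggered, i.e.\ $h_t = 1$, which is exactly the event $H = 1$ in \textsc{halt-estimation}. Since by hypothesis that call succeeded, \Cref{thm:halt-estimation} gives $\normt{\mu - \tmu_{t-1}} \le Nr^* + 5Mr^*$ with $N = 4M(5M+1) + 10M$. Hence $\normt{\tmu^* - \mu} \le (4M(5M+1) + 10M + 5M)r^* = (4M(5M+1) + 15M)r^*$, which is the second conclusion.

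Next I would treat the case in which the algorithm never halts, so it returns $\tmu^* = \tmu_T$. The claim, proved by induction on $t$, is that $\normt{\tmu_t - \mu} \le (0.999)^t\,\normt{\mu_0 - \mu}$ for $0 \le t \le T$; the base case $t = 0$ is trivial. For the inductive step, since the algorithm did not halt in round $t$ we have $h_t = 0$, i.e.\ $H = 0$ in \textsc{halt-estimation}, and as that call succeeded, \Cref{thm:halt-estimation} yields the lower bound $\normt{\mu - \tmu_{t-1}} \ge M(5M+1)r^*$. Together with $\normt{\tmu_{t-1} - \mu} \le \normt{\mu_0 - \mu} \le C_0\sqrt d$ (the distance only decreases along the run, by the inductive hypothesis), this verifies the preconditions of \textsc{distance-estimation} and \textsc{gradient-estimation}. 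Then \Cref{thm:distance-estimation} (success) gives $0.99\normt{\tmu_{t-1} - \mu} \le d_t \le 1.15\normt{\tmu_{t-1} - \mu}$, and feeding this $d_t$ into \Cref{thm:gradient-estimation} (success) gives $\iprod{g_t, \Delta_{t-1}} > 0.1$ with $\normt{g_t} \le 1$ (the returned vector is drawn from $\set{y_0, -y_0}$ with $\normt{y_0} \le 1 - \zeta \le 1$). Now \Cref{lem:gd-step} applies verbatim with step size $\eta = 0.075$ and yields $\normt{\tmu_t - \mu} \le 0.999\,\normt{\tmu_{t-1} - \mu}$, which combined with the inductive hypothesis completes the induction; taking $t = T$ gives the first conclusion.

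The only delicate point — and the main thing to get right — is the verification that the preconditions of the three sub-routines are in force at every round without circularity: the lower bound $\normt{\mu - \tmu_{t-1}} \ge M(5M+1)r^*$ is supplied precisely by the fact that we did not halt (via the $H=0$ guarantee of \textsc{halt-estimation}), while the upper bound $\normt{\mu - \tmu_{t-1}} \le C_0\sqrt d$ is preserved by the monotone decrease established inside the induction itself. One should also double-check the numerics: \Cref{lem:gd-step} actually proves contraction by $\sqrt{0.993} < 0.999$, so the stated rate $0.999$ is valid with room to spare.
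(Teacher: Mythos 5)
Your proof is correct and takes the same approach as the paper, which gives only the one-line proof ``Putting together \Cref{lem:gd-step} and \Cref{thm:halt-estimation}''; you have simply spelled out the case analysis (halt vs.\ no-halt), the induction supplying the contraction, and the bookkeeping showing that $H=0$ from \textsc{halt-estimation} furnishes the lower bound $\normt{\mu - \tmu_{t-1}} \ge M(5M+1)r^*$ needed by \textsc{distance-estimation} and \textsc{gradient-estimation} before \Cref{lem:gd-step} can be applied. The arithmetic checks out: $N + 5M = 4M(5M+1) + 15M$, and $\sqrt{0.993} < 0.999$, so the stated rate holds.
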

\begin{proof}
Putting together \Cref{lem:gd-step}, and \Cref{thm:halt-estimation}.
\end{proof}
\subsection{Robustness}
\label{subsec:fine-robust}
\begin{remark}[Robustness of Fine Estimation]
	\label{rem:fine-robustness}
	Note that  $\SDP$, and $\SDPVAL$, and our quadratic programs all have sensitivity $1$ by \Cref{lem:sdp-val-sensitivity}, and \Cref{cor:sdp-sensitiviy}. There exists a universal constant $C_0$ (say $C_0 \le 1/1000$), such that our our arguments would still hold under $C_0k$-fraction corruption of the bucketed means $\paren{Z_1, \dots, Z_k}$. Now suppose that $m = C_0 /\eta_0$, then under $\eta \le \eta_0$ corruption of the samples $\paren{X_1, \dots X_n}$,  $\SDP$, $\SDPVAL$, and our quadratic programs are perturbed by at most  
	\begin{equation*}
		\eta n \le \eta_0 mk \le C_0 k \mper
	\end{equation*}
	Therefore, the guarantees of the algorithm will still hold. Moreover, in this setting, by \Cref{thm:fine-estimation}, we obtain an estimate  $\tmu^*$, where $\normt{\mu - \tmu^*} \le \cO\paren{\frac{1}{\sqrt{m}}} = \cO\paren{\sqrt{\eta_0}}$, as desired.
\end{remark}

\newcommand*{\normtv}[1]{\Norm{#1}_{\mathrm{TV}}}

\section{Lower Bounds}
In this section, we prove lower bounds for mean estimation.
While lower bounds for these settings were known previously~\cite{HardtT10, BarberD14, BunKSW19, KamathSU20}, in contrast to prior work, we focus explicitly on the failure probability $\beta$. 
We first establish a general purpose lower bound framework, and then apply it to heavy-tailed and Gaussian distributions.
Technically, we follow the standard packing lower bound technique.

\begin{theorem}
  \label{thm:lb}
  Let $\cP = \{P_1, \dots, P_m\}$ be a set of distributions, and $P_O$ be a distribution such that for every $P_i \in \cP$, $\normtv{ P_i - P_O} \leq \gamma$. 
  Let $\cG = \{G_1, \dots, G_m\}$ be a collection of disjoint subsets of some set $\cY$.
  If there is an $\e$-DP algorithm $M$ such that $\Pr_{X \sim P_i^n}[M(X) \in G_i] \geq \beta$ for all $i \in [m]$, then
  \[ n \geq \Omega \left(\frac{\log m + \log (1/\beta)}{\gamma \left(e^{2\e} - 1\right)}\right)\mper\]
\end{theorem}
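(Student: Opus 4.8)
The plan is the classical packing lower bound for pure DP, driven by a coordinatewise coupling of each $P_i^n$ with the common reference $P_O$ together with group privacy. First I would fix, for each $i\in[m]$, the maximal coupling of $P_i$ and $P_O$ in every coordinate; this yields a joint law of pairs $(X,X')$ with $X\sim P_i^n$, $X'\sim P_O^n$ such that the coordinates on which $X$ and $X'$ disagree form an independent Bernoulli family, each of parameter $\normtv{P_i-P_O}\le\gamma$. In particular the number $D_i$ of disagreeing coordinates is stochastically dominated by a $\mathrm{Bin}(n,\gamma)$ random variable, so $\Pr[D_i>K]\le\Pr[\mathrm{Bin}(n,\gamma)>K]$ for any threshold $K$.

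Next I would transfer the success guarantee from $P_i$ to $P_O$. Fixing a threshold $K$ to be optimized later, split $\Pr_{X\sim P_i^n}[M(X)\in G_i]$ into the contribution from $\{D_i\le K\}$ and from $\{D_i>K\}$. On $\{D_i\le K\}$ the coupled datasets differ in at most $K$ entries, so $\e$-DP (group privacy) bounds this part by $e^{K\e}\Pr_{X'\sim P_O^n}[M(X')\in G_i]$; the other part is at most $\Pr[\mathrm{Bin}(n,\gamma)>K]$. Hence, for every $i$,
\[
    \beta \;\le\; \Pr_{X\sim P_i^n}[M(X)\in G_i] \;\le\; e^{K\e}\,\Pr_{X'\sim P_O^n}[M(X')\in G_i] \;+\; \Pr[\mathrm{Bin}(n,\gamma)>K]\mper
\]

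Summing over $i\in[m]$ and using that the $G_i$ are pairwise disjoint (so $\sum_i\Pr_{X'\sim P_O^n}[M(X')\in G_i]\le 1$) gives, for every $K$,
\[
    m\beta \;\le\; e^{K\e} \;+\; m\cdot\Pr[\mathrm{Bin}(n,\gamma)>K]\mper
\]
I would then choose $K$ just large enough that a Chernoff/Bernstein tail bound forces $\Pr[\mathrm{Bin}(n,\gamma)>K]\le\beta/2$ — which holds once $K=O(\gamma n+\log(1/\beta))$ — so that $m\beta/2\le e^{K\e}$, i.e.\ $K\ge\e^{-1}\log(m\beta/2)$; comparing this with the admissible value of $K$ yields $\gamma n+\log(1/\beta)=\Omega(\e^{-1}(\log m+\log(1/\beta)))$ after rearranging, i.e.\ $n=\Omega\big(\tfrac{\log m+\log(1/\beta)}{\gamma\e}\big)$. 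To replace the denominator $\gamma\e$ by the sharp $\gamma(e^{2\e}-1)$, I would run the transfer coordinate by coordinate in a hybrid: each maximally coupled coordinate costs only a factor $1+\normtv{P_i-P_j}(e^\e-1)$ (the $e^\e$ increment of $\e$-DP is paid only on the $\normtv$-probability disagreement event), and $\normtv{P_i-P_j}\le 2\gamma$ by the triangle inequality through $P_O$, so the product over $n$ coordinates is at most $e^{2\gamma n(e^\e-1)}\le e^{\gamma n(e^{2\e}-1)}$.

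The main obstacle is the threshold optimization in the third step: the $\log m$ contribution drops out immediately from disjointness, but pinning down the precise additive $\log(1/\beta)$ term requires controlling the binomial tail correctly in the Poisson-like regime $\gamma n\ll\log(1/\beta)$ and budgeting between the $m$-way packing and the extra slack coming from the success requirement (e.g.\ applying the second display to the two-point sub-instance $\{P_1,P_2\}$, where disjointness forces $\Pr_{P_2^n}[M\in G_1]\le 1-\beta$). The coupling, the group-privacy/hybrid transfer, and the disjointness sum are otherwise routine.
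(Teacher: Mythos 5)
Your route---coordinatewise maximal coupling to $P_O$, a group-privacy/threshold transfer, summing over $i$ via disjointness for the $\log m$ term, and a separate two-point sub-instance for the $\log(1/\beta)$ term---is a genuinely different proof from the paper's. The paper instead fixes a single $i$, uses pigeonhole to locate an index $\ell\neq i$ with small second moment $\E_{P_i^n}\bigl[\Pr_M[M\in G_\ell]^2\bigr]\le\beta/(m-1)$, couples $P_i^n$ directly with $P_\ell^n$, and applies Cauchy--Schwarz to $\E\bigl[e^{\e\,\#[X_i\neq X_\ell]}\,\Pr_M[M(X_i)\in G_\ell]\bigr]$; both the $\log m$ term (from the pigeonhole denominator $m-1$) and the $\log(1/\beta)$ term (from the second-moment bound $\beta$) then fall out of one inequality together with the MGF of the coupling Binomial. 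Your case split is more elementary (no H\"older, no pigeonhole) but needs two separate arguments; the Cauchy--Schwarz/pigeonhole device collects both in one shot, at the modest cost of the $e^{2\e}$ in the exponent, which is an artifact of the exponent $2$ in Cauchy--Schwarz rather than the triangle inequality.

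Two concrete issues. (i) The ``after rearranging'' step is wrong as written: from $m\beta/2\le e^{K\e}$ you get $K\ge\e^{-1}\log(m\beta/2)=\e^{-1}\bigl(\log m-\log(1/\beta)\bigr)-O(\e^{-1})$, not $\e^{-1}\bigl(\log m+\log(1/\beta)\bigr)$, so combined with $K=O(\gamma n+\log(1/\beta))$ the summed display yields only the $\log m$ contribution---exactly as your ``main obstacle'' paragraph later concedes. The two-point sub-instance is the right repair: reading the accuracy hypothesis as $\Pr_{P_i^n}[M\in G_i]\ge 1-\beta$ (the theorem statement has a typo; the paper's own proof and its downstream applications use $1-\beta$), disjointness gives $\Pr_{P_2^n}[M\in G_1]\le\beta$, the transfer gives $1-\beta\le e^{K\e}\beta+\Pr[\mathrm{Bin}(n,2\gamma)>K]$, and the $\log(1/\beta)/(\gamma\e)$ term follows. (ii) The final inequality $e^{2\gamma n(e^\e-1)}\le e^{\gamma n(e^{2\e}-1)}$ goes the wrong way for your purposes: in a lower-bound argument a \emph{smaller} exponent in the transfer is a \emph{stronger} conclusion, so you should simply keep $e^\e-1$ in the denominator and obtain a result that strictly improves on the stated theorem.
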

\noindent Note that for the usual regime $\e \leq 1$, we can replace the $e^{2\e} - 1$ in the denominator with $\e$.

\begin{proof}
Assume there exists some mechanism $M$ that satisfies the theorem's assumptions. 
  By the definition of $\epsilon$-DP and group privacy, we know that for any two datasets $D, D'$, and any $i \in [m]$ we have
\begin{equation}
\label{eq:nonr}
  \Pr \Brac{M(D) \in  G_i} \le  e^{\epsilon \# \Brac{D \neq D'}} \Pr \Brac{M(D') \in  G_i} \mcom
\end{equation}
where $\#\Brac{D \neq D'}$ denotes the Hamming distance between $D$ and $D'$.
  Note that for the time being, $D$ and $D'$ are fixed datasets (as opposed to random variables), and $0 \le \#\Brac{D \neq D'} \le n$.

  Basic facts about coupling imply that, for any two distributions $P_i, P_j \in \cP$, there exists a coupling $\omega_{i,j}^*$ of $P_i^n$ and $P_j^n$ such that if $(X_i, X_j) \sim \omega_{i,j}^*$, we have 
\begin{equation}
\label{eq:binom}
  \# \Brac{X_i \neq X_j} \sim \operatorname{Binomial}\Paren{n, \normtv{P_i - P_j}} \mper
\end{equation}
  Note that by the triangle inequality (through $P_O$), we have that $\normtv{P_i - P_j} \leq 2\gamma$.

By taking expected value of both sides of \Cref{eq:nonr}, we get
\begin{equation}
\label{eq:mid}
  \E_{(X_{i}, X_{j}) \sim \omega^*} \Brac{\Pr \Brac{M(X_j) \in  G_j} }
\le
  \E_{(X_{i}, X_{j}) \sim \omega^*}
\Brac{
  e^{\epsilon \# \Brac{X_i \neq X_j}} \cdot \Pr \Brac{M(X_i) \in  G_j}
} \mcom
\end{equation}
for any two $P_i, P_j$.

For every $t \in [m]$, define $Z_t^{(i)}= \Pr_{M}\Brac{M(X_i) \in G_t}$. 
Note that the probability here is only over $M$, $Z_t^{(i)}$ remains a random variable which depends on the sampling $X_i \sim P_i^n$. 
  Stated differently, to evaluate the random variable $Z_t^{(i)}$, one first samples $X_i \sim P_i^n$, and then computes $\Pr_{M}\Brac{M(X_i) \in G_t}$ for the realized dataset $X_i$. 
  Since the $G_t$'s are disjoint, the law of total probability implies $\sum_t Z_t^{(i)} \le 1$ for any $i \in [m]$.
By the accuracy guarantee of the mechanism $M$, we know that
\begin{equation*}
  \E_{X_i \sim  P_i^n}\Brac{Z_i^{(i)} } = \Pr_{X_i \sim P_i^n, M} \Brac{M \Paren{X_i} \in G_t} \ge 1- \beta \mcom
\end{equation*}
therefore
  $  \E _{X_i \sim P_i^n} \Brac{\sum_{t \neq i} Z_t^{(i)}} \le \beta$. Since $Z_t^{(i)} \le 1$,
\begin{equation*}
  \sum_{t \neq i}  \E \Brac{ \left(Z_t^{(i)}\right)^2 }  \le \sum_{t \neq i} \E \Brac{  Z_t^{(i)} } \le \beta \mper
\end{equation*}
Therefore by the pigeonhole principle, for every $i$, there exists some $\ell$ such that
\begin{equation}
\label{eq:php}
  \E_{X_i \sim P_i^n} \Brac{\Pr\Brac{M\Paren{X_i}\in G_\ell}^2 }= \E_{X_i \sim P_i^n} \Brac{\left(Z_\ell^{(i)}\right)^2} \le \frac{\beta}{m-1} \mper
\end{equation}
Now we take $j = \ell$ in \Cref{eq:mid}. We get
\begin{equation*}
  \E_{(X_{i}, X_{\ell}) \sim \omega^*} \Brac{\Pr \Brac{M\Paren{X_\ell} \in  G_\ell} }
\le
  \E_{(X_{i}, X_{\ell}) \sim \omega^*}
\Brac{
	e^{\epsilon \# \Brac{X_i \neq X_\ell}} \cdot \Pr \Brac{M\Paren{X_i} \in  G_\ell}
} \mper
\end{equation*}
By the accuracy guarantee of $M$ we know that we can lower bound the left-hand side by $1 - \beta $.
We upper bound the right-hand side by H\"older's inequality.
\begin{equation*}
  \E_{(X_{i}, X_{\ell}) \sim  \omega^*}
\Brac{
	e^{\epsilon \# \Brac{X_i \neq X_\ell}} \cdot \Pr \Brac{M\Paren{X_i} \in  G_\ell}
}
\le
  \E_{(X_{i}, X_{\ell}) \sim \omega^*}
\Brac{
	e^{2 \epsilon \# \Brac{X_i \neq X_\ell}}
}^{1/2}
 \cdot
  \E_{(X_{i}, X_{\ell}) \sim \omega^*}
\Brac{\Pr\Brac{M\Paren{X_i}\in G_\ell}^2 }^{1/2}
\mper
\end{equation*}
For the second term, \Cref{eq:php} implies that
\begin{equation*}
  \E_{(X_{i}, X_{\ell}) \sim \omega^*}
 \Brac{\Pr\Brac{M\Paren{X_i}\in G_\ell}^2 }
  =
\E_{X_i \sim P_i^n} \Brac{\Pr\Brac{M\Paren{X_i}\in G_\ell}^2 }
\le
\frac{\beta}{m-1} \mper
\end{equation*}
It remains to bound the first term.
By \Cref{eq:binom},
we know that $\#\Brac{X_i \neq X_\ell} \sim \operatorname{Binomial}\Paren{n, \normtv{P_i - P_\ell} }$, and thus, this term is just the moment-generating function of the Binomial distribution, evaluated at the point $2\e$.
Since the MGF of $\operatorname{Binomial}(n,p)$ is $(1 - p + pe^t)^n$, 
\begin{equation*}
  \E_{(X_{i}, X_{\ell}) \sim \omega^*}
\Brac{
	e^{2 \epsilon \# \Brac{X_i \neq X_\ell}}
}^{1/2}
=
\Paren{
1 - p + pe^{2\epsilon}
}^{n/2}
\le
  \exp \Paren{\frac{n}{2} \cdot  \log \Paren{1+ 2\gamma \Paren{e^{2\epsilon} - 1}}} \mper
\end{equation*}
Note that $x \ge \log \Paren{1  + x} $,  therefore
\begin{equation*}
  \E_{(X_{i}, X_{\ell}) \sim \omega^*}
	\Brac{
		e^{2 \epsilon \# \Brac{X_i \neq X_\ell}}
	}^{1/2}
	\le
  \exp \Paren{n\gamma \cdot    \Paren{e^{2\epsilon} - 1}} \mper
\end{equation*}
Combining these derivations, 
\begin{equation*}
  1-\beta \le  \exp \Paren{ n \gamma \Paren{e^{2\epsilon} - 1}} \cdot \sqrt{\frac{\beta}{m-1}} \mper
\end{equation*}
Therefore,
\begin{equation*}
  n \ge \Omega \Paren{\frac{\log \Paren{\frac{\Paren{1-\beta}^2}{\beta}} + \log m}{ \gamma \Paren{e^{2\epsilon} - 1}}} \mper
\end{equation*}
Since $\beta < 1$, we can conclude
\begin{equation*}
  n \ge \Omega \Paren{\frac{\log m + \log \frac{1}{\beta}}{\gamma \left( e^{2\epsilon} - 1\right)}} \mper
\end{equation*}
\end{proof}

Now, we apply this theorem to derive lower bounds for mean estimation for distributions with bounded moments. 
\begin{theorem}
  \label{thm:lb-bdd-mom}
  Suppose there exists an $\e$-DP algorithm $M$ such that for every distribution $D$ on $\mathbb{R}^d$ such that $\|\E_{X \sim D} X \| \leq R$ and $\E_{X \sim D} [ |\langle X - \E_{X \sim D} [X] , v  \rangle|^k ]\leq 1$ for some $k \geq 2$ and every unit vector $v \in \mathbb{R}^d$, given $X_1, \dots, X_n \sim D$, with probability $1 - \beta$ the algorithm $M$ outputs $\hat \mu$ such that $\| \hat \mu - \E_{X \sim D} X\| \leq \alpha$ where $\alpha$ is smaller than some absolute constant. 
  Then we have that 
  \[n \geq \Omega\left(\frac{d + \log (1/\beta)}{\alpha^2} + \frac{d + \log (1/\beta)}{\alpha^{\frac{k}{k-1}} \e} + \frac{d \log R + \log(1/\beta)}{\e}\right) \mper \]
\end{theorem}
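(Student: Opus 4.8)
The plan is to split the claimed bound into its six additive summands,
\[
\frac{d}{\alpha^{2}},\quad\frac{\log(1/\beta)}{\alpha^{2}},\quad\frac{d}{\alpha^{k/(k-1)}\e},\quad\frac{\log(1/\beta)}{\alpha^{k/(k-1)}\e},\quad\frac{d\log R}{\e},\quad\frac{\log(1/\beta)}{\e},
\]
show that $n$ must exceed a constant times each of them separately, and then conclude via $\max_i a_i\ge\tfrac16\sum_i a_i$. Throughout I assume $\e\le 1$ (so $e^{2\e}-1=\Theta(\e)$), $\beta\le 1/3$, and $R\ge 1$ — outside these ranges the relevant summands are constant or dominated — and I let all hidden constants (including the upper bound assumed on $\alpha$) depend on $k$. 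The workhorse gadget is the two-atom law $P^{\gamma}_{\mu}:=(1-\gamma)\,\delta_{0}+\gamma\,\delta_{\mu/\gamma}$: it has mean exactly $\mu$, satisfies $\dtv(P^{\gamma}_{\mu},\delta_{0})\le\gamma$, and a one-line computation gives $\E_{X\sim P^{\gamma}_{\mu}}\bigl[\,|\langle X-\mu,v\rangle|^{k}\,\bigr]\le\|\mu\|^{k}\bigl(1+\gamma^{-(k-1)}\bigr)$ for every unit $v$; hence $P^{\gamma}_{\mu}$ lies in the moment class whenever $\|\mu\|^{k}\bigl(1+\gamma^{-(k-1)}\bigr)\le1$. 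For the $\e$-free terms I instead use Gaussians $N_{\mu}:=\normal(\mu,\sigma^{2}_{k}I)$ with $\sigma_{k}$ the constant making the $k$-th centered moment equal $1$, noting $\dkl(N_{\mu}^{n}\,\|\,N^{n}_{\mu'})=n\|\mu-\mu'\|^{2}/(2\sigma^{2}_{k})=O(n\|\mu-\mu'\|^{2})$.

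The $\e$-dependent terms and the $\log R$ term follow from Theorem~\ref{thm:lb}. For $\frac{d+\log(1/\beta)}{\alpha^{k/(k-1)}\e}$, set $\gamma:=C_{k}\,\alpha^{k/(k-1)}$ with $C_{k}$ a sufficiently large constant, let $\cV$ be a set of $2^{\Omega(d)}$ points, all of norm $O(\alpha)$ and pairwise at distance $>2\alpha$ (a standard packing argument), and run Theorem~\ref{thm:lb} on $\{P^{\gamma}_{\mu}:\mu\in\cV\}$ with reference $\delta_{0}$: since $\|\mu\|=O(\alpha)$ and $\gamma^{-(k-1)}=C_{k}^{-(k-1)}\alpha^{-k}$, the moment estimate above is $\le1$ for $C_{k}$ large and $\alpha$ small; the sets $G_{\mu}:=\{y:\|y-\mu\|\le\alpha\}$ are disjoint; and $(1-\beta)$-accuracy of $M$ (applied to $P^{\gamma}_{\mu}$, whose mean is $\mu$) gives $\Pr[M\in G_{\mu}]\ge1-\beta$. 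Theorem~\ref{thm:lb} then yields $n\ge\Omega\!\bigl(\tfrac{\log|\cV|+\log(1/\beta)}{\gamma(e^{2\e}-1)}\bigr)=\Omega\!\bigl(\tfrac{d+\log(1/\beta)}{\alpha^{k/(k-1)}\e}\bigr)$. For $\frac{d\log R+\log(1/\beta)}{\e}$ the reference-closeness hypothesis of Theorem~\ref{thm:lb} is invoked only trivially, with $\gamma=1$: let $\cV$ be a packing of the radius-$R$ ball by points pairwise $>2\alpha$ apart, so $|\cV|\ge(R/\Theta(\alpha))^{d}$ and hence $\log|\cV|=\Omega(d\log R)$ (using that $\alpha$ is below an absolute constant and $R\ge1$); take the point masses $P_{\mu}=\delta_{\mu}$ (whose centered moments vanish), any reference, and the same $G_{\mu}$. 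Theorem~\ref{thm:lb} with $\gamma=1$ gives $n\ge\Omega\!\bigl(\tfrac{\log|\cV|+\log(1/\beta)}{e^{2\e}-1}\bigr)=\Omega\!\bigl(\tfrac{d\log R+\log(1/\beta)}{\e}\bigr)$, which also subsumes $\frac{\log(1/\beta)}{\e}$.

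The two $\e$-free terms hold even against non-private estimators, and I prove them by standard reductions. For $\frac{d}{\alpha^{2}}$, take $\cV$ as the $2^{\Omega(d)}$-point packing above and the hypotheses $\{N_{\mu}:\mu\in\cV\}$; since $\max_{\mu,\mu'\in\cV}\dkl(N_{\mu}^{n}\|N_{\mu'}^{n})=O(n\alpha^{2})$ and a $(1-\beta)$-accurate $M$ recovers the true element of $\cV$ (by rounding $\hat\mu$ to the nearest packing point) with probability $\ge1-\beta>1/2$, Fano's inequality forces $O(n\alpha^{2})=\Omega(\log|\cV|)=\Omega(d)$. For $\frac{\log(1/\beta)}{\alpha^{2}}$, use the two hypotheses $\mu_{0}=0$ and $\mu_{1}=3\alpha e_{1}$ with disjoint success balls $G_{0},G_{1}$ of radius $\alpha$; $(1-\beta)$-accuracy gives $\dtv(N_{\mu_{0}}^{n},N_{\mu_{1}}^{n})\ge(1-\beta)-\beta=1-2\beta$, and the Bretagnolle--Huber inequality $\dkl(P\|Q)\ge\log\frac{1}{1-\dtv(P,Q)^{2}}$ yields $O(n\alpha^{2})=\dkl(N_{\mu_{0}}^{n}\|N_{\mu_{1}}^{n})\ge\log\frac{1}{4\beta(1-\beta)}=\Omega(\log(1/\beta))$. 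Taking the maximum of all six bounds completes the proof.

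The step I expect to require the most care is the moment accounting for the gadget $P^{\gamma}_{\mu}$ underlying the term $\frac{d+\log(1/\beta)}{\alpha^{k/(k-1)}\e}$: one must fix $C_{k}$ and the threshold on $\alpha$ so that $\|\mu\|^{k}(1+\gamma^{-(k-1)})\le1$ holds with $\gamma$ as small as $\Theta(\alpha^{k/(k-1)})$ \emph{and} with $\|\mu\|$ still of order $\alpha$, so that the packing $\cV$ keeps $2^{\Omega(d)}$ points; one should also recheck the two packing-number estimates so that $\log|\cV|$ genuinely contributes $\Omega(d)$ and $\Omega(d\log R)$ respectively. Everything else is a direct application of Theorem~\ref{thm:lb}, Fano's inequality, and Bretagnolle--Huber.
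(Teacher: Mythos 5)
Your proposal is correct and follows essentially the same route as the paper: the $\e$-dependent terms come from Theorem~\ref{thm:lb} applied to a two-atom perturbation of $\delta_0$ (which is exactly the Barber--Duchi construction the paper invokes, written in your $P^\gamma_\mu$ notation) and to point masses in a packing of the radius-$R$ ball, while the $\e$-free term is handled separately as a purely statistical lower bound. The only difference is one of exposition: where the paper cites ``folklore'' for the non-private $\frac{d+\log(1/\beta)}{\alpha^2}$ term, you supply an explicit Fano argument for the $d/\alpha^2$ piece and a Bretagnolle--Huber argument for the $\log(1/\beta)/\alpha^2$ piece, and you use a $2\alpha$-packing of the radius-$R$ ball rather than the paper's $1$-packing (both give $\log m = \Omega(d\log R)$ for $\alpha$ below a constant and $R\ge 1$). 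Your caveat that the hidden constants may depend on $k$ is slightly more cautious than the paper's claim of an absolute threshold on $\alpha$, but the paper's explicit constants (mass $25\alpha^{k/(k-1)}$, atom at $\tfrac16\alpha^{-1/(k-1)}v$) do in fact give a $k$-uniform threshold, so that difference is cosmetic.
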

\begin{proof}
  The first term is folklore from the non-private setting and obtains even for estimating the mean of a Gaussian.

  To obtain the last term, consider a maximum $1$-packing of the ball with $\ell_2$-radius $R$.
  Standard results imply that this will be of size $R^{\Omega(d)}$.
  Consider a point mass distribution at each of these points. 
  Each distribution clearly satisfies the mean and bounded moment condition in the theorem statement.  
  We apply Theorem~\ref{thm:lb} with $\cP$ being this set of distributions, noting that $\gamma \leq 1$ trivially.
  Combining the $1$-packing property with our accuracy requirement that $\alpha \leq 1/2$, we get that the set of sets $\cG$ are indeed disjoint.
  This gives the third term.

  To obtain the second term, we employ a construction of Barber and Duchi~\cite{BarberD14}.
  We consider a set of distributions parameterized by unit vectors $v \in \mathbb{R}^d$. 
  The distribution parameterized by $v$ has mass $1-25\alpha^{\frac{k}{k-1}}$ at the origin and mass $25\alpha^\frac{k}{k-1}$ on the point $\frac{1}{6}\alpha^{-\frac{1}{k-1}}v$.

  We show that each of these distributions satisfies the conditions of the theorem statement.
  The mean of this distribution is $25\alpha^\frac{k}{k-1} \frac{1}{6}\alpha^{-\frac{1}{k-1}}v = \frac{25}{6}\alpha v$, which has $\ell_2$ norm $\leq R$ for $R$ larger than an absolute constant.
  We check the bounded moment condition for an arbitrary unit vector $u$:
  \begin{align*}
    \E [ |\langle X - \E [X] , u  \rangle|^k ] &\leq \E [ |\langle X - \E [X] , v  \rangle|^k ] \\
    &\leq (1 - 25\alpha^\frac{k}{k-1}) \left( \frac{25}{6}\alpha\right)^k + 25\alpha^\frac{k}{k-1}\left(\frac{1}{6}\alpha^{-\frac{1}{k-1}}\right)^k \\
    &\leq 1 \mper
  \end{align*}
  The last inequality uses the bounds on $\alpha$ and $k$.
  Therefore, each such distribution satisfies the theorem conditions. 

  Now, we look to apply Theorem~\ref{thm:lb}.
  Let $\cP$ be the set of distributions parameterized by a maximum set of points $V$ on the unit sphere such that for each $v, v' \in V$, we have that $\|v - v'\| \geq 1/2$.
  Standard results imply that this set will be of size at least $2^{\Omega(d)}$.
  All distributions are within distance $\gamma = 25\alpha^\frac{k}{k-1}$ from the point mass distribution on the origin.
  Finally, since for each pair $v, v'$ we have $\|v - v'\| \geq 1/2$, this implies that the means of any two distributions are at distance $\geq \frac{25}{12}\alpha$ from each other, and the balls of radius $\alpha$ around each mean are disjoint.
  Letting these be the $G_i$'s, we conclude the theorem statement using Theorem~\ref{thm:lb}.
\end{proof}

We can similarly derive a lower bound for mean estimation of Gaussian distributions.
\begin{theorem}
  \label{thm:lb-gaussian}
  Suppose there exists an $\e$-DP algorithm $M$ such that for every Gaussian distribution $\mathcal{N}(\mu, I)$ on $\mathbb{R}^d$ such that $\|\mu\| \leq R$, given $X_1, \dots, X_n \sim D$, with probability $1 - \beta$ the algorithm $M$ outputs $\hat \mu$ such that $\| \hat \mu - \E_{X \sim D} X\| \leq \alpha$ where $\alpha$ is smaller than some absolute constant. 
  Then we have that 
  \[n \geq \Omega\left(\frac{d + \log (1/\beta)}{\alpha^2} + \frac{d + \log (1/\beta)}{\alpha \e} + \frac{d \log R + \log(1/\beta)}{\e}\right) \mper \]
\end{theorem}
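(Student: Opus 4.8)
The plan is to prove each of the three terms separately (the algorithm $M$ must satisfy all of them), following the same recipe as the proof of Theorem~\ref{thm:lb-bdd-mom} but instantiating the general packing bound of Theorem~\ref{thm:lb} with families of unit-covariance Gaussians. The term $\frac{d + \log(1/\beta)}{\alpha^2}$ is the standard non-private minimax rate for estimating $\mu$ from $\mathcal{N}(\mu,I)^n$ in $\ell_2$, so I would invoke it as folklore; it holds a fortiori for $\e$-DP algorithms.

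For the term $\frac{d\log R + \log(1/\beta)}{\e}$, I would take a maximal $1$-packing $\{v_1,\dots,v_m\}$ of the $\ell_2$-ball of radius $R$, so that a volume argument gives $m = R^{\Omega(d)}$. Set $\cP = \{\mathcal{N}(v_i,I)\}_{i\in[m]}$ and $P_O = \mathcal{N}(0,I)$; since the total variation distance between any two unit-covariance Gaussians is at most $1$, I may take $\gamma = 1$. Let $G_i$ be the $\ell_2$-ball of radius $\alpha$ around $v_i$: because $\alpha$ is below an absolute constant and the packing points are pairwise at distance $\ge 1 > 2\alpha$, the $G_i$ are disjoint, and the accuracy hypothesis gives $\Pr_{X\sim \mathcal{N}(v_i,I)^n}[M(X)\in G_i] \ge 1-\beta \ge \beta$. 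Theorem~\ref{thm:lb} then yields $n \ge \Omega\!\big((\log m + \log(1/\beta))/(\gamma(e^{2\e}-1))\big) = \Omega\!\big((d\log R + \log(1/\beta))/\e\big)$.

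For the term $\frac{d + \log(1/\beta)}{\alpha\e}$, I would rescale the construction to scale $\alpha$: fix a constant $c$ (say $c=12$), take a maximal $\tfrac12$-packing $V$ of the unit sphere $\mathcal{S}^{d-1}$, which has $|V| = 2^{\Omega(d)}$, and set $\cP = \{\mathcal{N}(c\alpha v, I) : v \in V\}$ with $P_O = \mathcal{N}(0,I)$. The key quantitative input here is that the TV distance between unit-covariance Gaussians with mean gap $s$ equals $2\Phi(s/2)-1 \le s/2$, and is $\Theta(s)$ for small $s$; hence $\gamma = O(\alpha)$, which is exactly what drives the improvement over the trivial $\gamma=1$. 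Each mean $c\alpha v$ has norm $c\alpha \le R$ once $R$ exceeds an absolute constant, so all distributions are admissible, and for $v\ne v'$ the means are at distance $\ge c\alpha/2 > 2\alpha$, so the balls $G_v$ of radius $\alpha$ around $c\alpha v$ are disjoint, with $\Pr[M\in G_v]\ge 1-\beta\ge\beta$. Applying Theorem~\ref{thm:lb} with $m = 2^{\Omega(d)}$ and $\gamma = O(\alpha)$ gives $n \ge \Omega\!\big((d+\log(1/\beta))/(\alpha\e)\big)$. Summing the three lower bounds completes the proof.

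The only point requiring care is the TV estimate in the third step: one needs simultaneously the upper bound $\mathrm{TV}(\mathcal{N}(c\alpha v,I),\mathcal{N}(0,I)) = O(\alpha)$ (to get a small $\gamma$) and enough separation of the $G_v$'s (which forces $c$ to be a sufficiently large absolute constant). Since $\mathrm{TV}(\mathcal{N}(c\alpha v,I),\mathcal{N}(0,I)) \le c\alpha/2$ holds for \emph{every} $\alpha$, any fixed $c$ is compatible with $\gamma = O(\alpha)$, so this is a matter of bookkeeping rather than a genuine obstacle; the rest is a direct transcription of the heavy-tailed argument with $\alpha^{k/(k-1)}$ replaced by $\alpha$ (the $k\to\infty$ limit of the Barber--Duchi parameters).
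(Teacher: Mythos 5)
Your proof is correct and follows essentially the same route as the paper's: the first term is invoked as the standard non-private rate, the third term uses a $1$-packing of the radius-$R$ ball (with Gaussians centered at the packing points) and $\gamma = 1$ in Theorem~\ref{thm:lb}, and the second term uses a $\tfrac12$-packing of the unit sphere with means scaled to $\Theta(\alpha)$, exploiting the $\mathrm{TV}(\normal(s v, I), \normal(0,I)) = \Theta(s)$ estimate to get $\gamma = O(\alpha)$. The only cosmetic differences are your choice $c=12$ versus the paper's $c=4$ for the mean-scaling constant, and your slightly more explicit bookkeeping of the Gaussian TV formula $2\Phi(s/2)-1$; both lead to the same conclusion.
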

\begin{proof}
  The first term is again standard from the non-private literature, and the third term can be obtained akin to in the proof of Theorem~\ref{thm:lb-bdd-mom} (though we consider a Gaussian with each center, rather than a point mass).

  The second term is also similar to that of Theorem~\ref{thm:lb-bdd-mom}.
  The construction will involve the set of Gaussians with mean $4\alpha v$ where $v$ is a unit vector. 
  These are Gaussian with bounded mean, thus satisfying the conditions of the theorem statement. 

  As before, we try to apply Theorem~\ref{thm:lb}.
  We let $\cP$ be the set of $2^{\Omega(d)}$ distributions associated with a packing of unit vectors $V$ such that for each $v, v' \in V$, we have that $\|v - v'\| \geq 1/2$. 
  Each pair of means is at distance $\geq 2 \alpha$, and thus the balls of radius $\alpha$ around each mean are disjoint.
  The proof is concluded by noting that all these distributions are at total variation distance $\gamma = O(\alpha)$ from $N(0, I)$, due to the standard fact that $\ell_2$ distance between means and total variation distance are equivalent (up to constant factors) for identity-covariance Gaussians when the means are at distance smaller than an absolute constant.
\end{proof}

\section*{Acknowledgements}
The authors would like to thank Adam Smith for his role in the conception and contributions in the early stages of this project.

\bibliographystyle{alpha}
\bibliography{biblio}

\appendix
\section{Deferred Proofs}
\begin{lemma}[$\SDP$ decreasing]
\label{lem:sdp-decreasing-r}
$\SDP$  (\Cref{def:sdp-relax}) is decreasing in $r$.
\end{lemma}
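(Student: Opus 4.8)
The plan is to show that the feasible set of the semidefinite program shrinks monotonically as $r$ grows: concretely, that whenever $r_1 \le r_2$, every feasible solution for $\SDP(\tmu, r_2, Z)$ is also feasible for $\SDP(\tmu, r_1, Z)$ with the \emph{same} objective value $\Tr(B)$. Since the objective does not depend on $r$ at all, this containment of feasible regions immediately yields $\SDP(\tmu, r_1, Z) \ge \SDP(\tmu, r_2, Z)$, which is exactly the claim that $\SDP$ is (weakly) decreasing in $r$.

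First I would fix $r_1 \le r_2$ and take an arbitrary feasible tuple $(v, b, B, W, V)$ for $\SDP(\tmu, r_2, Z)$. All of the constraints defining $\SDP$ except the last one, $B_{ii}\cdot r \le \iprod{Z_i - \tmu, W_i}$, are completely independent of $r$, so they continue to hold verbatim when we replace $r_2$ by $r_1$. It remains only to check that the $r$-dependent constraint is preserved. Here the one small observation needed is that $B_{ii}\ge 0$ for every $i$: this follows because the big block matrix is PSD, hence all of its diagonal entries are nonnegative, and $B_{ii}$ is one of those diagonal entries (equivalently, $B_{ii} = b_i$ and one can also note $b_i \ge 0$). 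Given $B_{ii} \ge 0$ and $r_1 \le r_2$, we get $B_{ii}\cdot r_1 \le B_{ii}\cdot r_2 \le \iprod{Z_i - \tmu, W_i}$, so the constraint holds for $r_1$ as well.

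This shows $(v,b,B,W,V)$ is feasible for $\SDP(\tmu, r_1, Z)$, and since it achieves objective value $\Tr(B)$ there too, taking the supremum over all such tuples gives $\SDP(\tmu, r_1, Z) \ge \SDP(\tmu, r_2, Z)$. There is no real obstacle in this argument; the only point that requires a moment's thought is extracting $B_{ii}\ge 0$ from positive semidefiniteness of the lifted matrix, which is what makes the monotonicity of the single $r$-dependent inequality go through in the right direction.
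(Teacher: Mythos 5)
Your proof is correct and follows essentially the same approach as the paper: show that any feasible solution for the larger $r$ remains feasible for the smaller $r$, so the feasible region only grows as $r$ decreases. You supply the one small detail the paper leaves implicit — that $B_{ii}\ge 0$ (from PSD-ness of the lifted matrix) is what makes the $r$-dependent constraint go the right way — which is a welcome clarification of the same argument.
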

\begin{proof}
Suppose $r > r'$. Our goal is to prove that $\SDP\paren{\tmu, r, Z} \le \SDP\paren{\tmu, r', Z}$.
Suppose $X$ is the solution that maximizes $\SDP\paren{\tmu, r, Z}$. Then $X$ is a feasible solution to $\SDP\paren{\tmu, r', Z}$. Therefore $\SDP\paren{\tmu, r, Z} \le \SDP\paren{\tmu, r', Z}$.
\end{proof}

\begin{lemma}
	\label{lem:b-2-less-than-1-sos}
	$\Set{b^2 = b} \proves_2 b^2 \le 1$.
\end{lemma}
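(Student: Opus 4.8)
The plan is to exhibit an explicit degree-$2$ sum-of-squares certificate. Recall that in the SoS proof system the hypothesis $b^2 = b$ is available as the pair of inequalities $b^2 - b \ge 0$ and $b - b^2 \ge 0$, so in particular the axiom $p_1 := b - b^2 \ge 0$ is at our disposal (with the constant multiplier $2$, which is trivially a sum of squares).

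The key is the polynomial identity
\[
    1 - b^2 = (1-b)^2 + 2\,(b - b^2)\mper
\]
One checks this by expanding: $(1-b)^2 + 2(b-b^2) = 1 - 2b + b^2 + 2b - 2b^2 = 1 - b^2$. The first summand $(1-b)^2$ is a square, and the second summand is the product of the SoS polynomial $2$ with the axiom $p_1 = b - b^2$, so this identity is exactly of the form required for an SoS derivation of $1 - b^2 \ge 0$, i.e. of $b^2 \le 1$.

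The only thing left to verify is the degree bookkeeping: $\deg\big((1-b)^2\big) = 2$ and $\deg\big(2(b-b^2)\big) = 2$, so every term in the certificate has degree at most $2$, giving $\{b^2 = b\} \proves_2 b^2 \le 1$. There is no real obstacle here — the whole content is spotting the identity above — so the ``hard part'' is purely the (routine) confirmation that the multipliers are SoS and the degrees are as claimed.
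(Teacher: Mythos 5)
Your certificate $1 - b^2 = (1-b)^2 + 2(b-b^2)$ is exactly the identity used in the paper's proof, and your degree bookkeeping is correct. The proposal matches the paper's argument.
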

\begin{proof}
	\begin{equation*}
		\paren{1 - b}^2 + 2\paren{b - b^2} = 1 - b^2 \implies b^2 = b \proves_2 b^2 \le 1 \mper
	\end{equation*}
\end{proof}

\begin{lemma}[boundedness of the solutions of \Cref{def:coarse-sdp} ]
	\label{lem:boundedness-coarse-optimization-problem}
	The constraints of $\textsc{coarse-sdp}$, as defined in \Cref{def:coarse-sdp}, imply that $\snormt{\paren{b, v}} \le \paren{2R + R/1000}^2 + k$.  
\end{lemma}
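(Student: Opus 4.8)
The plan is to produce a short, bounded-degree SoS derivation of the stated inequality from the defining constraints of \textsc{coarse-sdp}, since this is precisely the ``Archimedean'' certificate needed to justify the pseudoexpectation/SoS duality invoked by \Cref{lem:coarse-sdp-polynomial-identity}. Concretely, I would first split $\snormt{(b,v)} = \sum_i b_i^2 + \snormt{v}$ and certify the two summands separately.

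For the $v$-part there is nothing to prove: $\snormt{v} \le (2R + R/1000)^2$ is literally one of the constraints of \textsc{coarse-sdp} (\Cref{def:coarse-sdp}). For the $b$-part, I would apply \Cref{lem:b-2-less-than-1-sos}, which supplies, for each indicator variable, the degree-$2$ SoS proof $\{b_i^2 = b_i\} \proves_2 b_i^2 \le 1$ (via the identity $(1-b_i)^2 + 2(b_i - b_i^2) = 1 - b_i^2$, using the constraint $b_i^2 = b_i$). Summing these over all $b$-indeterminates (there are $n$ of them in \Cref{def:coarse-sdp}, the constant written $k$ in the statement) yields $\sum_i b_i^2 \le n$. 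Adding the two certificates produces a polynomial identity expressing $(2R+R/1000)^2 + n - \snormt{(b,v)}$ as a nonnegative combination of the constraint polynomials plus a sum of squares; this identity has degree $2$, hence is compatible with the degree-$4$ pseudo-distributions over which \textsc{coarse-sdp} optimizes. Note that the third family of constraints, $b_i \snormt{X_i - v} \le b_i r^2$, is not needed at all.

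I do not expect a genuine obstacle: the bound is an immediate consequence of the ``box'' constraints $b_i^2 = b_i$ together with the single ball constraint on $v$. The only point requiring a little care is to phrase the argument as an honest low-degree SoS proof rather than a purely semantic implication, since that is the form in which it is consumed downstream (to meet the duality hypotheses); but the one place with real SoS content, namely $b_i^2 \le 1$, is already packaged in \Cref{lem:b-2-less-than-1-sos}, so assembling the certificate is routine.
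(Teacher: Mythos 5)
Your proof is correct and takes essentially the same route as the paper's: bound $\snormt{v}$ directly from the ball constraint, bound $\snormt{b}$ via \Cref{lem:b-2-less-than-1-sos}, and add. You also correctly note the minor notational slip in the lemma statement (the $b$-indeterminates in \Cref{def:coarse-sdp} number $n$, not $k$), which the paper's own proof glosses over.
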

\begin{proof}
From the constraints we have that $\snormt{v} \le \paren{2R + R/1000}^2$. Furthermore by \Cref{lem:b-2-less-than-1-sos}, we know that $\snormt{b} \le k$. Therefore $\snormt{\paren{b, v}} \le \paren{2R + R/1000}^2  +k.$
\end{proof}

\begin{lemma}[boundedness of the solutions of \Cref{def:SDPVAL}]
	\label{lem:boundedness-quadratic-optimization-problem}
	The constraints of $\SDPVAL$, as defined in (\Cref{def:quadratic-optimization-problem}) imply that $\snormt{\paren{b, v}} \le k+1$.  
\end{lemma}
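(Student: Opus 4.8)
The plan is to observe that this is the direct analogue of Lemma~\ref{lem:boundedness-coarse-optimization-problem}, and to prove it by bounding the $b$-part and the $v$-part of the variable vector separately, using the two ``structural'' constraints $b_i^2 = b_i$ and $\snorm{v} = 1$ (the third constraint, $b_i^2 r \le \iprod{Z_i - \tmu, b_i v}$, plays no role). Concretely, I would first apply Lemma~\ref{lem:b-2-less-than-1-sos} to each index $i$: from $b_i^2 = b_i$ we get $b_i^2 \le 1$, and summing over $i \in [k]$ yields $\sum_{i=1}^k b_i^2 = \snormt{b} \le k$. Next, the constraint $\sum_{i=1}^d v_i^2 = 1$ (equivalently $\snorm{v} = 1$ in the pseudo-distribution formulation) gives $\snormt{v} \le 1$ immediately. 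Adding the two bounds, $\snormt{(b,v)} = \snormt{b} + \snormt{v} \le k + 1$, which is the claim.

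Since this bound will be invoked to justify that the optimization problem is Archimedean and hence that SoS/pseudo-expectation duality applies (as used right after Lemma~\ref{lem:pol-identity}), I would phrase the argument at the level of SoS proofs / pseudo-expectations rather than just for genuine real feasible points: Lemma~\ref{lem:b-2-less-than-1-sos} already provides a degree-$2$ SoS derivation $\{b_i^2 = b_i\} \proves_2 b_i^2 \le 1$, so $\{b_i^2 = b_i : i \in [k]\} \cup \{\snorm{v} \le 1\} \proves_2 \snormt{(b,v)} \le k+1$, and in particular any feasible degree-$2$ pseudo-expectation satisfies $\pE \snormt{(b,v)} \le k+1$. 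This is exactly the form of boundedness needed to meet the hypotheses of the duality statement from \cite{barak2016proofs}.

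I expect no real obstacle here — the only thing to be careful about is matching conventions: the lemma cross-references \Cref{def:quadratic-optimization-problem}, whose $v$-constraint is literally $\sum v_i^2 = 1$, while in the $\SDPVAL$ pseudo-distribution form the same constraint appears as ``$\pE$ satisfies $\snorm{v} = 1$''; in both cases the inequality $\snormt{v} \le 1$ is available, so the argument is identical. I would keep the write-up to two or three lines, mirroring the proof of Lemma~\ref{lem:boundedness-coarse-optimization-problem}.
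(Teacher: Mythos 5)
Your proof is correct and matches the paper's argument essentially verbatim: both bound $\snormt{b} \le k$ via Lemma~\ref{lem:b-2-less-than-1-sos} applied to each $b_i^2 = b_i$ constraint, read $\snormt{v} \le 1$ off the $\snorm{v} = 1$ constraint, and add. Your extra remark about phrasing the derivation as a degree-$2$ SoS proof (so that the bound is inherited by pseudo-expectations and the Archimedean hypothesis for duality is met) is implicit in the paper but worth making explicit.
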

\begin{proof}
	From the constraints we have that $\snormt{v} = 1$. Furthermore by \Cref{lem:b-2-less-than-1-sos}, we know that $\snormt{b} \le k$. Therefore $\snormt{\paren{b, v}} \le k+1$
\end{proof}

\begin{lemma}[$\QUADVAL\paren{y} \le \SDPVAL\paren{y}$]
	\label{lem:quad-val-sdp-val}
	For every $Z_1, \dots Z_k, \tmu \in \R^d$, and $r\ge 0$,
	\begin{equation*}
	\QUADVAL\paren{y; \tmu, r, Z} \le \SDPVAL\paren{y; \tmu, r, Z} \mper
	\end{equation*}
\end{lemma}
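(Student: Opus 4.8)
The plan is to show that $\QUADVAL$ is a relaxation of a "lift-and-project" sense: every feasible point of the quadratic program $\QUADVAL(y;\tmu,r,Z)$ can be mapped to a feasible point of the semidefinite program $\SDPVAL(y;\tmu,r,Z)$ with the same (or at least no smaller) objective value. Concretely, suppose $(v,b)$ is a feasible solution to $\QUADVAL(y;\tmu,r,Z)$, so $b_i^2 = b_i$, $v_i = y_i$, and $b_i^2 r \le \iprod{Z_i - \tmu, b_i v}$ for all $i$. I would construct a candidate SDP solution by setting $B = bb^\top$, $W = b v^\top$ (so $W_i = b_i v$), and $V = vv^\top$, together with the first-row/column data $b,v$. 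The associated $(k+d+1)\times(k+d+1)$ matrix is then exactly $(1,b,v)^\top(1,b,v)$, which is PSD as a rank-one matrix.

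First I would verify the linear constraints of $\SDPVAL$ hold for this candidate. We have $B_{ii} = b_i^2 = b_i$ using feasibility of the quadratic program; $v_i = y_i$ is inherited directly; $\Tr(V) = \snormt{v} = 1$ from the constraint $\sum_i v_i^2 = 1$ in the quadratic program; and $B_{ii} r = b_i r \le \iprod{Z_i - \tmu, b_i v} = \iprod{Z_i - \tmu, W_i}$, where I used $b_i^2 = b_i$ to rewrite the quadratic constraint. Finally the objective value is $\Tr(B) = \sum_i b_i^2 = \sum_i b_i$, which equals the $\QUADVAL$ objective $\sum_i b_i^2$. Hence any $\QUADVAL$-feasible point yields an $\SDPVAL$-feasible point of the same value, so $\QUADVAL(y;\tmu,r,Z) \le \SDPVAL(y;\tmu,r,Z)$.

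There is no serious obstacle here; the only thing to be slightly careful about is that $\SDPVAL$ is \emph{not} the standard SDP relaxation of $\QUADVAL$ (because $\normt{y}$ need not equal $1$), but this does not affect the argument: the constraint $\sum_i v_i^2 = 1$ in $\QUADVAL$ forces $\Tr(V) = 1$ in the constructed solution regardless of $\normt{y}$, so feasibility still goes through. I would also note that the same computation can be phrased in the pseudo-distribution language of \Cref{def:SDPVAL}: a genuine distribution supported on the single point $(v,b)$ is a degree-$2$ pseudo-distribution satisfying all the stated constraints, with $\pE v_i = v_i = y_i$ and $\pE \sum_i b_i = \sum_i b_i^2$, which is the objective. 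Either phrasing gives the inequality immediately, so I would present the short matrix-construction version.
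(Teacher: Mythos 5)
There is a genuine gap in your argument, and it is exactly the subtle point the lemma exists to handle. You assert that ``$\Tr(V)=\snormt{v}=1$ from the constraint $\sum_i v_i^2=1$ in the quadratic program,'' but $\QUADVAL$ (\Cref{def:QUADVAL}) has \emph{no} such constraint. The constraint $\sum_i v_i^2=1$ appears in $\QUAD$ (\Cref{def:quadratic-optimization-problem}); in $\QUADVAL$ it is replaced by $v_i = y_i$, which forces $v = y$ and hence $\Tr(V)=\snormt{y}$, which can be strictly less than $1$ (this lemma is applied with $y$ ranging over the ball of radius $1-\zeta$, not the sphere). Your concluding remark that ``the constraint $\sum_i v_i^2=1$ in $\QUADVAL$ forces $\Tr(V)=1$\dots regardless of $\normt{y}$'' is therefore backwards: the whole reason the paper flags that $\SDPVAL$ is not the standard SDP relaxation of $\QUADVAL$ is that this constraint is absent, so the rank-one matrix $(1,b,v)^{\top}(1,b,v)$ generally violates the $\Tr(V)=1$ constraint of $\SDPVAL$.

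The paper repairs this in the step your proposal is missing: having formed the rank-one moment matrix, it replaces the bottom-right block $V$ by $V' = V/\Tr(V)$. This enforces $\Tr(V')=1$ while leaving the remaining linear constraints (which only involve $b$, $B$, $W$, $y$) and the objective $\Tr(B)$ untouched. The nontrivial point is then that the modified block matrix with $V'$ in place of $V$ is still PSD; this is proved in \Cref{lem:sdp-scaling}, using the characterization of PSD $2\times 2$ block matrices via the generalized Schur complement (\Cref{lem:psd-block-matrix}) and the fact that the scale factor $1/\Tr(V)=1/\snormt{y}\geq 1$. To make your proof correct, you would need to carry out this rescaling-and-PSD argument (or something equivalent); simply asserting $\Tr(V)=1$ does not go through.
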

\begin{proof}
	Suppose $\paren{b, v}$ is a solution that maximizes the quadratic optimization problem of $\QUADVAL\paren{y}$.
	Therefore $\sum_{i=1}^k b_i^2 = \QUADVAL\paren{y}$.
	Let $X = \paren{1, b, v} \tensor \paren{1, b, v}$, then $X$ will be PSD.
	Suppose
	\begin{equation*}
		X = \begin{bmatrix}
			1 & \transpose{b} & \transpose{v} \\
			b & B & W \\
			v & W^T & V
		\end{bmatrix} \mcom
	\end{equation*}
	then $X$ has $\Tr\paren{B} = \sum_{i=1}^k b_i^2 = \QUADVAL\paren{y}$, and satisfies every constraint in the $\SDPVAL$ optimization problem (\Cref{def:SDPVAL}), except for $\Tr\paren{V} = 1$. This is because $\normt{y}$ may be smaller than $1$. Let
	\begin{equation*}
		X' = \begin{bmatrix}
			1 & \transpose{b} & \transpose{v} \\
			b & B & W \\
			v & W^T & V'= V/\Tr\paren{V}
		\end{bmatrix} \mper
	\end{equation*}
	Now the objective function's value and all of the constraints will be the same as $X$, except for the $\Tr\paren{V} = 1$ constraint, which is now satisfied. It remains to argue that $X'$ is PSD, which is a direct result of \Cref{lem:sdp-scaling}.
\end{proof}

\begin{lemma}
	\label{lem:sdp-scaling}
	If $M = \begin{bmatrix}
		A & B \\
		\transpose{B} & C
	\end{bmatrix}$ is PSD, then for any $k\ge 1$, 
	$M' = 
	\begin{bmatrix}
		A & B \\
		\transpose{B} & kC
	\end{bmatrix}$  is also PSD.
\end{lemma}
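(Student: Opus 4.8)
The plan is to write $M'$ as $M$ plus a manifestly positive semidefinite correction. Specifically, observe that
\[
M' = \begin{bmatrix} A & B \\ \transpose{B} & kC \end{bmatrix} = \begin{bmatrix} A & B \\ \transpose{B} & C \end{bmatrix} + \begin{bmatrix} 0 & 0 \\ 0 & (k-1)C \end{bmatrix} = M + \begin{bmatrix} 0 & 0 \\ 0 & (k-1)C \end{bmatrix}\mper
\]
So it suffices to show that the block-diagonal matrix on the right is PSD, since the sum of two PSD matrices is PSD and $M$ is PSD by hypothesis.

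For that, I would first note that $C$ itself is PSD: it is a principal submatrix of $M$ (the block indexed by the second set of coordinates), and principal submatrices of PSD matrices are PSD — concretely, for any vector $z$ of the appropriate dimension, $\transpose{z} C z = \transpose{(0,z)} M (0,z) \ge 0$. Since $k \ge 1$ we have $k - 1 \ge 0$, so $(k-1)C$ is a nonnegative scalar multiple of a PSD matrix, hence PSD. Finally, embedding a PSD matrix as a diagonal block (with zeros elsewhere) preserves PSD-ness: for any vector $w = (w_1, w_2)$ partitioned conformally, $\transpose{w}\bigl[\begin{smallmatrix} 0 & 0 \\ 0 & (k-1)C \end{smallmatrix}\bigr] w = (k-1)\transpose{w_2} C w_2 \ge 0$.

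Combining these observations, $M' = M + \bigl[\begin{smallmatrix} 0 & 0 \\ 0 & (k-1)C \end{smallmatrix}\bigr]$ is a sum of PSD matrices and therefore PSD, which is the claim. There is no real obstacle here — the only point requiring (minor) care is justifying that $C$ is PSD, i.e. that the relevant diagonal block of $M$ inherits positive semidefiniteness, which is the standard fact about principal submatrices; everything else is additivity of the PSD cone and closure under nonnegative scaling. This lemma is then exactly what is invoked in the proof of Lemma~\ref{lem:quad-val-sdp-val} to rescale the $V$-block so that $\Tr(V) = 1$.
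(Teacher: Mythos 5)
Your proof is correct, and it takes a genuinely different route from the paper. The paper invokes the Schur-complement characterization of PSD block matrices (\Cref{lem:psd-block-matrix}): it reads off from $M \succeq 0$ that $C \succeq 0$, $(I - C\trsp{C})\transpose{B} = 0$, and $A - B\trsp{C}\transpose{B} \succeq 0$, then verifies the analogous three conditions for $M'$, using that $\trsp{(kC)} = \trsp{C}/k$ and that $kA \succeq A \succeq B\trsp{C}\transpose{B}$ for $k \ge 1$. You instead decompose $M' = M + \bigl[\begin{smallmatrix} 0 & 0 \\ 0 & (k-1)C \end{smallmatrix}\bigr]$ and appeal only to closure of the PSD cone under addition and nonnegative scaling, plus the fact that a principal submatrix of a PSD matrix is PSD. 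Your decomposition is more elementary — it avoids generalized inverses and the three-condition block criterion entirely — and is arguably the cleaner proof; the paper's route would make sense if \Cref{lem:psd-block-matrix} were reused elsewhere, but it is only cited for this lemma, so there is no real economy to be gained by the heavier machinery. One small stylistic point: you don't actually need to separately justify that the block-diagonal matrix is PSD after observing $(k-1)C \succeq 0$ — the direct quadratic-form computation you give for $w = (w_1, w_2)$ already does both steps at once — but spelling it out does no harm.
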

\begin{proof}
	We prove the third condition in \Cref{lem:psd-block-matrix} for $M'$. Note that $\trsp{kC} = \trsp{C}/k $.
	From $M \succcurlyeq 0$, 
	\begin{itemize}
		\item $A \succcurlyeq 0$,
		\item $C \succcurlyeq 0$,
		\item $ (I - C\trsp{C})\transpose{B} = 0$,
		\item $A - B \trsp{C} \transpose{B} \succcurlyeq 0$,
	\end{itemize}
	and we want to prove that
	\begin{itemize}
		\item $kC \succcurlyeq 0$,
		\item $ (I - kC\trsp{\paren{kC}})\transpose{B} = 0$,
		\item $A - B \trsp{\paren{kC}} \transpose{B} \succcurlyeq 0$.
	\end{itemize}
	The first two are obvious. For the last one, note that 
	\begin{equation*}
		A \succcurlyeq 0 \implies \paren{k - 1} A \succcurlyeq 0 \implies k A \succcurlyeq A \mper
	\end{equation*}
	Therefore
	\begin{equation*}
		kA \succcurlyeq A \succcurlyeq B \trsp{C} \transpose{B} \mcom
	\end{equation*}
	and
	\begin{equation*}
		A - B\trsp{C}\transpose{B}/k \succcurlyeq 0 \mcom
	\end{equation*}
	as desired.
\end{proof}

\section{Lemmata}

\begin{lemma}
\label{lem:nulling-psd-matrices}
Suppose $M \succcurlyeq 0$ is a matrix in $\R^{d\times d}$, $1 \le k \le d$. Let $N$ be a $d \times d$ matrix such that for every $1\le i, j \le d$, $N_{i, j} = M_{i, j}$, if $i \neq k$ and $j \neq k$, and $N_{i, j} = 0$, if $i = k$, or $j = k$.
Then $N \succcurlyeq 0$.
\end{lemma}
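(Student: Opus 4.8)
The plan is to verify the defining inequality of positive semidefiniteness directly: show $\transpose{x} N x \ge 0$ for every $x \in \R^d$. The key observation is that since $N_{i,j} = 0$ whenever $i = k$ or $j = k$, the quadratic form $\transpose{x} N x = \sum_{i,j} N_{i,j} x_i x_j$ only receives contributions from pairs $(i,j)$ with $i \ne k$ and $j \ne k$, and on those pairs $N_{i,j} = M_{i,j}$ by construction. So the first step is to introduce the vector $y \in \R^d$ obtained from $x$ by zeroing out its $k$-th coordinate, and note
\[
\transpose{x} N x = \sum_{i \ne k,\, j \ne k} M_{i,j} x_i x_j = \sum_{i,j} M_{i,j} y_i y_j = \transpose{y} M y .
\]
The second step is simply to invoke $M \succcurlyeq 0$, which gives $\transpose{y} M y \ge 0$, hence $\transpose{x} N x \ge 0$. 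Since $x$ was arbitrary, $N \succcurlyeq 0$.

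Alternatively (if a more structural phrasing is preferred), one can permute the coordinates so that index $k$ is moved to the last position; then $N$ becomes the block matrix $\begin{pmatrix} M' & 0 \\ 0 & 0 \end{pmatrix}$, where $M'$ is the principal submatrix of $M$ on the index set $[d] \setminus \{k\}$. A principal submatrix of a PSD matrix is PSD, a block-diagonal matrix whose diagonal blocks are PSD is PSD, and conjugating by a permutation matrix preserves PSD-ness, so $N \succcurlyeq 0$.

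There is no real obstacle here: the statement is essentially immediate from the definition of positive semidefiniteness, and both routes are one-liners. I would present the quadratic-form argument as the main proof, since it avoids even mentioning permutations and submatrices, and it is exactly the form in which the lemma is used (to certify that the modified pseudo-moment matrices in \Cref{lem:coarse-sdp-sensitivty} and \Cref{lem:sdp-val-sensitivity} remain PSD after nulling a row and column).
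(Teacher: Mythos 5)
Your proof is correct and uses essentially the same idea as the paper's: zero out the $k$-th coordinate of the test vector, observe that the quadratic forms of $N$ and $M$ agree there, and invoke $M \succcurlyeq 0$. The one place you differ is in the care taken: the paper writes the argument as a chain of (in)equalities on $\min_{\|z\|_2 = 1} z^\top N z$, and its first step $\min_{\|z\|_2=1} z^\top N z = \min_{\|z\|_2=1,\, z_k=0} z^\top N z$ is not literally an equality (take $M = I_2$ and $k=1$: the left side is $0$ and the right side is $1$); it holds only when the left side is $\le 0$, which is the case the proof needs but does not spell out. Your version sidesteps this entirely by checking $x^\top N x = y^\top M y \ge 0$ for an arbitrary $x$ and the truncated $y$, with no normalization constraint, so there is no case analysis to gloss over. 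Both arguments are one-liners, but yours is the tighter one; either the quadratic-form phrasing you lead with or the block-permutation alternative would be fine to present.
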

\begin{proof}
\begin{align*}
\min_{\normt{z} = 1} \transpose{z} N z &=
\min_{\normt{z} = 1, z_k = 0} z \transpose{N} z 
\\
&=
\min_{\normt{z} = 1, z_k = 0} z \transpose{M} z \\
&\ge
\min_{\normt{z} = 1} z \transpose{M} z  \\
&\ge 
0
\end{align*}
\end{proof}

\begin{lemma}
	\label{lem:convacity-maximization}
	Suppose $\cK$ is a closed convex set, then $f(y) = \max_{\substack{x \in \cK\\ Ax = y}} \iprod{x, c}$ is concave.
\end{lemma}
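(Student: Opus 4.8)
The plan is the standard "partial maximization of a linear function over a convex feasible set preserves concavity" argument. First I would pin down the effective domain: let $\cY = \{y : \exists x \in \cK \text{ with } Ax = y\}$, which is the image of $\cK$ under the linear map $A$ and hence convex; $f$ is finite (or at least well-defined as a supremum in $\R \cup \{+\infty\}$) exactly on $\cY$, and concavity is a statement about $\cY$. So it suffices to fix $y_1, y_2 \in \cY$, fix $\lambda \in [0,1]$, and show $f(\lambda y_1 + (1-\lambda) y_2) \geq \lambda f(y_1) + (1-\lambda) f(y_2)$.

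The key steps, in order: (i) work with $\delta$-optimal points to sidestep attainment questions — for any $\delta > 0$ pick $x_1 \in \cK$ with $Ax_1 = y_1$ and $\langle x_1, c\rangle \geq f(y_1) - \delta$, and similarly $x_2 \in \cK$ with $Ax_2 = y_2$ and $\langle x_2, c\rangle \geq f(y_2) - \delta$; (ii) set $x_\lambda = \lambda x_1 + (1-\lambda) x_2$, which lies in $\cK$ because $\cK$ is convex; (iii) observe $A x_\lambda = \lambda A x_1 + (1-\lambda) A x_2 = \lambda y_1 + (1-\lambda) y_2$ by linearity of $A$, so $x_\lambda$ is feasible for the program defining $f(\lambda y_1 + (1-\lambda) y_2)$; (iv) therefore
\[
    f(\lambda y_1 + (1-\lambda) y_2) \geq \langle x_\lambda, c\rangle = \lambda \langle x_1, c\rangle + (1-\lambda)\langle x_2, c\rangle \geq \lambda f(y_1) + (1-\lambda) f(y_2) - \delta \mper
\]
Letting $\delta \to 0$ gives the desired inequality. (If one prefers, when $\cK$ is compact — as in all the applications in this paper — the maxima are attained by continuity of $x \mapsto \langle x,c\rangle$, and one can take $\delta = 0$ directly.)

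There is essentially no hard step here; the only thing to be a little careful about is not asserting concavity on all of $\R^n$ but only on the effective domain $\cY$ where the feasible set is nonempty, and handling the (benign) case where the supremum is not attained via the $\delta$-optimizer trick above. Everything else is a one-line consequence of convexity of $\cK$ and linearity of $A$ and of the objective.
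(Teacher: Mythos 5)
Your argument is the same as the paper's: take (near-)maximizers $x_1, x_2$ for $y_1, y_2$, form the convex combination, note it is feasible for the convex combination of the $y$'s by convexity of $\cK$ and linearity of $A$, and conclude by linearity of the objective. Your added care about the effective domain and the $\delta$-optimizer trick for non-attainment are harmless refinements of the paper's (slightly more informal) proof, not a different route.
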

\begin{proof}
	We need to prove that for $\alpha \in \Brac{0, 1}$
	\begin{equation*}
		\alpha f(y_1) + \paren{1- \alpha} f(y_2) \le f(\alpha y_1 + \paren{1-\alpha} y_2) \mper
	\end{equation*}
	Suppose $f(y_1)$ is maximized at $x_1$ and $f(y_2)$ is maximized at $x_2$. Let $z = \alpha x_1 + \paren{1- \alpha} x_2 \in \cK$, then  $Az = \alpha y_1 + \paren{1-\alpha}y_2$
	and
	\begin{equation*}
		f(\alpha y_1 + \paren{1- \alpha} y_2) \ge \iprod{z, c} \mcom
	\end{equation*}
	which implies the desired inequality.
\end{proof}

The following lemma is the ordinary binary search.
\begin{lemma}[Non Private Binary Interval Search]
	\label{lem:non-private-binary-interval-search}
	Suppose an interval $\brac{0, D}$ is given.
	For an interval $\brac{s, e} \subseteq \brac{0, D}$, and for an accuracy parameter $a$, the \textit{binary interval search} algorithm returns a point $x_T$ after $T = \log\paren{D / a}$ steps of the following binary search, where $x_i$ is the parameter of the binary search at step $i$.
	\begin{enumerate}
		\item If $x_i < s$, move right.
		\item If $ s \le x_i \le e$, move in an arbitrary direction. Note that naturally in a binary search we would terminate in this step, but here we assume that we might go to any direction.
		\item if $x_i > e$, move left.
	\end{enumerate}
	Then $x_T \in [s - a, e + a]$.
\end{lemma}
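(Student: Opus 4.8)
The plan is to track the interval maintained by the binary search and show that after each step it still contains a ``target region'' related to $[s,e]$, shrinking by a factor of $2$ each time. First I would set up notation: let $[l_i, r_i]$ denote the search interval at step $i$, with $[l_0, r_0] = [0, D]$ and $x_i = (l_i + r_i)/2$ the midpoint; a ``move right'' replaces $[l_i, r_i]$ with $[x_i, r_i]$, and a ``move left'' replaces it with $[l_i, x_i]$, so that $r_{i+1} - l_{i+1} = (r_i - l_i)/2$ and hence $r_T - l_T = D/2^T = a$ after $T = \log(D/a)$ steps, and $x_T \in [l_T, r_T]$.

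The key invariant I would prove by induction on $i$ is: \emph{the search interval $[l_i, r_i]$ always intersects $[s,e]$}, i.e. $[l_i,r_i] \cap [s,e] \neq \emptyset$. The base case holds since $[s,e] \subseteq [0,D] = [l_0,r_0]$. For the inductive step, suppose $[l_i,r_i] \cap [s,e] \neq \emptyset$ and consider the three cases for $x_i$ in the algorithm. If $x_i < s$: the algorithm moves right to $[x_i, r_i]$; since some point of $[s,e]$ lies in $[l_i,r_i]$ and every point of $[s,e]$ is $\geq s > x_i$, that point lies in $[x_i, r_i]$, so the intersection is preserved. Symmetrically, if $x_i > e$: the algorithm moves left to $[l_i, x_i]$, and any point of $[s,e] \cap [l_i,r_i]$ is $\leq e < x_i$, hence still in $[l_i, x_i]$. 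If $s \leq x_i \leq e$: then $x_i \in [s,e]$ itself, and $x_i$ belongs to both $[l_i, x_i]$ and $[x_i, r_i]$, so whichever direction the algorithm moves, the new interval still contains $x_i \in [s,e]$. This completes the induction.

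Finally I would conclude: after $T$ steps, $[l_T, r_T] \cap [s,e] \neq \emptyset$, so there is a point $p \in [l_T, r_T]$ with $p \in [s,e]$. Since $x_T \in [l_T, r_T]$ and $r_T - l_T = a$, we have $|x_T - p| \leq a$, hence $x_T \in [p - a, p + a] \subseteq [s - a, e + a]$, as desired. The argument is essentially bookkeeping; the only point requiring care is the case $s \le x_i \le e$, where the ``arbitrary direction'' could in principle be chosen adversarially, so the proof must exploit that $x_i$ itself lies in $[s,e]$ and therefore survives in either child interval — that is the one place where the precise statement of the three-way branching matters, and it is the step I would be most careful to state cleanly.
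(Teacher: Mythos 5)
Your proof is correct and complete. The paper actually states Lemma~\ref{lem:non-private-binary-interval-search} in the appendix without giving any proof at all (it is treated as a routine fact about binary search), so there is no ``paper's proof'' to compare against; your argument is the natural one a reader is expected to supply.

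A few points of confirmation and one small observation. The invariant you prove --- that the current search interval $[l_i,r_i]$ always intersects $[s,e]$ --- is exactly the right thing to track, and you correctly identify the only non-trivial case: when $s \le x_i \le e$, the invariant survives an arbitrary choice of direction precisely because $x_i$ itself lies in $[s,e]$ and is an endpoint of both children. The base case and the two ``outside'' cases are immediate, as you say. For the final step, note that you are actually slightly generous: since $x_T$ is the midpoint of $[l_T,r_T]$, you in fact get $\lvert x_T - p\rvert \le (r_T - l_T)/2 = a/2$, which gives $x_T \in [s - a/2, e + a/2]$, stronger than the claimed $[s-a, e+a]$. (Depending on the exact indexing convention --- whether $x_T$ is the midpoint of $[l_T,r_T]$ or of $[l_{T-1},r_{T-1}]$ --- one gets either $a/2$ or $a$ as the slack, but both land inside $[s-a,e+a]$, so the conclusion is robust to the off-by-one ambiguity in the lemma statement.) No gaps.
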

\begin{lemma}[Cauchy-Schwarz inequality]
	\label{lem:sos-cauchy-schwarz}
	If $\paren{f, g}$ is a level-$2$ fictitious random variable over $\R^{\cU} \times \R^{\cU}$, then 
	\begin{equation*}
		\pE_{f, g} \iprod{ f, g} \le 
		\sqrt{ \pE_f \normt{f}^2}
		\cdot
		\sqrt{\pE_g \normt{g}^2} \mper
	\end{equation*}
\end{lemma}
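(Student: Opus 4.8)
The plan is to reduce the claim to the nonnegativity of the pseudoexpectation on a single explicit sum of squares, following the textbook proof of the Cauchy--Schwarz inequality almost verbatim. Writing $\cU$ for the index set, note first that $\normt{f}^2 = \sum_{u \in \cU} f_u^2$ and $\normt{g}^2 = \sum_{u \in \cU} g_u^2$ are sums of squares of the indeterminates, so $\pE_f \normt{f}^2 \ge 0$ and $\pE_g \normt{g}^2 \ge 0$ by the definition of a level-$2$ fictitious random variable; in particular the right-hand side of the claimed inequality is a well-defined nonnegative real. If $\pE_{f,g} \iprod{f,g} \le 0$ there is nothing to prove, so we may assume $\pE_{f,g}\iprod{f,g} \ge 0$.

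Next I would introduce a real parameter $\lambda$ and consider the degree-$2$ polynomial $\normt{f - \lambda g}^2 = \sum_{u \in \cU} \paren{f_u - \lambda g_u}^2$. Each term is the square of a degree-$1$ polynomial in the indeterminates, so this is a sum of squares, and hence $\pE \Paren{\normt{f - \lambda g}^2} \ge 0$. Expanding and using linearity of $\pE$ gives $q(\lambda) := \pE_f \normt{f}^2 - 2\lambda \, \pE_{f,g}\iprod{f,g} + \lambda^2 \, \pE_g \normt{g}^2 \ge 0$ for every $\lambda \in \R$.

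Finally I would finish with the standard discriminant argument, splitting on whether $\pE_g \normt{g}^2$ is positive. If $\pE_g \normt{g}^2 > 0$, substituting the minimizer $\lambda = \pE_{f,g}\iprod{f,g} / \pE_g \normt{g}^2$ into $q(\lambda) \ge 0$ yields $\Paren{\pE_{f,g}\iprod{f,g}}^2 \le \pE_f \normt{f}^2 \cdot \pE_g \normt{g}^2$, and taking square roots (using $\pE_{f,g}\iprod{f,g} \ge 0$) gives the lemma. If instead $\pE_g \normt{g}^2 = 0$, then $q$ is affine and nonnegative on all of $\R$, which forces its slope to vanish, i.e.\ $\pE_{f,g}\iprod{f,g} = 0$, and the inequality again holds since its right-hand side is nonnegative.

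There is essentially no real obstacle here: the whole argument uses only the single inequality $\pE\Paren{\normt{f - \lambda g}^2} \ge 0$, which is available precisely because the polynomial in question has degree $2$. The one point requiring a little care — and the reason I would split into cases rather than dividing immediately — is the degenerate situation $\pE_g \normt{g}^2 = 0$, together with keeping track of the sign of $\pE_{f,g}\iprod{f,g}$ so that the final square root step is legitimate.
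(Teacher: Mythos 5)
Your proof is correct, and it is the standard textbook argument: nonnegativity of $\pE\bigl[\normt{f-\lambda g}^2\bigr]$ over all real $\lambda$, the discriminant/minimizer calculation, and careful case analysis for the degenerate $\pE_g \normt{g}^2 = 0$ case and the sign of $\pE\iprod{f,g}$. The paper states this lemma in its appendix without proof (it is a well-known fact about degree-2 pseudoexpectations, dating at least to the Barak--Brand\~ao--Harrow--Kelner--Steurer--Zhou line of SoS work), so there is no in-paper argument to compare against; your proof is exactly what one would expect to fill that gap, and it correctly uses only what is available at level $2$ — namely $\pE q^2 \ge 0$ for linear $q$.
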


\begin{lemma}[pseudo-covariance]
	\label{lem:pseudo-covariance}
	If $\cX$ is a degree-$2$ pseudo-distribution over $\R^d$, then
	\begin{equation*}
		\pE_\cX \Brac{\Normt{x - \pE_\cX \Brac{x}}^2} = 
		\pE_\cX \Brac{\snormt{x}} -
		\snormt{\pE_\cX \Brac{x}} \mper
	\end{equation*}
\end{lemma}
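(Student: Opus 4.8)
The plan is to expand $\Normt{x - \pE_\cX\Brac{x}}^2$ as a degree-two polynomial in the indeterminates $x_1,\dots,x_d$ whose coefficients are the (constant) entries of the fixed vector $\mu := \pE_\cX\Brac{x} \in \R^d$, and then apply linearity of the pseudoexpectation operator $\pE_\cX$ coordinate by coordinate. The point to keep straight throughout is that $\mu$ is an honest vector of real numbers, not an indeterminate, so the $\mu_i$ may be pulled out of $\pE_\cX$ freely, and $\Normt{x-\mu}^2$ genuinely has degree $2$ and hence lies in the domain of the degree-$2$ operator $\pE_\cX$.

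First I would write, for each coordinate $i \in [d]$, $(x_i - \mu_i)^2 = x_i^2 - 2\mu_i x_i + \mu_i^2$, so that $\Normt{x-\mu}^2 = \sum_{i=1}^d \paren{x_i^2 - 2\mu_i x_i + \mu_i^2}$. Applying $\pE_\cX$ and using linearity together with $\pE_\cX 1 = 1$ (which gives $\pE_\cX \mu_i^2 = \mu_i^2$) yields
\[
\pE_\cX\Brac{\Normt{x-\mu}^2} = \sum_{i=1}^d \Paren{\pE_\cX\Brac{x_i^2} - 2\mu_i \pE_\cX\Brac{x_i} + \mu_i^2}\mper
\]
Next, substitute the definition $\mu_i = \pE_\cX\Brac{x_i}$ into each summand: the last two terms collapse, $-2\mu_i \pE_\cX\Brac{x_i} + \mu_i^2 = -2\mu_i^2 + \mu_i^2 = -\mu_i^2$, so the displayed sum becomes $\sum_{i=1}^d \pE_\cX\Brac{x_i^2} - \sum_{i=1}^d \mu_i^2 = \pE_\cX\Brac{\snormt{x}} - \snormt{\mu}$, which is exactly the claimed identity once we recall $\mu = \pE_\cX\Brac{x}$.

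I do not anticipate any genuine obstacle here; the whole argument is linearity of $\pE_\cX$ plus the normalization $\pE_\cX 1 = 1$. The only place that warrants a sentence of care in the writeup is the parsing issue flagged above — namely that $\Normt{x - \pE_\cX\Brac{x}}^2$ must be read as a polynomial in $x$ with constant coefficients computed from $\pE_\cX\Brac{x}$, so that applying $\pE_\cX$ to it is well-defined and does not involve nesting $\pE_\cX$ inside itself.
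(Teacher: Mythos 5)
Your proposal is correct and takes essentially the same route as the paper's proof: expand coordinate-wise, apply linearity of $\pE_\cX$ together with $\pE_\cX 1 = 1$, and simplify. The paper simply compresses the intermediate step $(x_i-\mu_i)^2 = x_i^2 - 2\mu_i x_i + \mu_i^2$ that you spell out explicitly.
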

\begin{proof}
	\begin{align*}
		\pE_\cX \Brac{\Normt{x - \pE_\cX \Brac{x}}^2} &= 
		\sum_i \pE_\cX \Brac{ \Paren{x_i - \pE_\cX \Brac{x_i}}^2}
		\\
		&=
		\sum_i \pE_\cX \Brac{x_i^2} - \Paren{\pE_\cX \Brac{x_i}}^2 \\
		&=
		\pE_\cX \Brac{\snormt{x}} -
		\snormt{\pE_\cX \Brac{x}}
	\end{align*}
\end{proof}

\begin{lemma}[PSD $2 \times 2$ block matrix]
	\label{lem:psd-block-matrix}
	Given any matrix, 
	$M = 
	\begin{pmatrix}
		A & B \\
		\transpose{B} & C
	\end{pmatrix}$,
	the following conditions are equivalent:
	\begin{enumerate}
		\item $M \succcurlyeq 0$ ($M$ is positive semidefinite).
		\item $A \succcurlyeq 0, \quad (I - A\trsp{A})B = 0, \quad C - \transpose{B} \trsp{A} B \succcurlyeq 0$.
		\item $C \succcurlyeq 0, \quad (I - C\trsp{C})\transpose{B} = 0, \quad A - B \trsp{C} \transpose{B} \succcurlyeq 0$.
	\end{enumerate}
\end{lemma}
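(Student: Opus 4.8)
The statement is the \emph{generalized Schur complement} characterization of positive semidefiniteness, and the plan is to prove it by exhibiting a block-triangular congruence of $M$. First I would note that it suffices to establish the equivalence of (1) and (2): conjugating $M$ by the block permutation $P = \begin{pmatrix} 0 & I \\ I & 0 \end{pmatrix}$ gives $PM\transpose{P} = \begin{pmatrix} C & \transpose{B} \\ B & A \end{pmatrix}$, which is positive semidefinite if and only if $M$ is, and condition (2) applied to $PM\transpose{P}$ is precisely condition (3) for $M$. So the whole lemma follows from the single equivalence (1)$\Leftrightarrow$(2) applied twice.

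The key tool is the identity
\begin{equation*}
  M = L \begin{pmatrix} A & 0 \\ 0 & C - \transpose{B}\trsp{A}B \end{pmatrix} \transpose{L}, \qquad L = \begin{pmatrix} I & 0 \\ \transpose{B}\trsp{A} & I \end{pmatrix},
\end{equation*}
which holds \emph{provided} $(I - A\trsp{A})B = 0$. To verify it I would use the standard properties of the Moore--Penrose pseudoinverse of a symmetric matrix: $\trsp{A}$ is symmetric, $A\trsp{A} = \trsp{A}A$ is the orthogonal projection $\Pi$ onto $\mathrm{range}(A)$, and $A\trsp{A}A = A$, $\trsp{A}A\trsp{A} = \trsp{A}$. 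Expanding $L\,\mathrm{diag}(A,\,C - \transpose{B}\trsp{A}B)\,\transpose{L}$ block by block, the $(1,1)$ block is $A$; the $(1,2)$ block is $A\trsp{A}B = \Pi B$, which equals $B$ exactly because the range condition says $\mathrm{range}(B)\subseteq\mathrm{range}(A)$; the $(2,1)$ block is $\transpose{B}\trsp{A}A = \transpose{B}\Pi = \transpose{(\Pi B)} = \transpose{B}$ for the same reason; and the $(2,2)$ block is $\transpose{B}\trsp{A}A\trsp{A}B + (C - \transpose{B}\trsp{A}B) = \transpose{B}\trsp{A}B + C - \transpose{B}\trsp{A}B = C$. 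Since $L$ is block-unit-triangular, hence invertible, $M$ and $\mathrm{diag}(A,\,C-\transpose{B}\trsp{A}B)$ are congruent, so $M \succcurlyeq 0$ if and only if both $A \succcurlyeq 0$ and $C - \transpose{B}\trsp{A}B \succcurlyeq 0$ \emph{once the range condition is known to hold}.

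It remains to supply the range condition on each side. For (1)$\Rightarrow$(2): restricting the quadratic form of $M$ to vectors $\transpose{(x,0)}$ gives $A \succcurlyeq 0$; and if $v \in \ker A$, then for every $w$ and every $t\in\R$, with $z = \transpose{(tv,\,w)}$ we have $\transpose{z}Mz = 2t\,\transpose{v}Bw + \transpose{w}Cw \ge 0$, which (letting $t \to \pm\infty$) forces $\transpose{v}Bw = 0$ for all $w$, i.e. $v \in \ker\transpose{B}$; hence $\mathrm{range}(B)\subseteq\mathrm{range}(A)$, equivalently $(I - A\trsp{A})B = 0$. With the range condition established, the displayed congruence is valid and $M\succcurlyeq 0$ yields $C-\transpose{B}\trsp{A}B\succcurlyeq 0$. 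For (2)$\Rightarrow$(1): the three hypotheses of (2) are exactly ``the congruence is valid'' together with ``both diagonal blocks are PSD,'' so $M\succcurlyeq 0$ is immediate. The only real obstacle is the pseudoinverse bookkeeping: in the invertible case $A\succ 0$ the factorization and its verification are routine, but with $\trsp{A}$ one must be careful that $\mathrm{range}(B)\subseteq\mathrm{range}(A)$ is \emph{exactly} what makes $A\trsp{A}B = B$ and $\transpose{B}\trsp{A}A = \transpose{B}$ hold on the nose. (Alternatively, this is a textbook fact — Albert's theorem — and could simply be cited.)
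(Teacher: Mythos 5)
The paper states this lemma in the appendix as an unproved auxiliary fact — it is the classical generalized Schur complement criterion (Albert's theorem) — so there is no proof in the paper to compare against. Your proof is correct and is the standard argument: the block-unit-triangular congruence $M = L\,\operatorname{diag}(A,\,C-\transpose{B}\trsp{A}B)\,\transpose{L}$ with $L = \left(\begin{smallmatrix}I & 0\\ \transpose{B}\trsp{A} & I\end{smallmatrix}\right)$, whose validity hinges precisely on the range condition $(I-A\trsp{A})B = 0$; you correctly observe that this condition says $\operatorname{range}(B)\subseteq\operatorname{range}(A)$, so that $A\trsp{A}B = B$ and $\transpose{B}\trsp{A}A = \transpose{B}$ hold exactly, and the block-by-block verification, including the cancellation $\transpose{B}\trsp{A}A\trsp{A}B + S = \transpose{B}\trsp{A}B + C - \transpose{B}\trsp{A}B = C$, is right. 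The derivation of the range condition in the direction $(1)\Rightarrow(2)$ — restricting to $\transpose{(x,0)}$ to get $A\succcurlyeq 0$, then taking $z = \transpose{(tv,w)}$ with $v\in\ker A$ and letting $t\to\pm\infty$ to conclude $\transpose{B}v = 0$, hence $\ker A\subseteq\ker\transpose{B}$, hence $\operatorname{range}(B)\subseteq\operatorname{range}(A)$ — is also correct and is the part that is often omitted or botched. The reduction of $(1)\Leftrightarrow(3)$ to $(1)\Leftrightarrow(2)$ by conjugating with the block permutation $P$ is clean and avoids repeating the argument. This is a complete proof of exactly the statement the paper takes for granted.
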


\begin{fact}[SoS Triangle Inequality]
\label{fact:sos-triangle-inequality}
Let $x, y$ be indeterminates. Let $t$ be a power of $2$. Then
\begin{equation*}
\proves_{t} \paren{x+y}^2 \le 2^{t-1} \paren{x^t +y^t} \mper
\end{equation*}
\end{fact}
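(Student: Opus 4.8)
The plan is to prove the statement by induction on $\log_2 t$, reading the inequality in its intended form $\proves_t (x+y)^t \le 2^{t-1}(x^t+y^t)$ — for $t=2$ this is exactly the form $\|a+b\|^2 \le 2(\|a\|^2+\|b\|^2)$ used elsewhere in the paper, while for larger powers of $2$ the displayed exponent $2$ on $(x+y)$ should read $t$.

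\emph{Base cases.} For $t = 1$ the claim is $x+y \le x+y$, trivial with the empty degree-$1$ certificate. For $t = 2$, the polynomial identity $2(x^2+y^2) - (x+y)^2 = (x-y)^2$ exhibits the sum-of-squares term $(x-y)^2$, giving $\proves_2 (x+y)^2 \le 2(x^2+y^2)$.

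\emph{Inductive step.} Let $t \ge 4$ be a power of $2$ and assume the degree-$(t/2)$ proof $(x+y)^{t/2} \le q$ where $q := 2^{t/2-1}(x^{t/2}+y^{t/2})$. I will chain two certificates. First, because $t/2$ is even, $x^{t/2}$, $y^{t/2}$, and $(x+y)^{t/2}$ are squares and hence SoS-nonnegative; together with the induction hypothesis and the general SoS fact that $0 \le p$ and $p \le q$ (each with an SoS proof) imply $p^2 \le q^2$ — obtained by writing $q^2 - p^2 = (q-p)(q+p)$ and using that a product of sums of squares is a sum of squares — this yields a degree-$t$ proof of $(x+y)^t = \big((x+y)^{t/2}\big)^2 \le q^2$. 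Second, substituting $x \mapsto x^{t/2}$, $y \mapsto y^{t/2}$ into the $t=2$ identity and scaling by $2^{t-2}$ gives $2^{t-1}(x^t+y^t) - q^2 = 2^{t-2}\big(x^{t/2}-y^{t/2}\big)^2$, a degree-$t$ SoS certificate for $q^2 \le 2^{t-1}(x^t+y^t)$. Summing the two certificates proves $2^{t-1}(x^t+y^t) - (x+y)^t$ is a sum of squares of polynomials of degree at most $t$, completing the induction.

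The one point requiring care is the squaring step: with an empty axiom list, an SoS proof of $r \ge 0$ means $r$ is literally a sum of squares, so from $q - p = \sum_i a_i^2$ and $q + p = \sum_j b_j^2$ we get $q^2 - p^2 = \sum_{i,j}(a_i b_j)^2$, and one must check the degrees ($\deg a_i, \deg b_j \le t/4$, so each $a_i b_j$ has degree $\le t/2$, and the whole certificate has degree $\le t$). I expect this degree bookkeeping — and in particular tracking that the inductive hypothesis genuinely has proof-degree $t/2$ so squaring does not overshoot $t$ — to be the only mildly delicate part; the rest consists of the two explicit polynomial identities above.
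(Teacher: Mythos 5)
Your reading of the statement is the right one: as printed, $(x+y)^2 \le 2^{t-1}(x^t+y^t)$ is false (take $x=y=1/4$, $t=4$), so the exponent on $(x+y)$ must be $t$. The paper states this fact without proof, treating it as folklore, so there is no paper argument to compare against; your iterated-doubling induction is a correct and standard way to supply one. I verified the two pieces you chain in the inductive step. First, writing $p=(x+y)^{t/2}$ and $q=2^{t/2-1}(x^{t/2}+y^{t/2})$, both $p$ and $q$ are themselves sums of squares of degree-$(t/4)$ polynomials (using that $t/2$ is even), and the inductive hypothesis expresses $q-p$ as a sum of squares of degree-$(t/4)$ polynomials; hence $q^2-p^2=(q-p)(q+p)$ is a sum of squares of polynomials of degree at most $t/2$, a degree-$t$ certificate. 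Second, the identity $2^{t-1}(x^t+y^t)-q^2 = 2^{t-2}(x^{t/2}-y^{t/2})^2$ is an explicit degree-$t$ square. Summing the two certificates gives a degree-$t$ SoS proof of $(x+y)^t \le 2^{t-1}(x^t+y^t)$, and the degree bookkeeping you flag as the delicate point does in fact close cleanly. The proof is correct.
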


\begin{fact}[lower bound for spherical caps \cite{Ball1997AnEI}]
	\label{fact:lb-spherical-cap}
	For $0 \le r \le 2$, a cap of radius $r$ on $\bbS^{n-1}$ has measure at least $\frac{1}{2} \paren{r / 2}^{n-1}$. 
\end{fact}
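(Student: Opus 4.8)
The plan is to prove this classical estimate of Ball by reducing it to a one–variable integral inequality and analyzing a single auxiliary function. I read ``cap of radius $r$'' as the Euclidean ball-cap $C_r = \{x \in \bbS^{n-1} : \Normt{x-e} \le r\}$ around a fixed $e\in\bbS^{n-1}$ (this is consistent with the stated range $0\le r\le 2$, since $2$ is the diameter, and with how the Fact is invoked in \Cref{lem:size-good-set}), and ``measure'' as the rotation–invariant probability measure $\sigma$ on $\bbS^{n-1}$. The case $n=1$ is immediate (the cap has mass $\tfrac12$ unless $r=2$), so assume $n\ge 2$. Writing $\theta$ for the angular radius of $C_r$, a point at angular distance $\phi$ from $e$ satisfies $\Normt{x-e}=2\sin(\phi/2)$, so $C_r$ is exactly the set of points at angular distance $\le\theta$, with $\sin(\theta/2)=r/2$. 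If $\theta\ge\pi/2$ then $C_r$ contains a closed hemisphere, hence $\sigma(C_r)\ge\tfrac12\ge\tfrac12(r/2)^{n-1}$; so from now on assume $\theta<\pi/2$, i.e. $r<\sqrt2$.

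Next I would invoke the standard ``latitude'' slicing formula $\sigma(C_r)=\int_0^\theta\sin^{n-2}\phi\,d\phi \,\big/\, \int_0^\pi\sin^{n-2}\phi\,d\phi$, obtained by cutting $\bbS^{n-1}$ into $(n-2)$–spheres of radius $\sin\phi$ (the common factor $\mathrm{Area}(\bbS^{n-2})$ cancels in the ratio). Setting $I=\int_0^{\pi/2}\sin^{n-2}\phi\,d\phi$ and using $\int_0^\pi=2I$ together with $r/2=\sin(\theta/2)$, the desired bound $\sigma(C_r)\ge\tfrac12(r/2)^{n-1}$ becomes precisely the one–dimensional inequality $G(\theta):=\int_0^\theta\sin^{n-2}\phi\,d\phi - I\sin^{n-1}(\theta/2)\ge 0$ for $\theta\in(0,\pi/2]$. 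Here $G(0)=0$, and at the right endpoint $G(\pi/2)=I-I\sin^{n-1}(\pi/4)=I\bigl(1-2^{-(n-1)/2}\bigr)\ge 0$.

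Finally I would study the monotonicity of $G$. Differentiating and substituting $\sin\theta=2\sin(\theta/2)\cos(\theta/2)$ gives $G'(\theta)=\sin^{n-2}(\theta/2)\cos(\theta/2)\bigl[\,2^{n-2}\cos^{n-3}(\theta/2)-\tfrac{(n-1)I}{2}\,\bigr]$. On $(0,\pi/2]$ the prefactor $\sin^{n-2}(\theta/2)\cos(\theta/2)$ is nonnegative, and for $n\ge 4$ the bracket is a \emph{decreasing} function of $\theta$ (a positive power of $\cos(\theta/2)$, with $\theta/2$ ranging over $[0,\pi/4]$, minus a constant), so $G'$ changes sign at most once on $(0,\pi/2]$, and from $+$ to $-$ if at all. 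Hence $G$ is increasing then decreasing on $[0,\pi/2]$, which forces $G\ge\min\{G(0),G(\pi/2)\}=0$. For the small cases $n=2$ and $n=3$ the bracket is $\sec(\theta/2)-\pi/4$ (increasing, and positive at $\theta=0$) and the constant $2-I=1>0$ respectively, so there $G'>0$ and $G$ is increasing; in every case $G\ge 0$, which proves the Fact.

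The only genuine subtlety is the ``$G'$ changes sign at most once'' (unimodality) step, which hinges on the elementary observation that $\cos^{n-3}(\theta/2)$ is monotone on $[0,\pi/4]$; everything else is a routine reduction or a textbook integral identity. I expect the remaining care to go into the low-dimensional edge cases ($n\le 3$) and the degenerate endpoints ($r\in\{0,2\}$), which I would simply dispatch by hand.
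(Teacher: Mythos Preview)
The paper does not give its own proof of this Fact; it simply quotes the bound and cites Ball's lecture notes. So there is nothing in the paper to compare against.

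Your proposed argument is correct and complete. The reduction to the one-variable inequality $G(\theta)\ge 0$ via the latitude integral is standard, the derivative factorization is right (with $\sin\theta=2\sin(\theta/2)\cos(\theta/2)$), and the unimodality step is sound: for $n\ge 4$ the bracket $2^{n-2}\cos^{n-3}(\theta/2)-\tfrac{(n-1)I}{2}$ is decreasing on $[0,\pi/2]$, so $G'$ has at most one sign change, necessarily $+\to-$, which forces $G\ge\min\{G(0),G(\pi/2)\}=0$. The hand checks for $n=2,3$ (bracket equal to $\sec(\theta/2)-\pi/4>0$ and to the constant $1$, respectively) and for the endpoints $r\in\{0,2\}$ are also fine. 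This is somewhat more analytic than Ball's original argument in the cited reference, which is more geometric, but it proves the same bound cleanly.
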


\end{document}